\newtheorem{theorem}{Theorem}[section]
\newtheorem{cor}[theorem]{Corollary}
\newtheorem{prop}[theorem]{Proposition}
\newcommand{\tocless}[2]{#1{#2}}
\newcommand{\ds}{\displaystyle}
\renewcommand{\author}[1]{\large\rm #1\\ \bigskip}
\newcommand{\address}[1]{{\normalsize\it #1\\}\bigskip}
\renewcommand{\title}[1]{\bigskip\bigskip\Large\bf #1\bigskip\bigskip\\}
\newcommand{\Bigpsi}[3]{\phantom{\Psi}_2 \kern -.05em
\Psi_2\left(\genfrac{}{}{0pt}{}{#1}{#2}\biggl|#3\right)}
\newcommand{\bea}{\begin{eqnarray}}
\newcommand{\eea}{\end{eqnarray}}
\newcommand{\ii}{\mathsf{i}}
\newcommand{\s}{S}
\newcommand{\ow}{\overline{W}}
\newcommand{\oW}{\overline{W}}
\newcommand{\oV}{\overline{V}}
\newcommand{\vphi}{\varphi}
\newcommand{\ovphi}{\overline{\varphi}}
\newcommand{\ovphib}{\overline{\phi}}
\newcommand{\iW}{W}
\newcommand{\lag}{{\mathcal L}}
\newcommand{\ol}{\overline{\mathcal{L}}}
\newcommand{\iC}{C}
\newcommand{\Log}{{\textrm{Log}\hspace{1pt}}}
\newcommand{\dilog}{{\textrm{Li}_2}}
\newcommand{\lie}{{\textrm{Li}_2}}
\newcommand{\olie}{{\textrm{Li}_2}}
\newcommand{\q}{{\mathsf q}}
\newcommand{\iS}{S}
\newcommand{\bb}{\mathsf{b}}
\newcommand{\olam}{{\overline{\Lambda}}}
\newcommand{\spn}{{\sigma}}
\newcommand{\astr}[1]{{\mathcal{A}_\medstar(\alpha_1,\alpha_3\,|\,#1,x_1,x_2,x_3)}}
\def\EXP{\textrm{{\large e}}}
\def\re{\mathop{\hbox{\rm Re}}\nolimits}
\def\im{\mathop{\hbox{\rm Im}}\nolimits}
\numberwithin{equation}{subsection}
\begin{document}

\vglue 1.5cm

\begin{center}

\title{Integrable quad equations derived from the quantum Yang-Baxter equation}
\author{Andrew P.~Kels}
\address{Institute of Physics, University of Tokyo,\\
 Komaba, Tokyo 153-8902, Japan}
%\today

\end{center}

\vskip 0.0cm 
\begin{abstract}

This paper presents an explicit correspondence between two different types of integrable equations; the quantum Yang-Baxter equation in its star-triangle relation form, and the classical 3D-consistent quad equations in the Adler-Bobenko-Suris (ABS) classification.  Each of the 3D-consistent ABS quad equations of $H$-type, are respectively derived from the quasi-classical expansion of a counterpart star-triangle relation.  Through these derivations it is seen that the star-triangle relation provides a natural path integral quantization of an ABS equation.  The interpretation of the different star-triangle relations is also given in terms of (hyperbolic/rational/classical) hypergeometric integrals, revealing the hypergeometric structure that links the two different types of integrable systems.  Many new limiting relations that exist between the star-triangle relations/hypergeometric integrals are proven for each case. 

\end{abstract}
%\vspace{-0.1cm}
\tableofcontents

%\end{document}

%\newpage

%\vglue 4.8cm

\section{Introduction}

This paper is concerned with developing an explicit connection between two important but distinct principles of integrability, that are typically considered exclusively of each other.  The first principle is the quantum Yang-Baxter equation, a condition of integrability for two-dimensional models of statistical mechanics, that allows an exact solution of the model to be determined in the infinite lattice limit through the transfer matrix technique that was introduced by Baxter \cite{Baxter:1972hz,Baxter:1982zz}.  The second principle is the 3D-consistency condition, a condition of integrability for two-dimensional systems of quad (lattice) equations, where existence of traditional integrabililty characteristics such as zero curvature representations and B\"{a}cklund-Darboux transformations, follows from the consistency of the equations when considered in three or more dimensions \cite{nijhoffwalker,BobSurQuadGraphs,ABS}.

The key to the connection between these two principles of integrability, lies in the quasi-classical expansion of solutions to the quantum Yang-Baxter equation \cite{Bazhanov:2007mh,Bazhanov:2010kz,Bazhanov:2016ajm,Kels:2017fyt}.  The purpose of this paper is to complete a correspondence that has been presented in \cite{Bazhanov:2016ajm}, where each of the 3D-consistent $Q$-type quad equations in the Adler-Bobenko-Suris  (ABS)  classification \cite{ABS}, were shown to arise in the quasi-classical expansion of different respective solutions of the star-triangle relation (STR) form of the Yang-Baxter equation.  This paper extends the latter correspondence to the entire ABS list, through the use of explicit new solutions of an asymmetric form of the STR.  

Specifically, for some complex valued spin variables $\spn_1,\spn_2,\spn_3$, and spectral variables $\theta_1,\theta_3$, the asymmetric form of the STR takes the form
\begin{align}
\label{YBEintro}
\ds\int_\mathbb{R} d\spn_0 \,\rho(\theta_1,\theta_3\,|\,\spn_1,\spn_2,\spn_3;\spn_0;\hbar)= V(\theta_1\,|\,\spn_2,\spn_3;\hbar)\,\oV(\theta_1+\theta_3\,|\,\spn_1,\spn_3;\hbar)\, W(\theta_3\,|\,\spn_2,\spn_1;\hbar)\,,
\end{align}
where $\hbar$ is a real valued parameter, and the integrand is
\begin{align}
\label{quad3}
\rho(\theta_1,\theta_3\,|\,\spn_1,\spn_2,\spn_3;\spn_0;\hbar)=\oV(\theta_1\,|\,\spn_1,\spn_0;\hbar)\, V(\theta_1+\theta_3\,|\,\spn_2,\spn_0;\hbar)\,\oW(\theta_3\,|\,\spn_0,\spn_3;\hbar)\,.
\end{align}
The $V(\theta\,|\,\spn_i,\spn_j)$, $\oV(\theta\,|\,\spn_i,\spn_j)$, $W(\theta\,|\,\spn_i,\spn_j)$, $\oW(\theta\,|\,\spn_i,\spn_j)$ are functions known as Boltzmann weights, which for \eqref{YBEintro} will typically be complex valued.  The key to obtaining the 3D-consistent quad equations, is in the left hand side of \eqref{YBEintro} in the limit of the parameter $\hbar\to0$.  For a change of variables of the form $\sigma_i\to\sigma'_i= f_i(x_i)$, $\theta_j\to\theta'_j= g_j(\alpha_j)$, an asymptotic expansion of the integrand \eqref{quad3} for $\hbar\to0$ can always be found, that takes the form
\begin{align}
\label{quad2}
\Log\rho(\theta_1',\theta_3'\,|\,\spn_1',\spn_2',\spn_3';\spn_0';\hbar)=\hbar^{-1}\astr{x_0}+O(1)\,,
\end{align}
where
\begin{align}
\label{astrintro}
\astr{x_0}=\olam(\alpha_1\,|\,x_0,x_1)+\Lambda(\alpha_1+\alpha_3\,|\,x_0,x_2)+\ol(\alpha_3\,|\,x_0,x_3)\,.
\end{align}
The Lagrangian functions $\olam$, $\Lambda$, $\ol$, coincide with the $O(\hbar^{-1})$ terms of the expansion of the Boltzmann weights $\oV$, $V$, $\oW$, respectively.  The saddle point equation for \eqref{YBEintro}, is then given by the {three-leg equation}
\begin{align}
\label{quad1}
\frac{\partial\astr{x_0}}{\partial x_0}=\ovphib(\alpha_1\,|\,x_0,x_1)+\phi(\alpha_1+\alpha_3\,|\,x_0,x_2)+\ovphi(\alpha_3\,|\,x_0,x_3)=0\,,
\end{align}
for some $\ovphib$, $\phi$, $\ovphi$, which coincide with the respective derivatives of $\olam$, $\Lambda$, $\ol$.  For explicit cases found in this paper, the saddle point equation \eqref{quad1} can be transformed via a specific change of variables into an equation of the form
\begin{align}
\label{quad0}
Q(x,u,y,v;\alpha,\beta)=0\,,
\end{align}
where the function $Q(x,u,y,v;\alpha,\beta)$ is a polynomial that is linear in each of the $x$, $u$, $y$, $v$ (known as the ``affine-linear'' property), and $Q(x,u,y,v;\alpha,\beta)$ satisfies the {\it 3D-consistency} integrability condition \cite{nijhoffwalker,BobSurQuadGraphs,ABS}.  

The fact that the equation \eqref{quad0}, can have an affine-linear form and is 3D-consistent is not obvious, since the original form of the Boltzmann weights, and resulting equations \eqref{quad2}, \eqref{quad1}, are generally non-trivial (elliptic/hyperbolic/rational/algebraic) functions of the variables $\spn_0$, $\spn_1$, $\spn_2$, $\spn_3$, and $x_0$, $x_1$, $x_2$, $x_3$ respectively.  Specifically, it will be seen that all such equations \eqref{quad0} that are obtained from a STR of the form \eqref{YBEintro}, may be identified as $H$-type 3D-consistent affine-linear equations in the ABS classification \cite{ABS,ABS2}.

This is a rather interesting and unexpected connection, particularly since the latter classification was originally derived independently of the existence of any Yang-Baxter structure.  In this paper, the derivation of 3D-consistent quad equations \eqref{quad0} from the STR \eqref{YBEintro}, as outlined above, will be referred to as {\it Yang-Baxter/3D-consistency correspondence}.%, and will be illustrated with explicit new solutions of the STR \eqref{YBEintro}, that respectively correspond to each of the 3D-consistent asymmetric $H$-type quad equations of the ABS list. This extends the previous result \cite{Bazhanov:2016ajm} obtained for the symmetric $Q$-type equations to the entire ABS list. 

There is another important application of the star-triangle relation \eqref{YBEintro}, besides the aforementioned application to integrability.  Namely, explicit solutions of \eqref{YBEintro} can be shown to be equivalent to transformation formulas for hypergeometric functions.  Some examples that are new in this paper include the derivation of the $H1_{(\varepsilon=0)}$ equation from a STR which is equivalent to a sum of two classical Euler beta functions, and the derivation of the equations $H2_{(\varepsilon=1)}$, $H2_{(\varepsilon=0)}$, $H1_{(\varepsilon=1)}$, from STRs that are respectively equivalent to different hypergeometric integral formulas of Barnes \cite{Barnes1908,Barnes1910}.  This result suggests that hypergeometric integrals are a universal structure behind these types of integrable systems.  The summary of the correspondence found between the STR, hypergeometric integrals, and 3D-consistent equations, is given in Appendix \ref{app:hyper}.

The structure of the paper is as follows.  In Section \ref{sec:overview}, the Yang-Baxter/3D-consistency correspondence is discussed in further detail, and in Sections \ref{sec:hyplim}, \ref{sec:ratlim}, \ref{sec:alglim}, explicit derivations are given of the Yang-Baxter/3D-consistency correspondence for all $Q$ and $H$-type ABS equations at the hyperbolic, rational, and algebraic levels respectively.  Finally, the connection to hypergeometric functions is summarised in Appendix \ref{app:hyper}, and also some non-integrable examples, which have a slightly different form from the STR \eqref{YBEintro}, are considered in Appendix \ref{app:notYBE}.

\section{Yang-Baxter/3D-consistency correspondence}
\label{sec:overview}

\subsection{Star-triangle relation}

To begin with the star-triangle relation will be described  in more detail.  A usual form of the star-triangle relation for exactly solved lattice models of statistical mechanics, can be written as \cite{Baxter:1982zz,Bax02rip,PerkYBEs}
\begin{align}
\label{YBE}
\begin{split}
\ds\int_\mathbb{R} d\spn_0 \,  S(\spn_0)\,\oW_{qr}(\spn_1,\spn_0)\, W_{pr}(\spn_2,\spn_0)\,\oW_{pq}(\spn_0,\spn_3)\phantom{\,.}\\
\ds=R_{pqr} W_{qr}(\spn_2,\spn_3)\,\oW_{pr}(\spn_1,\spn_3)\, W_{pq}(\spn_2,\spn_1)\,.
\end{split}
\end{align}
In \eqref{YBE} there are three independent, ``spin'' variables $\spn_1,\spn_2,\spn_3$, and three rapidity variables $p,q,r$.  Typically these variables are taken to be real valued, but for the purposes of this paper the variables can take general complex values.  The two functions $W_{pq}(\spn_i,\spn_j)$, $\oW_{pq}(\spn_i,\spn_j)$, are the Boltzmann weights.  In this paper, the Boltzmann weights will generally be seen to be complex valued, and consequently the Boltzmann weights do not describe ``physical'' interactions of an associated lattice model.  The $S(\spn_i)$, may be interpreted as another type of Boltzmann weight, that depends on the value of the individual spin $\spn_i$, and is independent of the rapidity variables.  It is sometimes the case that $S(\spn_i)=1$, otherwise it is a non-trivial function of $\spn_i$.  The $R_{pqr}$ is a spin independent normalisation factor, that depends on the values of $p,q,r$.

The star-triangle relation \eqref{YBE} also has a natural graphical interpretation.  If the spin variables $\spn_i$, are assigned to vertices, and the Boltzmann weights $W_{pq}(\spn_i,\spn_j)$, $\oW_{pq}(\spn_i,\spn_j)$, are assigned to edges connecting two vertices $\spn_i$, $\spn_j$, then \eqref{YBE} has the graphical form shown in Figure \ref{YBErapidityfig}.

Unless otherwise stated, the symmetry
\begin{align}
\label{spinsymmetry}
W_{pq}(\spn_i,\spn_j)=W_{pq}(\spn_j,\spn_i)\,,\qquad \oW_{pq}(\spn_i,\spn_j)=\oW_{pq}(\spn_j,\spn_i)\,,
\end{align}
is not assumed, and thus the ordering of variables in \eqref{YBE} will need to be taken into consideration.  This also means that a second star-triangle relation where the two spin variables in the argument of each Boltzmann weight of \eqref{YBE} are exchanged (this second star-triangle relation graphically corresponds to reversing the orientation of each rapidity line in Figure \ref{YBErapidityfig}), in general is not satisfied for the cases considered here.

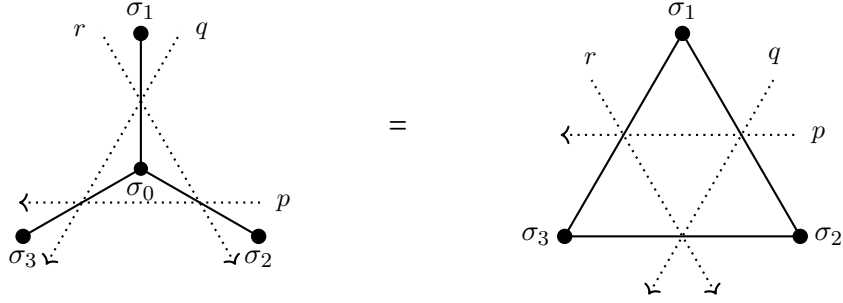
\begin{figure}[h]
\centering

\begin{tikzpicture}[scale=1.8]

%\draw[white] (-2.0,0.5) circle (0.1pt)
%node[right=2pt]{\color{black} $\theta_1$};
%\draw[white] (-2.6,-0.3) circle (0.1pt)
%node[above=2pt]{\color{black} $\theta_3$};
%\draw[white] (-1.2,-0.3) circle (0.1pt)
%node[above=2pt]{\color{black} $\theta_1+\theta_3$};
\draw[-,thick] (-2,0)--(-2,1);
\draw[-,thick] (-2,0)--(-2.87,-0.5);
\draw[-,thick] (-2,0)--(-1.13,-0.5);
\draw[<-,thick,dotted] (-2.9,-0.25)--(-1.1,-0.25);
\fill[white!] (-1.1,-0.25) circle (0.5pt)
node[right=1.5pt]{\color{black}\small $p$};
\draw[<-,thick,dotted] (-2.7,-0.71)--(-1.7,1.02);
\fill[white!] (-1.7,1.02) circle (0.5pt)
node[right=1.5pt]{\color{black}\small $q$};
\draw[<-,thick,dotted] (-1.3,-0.71)--(-2.3,1.02);
\fill[white!] (-2.3,1.02) circle (0.5pt)
node[left=1.5pt]{\color{black}\small $r$};
\fill (-2,0) circle (1.5pt)
node[below=1.5pt]{\color{black} $\spn_0$};
\filldraw[fill=black,draw=black] (-2,1) circle (1.5pt)
node[above=1.5pt] {\color{black} $\spn_1$};
\filldraw[fill=black,draw=black] (-2.87,-0.5) circle (1.5pt)
node[below=1.5pt] {\color{black} $\spn_3$};
\filldraw[fill=black,draw=black] (-1.13,-0.5) circle (1.5pt)
node[below=1.5pt] {\color{black} $\spn_2$};

\fill[white!] (0.05,0.3) circle (0.01pt)
node[left=0.05pt] {\color{black}\Large $=$};

%\draw[white] (2.0,-0.5) circle (0.1pt)
%node[below=2pt]{\color{black} $\theta_1$};
%\draw[white] (2.7,0.15) circle (0.1pt)
%node[above=2pt]{\color{black} $\theta_3$};
%\draw[white] (1.1,0.15) circle (0.1pt)
%node[above=2pt]{\color{black} $\theta_1+\theta_3$};
\draw[-,thick] (2,1)--(1.13,-0.5);
\draw[-,thick] (1.13,-0.5)--(2.87,-0.5);
\draw[-,thick] (2.87,-0.5)--(2,1);
\draw[<-,thick,,dotted] (1.1,0.25)--(2.85,0.25);
\fill[white!] (2.85,0.25) circle (0.5pt)
node[right=1.5pt]{\color{black}\small $p$};
\draw[<-,thick,dotted] (1.75,-0.93)--(2.68,0.67);
\fill[white!] (2.68,0.67) circle (0.5pt)
node[above=1.5pt]{\color{black}\small $q$};
\draw[<-,thick,dotted] (2.25,-0.93)--(1.32,0.67);
\fill[white!] (1.32,0.67) circle (0.5pt)
node[above=1.5pt]{\color{black}\small $r$};
\filldraw[fill=black,draw=black] (2,1) circle (1.5pt)
node[above=1.5pt]{\color{black} $\spn_1$};
\filldraw[fill=black,draw=black] (1.13,-0.5) circle (1.5pt)
node[left=1.5pt]{\color{black} $\spn_3$};
\filldraw[fill=black,draw=black] (2.87,-0.5) circle (1.5pt)
node[right=1.5pt]{\color{black} $\spn_2$};
\end{tikzpicture}
\caption{Graphical representation of the star-triangle relation form of the quantum Yang-Baxter equation \eqref{YBE}.}
\label{YBErapidityfig}
\end{figure}

The directed dotted lines in Figure \ref{YBErapidityfig}, are associated with the rapidity variables that appear in the star-triangle relation \eqref{YBE}.  The crossing of the directed rapidity lines over the solid edges in Figure \ref{YBErapidityfig}, distinguishes the two types of edges with the associated Boltzmann weights pictured in Figure \ref{2boltzmannweights}.  This crossing of directed rapidity lines also distinguishes the ordering of spins that appear in the arguments of the Boltzmann weights $W_{pq}(\spn_i,\spn_j)$, $\oW_{pq}(\spn_i,\spn_j)$.  Such edges and their Boltzmann weights are the building blocks of an integrable lattice model of statistical mechanics, where the Boltzmann weights characterise the interaction energies between nearest neighbour spins.  The star-triangle relation \eqref{YBE} is the condition of integrability for such a lattice model, from which one may find a set of commuting transfer matrices that may be used to solve for the partition function per site in the thermodynamic limit \cite{Baxter:1982zz}.

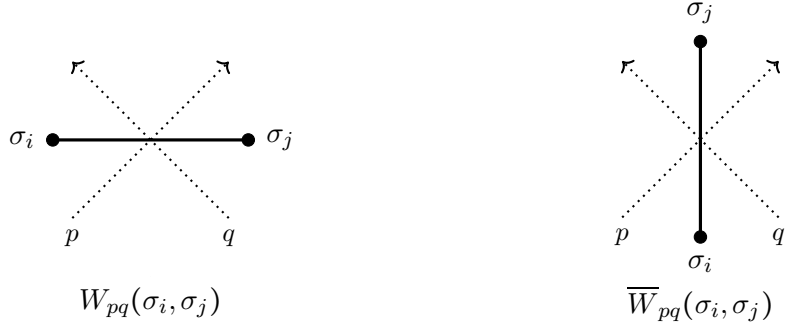
\begin{figure}[hbt]
\centering
\begin{tikzpicture}[scale=2.6]

\draw[-,very thick] (-0.5,2)--(0.5,2);
\draw[->,thick,dotted] (0.4,1.6)--(-0.4,2.4);
\fill[white!] (0.4,1.6) circle (0.01pt)
node[below=0.5pt]{\color{black}\small $q$};
\draw[->,thick,dotted] (-0.4,1.6)--(0.4,2.4);
\fill[white!] (-0.4,1.6) circle (0.01pt)
node[below=0.5pt]{\color{black}\small $p$};
\filldraw[fill=black,draw=black] (-0.5,2) circle (0.9pt)
node[left=3pt]{\color{black} $\spn_i$};
\filldraw[fill=black,draw=black] (0.5,2) circle (0.9pt)
node[right=3pt]{\color{black} $\spn_j$};

\fill (0,1.3) circle(0.01pt)
node[below=0.05pt]{\color{black} $ W_{pq}(\spn_i,\spn_j)$};

\begin{scope}[xshift=80pt,yshift=57pt]
\draw[-,very thick] (0,-0.5)--(0,0.5);
\draw[->,thick,dotted] (-0.4,-0.4)--(0.4,0.4);
\fill[white!] (-0.4,-0.4) circle (0.01pt)
node[below=0.5pt]{\color{black}\small $p$};
\draw[->,thick,dotted] (0.4,-0.4)--(-0.4,0.4);
\fill[white!] (0.4,-0.4) circle (0.01pt)
node[below=0.5pt]{\color{black}\small $q$};
\filldraw[fill=black,draw=black] (0,-0.5) circle (0.9pt)
node[below=3pt]{\color{black} $\spn_i$};
\filldraw[fill=black,draw=black] (0,0.5) circle (0.9pt)
node[above=3pt]{\color{black} $\spn_j$};

\fill (0,-0.7) circle(0.01pt)
node[below=0.05pt]{\color{black} $\oW_{pq}(\spn_i,\spn_j)$};
\end{scope}
\end{tikzpicture}
\caption{The first type of edge on the left, is associated with a Boltzmann weight $W_{pq}(\spn_i,\spn_j)$, and the second type of edge on the right, is associated with the Boltzmann weight $\oW_{pq}(\spn_i,\spn_j)$.}
\label{2boltzmannweights}
\end{figure}

\subsubsection{Symmetric form (quantum Q-type equation)}

For the cases considered in this paper, some symmetries may be used to simplify the form of the star-triangle relation \eqref{YBE}.  First, the Boltzmann weights $W_{pq}(\spn_i,\spn_j)$, $\oW_{pq}(\spn_i,\spn_j)$, only depend on the difference of rapidity variables $p-q$, consequently the star-triangle relation will be written in terms of the redefined Boltzmann weights $W(\theta\,|\,\spn_i,\spn_j)$, $\oW(\theta\,|\,\spn_i,\spn_j)$, as
\begin{align}
W(\theta\,|\,\spn_i,\spn_j):=W_{pq}(\spn_i,\spn_j)\,,\qquad \oW(\theta\,|\,\spn_i,\spn_j):=\oW_{pq}(\spn_i,\spn_j)\,,
\end{align}
where $\theta=p-q$, is a rapidity difference (spectral) variable.

Then setting
\begin{align}
\label{spectraldef}
\theta_1=q-r\,,\qquad \theta_3=p-q\,,
\end{align}
it follows that the star-triangle relation \eqref{YBE} may be written in the following equivalent form
\begin{align}
\label{YBEsym}
\begin{split}
\ds\int_{\mathbb{R}} d\spn_0\,\s(\spn_0)\,\oW(\theta_1\,|\,\spn_1,\spn_0)\,W(\theta_1+\theta_3\,|\,\spn_2,\spn_0)\,\oW(\theta_3\,|\,\spn_3,\spn_0)\phantom{\,,} \\
\ds=R(\theta_1,\theta_3)\,W(\theta_1\,|\,\spn_3,\spn_2)\,\oW(\theta_1+\theta_3\,|\,\spn_1,\spn_3)\,W(\theta_3\,|\,\spn_1,\spn_2)\,.
\end{split}
\end{align}
The star-triangle relation \eqref{YBEsym} depends on two independent, real-valued, rapidity difference variables $\theta_1$, $\theta_3$ (as well as the same three independent spin variables $\spn_1,\spn_2,\spn_3$).  In terms of the variables \eqref{spectraldef}, a graphical representation of the star-triangle relation \eqref{YBEsym} is shown in Figure \ref{YBEfig}.

\begin{figure}[h]
\centering

\begin{tikzpicture}[scale=1.9]

\draw[white] (-2.0,0.5) circle (0.1pt)
node[right=2pt]{\color{black}\small $\oW(\theta_1\,|\,\spn_1,\spn_0)$};
\draw[white] (-2.95,-0.3) circle (0.1pt)
node[above=2pt]{\color{black}\small $\oW(\theta_3\,|\,\spn_0,\spn_3)$};
\draw[white] (-0.85,-0.3) circle (0.1pt)
node[above=2pt]{\color{black}\small $W(\theta_1+\theta_3\,|\,\spn_2,\spn_0)$};
\draw[-,thick] (-2,0)--(-2,1);
\draw[-,thick] (-2,0)--(-2.87,-0.5);
\draw[-,thick] (-2,0)--(-1.13,-0.5);
%\draw[->,thick,dotted] (-2.9,-0.25)--(-1.1,-0.25);
%\fill[white!] (-2.9,-0.25) circle (0.5pt)
%node[left=1.5pt]{\color{black}\small $p$};
%\draw[->,thick,dotted] (-2.7,-0.71)--(-1.7,1.02);
%\fill[white!] (-2.7,-0.71) circle (0.5pt)
%node[below=1.5pt]{\color{black}\small $q$};
%\draw[->,thick,dotted] (-1.3,-0.71)--(-2.3,1.02);
%\fill[white!] (-1.3,-0.71) circle (0.5pt)
%node[below=1.5pt]{\color{black}\small $r$};
\fill (-2,0) circle (1.5pt)
node[below=1.5pt]{\color{black} $\spn_0$};
\filldraw[fill=black,draw=black] (-2,1) circle (1.5pt)
node[above=1.5pt] {\color{black} $\spn_1$};
\filldraw[fill=black,draw=black] (-2.87,-0.5) circle (1.5pt)
node[below=1.5pt] {\color{black} $\spn_3$};
\filldraw[fill=black,draw=black] (-1.13,-0.5) circle (1.5pt)
node[below=1.5pt] {\color{black} $\spn_2$};

\fill[white!] (0.2,0.3) circle (0.01pt)
node[left=0.05pt] {\color{black}\Large $=$};

\begin{scope}[xshift=10pt]
\draw[white] (2.0,-0.5) circle (0.1pt)
node[below=2pt]{\color{black}\small $W(\theta_1\,|\,\spn_2,\spn_3)$};
\draw[white] (3.0,0.30) circle (0.1pt)
node[above=2pt]{\color{black}\small $W(\theta_3\,|\,\spn_1,\spn_3)$};
\draw[white] (0.80,0.30) circle (0.1pt)
node[above=2pt]{\color{black}\small $\oW(\theta_1+\theta_3\,|\,\spn_1,\spn_3)$};
\draw[-,thick] (2,1)--(1.13,-0.5);
\draw[-,thick] (1.13,-0.5)--(2.87,-0.5);
\draw[-,thick] (2.87,-0.5)--(2,1);
%\draw[->,thick,,dotted] (1.1,0.25)--(2.85,0.25);
%\fill[white!] (1.1,0.25) circle (0.5pt)
%node[left=1.5pt]{\color{black}\small $p$};
%\draw[->,thick,dotted] (1.75,-0.93)--(2.68,0.67);
%\fill[white!] (1.75,-0.93) circle (0.5pt)
%node[below=1.5pt]{\color{black}\small $q$};
%\draw[->,thick,dotted] (2.25,-0.93)--(1.32,0.67);
%\fill[white!] (2.25,-0.93) circle (0.5pt)
%node[below=1.5pt]{\color{black}\small $r$};
\filldraw[fill=black,draw=black] (2,1) circle (1.5pt)
node[above=1.5pt]{\color{black} $\spn_1$};
\filldraw[fill=black,draw=black] (1.13,-0.5) circle (1.5pt)
node[left=1.5pt]{\color{black} $\spn_3$};
\filldraw[fill=black,draw=black] (2.87,-0.5) circle (1.5pt)
node[right=1.5pt]{\color{black} $\spn_2$};
\end{scope}
\end{tikzpicture}
\caption{Graphical representation of the star-triangle relation \eqref{YBEsym}.}
\label{YBEfig}
\end{figure}
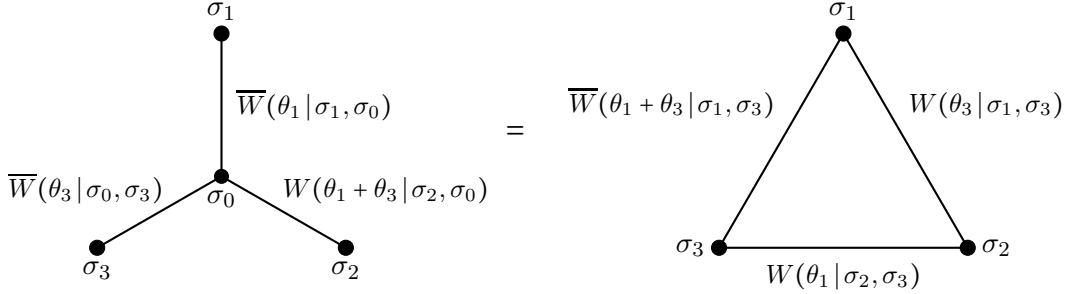

The star-triangle relation \eqref{YBEsym}, is one of two main forms that will be referred to throughout this paper, and is a typical form that is associated with the integrable lattice models of statistical mechanics \cite{Baxter:1982zz}.   It is also worth noting that there exists another symmetry known as the crossing symmetry, for which $W(\theta\,|\,\spn_i,\spn_j)$, and $\oW(\theta\,|\,\spn_i,\spn_j)$, are related by
\begin{align}
\label{crossingsym}
\oW(\theta\,|\,\spn_i,\spn_j)=W(\eta-\theta\,|\,\spn_i,\spn_j)\,,
\end{align}
where $\eta$ is known as the crossing parameter.  It follows that the star-triangle relation \eqref{YBEsym}, where \eqref{crossingsym} is satisfied, can be written in the more symmetric form
\begin{align}
\label{YBEtotalsym}
\begin{split}
\ds\int_{\mathbb{R}} d\spn_0\,\s(\spn_0)\,\oW(\theta_1\,|\,\spn_1,\spn_0)\,\oW(\theta_2\,|\,\spn_2,\spn_0)\,\oW(\theta_3\,|\,\spn_0,\spn_3)\phantom{\qquad\,,} \\
\ds=R(\theta_1,\theta_2,\theta_3)\,W(\theta_1\,|\,\spn_2,\spn_3)\,W(\theta_2\,|\,\spn_1,\spn_3)\,W(\theta_3\,|\,\spn_2,\spn_1)\,,
\end{split}
\end{align}
with the condition that $\eta=\theta_1+\theta_2+\theta_3$.  The crossing symmetry \eqref{crossingsym} (along with the spin reflection symmetry \eqref{spinsymmetry}) is satisfied for the majority of star-triangle relations of the form \eqref{YBEsym}, with the exception of the rational cases in Section \ref{sec:ratlim} (corresponding to integrable quad equations $Q2$ and $Q1_{(\delta=1)}$).    The star-triangle relation of the form \eqref{YBEsym} was previously shown \cite{Bazhanov:2016ajm} to correspond to quantum counterparts of the $Q$-type equations in the ABS list \cite{ABS}.

\subsubsection{Asymmetric form (quantum $H$-type equation)}

The second form of the star-triangle relation considered in this paper (introduced in \eqref{YBEintro}), takes the form
\begin{align}
\label{YBEasym}
\begin{split}
\ds\int_{\mathbb{R}} d\spn_0\,\s(\spn_0)\,\oV(\theta_1\,|\,\spn_1,\spn_0)\,V(\theta_1+\theta_3\,|\,\spn_2,\spn_0)\,\oW(\theta_3\,|\,\spn_0,\spn_3)\phantom{\,,} \\
\ds=R(\theta_1,\theta_3)\,V(\theta_1\,|\,\spn_2,\spn_3)\,\oV(\theta_1+\theta_3\,|\,\spn_1,\spn_3)\,W(\theta_3\,|\,\spn_2,\spn_1)\,. 
\end{split}
\end{align}
This form of the star-triangle relation will be seen to be associated to the asymmetric $H$-type ABS equations \cite{ABS2}.  The star-triangle relations of the form \eqref{YBEasym}, may be obtained as certain asymmetric limits of the star-triangle relation \eqref{YBEsym} (or \eqref{YBEtotalsym}), and examples of such limits will be proven for each example considered in this paper.  In all cases, the Boltzmann weights $W(\theta\,|\,\spn_i,\spn_j)$, $\oW(\theta\,|\,\spn_i,\spn_j)$ in \eqref{YBEasym}, are equivalent to a Boltzmann weight coming from a star-triangle relation of the form \eqref{YBEsym}.\footnote{The classical analogue of this property, is that the three-leg form of an $H$-type ABS equation, always has a ``leg'' function in common with a $Q$-type ABS equation.}  The star-triangle relation \eqref{YBEasym} has the graphical representation given in Figure \ref{YBEasymfig}, where to contrast with Figure \ref{YBEfig}, the Boltzmann weights $V(\theta\,|\,\spn_i,\spn_j)$, $\oV(\theta\,|\,\spn_i,\spn_j)$, are associated to double edges that connect two vertices.

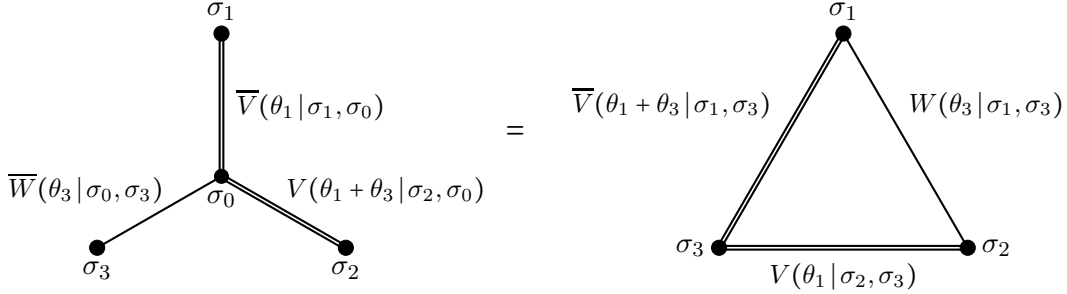
\begin{figure}[h]
\centering

\begin{tikzpicture}[scale=1.9]

\draw[white] (-2.0,0.5) circle (0.1pt)
node[right=2pt]{\color{black}\small $\oV(\theta_1\,|\,\spn_1,\spn_0)$};
\draw[white] (-2.95,-0.3) circle (0.1pt)
node[above=2pt]{\color{black}\small $\oW(\theta_3\,|\,\spn_0,\spn_3)$};
\draw[white] (-0.85,-0.3) circle (0.1pt)
node[above=2pt]{\color{black}\small $V(\theta_1+\theta_3\,|\,\spn_2,\spn_0)$};
\draw[-,double,thick] (-2,0)--(-2,1);
\draw[-,thick] (-2,0)--(-2.87,-0.5);
\draw[-,double,thick] (-2,0)--(-1.13,-0.5);
%\draw[->,thick,dotted] (-2.9,-0.25)--(-1.1,-0.25);
%\fill[white!] (-2.9,-0.25) circle (0.5pt)
%node[left=1.5pt]{\color{black}\small $p$};
%\draw[->,thick,dotted] (-2.7,-0.71)--(-1.7,1.02);
%\fill[white!] (-2.7,-0.71) circle (0.5pt)
%node[below=1.5pt]{\color{black}\small $q$};
%\draw[->,thick,dotted] (-1.3,-0.71)--(-2.3,1.02);
%\fill[white!] (-1.3,-0.71) circle (0.5pt)
%node[below=1.5pt]{\color{black}\small $r$};
\fill (-2,0) circle (1.5pt)
node[below=1.5pt]{\color{black} $\spn_0$};
\filldraw[fill=black,draw=black] (-2,1) circle (1.5pt)
node[above=1.5pt] {\color{black} $\spn_1$};
\filldraw[fill=black,draw=black] (-2.87,-0.5) circle (1.5pt)
node[below=1.5pt] {\color{black} $\spn_3$};
\filldraw[fill=black,draw=black] (-1.13,-0.5) circle (1.5pt)
node[below=1.5pt] {\color{black} $\spn_2$};

\fill[white!] (0.2,0.3) circle (0.01pt)
node[left=0.05pt] {\color{black}\Large $=$};

\begin{scope}[xshift=10pt]
\draw[white] (2.0,-0.5) circle (0.1pt)
node[below=2pt]{\color{black}\small $V(\theta_1\,|\,\spn_2,\spn_3)$};
\draw[white] (3.0,0.30) circle (0.1pt)
node[above=2pt]{\color{black}\small $W(\theta_3\,|\,\spn_1,\spn_3)$};
\draw[white] (0.80,0.30) circle (0.1pt)
node[above=2pt]{\color{black}\small $\oV(\theta_1+\theta_3\,|\,\spn_1,\spn_3)$};
\draw[-,double,thick] (2,1)--(1.13,-0.5);
\draw[-,double,thick] (1.13,-0.5)--(2.87,-0.5);
\draw[-,thick] (2.87,-0.5)--(2,1);
%\draw[->,thick,,dotted] (1.1,0.25)--(2.85,0.25);
%\fill[white!] (1.1,0.25) circle (0.5pt)
%node[left=1.5pt]{\color{black}\small $p$};
%\draw[->,thick,dotted] (1.75,-0.93)--(2.68,0.67);
%\fill[white!] (1.75,-0.93) circle (0.5pt)
%node[below=1.5pt]{\color{black}\small $q$};
%\draw[->,thick,dotted] (2.25,-0.93)--(1.32,0.67);
%\fill[white!] (2.25,-0.93) circle (0.5pt)
%node[below=1.5pt]{\color{black}\small $r$};
\filldraw[fill=black,draw=black] (2,1) circle (1.5pt)
node[above=1.5pt]{\color{black} $\spn_1$};
\filldraw[fill=black,draw=black] (1.13,-0.5) circle (1.5pt)
node[left=1.5pt]{\color{black} $\spn_3$};
\filldraw[fill=black,draw=black] (2.87,-0.5) circle (1.5pt)
node[right=1.5pt]{\color{black} $\spn_2$};
\end{scope}
\end{tikzpicture}
\caption{Graphical representation of the asymmetric form of the star-triangle relation \eqref{YBEasym}.}
\label{YBEasymfig}
\end{figure}

\subsection{3D-consistent equations from the quasi-classical expansion}

\subsubsection{Classical $Q$-type equations}

Recall the symmetric star-triangle relation given in \eqref{YBEsym}.  Let $f_{i,\hbar}(x_i)$, $i=0,1,2,3$, and $g_{j,\hbar}(\alpha)$, $j=1,3$, respectively denote suitable scaling and translations of the variables $x_i$, $\theta_i$, which depend on a parameter $\hbar>0$, such that with a change of variables
\begin{align}
\label{covqcl}
\spn_i= f_{i,\hbar}(x_i)\,,\quad i=0,1,2,3\,,\qquad \theta_j=g_{j,\hbar}(\alpha_j)\,,\quad j=1,3\,,
\end{align}
the leading asymptotics of the left hand side of the star-triangle relation \eqref{YBEsym} takes the form
\begin{align}
\label{qclybe}
\int \frac{dx_0}{\sqrt{\hbar}} \exp\left(\hbar^{-1}(\mathcal{A}_\medstar(x_0,x_1,x_2,x_3;\alpha_1,\alpha_3)+O(1)\right)\,,
\end{align}
where
\begin{align}
\label{adefsym}
\mathcal{A}_\medstar(x_0,x_1,x_2,x_3;\alpha_1,\alpha_3)&=C(x_0)+\ol(\alpha\,|\,x_1,x_0)+\lag(\alpha+\beta\,|\,x_2,x_0)+\ol(\alpha_3\,|\,x_3,x_0).
\end{align}
Up to possibly some factors which have no overall contribution to the star-triangle relation \eqref{YBEsym}, the functions $\lag(\alpha\,|\,x_i,x_j)$, and $\ol(\alpha\,|\,x_i,x_j)$, correspond to the $O(\hbar^{-1})$ asymptotics of $W(\theta\,|\,\spn_i,\spn_j)$, and $\oW(\theta\,|\,\spn_i,\spn_j)$, respectively, while $C(x)$ corresponds to the $O(\hbar^{-1})$ asymptotics of the factor $S(\sigma)$.

The saddle point equation of \eqref{qclybe} is given by 
\begin{align}
\label{3legsym}
\left.\frac{\partial}{\partial x}\mathcal{A}_\medstar(x,x_1,x_2,x_3;\alpha_1,\alpha_3)\right|_{x=x_0}=0\,.
\end{align}
Then for example, by defining 
\begin{align}
\frac{\partial C(x_j)}{\partial x_j}+\frac{\partial\mathcal{L}(\alpha\,|\,x_i,x_j)}{\partial x_j}=\varphi(\alpha\,|\,x_i,x_j)\,,\qquad \frac{\partial\ol(\alpha\,|\,x_i,x_j)}{\partial x_j}=\overline{\varphi}(\alpha\,|\,x_i,x_j)\,,
\end{align}
the saddle point equation \eqref{3legsym} takes a typical ``three-leg'' form 
\begin{align}
\label{3legsym2}
\ds\left.\frac{\partial}{\partial x}\mathcal{A}_\medstar(x,x_1,x_2,x_3;\alpha_1,\alpha_3)\right|_{x=x_0}=\overline{\varphi}(\alpha_1\,|\,x_1,x_0)+\varphi(\alpha_1+\alpha_3\,|\,x_2,x_0)+\overline{\varphi}(\alpha_3\,|\,x_0,x_3)=0 \,.
\end{align}
Note that here the derivative $\partial\,\iC(x)/\partial x$, is absorbed into the definition of $\varphi(x,y;\alpha)$, however it may also in principle be absorbed into $\ovphi(x,y;\alpha)$, or a combination of both.  It is only the overall expression given in \eqref{3legsym2} that matters.  Defining $\alpha_2=\alpha_1+\alpha_3$, the functions $\varphi(\alpha\,|\,x_i,x_j)$, and $\overline{\varphi}(\alpha\,|\,x_i,x_j)$ of the saddle point equation \eqref{3legsym}, may be associated to edges and vertices of a quadrilateral, as depicted in Figure \ref{3legfigq}.

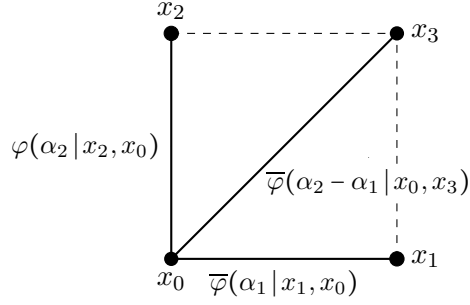
\begin{figure}[tbh]
\centering
\begin{tikzpicture}[scale=1.5]
\draw[thick,-] (3,-1)--(3,1);
\draw[thick,-] (3,-1)--(5,1);
\draw[thick,-] (5,-1)--(3,-1);
\draw[-,dashed] (3,1)--(5,1);
\draw[-,dashed] (5,-1)--(5,1);
%\fill (3,0) circle (0.1pt)
%node[left=0.5pt]{\color{black}\small $\varphi$};
\fill (3,0) circle (0.1pt)
node[left=0.5pt]{\color{black}\small $\varphi(\alpha_2\,|\,x_2,x_0)$};
%\fill (4,-1) circle (0.1pt)
%node[below=0.5pt]{\color{black}\small $\varphi$};
\fill (4,-1) circle (0.1pt)
node[below=0.5pt]{\color{black}\small $\overline{\varphi}(\alpha_1\,|\,x_1,x_0)$};
%\fill (4,0) circle (0.1pt)
%node[above=0.5pt]{\color{black}\small $\varphi$};
\fill [white] (4.9,-0.5) rectangle (5.1,-0.2);
\fill (4.75,-0.1) circle (0.1pt)
node[below=0.5pt]{\color{black}\small $\overline{\varphi}(\alpha_2-\alpha_1\,|\,x_0,x_3)$};
\fill (3,-1) circle (1.8pt)
node[below=1.5pt]{\color{black} $x_0$};
\filldraw[fill=black,draw=black] (3,1) circle (1.8pt)
node[above=1.5pt]{\color{black} $x_2$};
\fill (5,1) circle (1.8pt)
node[right=1.5pt]{\color{black} $x_3$};
\filldraw[fill=black,draw=black] (5,-1) circle (1.8pt)
node[right=1.5pt]{\color{black} $x_1$};
\end{tikzpicture}
\caption{Q-type three-leg equation \eqref{3legsym} arising as the saddle point equation for the star-triangle relation \eqref{YBEsym}.}
\label{3legfigq}
\end{figure}

For all solutions of the star-triangle relation \eqref{YBEsym} considered in this paper, there exists a change of variables of the form
\begin{align}
\label{ptrans}
x=h_0(x_0),\quad u=h_1(x_1),\quad y=h_2(x_2),\quad v=h_3(x_3),\quad \alpha=h_a(\alpha_1,\alpha_3),\quad \beta=h_b(\alpha_1,\alpha_3)\,,
\end{align}
such that the three-leg saddle point equation \eqref{3legsym} may be seen to correspond to the equation
\begin{align}
\label{quadequation}
Q(x,u,y,v;\alpha,\beta)=0\,,
\end{align}
where the function $Q(x,u,y,v;\alpha,\beta)$  is a polynomial that is linear in each of the variables $x,u,y,v$ respectively (known as the affine-linear property).  For the algebraic case that is considered in Section \ref{sec:alglim}, the equations \eqref{3legsym} and \eqref{quadequation} are equivalent, up to the corresponding change of variables of the form \eqref{ptrans}.  For each of the remaining cases, the correspondence between \eqref{3legsym} and \eqref{quadequation} is obtained by first writing \eqref{3legsym} in the form
\begin{align}
\label{mult3leg}
\left.\frac{\partial}{\partial x}\mathcal{A}_\medstar(x,x_1,x_2,x_3;\alpha_1,\alpha_3)\right|_{x=x_0}\hspace{-0.4cm}=\Log\left(\overline{\Phi}(\alpha_1\,|\,x_1,x_0)\,\Phi(\alpha_1+\alpha_3\,|\,x_2,x_0)\,\overline{\Phi}(\alpha_3\,|\,x_3,x_0)\right) +2\pi\ii k\,,
\end{align}
for some $k\in\mathbb{Z}$, where $\Log(z)$ denotes the principal branch of the complex logarithm.  Then it may be seen that up to the change of variables of the form \eqref{ptrans}, the equation $\overline{\Phi}(\alpha_1\,|\,x_1,x_0)\,\Phi(\alpha_1+\alpha_3\,|\,x_2,x_0)\,\overline{\Phi}(\alpha_3\,|\,x_3,x_0)=1$ is equivalent to the quad equation \eqref{quadequation}.  Note that for such cases, \eqref{3legsym}, and \eqref{quadequation}, are obviously only equivalent in connected subspaces of the 6-dimensional complex parameter space of the variables $x_0,x_1,x_2,x_3,\alpha_1,\alpha_3$, where $k=0$ in \eqref{mult3leg}.  

An important notion of integrability for equations of the type \eqref{quadequation}, is known as 3D-consistency (also known as consistency-around-a-cube, or multi-dimensional consistency).  The property of 3D-consistency may be stated as follows:

{\bf 3D-consistency condition.} \cite{nijhoffwalker,BobSurQuadGraphs,ABS} Associate the six respective equations
\begin{align}
\label{3dequations}
\begin{gathered}
Q(x_0,x_1,x_2,x_{12},\alpha,\beta)=0\,, \\
Q(x_0,x_1,x_3,x_{13},\alpha,\gamma)=0\,, \\
Q(x_0,x_2,x_3,x_{23},\beta,\gamma)=0\,, \\
Q(x_{13},x_3,x_{123},x_{23},\alpha,\beta)=0\,, \\
Q(x_{12},x_2,x_{123},x_{23},\alpha,\gamma)=0\,, \\
Q(x_{12},x_1,x_{123},x_{13},\beta,\gamma)=0\,, 
\end{gathered}
\end{align}
to the corresponding six faces of a cube, as indicated in Figure \ref{3D}.  Edges that are parallel in Figure \ref{3D}, are always associated with the same variable $\alpha$, $\beta$, or $\gamma$.  Then consider the initial value problem, where $x_0,x_1,x_2,x_3$ and $\alpha,\beta,\gamma$, are known, and $x_{12},x_{13},x_{23},x_{123}$, are to be determined.  Due to the affine-linear property of \eqref{quadequation}, the first three equations in \eqref{3dequations} may be solved uniquely for the variables $x_{12}$, $x_{13}$, $x_{23}$, respectively.  The 3D-consistency condition is that the remaining three equations in \eqref{3dequations}, each must agree for the solution of the remaining variable $x_{123}$.

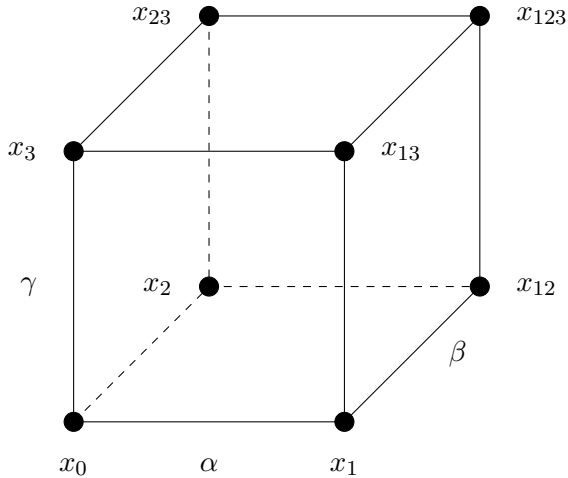
\begin{figure}[htb]
\centering
\begin{tikzpicture}[scale=1.8]
\draw[white!] (-1,0) circle (0.1pt)
node[left=10pt]{\color{black} $\gamma$};
\draw[white!] (0,-1) circle (0.1pt)
node[below=10pt]{\color{black} $\alpha$};
\draw[white!] (1.5,-0.5) circle (0.1pt)
node[right=10pt]{\color{black} $\beta$};
\draw[-] (1,1)--(-1,1)--(-1,-1)--(1,-1)--(1,1);
\draw[-] (2,-0)--(2,2)--(-0,2);
\draw[-,dashed] (-0,2)--(-0,-0)--(2,-0);
\draw[-] (-1,1)--(0,2);
\draw[-] (1,1)--(2,2);
\draw[-] (1,-1)--(2,0);
\draw[-,dashed] (-1,-1)--(0,0);
\filldraw[fill=black,draw=black] (2,2) circle (2pt)
node[right=10pt]{$x_{123}$};
\filldraw[fill=black,draw=black] (0,0) circle (2pt)
node[left=10pt]{$x_{2}$};
\filldraw[fill=black,draw=black] (2,0) circle (2pt)
node[right=10pt]{$x_{12}$};
\filldraw[fill=black,draw=black] (0,2) circle (2pt)
node[left=10pt]{$x_{23}$};
\filldraw[fill=black,draw=black] (1,1) circle (2pt)
node[right=10pt]{$x_{13}$};
\filldraw[fill=black,draw=black] (-1,-1) circle (2pt)
node[below=10pt]{$x_0$};
\filldraw[fill=black,draw=black] (1,-1) circle (2pt)
node[below=10pt]{$x_1$};
\filldraw[fill=black,draw=black] (-1,1) circle (2pt)
node[left=10pt]{$x_3$};
\end{tikzpicture}

\caption{3D-consistency, also known as ``consistency-around-a-cube'', or ``multi-dimensional consistency''.}
\label{3D}
\end{figure}

Remarkably, for all cases considered in this paper, the quad equation \eqref{quadequation} derived from the star-triangle relation \eqref{YBEsym} satisfies 3D-consistency exactly as described above, and corresponds to an equation in the ABS classification \cite{ABS}.  The resulting correspondence between the solutions of the star-triangle relation \eqref{YBEsym}, and the symmetric 3D-consistent ABS equations of $Q$-type, is given by the following Theorem \cite{Bazhanov:2007mh,Bazhanov:2010kz,Bazhanov:2016ajm}:

\begin{theorem}
\label{Qtheorem}
There exists a respective solution $W(\theta\,|\,\spn_i,\spn_j)$, $\oW(\theta\,|\,\spn_i,\spn_j)$, of the star-triangle relation \eqref{YBEsym} for each of the 3D-consistent $Q$-type quad equations \eqref{quadequation} of the ABS list, which has a quasi-classical expansion that can be written in the form \eqref{qclybe}, through which the corresponding $Q$-type quad equation \eqref{quadequation} is obtained from the saddle point three-leg equation \eqref{3legsym} with a change of variables of the form \eqref{ptrans}.
\end{theorem}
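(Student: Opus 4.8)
The plan is to prove the theorem exhaustively, case by case over the finite $Q$-list of the ABS classification ($Q1_{(\delta=0)}$, $Q1_{(\delta=1)}$, $Q2$, $Q3_{(\delta=0)}$, $Q3_{(\delta=1)}$, $Q4$), since ``there exists a respective solution'' is an existence statement over a finite list and is naturally established by exhibiting each solution explicitly. For every equation the argument has the same three-stage structure: (i) write down explicit Boltzmann weights $W(\theta\,|\,\spn_i,\spn_j)$, $\oW(\theta\,|\,\spn_i,\spn_j)$ together with the self-energy factor $\s(\spn_0)$; (ii) verify that these satisfy the star-triangle relation \eqref{YBEsym}; and (iii) carry out the $\hbar\to0$ quasi-classical analysis to recover the target three-leg equation. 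The building blocks in step (i) are the standard special functions attached to each level, with the scalings $f_{i,\hbar}$ and $g_{j,\hbar}$ of \eqref{covqcl} chosen so that the relevant asymptotic regime is reached as $\hbar\to0$. The top case $Q4$ is built from elliptic gamma functions, its hyperbolic descendants from the non-compact quantum dilogarithm (hyperbolic gamma function), and the remaining cases from their rational/algebraic degenerations.

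For step (ii) I would not seek a uniform proof but instead identify each instance of \eqref{YBEsym} with a known hypergeometric integral identity: the elliptic case is an incarnation of the elliptic beta integral, and the hyperbolic and rational descendants reduce to Barnes- and Euler-beta-type integrals, each of which can be obtained as a controlled limit of the level above. For step (iii), I would apply Laplace/steepest-descent asymptotics to the left-hand side of \eqref{YBEsym} after the substitution \eqref{covqcl}. Using the known semiclassical asymptotics of the special functions above — whose $O(\hbar^{-1})$ coefficients are assembled from the Euler dilogarithm $\dilog$ — the integrand factorizes in the form \eqref{qclybe}, with the action $\mathcal{A}_\medstar$ of \eqref{adefsym} built from the Lagrangians $\lag$, $\ol$ and the vertex term $C(x_0)$. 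Differentiating yields the saddle-point equation \eqref{3legsym}, which I rewrite in the three-leg form \eqref{3legsym2} and then in the multiplicative/logarithmic form \eqref{mult3leg}; applying the point transformation \eqref{ptrans} exponentiates the relation to the affine-linear polynomial $Q(x,u,y,v;\alpha,\beta)=0$ of \eqref{quadequation}, which I then match termwise against the tabulated ABS $Q$-polynomial. Once the equation is identified as a member of the ABS list, 3D-consistency requires no separate verification, as it is a defining property of that list.

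The main obstacle is twofold. First, the integral identities underlying step (ii) are the genuine analytic content, and the chain of degenerations must be taken with care so that each limit of the Boltzmann weights, of the contour, and of the spectral scalings exists and commutes with the integral; the rational cases additionally break the crossing symmetry \eqref{crossingsym}, so those star-triangle relations cannot be inferred from the more symmetric form \eqref{YBEtotalsym} and must be treated directly. Second, the asymptotic analysis of step (iii) must be controlled rigorously: one must check that the saddle determined by \eqref{3legsym} is non-degenerate, so that the Gaussian fluctuation is genuinely $O(1)$ and does not contaminate the $O(\hbar^{-1})$ action; one must track the branch of $\Log$ so that $k=0$ in \eqref{mult3leg} holds on the relevant connected component of parameter space, as flagged after that equation; and one must handle the algebraic limit of Section~\ref{sec:alglim} separately, where \eqref{3legsym} and \eqref{quadequation} are equivalent without passing through the logarithmic form and where the naive saddle may coalesce. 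Verifying the precise point transformation \eqref{ptrans} that linearizes each three-leg equation into affine-linear form is the remaining nontrivial bookkeeping, but it is algebraic once the correct special functions and scalings have been fixed.
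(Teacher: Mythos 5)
Your proposal is correct and follows essentially the same route as the paper: Theorem \ref{Qtheorem} is established case by case through explicit Boltzmann weights whose star-triangle relations are beta-type hypergeometric integral identities (\eqref{qq3d1}, \eqref{qq3d0}, \eqref{qq2}, \eqref{qq1d1}, \eqref{qq1d0}), followed by the quasi-classical saddle-point expansion, the logarithmic three-leg form with the $k=0$ branch caveat, and the change of variables \eqref{ptrans} yielding the affine-linear ABS polynomial, with 3D-consistency inherited from the classification rather than re-verified. The one deviation is $Q4$: the paper explicitly does not subsume it under \eqref{YBEsym} (its star-triangle relation integrates the spin over a compact subset of $\mathbb{R}$ rather than over $\mathbb{R}$) and delegates that case to prior work \cite{Bazhanov:2010kz}, so your plan to treat it uniformly via the elliptic beta integral would need that modification of the spin domain and contour.
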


The Yang-Baxter/3D-consistency correspondence for $Q$-type equations given by Theorem \ref{Qtheorem} (and which also appear in this paper) is summarised in Table \ref{Qtable}  (note also the interpretation of the star-triangle relations in terms of hypergeometric integrals which is summarised in Table \ref{hyptable}).

\begin{table}[htbp]
\begin{center}
\medskip
\begin{tabular}{c|c|c}
  Boltzmann Weights & Three-Leg Equation & Quad Equation\\
 \hline
 \hline
 \eqref{genfadvol}  & \eqref{Q3d13leg} & $Q3_{(\delta=1)}\phantom{1}$ \eqref{Q3d1} \\ 
   \hline
  \eqref{fadvol} & \eqref{Q3d03leg} & $Q3_{(\delta=0)}\phantom{1}$ \eqref{Q3d0} \\ 
   \hline
  \eqref{q2quan} & \eqref{Q23leg} & $Q2_{\phantom{(\delta=1)}}$ \eqref{Q2} \\ 
 \hline
  \eqref{q1d1quan}  & \eqref{Q1d13leg} & $Q1_{(\delta=1)}$ \eqref{Q1d1} \\ 
   \hline
  \eqref{quanq1d0}  & \eqref{Q1d03leg} & $Q1_{(\delta=0)}$ \eqref{Q1d0}
\end{tabular}
\caption{$Q$-type ABS quad equations \eqref{quadequation} derived from the star-triangle relation \eqref{YBEsym}.}
\label{Qtable}
\end{center}
\end{table}

The Yang-Baxter/3D-consistency correspondence summarised in Theorem \ref{Qtheorem} and Table \ref{Qtable} has appeared previously in other papers {\cite{Bazhanov:2007mh,Bazhanov:2010kz,Bazhanov:2016ajm}}.  These equations will be considered again here, mainly because the star-triangle relations for the $H$-type equations (discussed in the following subsection) will be seen to arise from certain limits of the star-triangle relations \eqref{YBEsym} for the $Q$-type equations. 

Since the main motivation of this paper is to consider the Yang-Baxter/3D-consistency correspondence for the $H$-type equations, the elliptic case will not be considered here, for which there is only the one $Q$-type equation, $Q4$.  The star-triangle relation for $Q4$ also slightly differs from the exact form of \eqref{YBEsym}, in that the integration and associated spin variables are not taken in $\mathbb{R}$, but rather are restricted to a compact subset of $\mathbb{R}$.  The equation $Q4$ has previously been derived from various different solutions of the star-triangle relation, as summarised in Figure \ref{ellipticfig}.  Figure \ref{ellipticfig} illustrates that a single quad equation may correspond to several different forms of the star-triangle relation, which are each associated with distinct integrable lattice models.  In other words the quantization of an integrable quad equation through the Yang-Baxter/3D-consistency correspondence is not unique.

\begin{figure}[tbh]
\centering
\begin{tikzpicture}[scale=1]
\draw[white!] (-7,3) circle (0.01pt)
node[above=1pt]{\color{black} Quantum};
\draw[white!] (7,3) circle (0.01pt);
\draw[white!] (-7,0.25) circle (0.01pt)
node[above=1pt]{\color{black} Classical};
\draw[white!] (7,0.25) circle (0.01pt);

\draw[white!] (2.0,1.7) circle (0.01pt)
node[right=1pt]{\color{black}\small \cite{Bazhanov:2016ajm} $\; N\to\infty$};
\draw[white!] (-2.0,1.7) circle (0.01pt)
node[left=1pt]{\color{black}\small  \cite{Kels:2017vbc} $\q\to\EXP^{\pi\ii/rN}$};
\draw[white!] (0,3) circle (0.01pt)
node[above=1pt]{\color{black} MS}
node[above=30pt]{\color{black}\small  \cite{Kels:2017vbc} $\;\q\to\EXP^{\pi\ii/rN}$ }
node[below=20pt]{\color{black}\small \cite{Bazhanov:2010kz} $\q\to\EXP^{\pi\ii/N}$ };
\draw[white!] (1.8,3.6) circle (0.01pt)
node[below=5pt]{\color{black}\small \cite{Bazhanov:2010kz} $\;\q\to\EXP^{\pi\ii/N}$};
\draw[white!] (-1.8,3.6) circle (0.01pt)
node[below=5pt]{\color{black}\small $r=1$};
\begin{scope}[>=Latex]
\draw[thick,->>] (0.6,3.4)--(3.4,3.4);
\end{scope}
\draw[thick,<-] (-0.6,3.4)--(-3.3,3.4);
\begin{scope}[>=Latex]
\draw[thick,->>] (-3.6,3.6) .. controls (-1.5,4.2) and (1.5,4.2) .. (3.6,3.6);
\end{scope}
\draw[white!] (-4,3) circle (0.01pt)
node[above=1pt]{\color{black} LEGF};
\draw[white!] (4,3) circle (0.01pt)
node[above=1pt]{\color{black} KM};
\draw[thick,-] (0,3)--(0,2.25);
\begin{scope}[>=Latex]
\draw[thick,->] (0,1.75)--(0,1);
\end{scope}
\draw[white!] (0,0.25) circle (0.01pt)
node[above=1pt]{\color{black} $Q4$};
\begin{scope}[>=Latex]
\draw[thick,->] (-4,3)--(-0.5,1);
\end{scope}
\begin{scope}[>=Latex]
\draw[thick,->] (4,3)--(0.5,1);
\end{scope}

\end{tikzpicture}
\caption{Star-triangle relations that reduce to $Q4$ in a quasi-classical limit. Here LEGF, MS, KM, stand for lens elliptic gamma function solution \cite{Kels:2015bda}, master solution \cite{Bazhanov:2010kz}, Kashiwara-Miwa solution \cite{Kashiwara:1986tu}, of star-triangle relations respectively.  Also filled single, and double arrow heads, respectively represent leading ($O(\hbar^{-1})$) and subleading ($O(1)$) order quasi-classical expansions.}
\label{ellipticfig}
\end{figure}
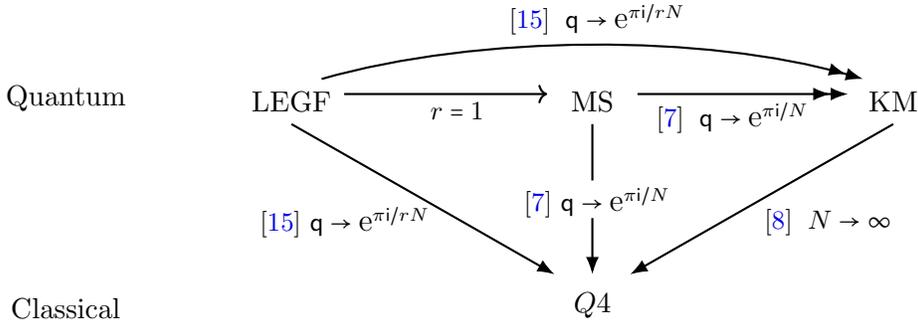

\subsubsection{Classical $H$-type equations}

Next consider the asymmetric form of the star-triangle relation given in \eqref{YBEasym}.  The majority of the previous discussion for \eqref{YBEsym} also holds for \eqref{YBEasym}, with only some small differences.

The quasi-classical expansion \eqref{covqcl} of the star-triangle relation \eqref{YBEasym}, has the same form given in \eqref{qclybe}, but with \eqref{adefsym} replaced with 
\begin{align}
\label{adefasym}
\mathcal{A}_\medstar(x_0,x_1,x_2,x_3;\alpha_1,\alpha_3)&=C(x_0)+\olam(\alpha\,|\,x_1,x_0)+\Lambda(\alpha+\beta\,|\,x_2,x_0)+\ol(\alpha_3\,|\,x_0,x_3)\,.
\end{align}

The functions $\Lambda(\alpha\,|\,x_i,x_j)$, $\olam(\alpha\,|\,x_i,x_j)$, $\ol(\alpha\,|\,x_i,x_j)$, correspond to the $O(\hbar^{-1})$ asymptotics of $V(\theta\,|\,\spn_i,\spn_j)$, $\oV(\theta\,|\,\spn_i,\spn_j)$, $\oW(\theta\,|\,\spn_i,\spn_j)$, respectively (up to factors which have no overall contribution to the star-triangle relation \eqref{YBEasym}), while $C(x)$ corresponds to the $O(\hbar^{-1})$ asymptotics of the factor $S(\spn)$.

%Defining
%\begin{align}
%\frac{\partial\Lambda(\alpha\,|\,x_i,x_j)}{\partial x_j}=\phi(\alpha\,|\,x_i,x_j),\;\frac{\partial\olam(\alpha\,|\,x_i,x_j)}{\partial x_j}=\ovphib(\alpha\,|\,x_i,x_j),\;\frac{\partial C(x_j)}{\partial x_j}+\frac{\partial\ol(\alpha\,|\,x_i,x_j)}{\partial x_j}=\ovphi(\alpha\,|\,x_i,x_j),
%\end{align}
Similarly to \eqref{3legsym2}, the saddle point equation \eqref{3legsym} for \eqref{adefasym}, may be written in a typical three-leg form 
\begin{align}
\label{3legasym}
\left.\ds\frac{\partial}{\partial x}\mathcal{A}_\medstar(x,x_1,x_2,x_3;\alpha_1,\alpha_3)\right|_{x=x_0}=\ovphib(\alpha_1\,|\,x_0,x_1)+\phi(\alpha_1+\alpha_3\,|\,x_0,x_2)+\ovphi(\alpha_3\,|\,x_0,x_3)=0 \,,
\end{align}
where $\phi(\alpha\,|\,x_i,x_j)$, $\ovphib(\alpha\,|\,x_i,x_j)$, $\ovphi(\alpha\,|\,x_i,x_j)$, are appropriately defined derivatives of $\Lambda(\alpha\,|\,x_i,x_j)$, $\olam(\alpha\,|\,x_i,x_j)$, $\ol(\alpha\,|\,x_i,x_j)$, and $C(x)$.  Defining $\alpha_2=\alpha_1+\alpha_3$, the functions $\phi(\alpha\,|\,x_i,x_j)$, $\ovphib(\alpha\,|\,x_i,x_j)$, and $\varphi(\alpha\,|\,x_i,x_j)$  in the saddle point equation \eqref{3legasym}, may be associated to edges and vertices of a quadrilateral, as depicted in Figure \ref{3legfigh}.

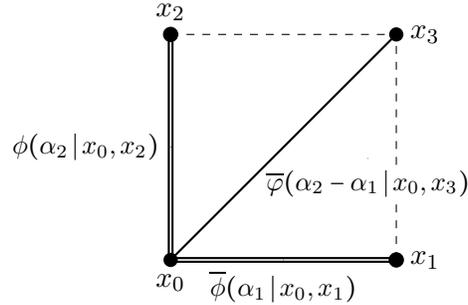
\begin{figure}[tbh]
\centering
\begin{tikzpicture}[scale=1.5]
\draw[double,thick,-] (3,-1)--(3,1);
\draw[-,thick] (3,-1)--(5,1);
\draw[double,thick,-] (5,-1)--(3,-1);
\draw[-,dashed] (3,1)--(5,1);
\draw[-,dashed] (5,-1)--(5,1);
%\fill (3,0) circle (0.1pt)
%node[left=0.5pt]{\color{black}\small $\varphi$};
\fill (3,0) circle (0.1pt)
node[left=0.5pt]{\color{black}\small $\phi(\alpha_2\,|\,x_0,x_2)$};
%\fill (4,-1) circle (0.1pt)
%node[below=0.5pt]{\color{black}\small $\varphi$};
\fill (4,-1) circle (0.1pt)
node[below=0.5pt]{\color{black}\small $\overline{\phi}(\alpha_1\,|\,x_0,x_1)$};
%\fill (4,0) circle (0.1pt)
%node[above=0.5pt]{\color{black}\small $\varphi$};
\fill [white] (4.9,-0.5) rectangle (5.1,-0.2);
\fill (4.75,-0.1) circle (0.1pt)
node[below=0.5pt]{\color{black}\small $\overline{\varphi}(\alpha_2-\alpha_1\,|\,x_0,x_3)$};
\fill (3,-1) circle (1.8pt)
node[below=1.5pt]{\color{black} $x_0$};
\filldraw[fill=black,draw=black] (3,1) circle (1.8pt)
node[above=1.5pt]{\color{black} $x_2$};
\fill (5,1) circle (1.8pt)
node[right=1.5pt]{\color{black} $x_3$};
\filldraw[fill=black,draw=black] (5,-1) circle (1.8pt)
node[right=1.5pt]{\color{black} $x_1$};
\end{tikzpicture}
\caption{H-type three-leg equation \eqref{3legasym} arising as the saddle point equation for the star-triangle relation \eqref{YBEasym}.  Double edges are drawn to be consistent with Figure \ref{YBEasymfig}.}
\label{3legfigh}
\end{figure}

For all solutions of the asymmetric star-triangle relation \eqref{YBEasym} considered in this paper, there exists a change of variables of the form \eqref{ptrans}, such that the three-leg saddle point equation \eqref{3legasym} may be written in the form
\begin{align}
\label{quadequation2}
H(x,u,y,v;\alpha,\beta)=0\,,
\end{align}
where the function $H(x,u,y,v;\alpha,\beta)$  is a polynomial that is linear in each of the variables $x,u,y,v$ respectively.  The correspondence between \eqref{3legasym} and \eqref{quadequation2}, is analogous to the correspondence between \eqref{3legsym}, and \eqref{quadequation}.  That is, for a particular algebraic case in Section \ref{sec:alglim} corresponding to both the $H1_{(\varepsilon=0)}$, and $H1_{(\varepsilon=1)}$ quad equations, the equations \eqref{3legasym} and \eqref{quadequation2} are equivalent up to the change of variables of the form \eqref{ptrans}.  For the remaining cases, \eqref{3legasym} may be written as 
\begin{align}
\label{mult3leg2}
\left.\frac{\partial}{\partial x}\mathcal{A}_\medstar(x,x_1,x_2,x_3;\alpha_1,\alpha_3)\right|_{x=x_0}\hspace{-0.4cm}=\Log\left(\overline{\Psi}(\alpha_1\,|\,x_1,x_0)\,\Psi(\alpha_1+\alpha_3\,|\,x_2,x_0)\,\overline{\Phi}(\alpha_3\,|\,x_3,x_0)\right) +2\pi\ii k\,,
\end{align}
for some $k\in\mathbb{Z}$.  Then the equation $\overline{\Psi}(\alpha_1\,|\,x_1,x_0)\,\Psi(\alpha_1+\alpha_3\,|\,x_2,x_0)\,\overline{\Phi}(\alpha_3\,|\,x_3,x_0)=1$ may be seen to be equivalent to the quad equation \eqref{quadequation2}, up to the change of variables of the form \eqref{ptrans}. Consequently, \eqref{3legsym}, and \eqref{quadequation}, are equivalent in the connected subspaces of the parameter space of variables $x_0,x_1,x_2,x_3,\alpha_1,\alpha_3$, where $k=0$ in \eqref{mult3leg2}.

The equations \eqref{quadequation2} derived from \eqref{YBEasym} also satisfy 3D-consistency, exactly as described in the previous subsection for \eqref{quadequation}, and are identified in this paper with the asymmetric $H$-type ABS equations \cite{ABS,ABS2}.  The resulting Yang-Baxter/3D-consistency correspondence is given by the following Theorem:

\begin{theorem}
\label{Htheorem}
There exists a respective solution $W(\theta\,|\,\spn_i,\spn_j)$, $\oW(\theta\,|\,\spn_i,\spn_j)$, $V(\theta\,|\,\spn_i,\spn_j)$, $\oV(\theta\,|\,\spn_i,\spn_j)$ of the star-triangle relation \eqref{YBEasym} for each of the 3D-consistent $H$-type quad equations \eqref{quadequation2} of the ABS list, which has a quasi-classical expansion that can be written in the form \eqref{qclybe} (with \eqref{adefasym}), through which the corresponding $H$-type quad equation \eqref{quadequation2} is obtained from the saddle point three-leg equation \eqref{3legasym} with a change of variables of the form \eqref{ptrans}.
\end{theorem}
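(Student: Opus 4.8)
The plan is to prove the theorem by explicit construction, case by case, in direct parallel with Theorem \ref{Qtheorem}, since the assertion is an existence statement whose content is entirely constructive: for each $H$-type ABS equation one must exhibit concrete Boltzmann weights $W,\oW,V,\oV$ satisfying \eqref{YBEasym}, establish the quasi-classical form \eqref{qclybe} with \eqref{adefasym}, and supply the change of variables \eqref{ptrans} carrying the three-leg equation \eqref{3legasym} onto an affine-linear equation \eqref{quadequation2} that is matched against the ABS $H$-list. The cases are organised by level, hyperbolic (Section \ref{sec:hyplim}), rational (Section \ref{sec:ratlim}), and algebraic (Section \ref{sec:alglim}), exactly as for the $Q$-type list.

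First I would obtain the new weights $V,\oV$ as asymmetric limits of the $Q$-type weights already supplied by Theorem \ref{Qtheorem}. The guiding principle, anticipated in the footnote, is that an $H$-type leg function is shared with a $Q$-type equation; concretely, $W,\oW$ in \eqref{YBEasym} are taken directly from a solution of the symmetric relation \eqref{YBEsym}, while $V,\oV$ arise from a degenerating limit of the remaining weights. For each case I would then verify that \eqref{YBEasym} holds for the resulting weights, either by exhibiting the limit explicitly starting from the already-established \eqref{YBEsym} (or \eqref{YBEtotalsym}), or by identifying the integral with a known hypergeometric transformation, namely the Barnes integrals \cite{Barnes1908,Barnes1910} for $H1_{(\varepsilon=1)}$, $H2_{(\varepsilon=0)}$, $H2_{(\varepsilon=1)}$, and a sum of Euler beta integrals for $H1_{(\varepsilon=0)}$.

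Next, for each solution I would compute the $\hbar\to0$ asymptotics \eqref{qclybe} under a suitable scaling \eqref{covqcl}, read off the Lagrangians $\Lambda,\olam,\ol$ as the $O(\hbar^{-1})$ parts of $V,\oV,\oW$, differentiate to obtain $\phi,\ovphib,\ovphi$, and write down the three-leg equation \eqref{3legasym}. I would then exhibit the change of variables \eqref{ptrans} and, following the multiplicative reformulation \eqref{mult3leg2} (or the direct equivalence in the algebraic case), check that \eqref{3legasym} coincides with an affine-linear $H(x,u,y,v;\alpha,\beta)=0$, identifying it with the appropriate entry of the ABS $H$-list, whose 3D-consistency in the sense of \eqref{3dequations} is then inherited.

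The hardest part will be the verification of \eqref{YBEasym} for the constructed weights. Controlling the asymmetric limit so that the integral converges and reproduces exactly the right-hand product, while simultaneously ensuring that the asymptotic expansion retains the precise form \eqref{adefasym} (with the correct splitting into $\Lambda,\olam,\ol$ and no spurious $\hbar$-dependence), is delicate and must be carried out separately for the hyperbolic, rational, and algebraic degenerations. A secondary subtlety is the branch condition in \eqref{mult3leg2}: the equivalence of the three-leg equation with the polynomial quad equation holds only on the connected component where the integer $k$ vanishes, so for each case I would identify the parameter region in which this is guaranteed.
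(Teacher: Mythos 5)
Your proposal matches the paper's own proof strategy essentially verbatim: the paper proves Theorem \ref{Htheorem} constructively, case by case across the hyperbolic, rational, and algebraic levels of Sections \ref{sec:hyplim}--\ref{sec:alglim}, obtaining each asymmetric solution of \eqref{YBEasym} as a limit (via shifts of the form \eqref{Kshift} and the degenerations \eqref{ratlim}, \eqref{algasym}) of the symmetric $Q$-type relations, with each limit justified by dominated convergence in a Proposition, identifying the resulting integrals with Barnes/Euler-beta type hypergeometric formulas, and then computing the quasi-classical expansion, Lagrangians, three-leg equation, and change of variables \eqref{ptrans} (with the $k=0$ branch caveat of \eqref{mult3leg2}) to match each ABS $H$-equation. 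No substantive deviation or gap to report.
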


The correspondence between solutions of the star-triangle relation \eqref{YBEasym}, and each of the $H$-type ABS quad equations given by Theorem \ref{Htheorem}, is summarised in Table \ref{Htable}.

\begin{table}[htbp]
\begin{center}
\medskip
\begin{tabular}{c|c|c}
  Boltzmann Weights & Three-Leg Equation & Quad Equation\\
 \hline
 \hline
 \eqref{h3d1quan}  & \eqref{H3d13leg} & $H3_{(\delta=1;\,\varepsilon=1)}$ \eqref{H3d1} \\ 
   \hline
  \eqref{h3d1quanalt} & \eqref{H3d13legalt} & $H3_{(\delta=1;\,\varepsilon=1)}$ \eqref{H3d12} \\ 
   \hline
  \eqref{h3d0quan} & \eqref{H3d03leg} & $H3_{(\delta=0,1;\,\varepsilon=1-\delta)}$ \eqref{H3d01} \\ 
 \hline
  \eqref{h3d02quan}  & \eqref{H3d023leg} & $H3_{(\delta=0;\,\varepsilon=0)}$ \eqref{H3d02} \\ 
   \hline
  \eqref{h2d1quan}  & \eqref{H2d13leg} & $H2_{(\varepsilon=1)}$ \eqref{H2d1} \\
  \hline
   \eqref{h2d1quanalt}  & \eqref{H2d13legalt} & $H2_{(\varepsilon=1)}$ \eqref{H2d12} \\ 
   \hline
  \eqref{h2d0quan} & \eqref{H2d03leg} & $H2_{(\varepsilon=0)}$ \eqref{H2d0} \\ 
   \hline
  \eqref{quanh1e0a} & \eqref{h1e0a3leg} & $H1_{(\varepsilon=1)}$ \eqref{H1d1} \\ 
 \hline
  \eqref{quanh1e0b}  & \eqref{h1e0b3leg} & $H1_{(\varepsilon=1)}$ \eqref{H1d12} \\ 
 \hline
  \eqref{quanh1e0a}  & \eqref{h1e0b3leg3} & $H1_{(\varepsilon=0)}$ \eqref{h1e0b}
\end{tabular}
\caption{$H$-type ABS quad equations \eqref{quadequation2} derived from the star-triangle relation \eqref{YBEasym}.}
\label{Htable}
\end{center}
\end{table}

The proof of Theorem \ref{Htheorem} is provided through explicit examples and computations that appear in Sections \ref{sec:hyplim}, \ref{sec:ratlim}, and \ref{sec:alglim}.  Each of the Boltzmann weights corresponding to solutions of the star-triangle relation \eqref{YBEasym} in Table \ref{Htable} are new, except for the case which corresponds to $H3_{(\delta=0,1;\,\varepsilon=1-\delta)}$ which appeared previously in \cite{BAZHANOV2018509}.  Quad equations for which $\varepsilon=1$, may each be derived from two different solutions of the star-triangle relation \eqref{YBEasym}, which respectively give different three-leg equations that come from the quasi-classical expansion, but result in the same quad equations (up to relabelling of the variables).\footnote{This is consistent with a previously study \cite{BollSuris11} of the three-leg equations, where the latter were obtained from 3D-consistent asymmetric $H$-type ABS equations and associated biquadratic polynomials (rather than from the saddle point equations of respective solutions of the star-triangle relations).}  Also both of the quad equations $H3_{(\delta=1,\varepsilon=0)}$, $H3_{(\delta=0,\varepsilon=1)}$, may be obtained from the same solution of the star-triangle relation, and same three-leg equation, by using different respective changes of variables of the form \eqref{ptrans}. Thus they are referred to here as a single quad equation $H3_{(\delta=0,1;\,\varepsilon=1-\delta)}$.  This can be expected, since one of the quad equations $H3_{(\delta=1,\varepsilon=0)}$, $H3_{(\delta=0,\varepsilon=1)}$, may be transformed into the other equation by a change straightforward change of variables, and vice versa, however it appears that this is the first time that such a symmetry between these equations has been noticed.  Finally the quad equations $H1_{(\delta=1)}$, and $H1_{(\delta=0)}$, may be obtained from the same solution of the star-triangle relation \eqref{qh1e0a}, by using different respective choices of quasi-classical expansion.

\section{Hyperbolic cases}
\label{sec:hyplim}

\subsection{Hyperbolic gamma function}

The central function at the hyperbolic level is known as the hyperbolic gamma function (also non-compact quantum dilogarithm, or double sine function),\footnote{The name typically refers to the convention used for the function \eqref{HGF}.  The convention used here appeared in \cite{Ruijsenaars:1997:FOA}, and the different conventions are related by simple changes of variables,  {\it e.g.}, as outlined in \cite{BultThesis,Rains2009,Spiridonov:2010em}.} and is defined by
\begin{align}
\label{HGF}
\Gamma_h(z;\bb)=\exp\left\{\int_{[0,\infty)}\frac{dx}{x}\left(\frac{\ii z}{x}-\frac{\sinh(2\ii zx)}{2\sinh(x\bb)\sinh(x/\bb)}\right)\right\},\qquad |\,\mbox{Im}\,(z)\,|\,<\,\mbox{Re}\,(\eta)\,,
\end{align}
where the crossing parameter is
\begin{align}
\label{hypeta}
\eta=\frac{\bb+\bb^{-1}}{2}\,.
\end{align}
For the purposes here, the modular parameter $\bb$ will take positive values
\begin{align}
\label{bbdef}
\bb>0\,,
\end{align}
such that the crossing parameter \eqref{hypeta} is also real valued and positive.

The hyperbolic gamma function \eqref{HGF} satisfies the inversion relation
\begin{align}
\label{HGFidents}
\Gamma_h(z;\bb)=\frac{1}{\Gamma_h(-z;\bb)}\,,
\end{align}
and difference equations
\begin{align}
\label{HGFdif}
\displaystyle
\frac{\Gamma_h(z-\ii\bb;\bb)}{\Gamma_h(z;\bb)}=2\cosh(\pi(2z-\ii\bb^{-1})/(2\bb))\,,\quad\frac{\Gamma_h(z-\ii\bb^{-1};\bb)}{\Gamma_h(z;\bb)}=2\cosh(\pi(2z-\ii\bb)\bb/2)\,.
\end{align}
Through the use of the latter equations, the hyperbolic gamma function \eqref{HGF} may be analytically continued to $\mathbb{C}-\ii\bb\mathbb{Z}_{\geq0}-\ii\bb^{-1}\mathbb{Z}_{\geq0}-\{\eta\}$ \cite{Ruijsenaars:1997:FOA}.

%The hyperbolic gamma function \eqref{HGF} arises as the following limit of the elliptic gamma function \eqref{EGF} \cite{Ruijsenaars:1997:FOA}
%\begin{align}
%\Gamma_h(z;\bb)=\EXP^{\ii\pi^2 z(6\epsilon)^{-1}}\Gamma_e(z\epsilon;\ii\pi^{-1}\bb\epsilon,\ii\pi^{-1}\bb^{-1}\epsilon)\,,\qquad\epsilon\to 0\,.
%\end{align}

Defining the parameter $\hbar$ in terms of $\bb$, as
\begin{align}
\hbar=2\pi \bb^2\,,
\end{align}
the quasi-classical expansion will involve setting
\begin{align}
\label{hypqcl}
\spn_i=\frac{x_i}{\sqrt{2\pi\hbar}}\,,\quad i=0,1,2,3,\qquad\theta_j=\frac{\alpha_j}{\sqrt{2\pi\hbar}}\,,\quad j=1,3\,,
\end{align}
in the star-triangle relations \eqref{YBEsym}, \eqref{YBEasym}, and considering the limit
\begin{align}
\hbar\to0^+\,.
\end{align}
In this limit the leading asymptotics of \eqref{HGF} are given by \cite{FaddeevKashaev}
\begin{align}
\label{HGFqcl}
\Log\Gamma_h(z(2\pi\bb)^{-1};\bb)=\frac{\gamma_h(z)}{\ii\hbar}+O(\hbar)\,,\qquad \im(z)<\pi\,,
\end{align}
where $\gamma_h(z)$ is defined by
\begin{align}
\gamma_h(z)=\dilog(-\EXP^{z})+\frac{\pi^2}{12}-\frac{z^2}{4}\,,
\end{align}
and $\lie(z)$ is the dilogarithm function, defined for $\mathbb{C}-[1,\infty)$ by
\begin{align}
\dilog(z)=-\int^z_0dx\,\frac{\Log(1-x)}{x}\,.
\end{align}

The asymptotics of the hyperbolic gamma function \eqref{HGF} for large $z$ will also be needed. Specifically, the hyperbolic gamma function \eqref{HGF} satisfies \cite{Ruijsenaars1999,Rains2009} 
\begin{align}
\label{HGFinflim}
\Log\Gamma_h(z\pm\kappa;\bb)=\mp\frac{\pi\ii}{2}B_{2,2}(\ii z+\eta;\bb,\bb^{-1})+O(\EXP^{-2\pi \beta |z\pm\kappa|})\,,\qquad\kappa\to\infty\,,
\end{align}
where $0<\beta<\min(\bb,\bb^{-1})$, and $B_{2,2}(z;\omega_1,\omega_2)$ is a multiple Bernoulli polynomial \cite{Narukawa2004247}
\begin{align}
\label{b22def}
B_{2,2}(z;\omega_1,\omega_2)=\frac{z^2}{\omega_1\omega_2}-\frac{(\omega_1+\omega_2)z}{\omega_1\omega_2}+\frac{\omega_1^2+\omega^2_2+3\omega_1\omega_2}{6\omega_1\omega_2}\,.
\end{align}

The asymptotics of the hyperbolic gamma function $\varphi(z)$ in the limit $|z|\to\infty$, allows for many different asymmetric degenerations of the star-triangle relation \eqref{YBEsym}, from which different forms of the star-triangle relations in either \eqref{YBEsym}, or \eqref{YBEasym} can be obtained.  Specifically, different solutions of the star-triangle relation \eqref{YBEsym}, or \eqref{YBEasym}, are obtained, by shifting some of the variables $\spn_0,\spn_1,\spn_2,\spn_3,\theta_1,\theta_3$ to infinity, in the form
\begin{align}
\label{Kshift}
\spn_i\rightarrow \spn_i+n_i\,\kappa,\qquad \theta_i\rightarrow\theta_i+m_i\,\kappa,\qquad \kappa\rightarrow\infty\,,
\end{align}
where $n_i,m_i\in\mathbb{Z}$, such that the integral \eqref{YBEsym} converges when taking $\kappa\rightarrow\infty$.   From the asymptotics of the hyperbolic gamma function \eqref{HGFinflim}, the shifts of the form \eqref{Kshift} will simply result in some additional exponential factors, and in most of the relevant cases these factors simply cancel out of the resulting star-triangle relation.  In this way different Boltzmann weights satisfying either of the star-triangle relations \eqref{YBEsym}, \eqref{YBEasym}, may be obtained, whose quasi-classical expansions will be seen to give the different ABS equations $Q3_{(\delta=1)}$, $Q3_{(\delta=0)}$ $H3_{(\delta=1;\,\varepsilon=1)}$, $H3_{(\delta=0,1;\,\varepsilon=1-\delta)}$, $H3_{(\delta=0;\,\varepsilon=0)}$ at the hyperbolic level.  Note also that there is no known analogue of the asymptotic formula \eqref{HGFinflim} for the elliptic gamma function (elliptic counterpart of \eqref{HGF}), and it is not known how to obtain such degenerations amongst the formulas appearing at the elliptic level, if it is at all possible.  This is a reason why no solutions of the asymmetric type star-triangle relation \eqref{YBEasym} (and no respective classical $H$-type ABS equations) have been found at the elliptic level.

There are also many other limits of the form \eqref{Kshift} which do not result in identities of the form of the star-triangle relations \eqref{YBEsym}, \eqref{YBEasym}.  Such cases are not considered to be integrable, and some examples are given in Appendix \ref{app:notYBE}.  %Note that some examples are given in Appendix \ref{app:notYBE}, of equations that are derived from the star-triangle relation for the $Q3_{(\delta=1)}$ case \eqref{qq3d1} (there are many different limits of each STR at the hyperbolic level), which however don't take the exact form of the star-triangle relation \eqref{YBEsym}, or \eqref{YBEasym}.  
The resulting classical quad equations obtained from the saddle-point equation are still affine-linear, however do not have the 3D-consistency property.  Such examples suggest that the 3D-consistent equations will only arise from the specific form of the star-triangle relations given in \eqref{YBEsym}, \eqref{YBEasym}.  The equations in Appendix \ref{app:notYBE} may still be of interest however, for example the triangle identity pictured in Figure \ref{trianglefig} describes a deformation of an edge of the Faddeev-Volkov models \cite{Bazhanov:2007mh,Spiridonov:2010em}, into two different edges, which is an example of Z-invariance, a property closely associated to integrability of the lattice model \cite{Baxter:1978xr,Kels:2017fyt}.  %In any case the full consideration of such consequences of the non-integrable equations in Appendix \ref{app:notYBE} is beyond the scope of the present work and is deferred to a future study.

\subsection{\texorpdfstring{$Q3_{(\delta=1)}$}{Q3(delta=1)} case}

\tocless\subsubsection{Star-triangle relation}

Let the spins $\spn_1,\spn_2,\spn_3$ and spectral parameters $\theta_1,\theta_3$ take values
\begin{align}
\label{hypvals}
\spn_i\in\mathbb{R}\,,\quad i=1,2,3,\qquad 0<\theta_1,\theta_3,\theta_1+\theta_3<\eta\,,
\end{align}
where $\eta$ is the crossing parameter defined in \eqref{hypeta}.

The star-triangle relation is given by \cite{Spiridonov:2010em}
\begin{align}
\label{qq3d1}
\begin{split}
\int_{\mathbb{R}}d\spn_0\,S(\spn_0)\,\oW(\theta_1\,|\,\spn_1,\spn_0)\,W(\theta_1+\theta_3\,|\,\spn_2,\spn_0)\,\oW(\theta_3\,|\,\spn_3,\spn_0)\phantom{\,.} \\
=R(\theta_1,\theta_3)\,W(\theta_1\,|\,\spn_2,\spn_3)\,\oW(\theta_1+\theta_3\,|\,\spn_1,\spn_3)\,W(\theta_3\,|\,\spn_1,\spn_2)\,,
\end{split}
\end{align}
where the Boltzmann weights are
\begin{align}
\label{genfadvol}
\begin{split}
\iW(\theta\,|\,\spn_i,\spn_j)&=\ds\frac{\Gamma_h(\spn_i+\spn_j+\ii\theta;\bb)}{\Gamma_h(\spn_i+\spn_j-\ii\theta;\bb)}\frac{\Gamma_h(\spn_i-\spn_j+\ii\theta;\bb)}{\Gamma_h(\spn_i-\spn_j-\ii\theta;\bb)}\,, \\[0.1cm]
\oW(\theta\,|\,\spn_i,\spn_j)&=\ds W(\eta-\theta\,|\,\spn_i,\spn_j)\,,
\end{split}
\end{align}
and
\begin{align}
\label{sgenfadvol}
\begin{split}
S(\spn)&=\ds\frac{1}{2}\,\Gamma_h(2\spn-\ii\eta;\bb)\,\Gamma_h(-2\spn-\ii\eta;\bb) \\[0.1cm]
&=\ds2\sinh(2\pi \spn\bb)\sinh(2\pi \spn\bb^{-1})\,.
\end{split}
\end{align}
The normalisation is given by
\begin{align}
R(\theta_1,\theta_3)=\frac{\Gamma_h(\ii(\eta-2\theta_1);\bb)\,\Gamma_h(\ii(\eta-2\theta_3);\bb)}{\Gamma_h(\ii(\eta-2(\theta_1+\theta_3));\bb)}\,.
\end{align}
From \eqref{HGFinflim}, the asymptotics of the integrand of \eqref{qq3d1} are
\begin{align}
\label{q3d1xinf}
O(\EXP^{-4\pi\eta\, \left|\spn_0\right|})\,,\qquad \spn_0\to\pm\infty\,,
\end{align}
and particularly the integral in \eqref{qq3d1} is absolutely convergent for the values \eqref{hypvals}.

%The star-triangle relation \eqref{qq3d1}, may be obtained in the following limit of the master solution (with Boltzmann weights given in terms of elliptic gamma function) \cite{Bazhanov:2010kz}
%\begin{align}
%\p=\EXP^{-\bb\epsilon}\,,\qquad\q=\EXP^{-\bb^{-1}\epsilon}\,,\qquad x_i=x_i\epsilon\,,\qquad\theta_i=\theta_i\epsilon\,,\qquad i=1,2,3\,,
%\end{align}
%where $\epsilon\rightarrow0$ \cite{STOKMAN2005119,Spiridonov:2010em}.  
The star-triangle relation \eqref{qq3d1} was originally obtained \cite{Spiridonov:2010em} as a hyperbolic limit of the elliptic master solution of the star-triangle relation \cite{Bazhanov:2010kz}. The connection to $Q3_{(\delta=1)}$ was previously given in \cite{Bazhanov:2016ajm}.  In the hypergeometric function theory, \eqref{qq3d1} is equivalent to a univariate hyperbolic beta integral formula \cite{STOKMAN2005119}.  The star-triangle relation \eqref{qq3d1}, also arises as the $r=1$ case of the {\it lens hyperbolic gamma function} solution to the star-triangle relation, and corresponding hyperbolic beta sum/integral formula \cite{GahramanovKels}.  %In the latter case, \eqref{qq3d1} also corresponds to a duality of the supersymmetric indices of a pair of $\mathcal{N}=2$ supersymmetric gauge theories on $S^3/\mathbb{Z}_r$.

For future reference, the integral on the left hand side of \eqref{qq3d1} will be denoted by
\begin{align}
\label{q3d1int}
I_{14}(\theta_1,\theta_3\,|\,\spn_1,\spn_2,\spn_3):=\int_\mathcal{C}d\spn\,\rho_{14}(\theta_1,\theta_3\,|\,\spn_1,\spn_2,\spn_3;\spn)\,,
\end{align}
where
\begin{align}
\label{q3d1rho}
\rho_{14}(\theta_1,\theta_3\,|\,\spn_1,\spn_2,\spn_3;\spn):=S(\spn)\,\oW(\theta_1\,|\,\spn_1,\spn)\,W(\theta_1+\theta_3\,|\,\spn_2,\spn)\,\oW(\theta_3\,|\,\spn_3,\spn)\,,
\end{align}
and $\mathcal{C}$ is a deformation of $\mathbb{R}$ that separates the points $\ii\theta_1\pm \spn_1+\ii\bb\mathbb{Z}_{\geq0}+\ii\bb^{-1}\mathbb{Z}_{\geq0}$, $\ii\theta_3\pm \spn_3+\ii\bb\mathbb{Z}_{\geq0}+\ii\bb^{-1}\mathbb{Z}_{\geq0}$, $\ii(\eta-\theta_1-\theta_3)\pm \spn_2+\ii\bb\mathbb{Z}_{\geq0}+\ii\bb^{-1}\mathbb{Z}_{\geq0}$, from their negatives.  The integral \eqref{q3d1int} is an analytic continuation of \eqref{qq3d1}, for general complex valued variables $\spn_1,\spn_2,\spn_3$, $\theta_1,\theta_3$.  If
\begin{align}
\label{q3d1rvals}
\re(\theta_1)\pm \im(\spn_1)\,, \re(\theta_3)\pm \im(\spn_3)\,, \eta-\re(\theta_1+\theta_3)\pm \im(\spn_2)>0\,,
\end{align}
the contour $\mathcal{C}$ can be chosen to be $\mathbb{R}$.

\tocless\subsubsection{Classical integrable equations}

In the quasi-classical limit \eqref{hypqcl}, the crossing parameter becomes
\begin{align}
\eta\to\eta_0=\pi\,,
\end{align}
so that the classical variables $x_1,x_2,x_3$, $\alpha_1,\alpha_3$, take values
\begin{align}
\label{hypcvals}
x_1,x_2,x_3\in\mathbb{R}\,,\qquad 0<\alpha_1,\alpha_3,\alpha_1+\alpha_3<\pi\,.
\end{align}
However the variables $x_1,x_2,x_3$, $\alpha_1,\alpha_3$, can in principle take more general complex values, inherited from the analytic continuation that was described above for \eqref{q3d1int}.

The Lagrangian function for this case is defined by
\begin{align}
\label{lagq3d1}
\begin{split}
\ds\mathcal{L}(\alpha\,|\,x_i,x_j)&=\ds\lie(-\EXP^{x_i+x_j+\ii\alpha})+\lie(-\EXP^{x_i-x_j+\ii\alpha})+\lie(-\EXP^{-x_i+x_j+\ii\alpha})+\lie(-\EXP^{-x_i-x_j+\ii\alpha}) \\
&\quad-2\,\lie(-\EXP^{\ii\alpha})+x_i^2+x_j^2-\frac{\alpha^2}{2}+\frac{\pi^2}{6}\,.
\end{split}
\end{align}
This satisfies the following relations
\begin{align}
\mathcal{L}(\alpha\,|\,x_i,x_j)=\mathcal{L}(\alpha\,|\,x_j,x_i)\,,\qquad\mathcal{L}(\alpha\,|\,x_i,x_j)=-\mathcal{L}(-\alpha\,|\,x_i,x_j)\,.
\end{align}

Using the asymptotics \eqref{HGFqcl}, the leading order $O(\hbar^{-1})$ quasi-classical expansion \eqref{hypqcl} of the integrand \eqref{q3d1rho} is
\begin{align}
\label{q3d1qcl}
\begin{split}
\frac{\alpha_1^2+\alpha_3^2+(\pi-\alpha_1-\alpha_3)^2-4\lie(\EXP^{-\ii\alpha_1})-4\lie(\EXP^{-\ii\alpha_3})-4\lie(-\EXP^{\ii(\alpha_1+\alpha_3)})}{2\ii\hbar}+ \\[0.1cm]
\Log \rho_{14}(\tfrac{\alpha_1}{\sqrt{2\pi\hbar}},\tfrac{\alpha_3}{\sqrt{2\pi\hbar}}\,|\,\tfrac{x_1}{\sqrt{2\pi\hbar}},\tfrac{x_2}{\sqrt{2\pi\hbar}}\tfrac{x_3}{\sqrt{2\pi\hbar}};\tfrac{x_0}{\sqrt{2\pi\hbar}})=(\ii\hbar)^{-1}\astr{x_0}+O(1)\,,
\end{split}
\end{align}
where
\begin{align}
\astr{x_0}=C(x_0)+\ol(\alpha_1\,|\,x_0,x_1)+\lag(\alpha_1+\alpha_3\,|\,x_0,x_2)+\ol(\alpha_3\,|\,x_0,x_3)\,,
\end{align}
and
\begin{align}
\begin{gathered}
\ol(\alpha\,|\,x_i,x_j)=\lag(\pi-\alpha\,|\,x_i,x_j)\,, \\
\ds\iC(x)=2\pi\ii x\,\mbox{sgn}(\re(x))\,.
\end{gathered}
\end{align}

The saddle point three-leg equation \eqref{3legsym} is then given by
\begin{align}
\label{Q3d13leg}
\left.\frac{\partial \astr{x}}{\partial x}\right|_{x=x_0}\hspace{-0.3cm}=\ovphi(\alpha_1\,|\,x_1,x_0)+\varphi(\alpha_1+\alpha_3\,|\,x_2,x_0)+\ovphi(\alpha_3\,|\,x_3,x_0)=0\,,
\end{align}
where $\varphi(\alpha\,|\,x_i,x_j)$ is defined by
\begin{align}
\begin{split}
%\varphi(\alpha\,|\,x_i,x_j)=\Log\frac{\cosh(\frac{x_i+x_j+\ii\alpha}{2})\cosh(\frac{x_i-x_j+\ii\alpha}{2})}{\cosh(\frac{x_i+x_j-\ii\alpha}{2})\cosh(\frac{x_i-x_j-\ii\alpha}{2})}\,,
\varphi(\alpha\,|\,x_i,x_j)&=\Log(1+\EXP^{-x_i-x_j+\ii\alpha})+\Log(1+\EXP^{x_i-x_j+\ii\alpha}) \\
&\;-\Log(1+\EXP^{x_i+x_j+\ii\alpha})-\Log(1+\EXP^{-x_i+x_j+\ii\alpha})+2x_j+\ii\alpha\,\mbox{sgn}(\re(x_j))\,,
\end{split}
\end{align}
and
\begin{align}
\ovphi(\alpha\,|\,x_i,x_j)=\vphi(\pi-\alpha\,|\,x_i,x_j)\,.
\end{align}
Note that the function $\varphi(\alpha\,|\,x_i,x_j)$ is not symmetric upon the exchange of $x_i\leftrightarrow x_j$, and thus the ordering of pairs of the variables $x_0,x_1,x_2,x_3$, appearing in arguments of $\varphi(\alpha\,|\,x_i,x_j)$, in \eqref{Q3d13leg}, needs to be taken into consideration.

The equation \eqref{Q3d13leg} is a three-leg form of $Q3_{(\delta=1)}$, arising as the equation for the saddle point of the star-triangle relation \eqref{qq3d1} in the limit \eqref{hypqcl}.  This was also previously obtained in \cite{Bazhanov:2016ajm}.  With the following change of variables
\begin{align}
\begin{gathered}
x=-\cosh(x_0),\qquad u=\cosh(x_1),\qquad y=\cosh(x_2),\qquad v=\cosh(x_3), \\[0.1cm]
\alpha=\EXP^{-\ii\alpha_1},\qquad\beta=-\EXP^{-\ii(\alpha_1+\alpha_3)},
\end{gathered}
\end{align}
the exponential of the three-leg equation \eqref{Q3d13leg} may be written in the form
\begin{align}
Q(x,u,y,v;\alpha,\beta)=0\,,
\end{align}
where 
\begin{align}
\label{Q3d1}
\begin{split}
Q(x,u,y,v;\alpha,\beta)=&(\alpha^2-1)\beta(xu+yv)-(\beta^2-1)\alpha(xy+uv)+(\alpha^2-\beta^2)(xv+uy)\phantom{=0\,.} \\
&+(\alpha^2-1)(\beta^2-1)(\alpha^2-\beta^2)(4\alpha\beta)^{-1}\,.
\end{split}
\end{align}
This is identified as $Q3_{(\delta=1)}$ \cite{ABS}, which is an affine-linear quad equation that satisfies the 3D-consistency condition exactly as defined in Section \ref{sec:overview}.

\subsection{\texorpdfstring{$Q3_{(\delta=0)}$}{Q3(delta=0)} case}

\tocless\subsubsection{Star-triangle relation}

Let the variables $\spn_1,\spn_2,\spn_3$, and $\theta_1,\theta_3$, take values \eqref{hypvals}.  The star-triangle relation is given in this case by \cite{Bazhanov:2007mh}
\begin{align}
\label{qq3d0}
\begin{split}
\int_{\mathbb{R}}d\spn_0\,\oW(\theta_1\,|\,\spn_1,\spn_0)\,W(\theta_1+\theta_3\,|\,\spn_2,\spn_0)\,\oW(\theta_3\,|\,\spn_3,\spn_0)\phantom{\qquad\,.} \\
=R(\theta_1,\theta_3)\,W(\theta_1\,|\,\spn_2,\spn_3)\,\oW(\theta_1+\theta_3\,|\,\spn_1,\spn_3)\,W(\theta_3\,|\,\spn_1,\spn_2)\,,
\end{split}
\end{align}
where the Boltzmann weights are
\begin{align}
\label{fadvol}
\begin{split}
\iW(\theta\,|\,\spn_i,\spn_j)&=\frac{\Gamma_h(\spn_i-\spn_j+\ii\theta;\bb)}{\Gamma_h(\spn_i-\spn_j-\ii\theta;\bb)}\,, \\
\oW(\theta\,|\,\spn_i,\spn_j)&=W(\eta-\theta\,|\,\spn_i,\spn_j)\,,
\end{split}
\end{align}
$\eta$ is defined in \eqref{hypeta}, and the normalisation is given by
\begin{align}
R(\theta_1,\theta_3)=\frac{\Gamma_h(\ii(\eta-2\theta_1);\bb)\,\Gamma_h(\ii(\eta-2\theta_3);\bb)}{\Gamma_h(\ii(\eta-2(\theta_1+\theta_3));\bb)}\,.
\end{align}
From \eqref{HGFinflim}, the asymptotics of the integrand of \eqref{qq3d0} are
\begin{align}
\label{q3d0xinf}
O(\EXP^{-4\pi\eta\, \left|\spn_0\right|})\,,\qquad \spn_0\to\pm\infty\,,
\end{align}
and particularly the integral in \eqref{qq3d0} is absolutely convergent for the values \eqref{hypvals}.

Equation \eqref{qq3d0} is the star-triangle relation for the Faddeev-Volkov model and was previously shown to be connected to $Q3_{(\delta=0)}$ and the Hirota difference equation \cite{Bazhanov:2007mh,Bazhanov:2007vg}.  Up to a change of variables, the star-triangle relation \eqref{qq3d0} is equivalent to a hyperbolic analogue of the nonterminating Saalsch\"{u}tz formula (equivalently hyperbolic analogue of Barnes's second lemma), that was derived as a limit of an 8-parameter hyperbolic hypergeometric function \cite{BultThesis}.

Now define
\begin{align}
\label{q3d0int}
I_{6,Q}(\theta_1,\theta_3\,|\,\spn_1,\spn_2,\spn_3)=\int_{\mathcal{C}}d\spn\,\rho_{6,Q}(\theta_1,\theta_3\,|\,\spn_1,\spn_2,\spn_3;\spn)\,,
\end{align}
where
\begin{align}
\label{q3d0rho}
\rho_{6,Q}(\theta_1,\theta_3\,|\,\spn_1,\spn_2,\spn_3;\spn)=\oW(\theta_1\,|\,\spn_1,\spn)\,W(\theta_1+\theta_3\,|\,\spn_2,\spn)\,\oW(\theta_3\,|\,\spn_3,\spn)\,,
\end{align}
and $\mathcal{C}$ is a deformation of $\mathbb{R}$ that separates the points $\ii\theta_1+\spn_1+\ii\bb\mathbb{Z}_{\geq0}+\ii\bb^{-1}\mathbb{Z}_{\geq0}$, $\ii\theta_3+\spn_3+\ii\bb\mathbb{Z}_{\geq0}+\ii\bb^{-1}\mathbb{Z}_{\geq0}$, $\ii(\eta-\theta_1-\theta_3)+\spn_2+\ii\bb\mathbb{Z}_{\geq0}+\ii\bb^{-1}\mathbb{Z}_{\geq0}$, from the points $-\ii\theta_1+\spn_1-\ii\bb\mathbb{Z}_{\geq0}-\ii\bb^{-1}\mathbb{Z}_{\geq0}$, $-\ii\theta_3+\spn_3-\ii\bb\mathbb{Z}_{\geq0}-\ii\bb^{-1}\mathbb{Z}_{\geq0}$, $-\ii(\eta-\theta_1-\theta_3)+\spn_2-\ii\bb\mathbb{Z}_{\geq0}-\ii\bb^{-1}\mathbb{Z}_{\geq0}$.    For the values \eqref{q3d1rvals}, the contour $\mathcal{C}$ may be chosen to be $\mathbb{R}$.

The star-triangle relation \eqref{qq3d0}, may be obtained from the following limit of \eqref{qq3d1} \cite{Spiridonov:2010em}:

\begin{prop} \label{q3d0prop}
For the values \eqref{q3d1rvals},
\begin{align}
\label{q3d0lim}
\begin{split}
\lim_{\kappa\to\infty}\EXP^{4\pi\eta\kappa+2\pi\eta(\spn_1+\spn_3)-2\pi\theta_1(\spn_1-\spn_2)-2\pi\theta_3(\spn_3-\spn_2)}I_{14}(\theta_1,\theta_3\,|\,\spn_1+\kappa,\spn_2+\kappa,\spn_3+\kappa)\phantom{\,,} \\
=I_{6,Q}(\theta_1,\theta_3\,|\,\spn_1,\spn_2,\spn_3)\,,
\end{split}
\end{align}
where $I_{14}(\theta_1,\theta_3\,|\,\spn_1,\spn_2,\spn_3)$ is defined in \eqref{q3d1int}.

\end{prop}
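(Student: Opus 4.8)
The plan is to evaluate the limit \eqref{q3d0lim} by shifting the integration variable and then extracting the leading large-argument asymptotics of each hyperbolic gamma function via \eqref{HGFinflim}. First I would substitute $\spn\to\spn+\kappa$ in the integral \eqref{q3d1int} defining $I_{14}(\theta_1,\theta_3\,|\,\spn_1+\kappa,\spn_2+\kappa,\spn_3+\kappa)$; since the shift is by a real amount and the conditions \eqref{q3d1rvals} constrain only imaginary parts, the contour $\mathcal{C}$ may be kept as $\mathbb{R}$ throughout, so no residues are picked up. After this substitution every Boltzmann weight in the integrand \eqref{q3d1rho} splits into a \emph{difference} factor, whose argument $\spn_i-\spn$ is $\kappa$-independent, and a \emph{sum} factor, whose argument $\spn_i+\spn+2\kappa$ tends to $+\infty$.

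Next I would take the pointwise limit of the prefactor-multiplied, shifted integrand. The difference factors of $\oW,W,\oW$ are, comparing \eqref{genfadvol} with \eqref{fadvol}, precisely the Faddeev--Volkov weights, so their product is exactly the integrand $\rho_{6,Q}$ of \eqref{q3d0rho}. For each sum factor $\Gamma_h(s+\ii\theta;\bb)/\Gamma_h(s-\ii\theta;\bb)$ with $s\to+\infty$, formula \eqref{HGFinflim} produces a difference of two Bernoulli polynomials $B_{2,2}$; because these differ only by the shift $\pm\ii\theta$, the quadratic (hence $\kappa^2$) parts cancel within each ratio, leaving a purely linear exponential of the form $\EXP^{-2\pi s\theta}$. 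The weight $S(\spn+\kappa)$ contributes its own exponential, read off from the $\sinh$-form in \eqref{sgenfadvol}. The explicit prefactor in \eqref{q3d0lim} is engineered exactly so that, multiplied against the product of these exponentials and using $\bb+\bb^{-1}=2\eta$, all terms linear in $\kappa$ cancel (guaranteeing a finite limit) and all terms linear in $\spn,\spn_1,\spn_2,\spn_3$ cancel as well (so that nothing spurious multiplies $\rho_{6,Q}$). Carrying out this bookkeeping is the decisive algebraic step, and the resulting overall multiplicative constant must be tracked with care.

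The hard part, however, will be to justify interchanging the limit $\kappa\to\infty$ with the integration, rather than the pointwise computation itself. I would establish dominated convergence by producing a single $\kappa$-independent integrable majorant for the prefactored, shifted integrand, valid for all large $\kappa$. The delicate region is where the integration variable is large and negative, of order $-\kappa$: there the sum arguments $\spn_i+\spn+2\kappa$ are no longer large, so \eqref{HGFinflim} does not apply directly and one must instead invoke the genuine decay of the integrand inherited from \eqref{q3d1xinf}. Combining the uniform exponential error bound in \eqref{HGFinflim} on the region where the sum arguments are large with this decay estimate on the complementary region should furnish the required majorant; dominated convergence then permits the pointwise limit to be integrated, yielding $I_{6,Q}(\theta_1,\theta_3\,|\,\spn_1,\spn_2,\spn_3)$ and completing the proof.
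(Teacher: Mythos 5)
Your overall route---shift $\spn\to\spn+\kappa$, keep the contour at $\mathbb{R}$, split each weight into a $\kappa$-independent difference factor plus a sum factor handled by \eqref{HGFinflim}, cancel the exponentials against the prefactor, then dominated convergence---is the same as the paper's. But as written the argument fails at exactly the spot you flag as delicate, and carried out it would produce $\tfrac12 I_{6,Q}$ rather than $I_{6,Q}$. There are two linked problems. First, your bookkeeping drops a constant: the sum-factor ratios $\Gamma_h(s+\ii\theta;\bb)/\Gamma_h(s-\ii\theta;\bb)$ do tend to $\EXP^{-2\pi\theta s}$ with no residual constant (the $B_{2,2}$ difference is exactly linear), but the spin-independent weight satisfies $S(\spn+\kappa)=2\sinh(2\pi(\spn+\kappa)\bb)\,\sinh(2\pi(\spn+\kappa)\bb^{-1})=\tfrac12\,\EXP^{4\pi\eta(\spn+\kappa)}\left(1+o(1)\right)$, so after the linear cancellations (which you get right) the prefactored, shifted integrand converges pointwise to $\tfrac12\,\rho_{6,Q}$, not to $\rho_{6,Q}$; your claim that ``nothing spurious multiplies $\rho_{6,Q}$'' is off by this $\tfrac12$. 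Second, and decisively, the $\kappa$-independent integrable majorant on all of $\mathbb{R}$ that you propose does not exist. The integrand \eqref{q3d1rho} is even in the integration variable (each weight in \eqref{genfadvol}, and $S$ in \eqref{sgenfadvol}, is even in $\spn_0$), so the prefactored shifted integrand obeys $f_\kappa(\spn)=f_\kappa(-\spn-2\kappa)$: there is a mirror bump near $\spn=-2\kappa$ whose height tends to $\tfrac12\,\rho_{6,Q}(\theta_1,\theta_3\,|\,\spn_1,\spn_2,\spn_3;0)\neq0$. In that region you cannot invoke the decay \eqref{q3d1xinf}, since that estimate holds for \emph{fixed} $\spn_1,\spn_2,\spn_3$, whereas here the spins are translated by $\kappa$ along with the integration variable. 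The sliding bump carries mass $\tfrac12 I_{6,Q}$ off to $-\infty$, which is precisely why no global majorant exists; note also that a limiting value of $\tfrac12 I_{6,Q}$ would contradict the known Faddeev--Volkov relation \eqref{qq3d0} upon taking the same limit of the right-hand side of \eqref{qq3d1}, where no such factor arises.

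The repair is the evenness device that the paper makes explicit in its proof of the analogous Proposition \ref{Selbergprop1}: fold the integral as $I_{14}=2\int_{[0,\infty)}$, perform the shift to obtain $2\int_{[-\kappa,\infty)}$, and apply dominated convergence on that half line, where a $\kappa$-uniform integrable majorant does exist (for $\spn\ge-\kappa$ the transition region between the endpoint and the bump at $\spn=O(1)$ is exponentially suppressed, and the limiting tail bound matches \eqref{q3d0xinf}). The factor $2\cdot\tfrac12=1$ then restores \eqref{q3d0lim} exactly. The paper's own one-line proof is terse on this point, but the half-line folding is the mechanism implicit in it; your version, by asserting pointwise convergence to $\rho_{6,Q}$ and a majorant on all of $\mathbb{R}$, commits to statements that are individually false even though the final identity is true.
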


\begin{proof}
For the values \eqref{q3d1rvals}, the contours of both $I_{14}$, and $I_{6,Q}$, may be chosen to be $\mathbb{R}$.  Following a change of integration variable of $\sigma\to\sigma+\kappa$ in \eqref{q3d1int}, by the asymptotics \eqref{q3d1xinf}, \eqref{q3d0xinf}, and the asymptotic formula \eqref{HGFinflim}, the combination of the transformed integrand \eqref{q3d1rho} and the exponential factor on the left hand side of \eqref{q3d0lim} is uniformly bounded on $\mathbb{R}$, and the result follows by dominated convergence.
\end{proof}

The star-triangle relation \eqref{qq3d0} follows from Proposition \ref{q3d0prop}, by using \eqref{HGFinflim} to take the same limit of the right hand side of \eqref{qq3d1}.

\tocless\subsubsection{Classical integrable equations}

Let the classical variables $x_1,x_2,x_3$, $\alpha_1,\alpha_3$, take values in \eqref{hypcvals}.  The Lagrangian function for this case is defined by
\begin{align}
\label{lagq3d0}
\ds\mathcal{L}(\alpha\,|\,x_i,x_j)=\lie(-\EXP^{x_i-x_j+\ii\alpha})+\lie(-\EXP^{x_j-x_i+\ii\alpha})-2\,\lie(-\EXP^{\ii\alpha})+\frac{(x_i-x_j)^2}{2}\,.
\end{align}
This satisfies the following relations
\begin{align}
\mathcal{L}(\alpha\,|\,x_i,x_j)=\mathcal{L}(\alpha\,|\,x_j,x_i)\,,\quad\mathcal{L}(\alpha\,|\,x_i,x_j)=-\mathcal{L}(-\alpha\,|\,x_i,x_j)\,.
\end{align}

Using the asymptotics \eqref{HGFqcl}, the leading order $O(\hbar^{-1})$ quasi-classical expansion \eqref{hypqcl} of the integrand \eqref{q3d0rho} is
\begin{align}
\label{q3d0qcl}
\begin{split}
\frac{\alpha_1^2+\alpha_3^2+(\pi-\alpha_1-\alpha_3)^2-4\lie(\EXP^{-\ii\alpha_1})-4\lie(\EXP^{-\ii\alpha_3})-4\lie(-\EXP^{\ii(\alpha_1+\alpha_3)})}{2\ii\hbar}+ \\
\Log \rho_{6,Q}(\tfrac{\alpha_1}{\sqrt{2\pi\hbar}},\tfrac{\alpha_3}{\sqrt{2\pi\hbar}}\,|\,\tfrac{x_1}{\sqrt{2\pi\hbar}},\tfrac{x_2}{\sqrt{2\pi\hbar}}\tfrac{x_3}{\sqrt{2\pi\hbar}};\tfrac{x_0}{\sqrt{2\pi\hbar}})=(\ii\hbar)^{-1}\astr{x_0}+O(1)\,,
\end{split}
\end{align}
where
\begin{align}
\astr{x_0}=\ol(\alpha_1\,|\,x_1,x_0)+\lag(\alpha_1+\alpha_3\,|\,x_2,x_0)+\ol(\alpha_3\,|\,x_3,x_0)\,,
\end{align}
and
\begin{align}
\ol(\alpha\,|\,x_i,x_j)=\lag(\pi-\alpha\,|\,x_i,x_j)\,.
\end{align}

The saddle point three-leg equation \eqref{3legsym} is then given by
\begin{align}
\label{Q3d03leg}
\left.\frac{\partial \astr{x}}{\partial x}\right|_{x=x_0}\hspace{-0.3cm}=\ovphi(\alpha_1\,|\,x_1,x_0)+\varphi(\alpha_1+\alpha_3\,|\,x_2,x_0)+\ovphi(\alpha_3\,|\,x_3,x_0)=0\,,
\end{align}
where $\varphi(\alpha\,|\,x_i,x_j)$, is defined by
\begin{align}
%\varphi(\alpha\,|\,x_i,x_j)=\Log\frac{\cosh(\frac{x_i-x_j+\ii\alpha}{2})}{\cosh(\frac{x_i-x_j-\ii\alpha}{2})}\,,
\varphi(\alpha\,|\,x_i,x_j)=\Log(1+\EXP^{x_i-x_j+\ii\alpha})-\Log(1+\EXP^{x_j-x_i+\ii\alpha})-x_i+x_j\,,
\end{align}
and
\begin{align}
\ovphi(\alpha\,|\,x_i,x_j)=\vphi(\eta-\alpha\,|\,x_i,x_j)\,.
\end{align}

The equation \eqref{Q3d03leg} is a three-leg form of $Q3_{(\delta=0)}$, arising as the equation for the saddle point of the star-triangle relation \eqref{qq3d0} in the limit \eqref{hypqcl}.  This was also previously obtained in \cite{Bazhanov:2007mh,Bazhanov:2016ajm}.  With the following change of variables
\begin{align}
x=-\EXP^{x_0},\qquad u=\EXP^{x_1},\qquad y=\EXP^{x_2},\qquad v=\EXP^{x_3},\qquad\alpha=\EXP^{-\ii\alpha_1},\qquad\beta=-\EXP^{-\ii(\alpha_1+\alpha_3)},
\end{align}
the exponential of the three-leg equation \eqref{Q3d03leg} may be written in the form
\begin{align}
Q(x,u,y,v;\alpha,\beta)=0\,,
\end{align}
where 
\begin{align}
\label{Q3d0}
Q(x,u,y,v;\alpha,\beta)=(\alpha^2-1)\beta(xu+yv)-(\beta^2-1)\alpha(xy+uv)+(\alpha^2-\beta^2)(xv+uy)\,.
\end{align}
This is identified as $Q3_{(\delta=0)}$ \cite{ABS}, an affine-linear quad equation that satisfies the 3D-consistency condition exactly as defined in Section \ref{sec:overview}.

\subsection{\texorpdfstring{$H3_{(\delta=1;\,\varepsilon=1)}$}{H3(delta=1,epsilon=1)} case}

\tocless\subsubsection{Star-triangle relation}

Let the variables $\spn_1,\spn_2,\spn_3$, and $\theta_1,\theta_3$, take values \eqref{hypvals}.  The star-triangle relation is given in this case by
\begin{align}
\label{qh3d1}
\begin{split}
\ds\int_{\mathbb{R}}d\spn_0\;\iS(\spn_0)\,\oV({\theta_1}\,|\,\spn_1,\spn_0)\,V({\theta_1+\theta_3}\,|\,\spn_2,\spn_0)\,\oW({\theta_3}\,|\,\spn_0,\spn_3)\phantom{\,.} \ds \\
\ds =R(\theta_3)\,V({\theta_1}\,|\,\spn_2,\spn_3)\,\oV(\theta_1+\theta_3\,|\,\spn_1,\spn_3)\,W({\theta_3}\,|\,\spn_2,\spn_1)\,,
\end{split}
\end{align}
where the Boltzmann weights are
\begin{align}
\label{h3d1quan}
\begin{split}
V(\theta\,|\,\spn_i,\spn_j)&=\ds\Gamma_h(\spn_i+\spn_j+\ii\theta;\bb)\,\Gamma_h(\spn_i-\spn_j+\ii\theta;\bb)\,, \\[0.1cm]
\oV(\theta\,|\,\spn_i,\spn_j)&=\ds V(\eta-\theta\,|\,-\spn_i,\spn_j)\,, \\[0.1cm]
\oW(\theta\,|\,\spn_i,\spn_j)&=\frac{\Gamma_h(\spn_i+\spn_j+\ii(\eta-\theta);\bb)\,\Gamma_h(\spn_i-\spn_j+\ii(\eta-\theta);\bb)}{\Gamma_h(\spn_i+\spn_j-\ii(\eta-\theta);\bb)\,\Gamma_h(\spn_i-\spn_j-\ii(\eta-\theta);\bb)}\,, \\[0.1cm]
W(\theta\,|\,\spn_i,\spn_j)&=\ds\frac{\Gamma_h(\spn_i-\spn_j+\ii\theta;\bb)}{\Gamma_h(\spn_i-\spn_j-\ii\theta;\bb)}\,,
\end{split}
\end{align}
$\eta$ is defined in \eqref{hypeta}, and
\begin{align}
\label{sh3d1quan}
\begin{split}
\iS(\spn)&=2\sinh(2\pi \spn\bb)\sinh(2\pi \spn\bb^{-1})\,,  \\[0.1cm]
R(\theta)&=\Gamma_h(\ii(\eta-2\theta);\bb)\,.
\end{split}
\end{align}
From \eqref{HGFinflim}, the asymptotics of the integrand of \eqref{qh3d1} are
\begin{align}
\label{h3d1xinf}
O(\EXP^{-2\pi(\eta-\theta_3-\ii(\spn_1-\spn_2))\left|\spn_0\right|})\,,\qquad \spn_0\to\pm\infty\,,
\end{align}
and particularly the integral in \eqref{qh3d1} is absolutely convergent for the values \eqref{hypvals}.

The Boltzmann weights $\oW(\theta\,|\,\spn_i,\spn_j)$ and $W(\theta\,|\,\spn_i,\spn_j)$ appear also in \eqref{genfadvol} ($Q3_{(\delta=1)}$ case), and \eqref{fadvol} ($Q3_{(\delta=0)}$ case), respectively, while the Boltzmann weight $V(\theta\,|\,\spn_i,\spn_j)$ is not symmetric in the exchange of spins $\spn_i\leftrightarrow \spn_j$, and generally has non-vanishing imaginary component.

The star-triangle relation \eqref{qh3d1} did not appear before, but up to a change of variables is equivalent to a hyperbolic beta integral of Askey-Wilson type \cite{STOKMAN2005119,Ruijsenaars2003}.

Define
\begin{align}
\label{h3d1int}
I_{10}(\theta_1,\theta_3\,|\,\spn_1,\spn_2,\spn_3)=\int_{\mathcal{C}}d\spn\,\rho_{10}(\theta_1,\theta_3\,|\,\spn_1,\spn_2,\spn_3;\spn)\,,
\end{align}
where
\begin{align}
\label{h3d1rho}
\rho_{10}(\theta_1,\theta_3\,|\,\spn_1,\spn_2,\spn_3;\spn)=S(\spn)\,\oV({\theta_1}\,|\,\spn_1,\spn)\,V({\theta_1+\theta_3}\,|\,\spn_2,\spn)\,\oW({\theta_3}\,|\,\spn_0,\spn)\,,
\end{align}
and $\mathcal{C}$ is a deformation of $\mathbb{R}$ separating the points $\ii\theta_1+\spn_1 +\ii\bb\mathbb{Z}_{\geq0}+\ii\bb^{-1}\mathbb{Z}_{\geq0}$, $\ii\theta_3\pm \spn_3 +\ii\bb\mathbb{Z}_{\geq0}+\ii\bb^{-1}\mathbb{Z}_{\geq0}$, $\ii(\eta-\theta_1-\theta_3)-\spn_2 +\ii\bb\mathbb{Z}_{\geq0}+\ii\bb^{-1}\mathbb{Z}_{\geq0}$, from their negatives.    If
\begin{align}
\label{h3d1rvals}
\re(\theta_1)+ \im(\spn_1)\,, \re(\theta_3)\pm \im(\spn_3)\,, \eta-\re(\theta_1+\theta_3)- \im(\spn_2)>0\,,
\end{align}
the contour $\mathcal{C}$ can be chosen to be $\mathbb{R}$.

The star-triangle relation \eqref{qh3d1}, may be obtained from the following limit of \eqref{qq3d1}:

\begin{prop} \label{h3d1prop}
For the values \eqref{q3d1rvals},
\begin{align}
\label{h3d1lim}
\begin{split}
\lim_{\kappa\to\infty}&\left(\EXP^{4\pi(\eta+\theta_3-\ii(\spn_1-\spn_2))\kappa+\ii\pi(\eta(\eta-2(\theta_1+\ii \spn_1))-\theta_3(\theta_3+2\theta_1)-\spn_1^2+\spn_2^2+2\ii\theta_1(\spn_1-\spn_2)-2\ii\theta_3x_2)}\right. \\
&\left.\phantom{\EXP^{2^2}}\times I_{14}(\theta_1+\ii\kappa,\theta_3\,|\,\spn_1+\kappa,\spn_2+\kappa,\spn_3)\right)=I_{10}(\theta_1,\theta_3\,|\,\spn_1,\spn_2,\spn_3)\,,
\end{split}
\end{align}
where $I_{14}(\theta_1,\theta_3\,|\,\spn_1,\spn_2,\spn_3)$ is defined in \eqref{q3d1int}.

\end{prop}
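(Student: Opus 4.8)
The plan is to follow the template of Proposition~\ref{q3d0prop}: reduce both sides to integrals over $\mathbb{R}$, identify the pointwise limit of the shifted integrand via the large-argument asymptotics \eqref{HGFinflim}, and then justify interchanging limit and integral by dominated convergence. First I would note that for the values \eqref{q3d1rvals} the contour $\mathcal{C}$ in both \eqref{q3d1int} and \eqref{h3d1int} may be taken to be $\mathbb{R}$, so no contour deformation is needed as $\kappa$ varies; the integration variable $\spn$ is held fixed while $\kappa\to\infty$, and in particular no shift of the integration variable is required (only $\spn_1,\spn_2$, not $\spn_3$, are shifted).

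The key algebraic step is to track how the combined real shift of $\spn_1,\spn_2$ and the imaginary shift $\theta_1\to\theta_1+\ii\kappa$ act on each factor of $\rho_{14}(\theta_1+\ii\kappa,\theta_3\,|\,\spn_1+\kappa,\spn_2+\kappa,\spn_3;\spn)$ from \eqref{q3d1rho}. The factors $S(\spn)$ and $\oW(\theta_3\,|\,\spn_3,\spn)$ carry no $\kappa$ and pass through unchanged into \eqref{h3d1rho}. In $W(\theta_1+\theta_3+\ii\kappa\,|\,\spn_2+\kappa,\spn)$ the shifts cancel $\kappa$ in two of the four hyperbolic gamma arguments, leaving the numerator $\Gamma_h(\spn_2+\spn+\ii(\theta_1+\theta_3);\bb)\,\Gamma_h(\spn_2-\spn+\ii(\theta_1+\theta_3);\bb)=V(\theta_1+\theta_3\,|\,\spn_2,\spn)$, while the two denominator arguments diverge like $+2\kappa$. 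Likewise $\oW(\theta_1+\ii\kappa\,|\,\spn_1+\kappa,\spn)$ has a finite denominator which, after the inversion relation \eqref{HGFidents}, is exactly $\oV(\theta_1\,|\,\spn_1,\spn)$, with its two numerator arguments diverging like $+2\kappa$. Thus the finite parts assemble precisely into $\rho_{10}$ of \eqref{h3d1rho}, and everything reduces to controlling these four divergent $\Gamma_h$ factors.

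The second step applies \eqref{HGFinflim} to the four divergent factors, using the explicit form \eqref{b22def} and the identity $B_{2,2}(\ii z+\eta;\bb,\bb^{-1})=-z^2-\eta^2+\text{const}$. Each contributes $\pm\tfrac{\pi\ii}{2}B_{2,2}$, quadratic in its argument and hence in $\kappa$, so I would organise the bookkeeping by powers of $\kappa$. The $\kappa^2$ contributions from the two denominator factors of $W$ cancel against those from the two numerator factors of $\oW$; the $\kappa^1$ contributions combine to $-4\pi\bigl((\eta+\theta_3)-\ii(\spn_1-\spn_2)\bigr)\kappa$, which is cancelled exactly by the explicit prefactor $\EXP^{4\pi(\eta+\theta_3-\ii(\spn_1-\spn_2))\kappa}$ in \eqref{h3d1lim}; and the $\kappa$-independent remainder reproduces the constant factor $\EXP^{\ii\pi(\cdots)}$ of \eqref{h3d1lim}. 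A useful internal check is that all dependence on $\spn$ drops out of this remainder, as it must for the prefactor to be pulled outside the integral: the $\spn^2$ terms cancel between the $W$ and $\oW$ asymptotics and the surviving $\pm\ii\spn$ terms cancel in pairs.

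The main obstacle is the final dominated-convergence step. Having established pointwise convergence of the prefactored integrand to $\rho_{10}$, I would need a single integrable majorant on $\mathbb{R}$ valid uniformly in $\kappa$. For $\spn$ in bounded sets this is immediate from the exponentially small error terms in \eqref{HGFinflim}, but care is required in the region where $|\spn|$ is comparable to $\kappa$, where some arguments that were diverging become $O(1)$ and \eqref{HGFinflim} no longer applies to them directly. There one instead invokes the decay \eqref{q3d1xinf} of the $Q3_{(\delta=1)}$ integrand, whose exponential rate in $\spn$ is governed by $S(\spn)$ and is of order $\EXP^{-4\pi\eta|\spn|}$; at $|\spn|\sim2\kappa$ this beats the growth $\EXP^{4\pi(\eta+\theta_3)\kappa}$ of the prefactor modulus precisely because $\theta_3<\eta$ by \eqref{hypvals}, giving $\EXP^{4\pi(\theta_3-\eta)\kappa}\to0$ and hence uniform control of the tails. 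Combining the bounded-region and tail estimates produces the majorant, and dominated convergence then yields \eqref{h3d1lim}; the star-triangle relation \eqref{qh3d1} follows by applying the same shift to the right-hand side of \eqref{qq3d1} through \eqref{HGFinflim}, exactly as in Proposition~\ref{q3d0prop}.
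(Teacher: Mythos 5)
Your proposal is correct and takes essentially the same approach as the paper, whose proof simply declares the argument analogous to Proposition \ref{q3d0prop}: contours chosen as $\mathbb{R}$, no shift of the integration variable (exactly as you note, since $\spn_3$ is not shifted), the asymptotic formula \eqref{HGFinflim}, and dominated convergence, with analytic continuation extending the domain to the values permitted by the contour. Your explicit bookkeeping — the assembly of the finite factors into $\rho_{10}$ via the inversion relation \eqref{HGFidents}, the cancellation of the $\kappa^2$ and $\spn$-dependent terms, the matching of the linear-in-$\kappa$ and constant parts of the prefactor (the paper's ``$x_2$'' in the exponent of \eqref{h3d1lim} is a typo for $\spn_2$, which your computation confirms), and the tail estimate using $\theta_3<\eta$ — supplies details the paper leaves implicit in its one-line appeal to uniform boundedness.
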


The proof is analogous to Proposition \ref{q3d0prop}, without needing the change of variables of the integrand.  By analytic continuation the domain of values may be extended to the values that are permitted by the contour, as given by the conditions stated below \eqref{h3d1rho}.  
%\begin{proof}  
%The proof is analogous to Proposition \ref{q3d0prop}, without needing the change of variables of the integrand.  For the values \eqref{q3d1rvals}, the contours of both $I_{14}$, and $I_{10}$, may be chosen to be $\mathbb{R}$.  By the asymptotics \eqref{q3d1xinf}, \eqref{h3d1xinf}, and the asymptotic formula \eqref{HGFinflim}, the combination of the transformed integrand \eqref{q3d1rho} and the exponential factor on the left hand side of \eqref{h3d1lim} is uniformly bounded on $\mathbb{R}$, and the result follows by dominated convergence.  %By analytic continuation the domain of variables may be extended to values that are permissible by the contour $\mathcal{C}$.
%\end{proof}
%
The star-triangle relation \eqref{qh3d1} follows from Proposition \ref{h3d1prop}, by using \eqref{HGFinflim} to take the same limit of the right hand side of \eqref{qq3d1}.

\tocless\subsubsection{Classical integrable equations}

Let the classical variables $x_1,x_2,x_3$, $\alpha_1,\alpha_3$, take values in \eqref{hypcvals}.  The Lagrangian functions for this case are defined by
\begin{align}
\label{lagh3d1}
\begin{split}
\Lambda(\alpha\,|\,x_i,x_j)&=\ds\lie(-\EXP^{x_i+ x_j+\ii\alpha})+\lie(-\EXP^{x_i- x_j+\ii\alpha})+\frac{(x_i+\ii\alpha)^2+x_j^2}{2}\,, \\[0.0cm]
\ol(\alpha\,|\,x_i,x_j)&=\olie(\EXP^{x_i+x_j-\ii\alpha})+\olie(\EXP^{x_i-x_j-\ii\alpha})+\olie(\EXP^{-x_i+x_j-\ii\alpha})+\olie(\EXP^{-x_i-x_j-\ii\alpha}) \\
&\quad -2\,\olie(\EXP^{-\ii\alpha})+x_i^2+x_j^2-\frac{(\pi-\alpha)^2}{2}+\frac{\pi^2}{6} \,.
\end{split}
\end{align}
The $\ol(\alpha\,|\,x_i,x_j)$ is equivalent to $\mathcal{L}(\pi-\alpha\,|\,x_i,x_j)$ from \eqref{lagq3d1} ($Q3_{(\delta=1)}$ case), and thus satisfies
\begin{align}
\ol(\alpha\,|\,x_i,x_j)=\ol(\alpha\,|\,x_j,x_i)\,,\qquad\ol(\pi+\alpha\,|\,x_i,x_j)=-\ol(\pi-\alpha\,|\,x_i,x_j)\,.
\end{align}
However $\Lambda(\alpha\,|\,x_i,x_j)$ does not satisfy such symmetries, and particularly the ordering of $x_i,x_j$ in $\Lambda(\alpha\,|\,x_i,x_j)$ needs to be taken into account.

Using the asymptotics \eqref{HGFqcl}, the leading order $O(\hbar^{-1})$ quasi-classical expansion \eqref{hypqcl} of the integrand \eqref{h3d1rho} is
\begin{align}
\label{h3d1qcl}
\begin{split}
\frac{\alpha_3^2-2\pi\alpha_3-4\lie(\EXP^{-\ii\alpha_3})}{2\ii\hbar}+\Log \rho_{10}(\tfrac{\alpha_1}{\sqrt{2\pi\hbar}},\tfrac{\alpha_3}{\sqrt{2\pi\hbar}}\,|\,\tfrac{x_1}{\sqrt{2\pi\hbar}},\tfrac{x_2}{\sqrt{2\pi\hbar}}\tfrac{x_3}{\sqrt{2\pi\hbar}};\tfrac{x_0}{\sqrt{2\pi\hbar}})\phantom{\,,} \\[0.1cm]
=(\ii\hbar)^{-1}\astr{x_0}+O(1)\,,
\end{split}
\end{align}
where
\begin{align}
\astr{x_0}=C(x_0)+\olam(\alpha_1\,|\,x_1,x_0)+\Lambda(\alpha_1+\alpha_3\,|\,x_2,x_0)+\ol(\alpha_3\,|\,x_0,x_3)\,, 
\end{align}
and
\begin{align}
\begin{gathered}
\olam(\alpha\,|\,x_i,x_j)=\Lambda(\pi-\alpha\,|\,-x_i,x_j)\,, \\
C(x)=2\pi\ii x\,\mbox{sgn}(\re(x))\,.
\end{gathered}
\end{align}

The saddle point three-leg equation \eqref{3legasym} is then given by
\begin{align}
\label{H3d13leg}
\left.\frac{\partial \astr{x}}{\partial x}\right|_{x=x_0}\hspace{-0.3cm}=\ovphib(\alpha_1\,|\,x_1,x_0)+ \phi(\alpha_1+\alpha_3\,|\,x_2,x_0)+\vphi(\alpha_3\,|\,x_3,x_0)=0\,,
\end{align}
where $\phi(\alpha\,|\,x_i,x_j)$, and $\vphi(\alpha\,|\,x_i,x_j)$, are defined by
\begin{align}
\begin{split}
\phi(\alpha\,|\,x_i,x_j)&=\Log(1+\EXP^{x_i-x_j+\ii\alpha})-\Log(1+\EXP^{x_i+x_j+\ii\alpha})+2x_j+\pi\ii\,\mbox{sgn}(\re(x_j))\,, \\
\vphi(\alpha\,|\,x_i,x_j)&=\Log(1-\EXP^{x_i-x_j-\ii\alpha})+\Log(1-\EXP^{-x_i-x_j-\ii\alpha})-\Log(1-\EXP^{x_i+x_j-\ii\alpha})-\Log(1-\EXP^{x_j-x_i-\ii\alpha}) \,,
\end{split}
\end{align}
and
\begin{align}
\ovphib(\alpha\,|\,x_i,x_j)=\phi(\pi-\alpha\,|\,-x_i,x_j)\,.
\end{align}

Equation \eqref{H3d13leg} is a three-leg form of $H3_{(\delta=1;\,\varepsilon=1)}$, arising as the equation for the saddle point of the star-triangle relation \eqref{qh3d1} in the limit \eqref{hypqcl}.  With the following change of variables
\begin{align}
x=-\cosh(x_0),\;\; v=\cosh(x_3),\;\; u=\EXP^{-x_1},\;\; y=\EXP^{x_2},\;\;\alpha=\EXP^{-\ii\alpha_1},\;\;\beta=-\EXP^{-\ii(\alpha_1+\alpha_3)},
\end{align}
the exponential of the three-leg equation \eqref{H3d13leg} may be written in the form
\begin{align}
H(x,u,y,v;\alpha,\beta)=0\,,
\end{align}
where 
\begin{align}
\label{H3d1}
H(x,u,y,v;\alpha,\beta)=2(uv+xy)\beta - 2(yv+xu)\alpha + uy(\beta^2-\alpha^2)(\alpha\beta)^{-1} - (\alpha^2-\beta^2)\,.
\end{align}
This is identified as $H3_{(\delta=1;\,\varepsilon=1)}$ \cite{ABS2}, an affine-linear quad equation that satisfies the 3D-consistency condition exactly as defined in Section \ref{sec:overview}.

\subsection{\texorpdfstring{$H3_{(\delta=1;\,\varepsilon=1)}$}{H3(delta=1,epsilon=1)} case (alternate form)}

\tocless\subsubsection{Star-triangle relation}

Let the variables $\spn_1,\spn_2,\spn_3$, and $\theta_1,\theta_3$, take values \eqref{hypvals}.  The star-triangle relation is given in this case by
\begin{align}
\label{qh3d1alt}
\begin{split}
\ds\int_{\mathbb{R}}d\spn_0\,\oV({\theta_1}\,|\,\spn_1,\spn_0)\,V({\theta_1+\theta_3}\,|\,\spn_2,\spn_0)\,\oW({\theta_3}\,|\,\spn_0,\spn_3)\phantom{\quad\,.} \\
\ds=R(\theta_3)\,V({\theta_1}\,|\,\spn_2,\spn_3)\,\oV({\theta_1+\theta_3}\,|\,\spn_1,\spn_3)\,W({\theta_3}\,|\,\spn_2,\spn_1)\,,
\end{split}
\end{align}
where the Boltzmann weights are
\begin{align}
\label{h3d1quanalt}
\begin{split}
V(\theta\,|\,\spn_i,\spn_j)&=\ds\Gamma_h(\spn_i+\spn_j+\ii\theta)\,\Gamma_h(\spn_j-\spn_i+\ii\theta;\bb)\,, \\[0.1cm]
\oV(\theta\,|\,\spn_i,\spn_j)&=\ds V(\eta-\theta\,|\,\spn_i,-\spn_j)\,, \\[0.1cm]
\oW(\theta\,|\,\spn_i,\spn_j)&=\ds\frac{\Gamma_h(\spn_i-\spn_j+\ii(\eta-\theta);\bb)}{\Gamma_h(\spn_i-\spn_j-\ii(\eta-\theta);\bb)}\,, \\[0.1cm]
W(\theta\,|\,\spn_i,\spn_j)&=\ds\frac{\Gamma_h(\spn_i+\spn_j+\ii\theta;\bb)\,\Gamma_h(\spn_i-\spn_j+\ii\theta;\bb)}{\Gamma_h(\spn_i+ \spn_j-\ii\theta;\bb)\,\Gamma_h(\spn_i-\spn_j-\ii\theta;\bb)}\,,
\end{split}
\end{align}
$\eta$ is defined in \eqref{hypeta}, and
\begin{align}
\begin{split}
R(\theta)&=\ds\Gamma_h(\ii(\eta-2\theta);\bb)\,.
\end{split}
\end{align}
From \eqref{HGFinflim}, the asymptotics of the integrand of \eqref{qh3d1alt} are
\begin{align}
\label{h3d12xinf}
O(\EXP^{-4\pi\eta\,\left|\spn_0\right|})\,,\qquad \spn_0\to\pm\infty\,,
\end{align}
and particularly the integral in \eqref{qh3d1alt} is absolutely convergent for the values \eqref{hypvals}.

Similarly to the preceding case of $H3_{(\delta=1;\,\varepsilon=1)}$ in \eqref{h3d1quan}, $\oW(\theta\,|\,\spn_i,\spn_j)$, and $W(\theta\,|\,\spn_i,\spn_j)$ appear for \eqref{genfadvol} ($Q3_{(\delta=1)}$ case), and \eqref{fadvol} ($Q3_{(\delta=0)}$ case), respectively, while the Boltzmann weight $V(\theta\,|\,\spn_i,\spn_j)$ is not symmetric in the exchange of spins, and generally has a non-vanishing complex imaginary component.  There are only sign differences between the $V(\theta\,|\,\spn_i,\spn_j)$ here and the $V(\theta\,|\,\spn_i,\spn_j)$ of the preceding case \eqref{h3d1quan}, while in the star-triangle relation \eqref{qh3d1alt}, the Boltzmann weights $\oW(\theta\,|\,\spn_i,\spn_j)$ and $W(\theta\,|\,\spn_i,\spn_j)$ are exchanged compared to \eqref{qh3d1}; this results in a different three-leg form from \eqref{H3d13leg}, which however corresponds to the same quad equation $H3_{(\delta=1;\,\varepsilon=1)}$ (up to relabelling of variables). 

The star-triangle relation \eqref{qh3d1alt} did not appear before, but up to a change of variables is equivalent to a hyperbolic analogue of the nonterminating Saalsch\"{u}tz formula \cite{BultThesis} (equivalently hyperbolic analogue of Barnes's second lemma).  This is the same hypergeometric equation that corresponds to \eqref{qq3d0} ($Q3_{(\delta=0)}$ case), but with a different change of variables.

Define
\begin{align}
\label{h3d12int}
I_{6,H}(\theta_1,\theta_3\,|\,\spn_1,\spn_2,\spn_3)=\int_{\mathcal{C}}d\spn\,\rho_{6,H}(\theta_1,\theta_3\,|\,\spn_1,\spn_2,\spn_3;\spn)\,,
\end{align}
where
\begin{align}
\label{h3d12rho}
\rho_{6,H}(\theta_1,\theta_3\,|\,\spn_1,\spn_2,\spn_3;\spn)=\oV({\theta_1}\,|\,\spn_1,\spn)\,V({\theta_1+\theta_3}\,|\,\spn_2,\spn)\,\oW({\theta_3}\,|\,\spn,\spn_3)\,,
\end{align}
and the contour $\mathcal{C}$ is a deformation of $\mathbb{R}$, separating the points $\ii(\eta-\theta_1-\theta_3)\pm \spn_2+\ii\bb\mathbb{Z}_{\geq0}+\ii\bb^{-1}\mathbb{Z}_{\geq0}$, $\ii\theta_3+\spn_3+\ii\bb\mathbb{Z}_{\geq0}+\ii\bb^{-1}\mathbb{Z}_{\geq0}$, from the points $-\ii\theta_1\pm \spn_1-\ii\bb\mathbb{Z}_{\geq0}-\ii\bb^{-1}\mathbb{Z}_{\geq0}$, $-\ii\theta_3-\spn_3-\ii\bb\mathbb{Z}_{\geq0}-\ii\bb^{-1}\mathbb{Z}_{\geq0}$.    If
\begin{align}
\label{h3d12rvals}
\re(\theta_1)\pm \im(\spn_1)\,, \re(\theta_3)+ \im(\spn_3)\,, \eta-\re(\theta_1+\theta_3)\pm \im(\spn_2)>0\,,
\end{align}
the contour $\mathcal{C}$ can be chosen to be $\mathbb{R}$.

The star-triangle relation \eqref{qh3d1alt}, may be obtained from the following limit of \eqref{qq3d1}:

\begin{prop} \label{h3d12prop}
For the values \eqref{q3d1rvals},
\begin{align}
\label{h3d12lim}
\begin{split}
\lim_{\kappa\to\infty}\EXP^{4\pi\eta\kappa+\ii\pi(\eta(\eta-2(\theta_1+\ii \spn_3))-\theta_3(\theta_3+2(\theta_1-\ii \spn_3))-\spn_1^2+\spn_2^2)}I_{14}(\theta_1+\ii\kappa,\theta_3\,|\,\spn_1,\spn_2,\spn_3+\kappa)\phantom{\,,} \\
=I_{6,H}(\theta_1,\theta_3\,|\,\spn_1,\spn_2,\spn_3)\,,
\end{split}
\end{align}
where $I_{14}(\theta_1,\theta_3\,|\,\spn_1,\spn_2,\spn_3)$ is defined in \eqref{q3d1int}.

\end{prop}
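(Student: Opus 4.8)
The plan is to mirror the proof of Proposition \ref{q3d0prop}. By the conditions \eqref{q3d1rvals} the contours of both $I_{14}$ in \eqref{q3d1int} and $I_{6,H}$ in \eqref{h3d12int} may be taken to be $\mathbb{R}$, so after performing the parameter substitution $\theta_1\to\theta_1+\ii\kappa$, $\spn_3\to\spn_3+\kappa$ in the integrand \eqref{q3d1rho} I would make the change of integration variable $\spn\to\spn+\kappa$. This shift is genuinely needed here, in contrast to the situation in Proposition \ref{h3d1prop}, precisely because none of the spins entering the weight $\oW(\theta_1\,|\,\spn_1,\spn)$ are themselves shifted by $\kappa$, so without moving the contour every gamma function in that factor would run to infinity. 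The goal is then to show that the product of the exponential prefactor on the left of \eqref{h3d12lim} with the transformed integrand converges pointwise to $\rho_{6,H}(\theta_1,\theta_3\,|\,\spn_1,\spn_2,\spn_3;\spn)$, and that the family is dominated by a fixed integrable function on $\mathbb{R}$, whence the result follows by dominated convergence.

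The key structural observation, which I would verify first, is that after the shift $\spn\to\spn+\kappa$ each of the three Boltzmann weights in \eqref{q3d1rho} factorizes into two gamma functions whose $\kappa$-dependence cancels and two whose arguments run off to $\pm\infty$. Applying the inversion relation \eqref{HGFidents} to the bounded pair, the $\kappa$-independent part of $\oW(\theta_1+\ii\kappa\,|\,\spn_1,\spn+\kappa)$ is exactly $\oV(\theta_1\,|\,\spn_1,\spn)$, that of $W(\theta_1+\theta_3+\ii\kappa\,|\,\spn_2,\spn+\kappa)$ is $V(\theta_1+\theta_3\,|\,\spn_2,\spn)$, and that of $\oW(\theta_3\,|\,\spn_3+\kappa,\spn+\kappa)$ is $\oW(\theta_3\,|\,\spn,\spn_3)$, so the bounded factors reproduce precisely the three weights of $\rho_{6,H}$ in \eqref{h3d12rho}, with the assignments \eqref{h3d1quanalt}.

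The remaining divergent pieces — the six gamma functions whose arguments tend to $\pm\infty$, together with the factor $S(\spn+\kappa)$ coming from \eqref{sgenfadvol} — are then treated with the large-argument asymptotics \eqref{HGFinflim}, each such gamma contributing a Gaussian factor controlled by $B_{2,2}$. I would collect these and check that the $O(\kappa^2)$ contributions cancel among the six gammas, that the surviving $O(\kappa)$ growth combines with the $\EXP^{4\pi\eta\kappa}$ behaviour of $S(\spn+\kappa)$ into the single net rate $\EXP^{-4\pi\eta\kappa}$ (cancelled by the $\EXP^{4\pi\eta\kappa}$ in the prefactor), and — crucially for integrability — that the $\spn$-dependent linear-in-$\kappa$ and quadratic terms all cancel, so that the leftover $O(1)$ factor is independent of the integration variable. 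The constant phase in the prefactor of \eqref{h3d12lim} is exactly the one needed to remove this $\spn$-independent leftover, leaving the pointwise limit $\rho_{6,H}$.

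The main obstacle is the bookkeeping in this last step: one must track the Gaussian contributions of all six divergent gamma functions (including their constant terms) simultaneously and confirm both the cancellation of the $\spn$-dependence — which is what renders the dominating bound uniform in $\kappa$ and lets one invoke the \eqref{q3d1xinf}-type exponential decay to justify dominated convergence — and the precise matching of the residual constant phase against the prefactor. Once the integral identity \eqref{h3d12lim} is established, the star-triangle relation \eqref{qh3d1alt} itself follows by using \eqref{HGFinflim} to take the same limit of the right-hand side of \eqref{qq3d1}, exactly as in the preceding cases.
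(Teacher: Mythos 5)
Your proposal is correct and takes essentially the same route as the paper, whose proof of Proposition \ref{h3d12prop} is given only by reference to Proposition \ref{q3d0prop}: choose the contours to be $\mathbb{R}$, make the change of integration variable $\spn\to\spn+\kappa$, apply the asymptotic formula \eqref{HGFinflim} to the divergent factors together with the decay estimates of type \eqref{q3d1xinf}, \eqref{h3d12xinf}, and conclude by dominated convergence. Your explicit identification of the surviving bounded gamma functions (via \eqref{HGFidents}) with the weights \eqref{h3d1quanalt} of $\rho_{6,H}$, and the cancellation bookkeeping of the $\kappa^{2}$-phases, the $\spn$-dependent exponents, and the net rate $\EXP^{-4\pi\eta\kappa}$ against the prefactor, correctly fills in exactly the details the paper leaves implicit in the phrase ``analogous steps''.
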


The steps of the proof of Proposition \ref{h3d12prop} are analogous to the respective steps for Proposition \ref{q3d0prop}.  The star-triangle relation \eqref{qh3d1alt} follows from Proposition \ref{h3d12prop}, by using \eqref{HGFinflim} to take the same limit of the right hand side of \eqref{qq3d1}.

\tocless\subsubsection{Classical integrable equations}

Let the classical variables $x_1,x_2,x_3$, $\alpha_1,\alpha_3$, take values in \eqref{hypcvals}.  The Lagrangian functions for this case are defined by
\begin{align}
\label{lagh3d1alt}
\begin{split}
\Lambda(\alpha\,|\,x_i,x_j)&=\ds\lie(-\EXP^{x_i+x_j+\ii\alpha})+\lie(-\EXP^{x_j-x_i+\ii\alpha})+\frac{(x_j+\ii\alpha)^2}{2}\,, \\
\ol(\alpha\,|\,x_i,x_j)&=\ds\olie(\EXP^{x_i-x_j-\ii\alpha})+\olie(\EXP^{x_j-x_i-\ii\alpha})-2\,\olie(\EXP^{-\ii\alpha})+\frac{(x_i-x_j)^2}{2}\,.
\end{split}
\end{align}
The $\ol(\alpha\,|\,x_i,x_j)$ is equivalent to $\mathcal{L}(\pi-\alpha\,|\,x_i,x_j)$ from \eqref{lagq3d0} ($Q3_{(\delta=0)}$ case) , and thus satisfies
\begin{align}
\ds\ol(\alpha\,|\,x_i,x_j)=\ol(\alpha\,|\,x_j,x_i)\,,\qquad\ol(\alpha\,|\,x_i,x_j)=-\ol(-\alpha\,|\,x_i,x_j)\,.
\end{align}
The $\Lambda(\alpha\,|\,x_i,x_j)$ does not satisfy the above symmetries, and particularly the ordering of $x_i,x_j$ in $\Lambda(\alpha\,|\,x_i,x_j)$ needs to be taken into account.

Using the asymptotics \eqref{HGFqcl}, the leading order $O(\hbar^{-1})$ quasi-classical expansion \eqref{hypqcl} of the integrand \eqref{h3d12rho} is
\begin{align}
\label{h3d12qcl}
\begin{split}
\frac{\alpha_3^2-2\pi\alpha_3-4\lie(\EXP^{-\ii\alpha_3})-x_1^2-x_2^2}{2\ii\hbar}+\Log \rho_{6,H}(\tfrac{\alpha_1}{\sqrt{2\pi\hbar}},\tfrac{\alpha_3}{\sqrt{2\pi\hbar}}\,|\,\tfrac{x_1}{\sqrt{2\pi\hbar}},\tfrac{x_2}{\sqrt{2\pi\hbar}}\tfrac{x_3}{\sqrt{2\pi\hbar}};\tfrac{x_0}{\sqrt{2\pi\hbar}})\phantom{\,,} \\[0.1cm]
=(\ii\hbar)^{-1}\astr{x_0}+O(1)\,,
\end{split}
\end{align}
where
\begin{align}
\astr{x_0}=\olam(\alpha_1\,|\,x_1,x_0)+\Lambda(\alpha_1+\alpha_3\,|\,x_2,x_0)+\ol(\alpha_3\,|\,x_0,x_3)\,,
\end{align}
and
\begin{align}
\olam(\alpha\,|\,x_i,x_j)=\Lambda(\pi-\alpha\,|\,x_i,-x_j)\,.
\end{align}

The saddle point three-leg equation \eqref{3legasym} is then given by
\begin{align}
\label{H3d13legalt}
\left.\frac{\partial \astr{x}}{\partial x}\right|_{x=x_0}\hspace{-0.3cm}=\ovphib(\alpha_1\,|\,x_1,x_0)+\phi(\alpha_1+\alpha_3\,|\,x_2,x_0)+\vphi(\alpha_3\,|\,x_3,x_0)=0\,,
\end{align}
where $\phi(\alpha\,|\,x_i,x_j)$, and $\vphi(\alpha\,|\,x_i,x_j)$, are defined by
\begin{align}
\begin{split}
%\varphi(\alpha\,|\,x_i,x_j)=\Log\left((1+\EXP^{x_i+x_j+\ii\alpha})(1+\EXP^{x_i-x_j+\ii\alpha})\right),\quad\phi(\alpha\,|\,x_i,x_j)=\Log\frac{\sinh(\frac{x_i-x_j+\ii\alpha}{2})}{\sinh(\frac{x_i-x_j-\ii\alpha}{2})}+\ii\alpha\,.
\phi(\alpha\,|\,x_i,x_j)&=-\Log(1+\EXP^{x_i+x_j+\ii\alpha})-\Log(1+\EXP^{x_j-x_i+\ii\alpha})+2x_j+\ii\alpha\,, \\
\vphi(\alpha\,|\,x_i,x_j)&=-\Log(1-\EXP^{x_j-x_i-\ii\alpha})+\Log(1-\EXP^{x_i-x_j-\ii\alpha})-x_i-x_j\,,
\end{split}
\end{align}
and
\begin{align}
\ovphib(\alpha\,|\,x_i,x_j)=-\phi(\pi-\alpha\,|\,x_i,-x_j)\,.
\end{align}

Equation \eqref{H3d13legalt} is a three-leg form of $H3_{(\delta=1;\,\varepsilon=1)}$ (different from \eqref{H3d13leg}), arising as the equation for the saddle point of the star-triangle relation \eqref{qh3d1alt} in the limit \eqref{hypqcl}.  With the following change of variables
\begin{align}
u=\cosh(x_1),\quad y=\cosh(x_2),\quad x=-\EXP^{-x_0},\quad v=\EXP^{x_3},\quad\alpha=\EXP^{-\ii\alpha_1},\quad\beta=\EXP^{-\ii(\alpha_1+\alpha_3)},
\end{align}
the exponential of the three-leg equation \eqref{H3d13legalt} may be written in the form
\begin{align}
H(x,u,y,v;\alpha,\beta)=0\,,
\end{align}
where 
\begin{align}
\label{H3d12}
H(x,u,y,v;\alpha,\beta)=2(uv+xy)\beta - 2(yv+xu)\alpha + xv(\beta^2-\alpha^2)(\alpha\beta)^{-1} - (\alpha^2-\beta^2)\,.
\end{align}
This is identified as $H3_{(\delta=1;\,\varepsilon=1)}$ \cite{ABS2}, an affine-linear quad equation that satisfies the 3D-consistency condition exactly as defined in Section \ref{sec:overview}.

\subsection{\texorpdfstring{$H3_{(\delta=0,1;\,\varepsilon=1-\delta)}$}{H3(delta=0,1,epsilon=1-delta)} case}

\tocless\subsubsection{Star-triangle relation}

Let the variables $\spn_1,\spn_2,\spn_3$, and $\theta_1,\theta_3$, take values \eqref{hypvals}.  The star-triangle relation is given in this case by
\begin{align}
\label{qh3d0}
\begin{split}
\ds\int_{\mathbb{R}}d\spn_0\,S(\spn_0)\,\oV(\theta_1\,|\,\spn_1,\spn_0)\,V({\theta_1+\theta_3}\,|\,\spn_2,\spn_0)\,W(\eta-\theta_3\,|\,\spn_0,\spn_3)\phantom{\,,} \\
\ds=R(\theta_1,\theta_3)\,V({\theta_1}\,|\,\spn_2,\spn_3)\,\oV(\theta_1+\theta_3\,|\,\spn_1,\spn_3)\,W({\theta_3}\,|\,\spn_2,\spn_1)\,,
\end{split}
\end{align}
where the Boltzmann weights are
\begin{align}
\label{h3d0quan}
\begin{split}
V(\theta\,|\,\spn_i,\spn_j)&=\ds\EXP^{\overline{B}(\ii\theta+\spn_i-\spn_j)+\overline{B}(\ii\theta+\spn_j-\spn_i)+\overline{B}(\ii\theta-\spn_i-\spn_j)}\Gamma_h(\spn_i+\spn_j+\ii\theta;\bb)\,, \\[0.1cm]
\oV(\theta\,|\,\spn_i,\spn_j)&=\ds \frac{1}{V(-(\eta-\theta)\,|\,\spn_i,\spn_j)}\,, \\[0.1cm]
W(\theta\,|\,\spn_i,\spn_j)&=\ds\EXP^{-2\pi\theta(\spn_i+\spn_j)}\frac{\Gamma_h(\spn_i-\spn_j+\ii\theta;\bb)}{\Gamma_h(\spn_i-\spn_j-\ii\theta;\bb)}\,,
\end{split}
\end{align}
$\eta$ is defined in \eqref{hypeta},
\begin{align}
\label{obdef}
\overline{B}(z):=\frac{\pi\ii}{2}B_{2,2}(\ii z+\eta;\bb,\bb^{-1})\,,
\end{align}
where $B_{2,2}(z;\omega_1,\omega_2)$ is defined in \eqref{b22def}, and
\begin{align}
\label{h3d0quans}
\begin{split}
S(\spn)&=\ds\EXP^{4\pi \eta \spn}\,, \\
R(\theta_1,\theta_3)&=\ds\EXP^{\overline{B}(\ii(\eta-2\theta_1))-\overline{B}(\ii(\eta-2(\theta_1+\theta_3)))}\Gamma_h(\ii(\eta-2\theta_3);\bb)\,.
\end{split}
\end{align}
In \eqref{h3d0quan}, the Boltzmann weight $W(\theta\,|\,\spn_i,\spn_j)$ is the same as for \eqref{fadvol} ($Q3_{(\delta=0)}$ case), up to the additional exponential factor.

From \eqref{HGFinflim}, the asymptotics of the integrand of \eqref{qh3d0} are
\begin{align}
\label{h3d0xinf}
\begin{split}
O(\EXP^{4\pi\eta \spn_0})\,,&\qquad \spn_0\to-\infty\,, \\
O(\EXP^{-2\pi(\eta-\theta_3+\ii(\spn_1-\spn_2)) \spn_0})\,,&\qquad \spn_0\to+\infty\,,
\end{split}
\end{align}
and particularly the integral in \eqref{qh3d0} is absolutely convergent for the values \eqref{hypvals}.  

Note that the Boltzmann weights in \eqref{h3d0quan} have additional factors of $\overline{B}(z)$ defined in \eqref{obdef}, whereas in the previous cases it always happened that these additional exponential terms (which come from the asymptotic relation \eqref{HGFinflim}) completely cancelled out of the star-triangle relation.

The star-triangle relation \eqref{qh3d0} previously appeared in the Appendix of \cite{BAZHANOV2018509}, as a limiting case of \eqref{qq3d0}.  It seems that the star-triangle relation \eqref{qh3d0} did not appear before in the hypergeometric integral theory, but it may be interpreted as a hyperbolic analogue of an integral formula of Barnes, known as Barnes's first lemma \cite{Barnes1908} (for the star-triangle relation corresponding to the latter, see \eqref{qh2d0} for the $H2_{(\varepsilon=0)}$ case).

Define
\begin{align}
\label{h3d0int}
I_{4}(\theta_1,\theta_3\,|\,\spn_1,\spn_2,\spn_3)=\int_{\mathcal{C}}d\spn\,\rho_4(\theta_1,\theta_3\,|\,\spn_1,\spn_2,\spn_3;\spn)\,,
\end{align}
where
\begin{align}
\label{h3d0rho}
\rho_4(\theta_1,\theta_3\,|\,\spn_1,\spn_2,\spn_3;\spn)=S(\spn)\,\oV({\theta_1}\,|\,\spn_1,\spn)\,V({\theta_1+\theta_3}\,|\,\spn_2,\spn)\,\oW({\theta_3}\,|\,\spn,\spn_3)\,,
\end{align}
and the contour $\mathcal{C}$ is a deformation of $\mathbb{R}$, separating the points $\ii\theta_3+\spn_3+\ii\bb\mathbb{Z}_{\geq0}+\ii\bb^{-1}\mathbb{Z}_{\geq0}$, $\ii(\eta-\theta_1-\theta_3)-\spn_2+\ii\bb\mathbb{Z}_{\geq0}+\ii\bb^{-1}\mathbb{Z}_{\geq0}$, from the points $-\ii\theta_3+\spn_3-\ii\bb\mathbb{Z}_{\geq0}-\ii\bb^{-1}\mathbb{Z}_{\geq0}$, $-\ii\theta_1-\spn_1-\ii\bb\mathbb{Z}_{\geq0}-\ii\bb^{-1}\mathbb{Z}_{\geq0}$.      If
\begin{align}
\label{h3d0rvals}
\re(\theta_1)+ \im(\spn_1)\,, \re(\theta_3)\pm \im(\spn_3)\,, \eta-\re(\theta_1+\theta_3)- \im(\spn_2)>0\,,
\end{align}
the contour $\mathcal{C}$ can be chosen to be $\mathbb{R}$.	

The star-triangle relation \eqref{qh3d0}, may be obtained from the following limit of \eqref{qq3d1}:

\begin{prop} \label{h3d0prop}
For the values \eqref{q3d1rvals},
\begin{align}
\label{h3d0lim}
\lim_{\kappa\to\infty}\EXP^{4\pi(2\eta+\theta_3-\ii(\spn_1-\spn_2))\kappa}I_{14}(\theta_1+2\ii\kappa,\theta_3\,|\,\spn_1+\kappa,\spn_2+\kappa,\spn_3+\kappa)=I_{4}(\theta_1,\theta_3\,|\,\spn_1,\spn_2,\spn_3)\,,
\end{align}
where $I_{14}(\theta_1,\theta_3\,|\,\spn_1,\spn_2,\spn_3)$ is defined in \eqref{q3d1int}.

\end{prop}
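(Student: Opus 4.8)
The plan is to follow the template of Proposition~\ref{q3d0prop}: reduce the claim to a pointwise asymptotic analysis of the integrand on a fixed contour, followed by a dominated-convergence argument. The crucial structural simplification here is that the shift $\theta_1\mapsto\theta_1+2\ii\kappa$ is purely imaginary while each spin is shifted by the real amount $\kappa$, so that $\re(\theta_1+2\ii\kappa)=\re(\theta_1)$, $\re(\theta_1+\theta_3+2\ii\kappa)=\re(\theta_1+\theta_3)$, and $\im(\spn_i+\kappa)=\im(\spn_i)$. Hence all of the inequalities \eqref{q3d1rvals} are preserved for every $\kappa$, and the contour of $I_{14}(\theta_1+2\ii\kappa,\theta_3\,|\,\spn_1+\kappa,\spn_2+\kappa,\spn_3+\kappa)$ in the defining integral \eqref{q3d1int} may be kept equal to $\mathbb{R}$ throughout the limit. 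First I would perform the real change of integration variable $\spn\mapsto\spn+\kappa$, which maps $\mathbb{R}$ to $\mathbb{R}$ and crosses no poles, so that the left-hand side of \eqref{h3d0lim} becomes $\int_{\mathbb{R}}d\spn\,\EXP^{4\pi(2\eta+\theta_3-\ii(\spn_1-\spn_2))\kappa}\,\rho_{14}(\theta_1+2\ii\kappa,\theta_3\,|\,\spn_1+\kappa,\spn_2+\kappa,\spn_3+\kappa;\spn+\kappa)$, with $\rho_{14}$ as in \eqref{q3d1rho}.

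Next I would establish that this integrand converges pointwise in $\spn\in\mathbb{R}$ to $\rho_4(\theta_1,\theta_3\,|\,\spn_1,\spn_2,\spn_3;\spn)$ of \eqref{h3d0rho}. In each Boltzmann weight of $\rho_{14}$ the combination $\spn_i-\spn$ has its $\kappa$-shifts cancel while $\spn_i+\spn$ and the $\kappa$-shifted spectral argument diverge; consequently every $\Gamma_h$ factor whose argument acquires a nonzero multiple of $\kappa$ is replaced, via the large-argument asymptotics \eqref{HGFinflim}, by an exponential of a multiple Bernoulli polynomial, i.e. by a factor $\EXP^{\mp\overline{B}(\cdot)}$ with $\overline{B}$ as in \eqref{obdef}, whereas the one or two $\Gamma_h$ factors per weight with bounded argument survive. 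Because $\overline{B}(z)=\tfrac{\pi\ii}{2}B_{2,2}(\ii z+\eta;\bb,\bb^{-1})$ reduces by \eqref{b22def} to an even quadratic in $z$, so that $\overline{B}(z)=\overline{B}(-z)$, the surviving finite $\overline{B}$-terms recombine precisely into the exponential prefactors of the weights $V$, $\oV$ in \eqref{h3d0quan}; the inversion relation \eqref{HGFidents} accounts for the reversed spin ordering in the third weight $\oW(\theta_3\,|\,\spn,\spn_3)$, and the factor $S(\spn+\kappa)=2\sinh(2\pi(\spn+\kappa)\bb)\sinh(2\pi(\spn+\kappa)\bb^{-1})$ contributes its dominant exponential $\EXP^{4\pi\eta(\spn+\kappa)}$, reproducing $S(\spn)=\EXP^{4\pi\eta\spn}$ of \eqref{h3d0quans}. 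All $\kappa$-divergent exponentials generated by \eqref{HGFinflim} and by $S$ are then cancelled exactly by the prefactor $\EXP^{4\pi(2\eta+\theta_3-\ii(\spn_1-\spn_2))\kappa}$ (this is precisely how that prefactor is fixed), leaving the finite limit $\rho_4$. In contrast to Propositions~\ref{q3d0prop}--\ref{h3d12prop}, here the finite $\overline{B}$-factors do \emph{not} cancel out of the answer, which is the origin of the explicit $\overline{B}$-prefactors appearing in \eqref{h3d0quan}.

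The main obstacle will be upgrading this pointwise convergence to convergence of the integrals, i.e. exhibiting a single $\kappa$-independent integrable majorant on $\mathbb{R}$. Two large parameters, $\kappa$ and $\spn$, interact, so one must control the shifted integrand uniformly in $\kappa$ as $\spn\to\pm\infty$. I would use the explicit exponentially small error term $O(\EXP^{-2\pi\beta|z\pm\kappa|})$ in \eqref{HGFinflim}, together with the uniform tail estimate \eqref{q3d1xinf} for the integrand of $I_{14}$, to bound the combined integrand (after the shift and after multiplication by the prefactor) by a fixed multiple of the integrable tail behaviour recorded in \eqref{h3d0xinf} for $\rho_4$, namely decay like $\EXP^{4\pi\eta\spn}$ as $\spn\to-\infty$ and like $\EXP^{-2\pi(\eta-\theta_3+\ii(\spn_1-\spn_2))\spn}$ as $\spn\to+\infty$. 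The delicate point is to verify that the $\kappa$-dependent exponentials cancel against the prefactor not merely pointwise but uniformly in $\spn$, and that the surviving Gaussian $\overline{B}$-factors do not destroy integrability; this uniform bound is the technical heart of the argument. Once it is in place, dominated convergence yields \eqref{h3d0lim}. Finally, exactly as in the remark following Proposition~\ref{h3d1prop}, analytic continuation in $\spn_1,\spn_2,\spn_3,\theta_1,\theta_3$ extends the identity from the range \eqref{q3d1rvals} to all values admitted by the contour in the definition \eqref{h3d0int} of $I_4$, and the star-triangle relation \eqref{qh3d0} then follows by applying the same limit, through \eqref{HGFinflim}, to the right-hand side of \eqref{qq3d1}.
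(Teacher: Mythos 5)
Your proposal is correct and follows essentially the same route as the paper, whose proof of Proposition \ref{h3d0prop} simply declares the steps analogous to Proposition \ref{q3d0prop}: the change of integration variable $\spn\to\spn+\kappa$ on the contour $\mathbb{R}$ (legitimate since the purely imaginary shift of $\theta_1$ and real shifts of the spins preserve \eqref{q3d1rvals}), pointwise asymptotics of the integrand via \eqref{HGFinflim} with the surviving even $\overline{B}$-factors producing the prefactors in \eqref{h3d0quan}, a $\kappa$-uniform bound interpolating between the tails \eqref{q3d1xinf} and \eqref{h3d0xinf}, and dominated convergence, followed by analytic continuation. Your added detail on why the $\overline{B}$-terms fail to cancel in this case (unlike Propositions \ref{q3d0prop}--\ref{h3d12prop}) is accurate and in fact more explicit than the paper itself.
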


The steps of the proof of Proposition \ref{h3d0prop}, are analogous to the respective steps for Proposition \ref{q3d0prop}.  The star-triangle relation \eqref{qh3d0} follows from Proposition \ref{h3d0prop}, by using \eqref{HGFinflim} to take the same limit of the right hand side of \eqref{qq3d1}.

\tocless\subsubsection{Classical integrable equations}

Let the classical variables $x_1,x_2,x_3$, $\alpha_1,\alpha_3$, take values in \eqref{hypcvals}.  The Lagrangian functions for this case are defined by
\begin{align}
\label{lagh3d0}
\begin{split}
\Lambda(\alpha\,|\,x_i,x_j)&=\ds \lie(-\EXP^{x_i+x_j+\ii\alpha})+\ii(x_i+x_j)\alpha+x_i^2+x_j^2-\frac{\alpha^2}{2}+\frac{\pi^2}{6}\,, \\[0.1cm]
\lag(\alpha\,|\,x_i,x_j)&=\ds\lie(-\EXP^{x_i-x_j+\ii\alpha})+\lie(-\EXP^{x_j-x_i+\ii\alpha})-2\,\lie(-\EXP^{\ii\alpha})+\frac{(x_i-x_j)^2}{2}\,. 
\end{split}
\end{align}
These functions are all symmetric in $x_i,x_j$, satisfying
\begin{align}
\Lambda(\alpha\,|\,x_i,x_j)=\Lambda(\alpha\,|\,x_j,x_i)\,,\quad\olam(\alpha\,|\,x_i,x_j)=\olam(\alpha\,|\,x_j,x_i)\,,\quad\lag(\alpha\,|\,x_i,x_j)=\lag(\alpha\,|\,x_j,x_i)\,,
\end{align}
while $\lag(\alpha\,|\,x_i,x_j)$ (related to \eqref{lagq3d0}) also satisfies
\begin{align}
\lag(\alpha\,|\,x_i,x_j)=-\lag(-\alpha\,|\,x_i,x_j)\,.
\end{align}

Using the asymptotics \eqref{HGFqcl}, the leading order $O(\hbar^{-1})$ quasi-classical expansion \eqref{hypqcl} of the integrand \eqref{h3d0rho} is
\begin{align}
\label{h3d0qcl}
\begin{split}
\frac{\ii x_1(\pi-\alpha_1)+\ii x_2(\alpha_1+\alpha_3)+\ii(x_3+\ii\alpha_3)(\pi-\alpha_3)+\alpha_1(\pi+\alpha_3)-2\lie(\EXP^{-\ii\alpha_3})-\frac{\pi^2}{6}}{\ii\hbar}+\phantom{\,,} \\[0.1cm]
\Log \rho_4(\tfrac{\alpha_1}{\sqrt{2\pi\hbar}},\tfrac{\alpha_3}{\sqrt{2\pi\hbar}}\,|\,\tfrac{x_1}{\sqrt{2\pi\hbar}},\tfrac{x_2}{\sqrt{2\pi\hbar}}\tfrac{x_3}{\sqrt{2\pi\hbar}};\tfrac{x_0}{\sqrt{2\pi\hbar}})=(\ii\hbar)^{-1}\astr{x_0}+O(1)\,,
\end{split}
\end{align}
where
\begin{align}
\astr{x_0}=\olam(\alpha_1\,|\,x_1,x_0)+\Lambda(\alpha_1+\alpha_3\,|\,x_2,x_0)+\ol(\alpha_3\,|\,x_3,x_0)\,,
\end{align}
and
\begin{align}
\olam(\alpha\,|\,x_i,x_j)=-\Lambda(-(\pi-\alpha)\,|\,x_i,x_j)\,,\qquad\ol(\alpha\,|\,x_i,x_j)=\lag(\pi-\alpha\,|\,x_i,x_j)\,.
\end{align}

The saddle point three-leg equation \eqref{3legasym} is then given by
\begin{align}
\label{H3d03leg}
\left.\frac{\partial \astr{x}}{\partial x}\right|_{x=x_0}\hspace{-0.3cm}=\ovphib(\alpha_1\,|\,x_1,x_0)+\phi(\alpha_1+\alpha_3\,|\,x_2,x_0)+\vphi(\alpha_3\,|\,x_3,x_0)=0\,,
\end{align}
where $\phi(\alpha\,|\,x_i,x_j)$, and $\vphi(\alpha\,|\,x_i,x_j)$, are defined by
\begin{align}
\begin{split}
%\varphi(\alpha\,|\,x_i,x_j)=\Log(1+\EXP^{-x_i-x_j-\ii\alpha})\,,\quad\phi(\alpha\,|\,x_i_i,x_i_j)=\Log\frac{\sinh(\frac{x_i-x_j+\ii\alpha}{2})}{\sinh(\frac{x_i-x_j-\ii\alpha}{2})}-\ii\alpha\,.
\phi(\alpha\,|\,x_i,x_j)&=-\Log(1+\EXP^{x_i+x_j+\ii\alpha})+x_j\,, \\
\vphi(\alpha\,|\,x_i,x_j)&=-\Log(1-\EXP^{x_j-x_i-\ii\alpha})+\Log(1-\EXP^{x_i-x_j-\ii\alpha})-x_i+\ii\alpha\,,
\end{split}
\end{align}
and
\begin{align}
\ovphib(\alpha\,|\,x_i,x_j)=-\phi(\pi-\alpha\,|\,-x_i,-x_j)+x_i+\ii\alpha\,.
\end{align}

Equation \eqref{H3d03leg} represents a three-leg form for either $H3_{(\delta=1;\,\varepsilon=0)}$, or $H3_{(\delta=0;\,\varepsilon=1)}$, arising as the equation for the saddle point of the star-triangle relation \eqref{qh3d0} in the limit \eqref{hypqcl}.  With the following change of variables
\begin{align}
x=-\EXP^{x_0},\qquad u=\EXP^{x_1},\qquad y=\EXP^{x_2},\qquad v=\EXP^{x_3},\qquad\alpha=\EXP^{-\ii\alpha_1},\qquad\beta=-\EXP^{-\ii(\alpha_1+\alpha_3)},
\end{align}
the exponential of the three-leg equation \eqref{H3d03leg} may be written in the form
\begin{align}
\label{qref}
H(x,u,y,v;\alpha,\beta)=0\,,
\end{align}
where 
\begin{align}
\label{H3d01}
H(x,u,y,v;\alpha,\beta)=(uv+xy)\beta-(yv+xu)\alpha-(\alpha^2-\beta^2)\,.
\end{align}
Similarly, with the following change of variables
\begin{align}
x=-\EXP^{x_0},\qquad u=\EXP^{-x_1},\qquad y=\EXP^{-x_2},\qquad v=\EXP^{x_3},\qquad\alpha=\EXP^{\ii\alpha_1},\qquad\beta=-\EXP^{\ii(\alpha_1+\alpha_3)},
\end{align}
the exponential of the three-leg equation \eqref{H3d03leg} may be written in the form \eqref{qref}, where
\begin{align}
\label{H3d01b}
H(x,u,y,v;\alpha,\beta)=(uv+xy)\beta-(yv+xu)\alpha+uy(\beta^2-\alpha^2)(\alpha\beta)^{-1}\,.
\end{align}
The above equations \eqref{qref}, with \eqref{H3d01}, and \eqref{H3d01b}, are respectively identified as $H3_{(\delta=1;\,\varepsilon=0)}$, and $H3_{(\delta=1;\,\varepsilon=0)}$ \cite{ABS2}, which are affine-linear quad equations that satisfy the 3D-consistency condition exactly as defined in Section \ref{sec:overview}.  Thus the same three-leg saddle point equation \eqref{H3d03leg}, corresponds to both of the respective cases of $H3_{(\delta=1;\,\varepsilon=0)}$, and $H3_{(\delta=0;\,\varepsilon=1)}$.  This can be expected, since equation \eqref{qref} with one of \eqref{H3d01}, \eqref{H3d01b}, is straightforwardly transformed into the other equation, and vice versa, simply by setting $u\to u^{-1}$, $y\to y^{-1}$, $\alpha\to\alpha^{-1}$, $\beta\to\beta^{-1}$.

\subsection{\texorpdfstring{$H3_{(\delta=0;\,\varepsilon=0)}$}{H3(delta=0,epsilon=0)} case}

\tocless\subsubsection{Star-triangle relation}

Let the variables $\spn_1,\spn_2,\spn_3$, and $\theta_1,\theta_3$, take values \eqref{hypvals}. The star-triangle relation is given in this case by
\begin{align}
\label{qh3d02}
\begin{split}
\ds\int_{\mathbb{R}}d\spn_0\,\oV({\theta_1}\,|\,\spn_1,\spn_0)\,V({\theta_1+\theta_3}\,|\,\spn_2,\spn_0)\,W({\eta-\theta_3}\,|\,\spn_3,\spn_0)\phantom{\,,} \\
\ds=R(\theta_3)\,V({\theta_1}\,|\,\spn_2,\spn_3)\,\oV(\theta_1+\theta_3\,|\,\spn_1,\spn_3)\,W({\theta_3}\,|\,\spn_2,\spn_1)\,,
\end{split}
\end{align}
where the Boltzmann weights are
\begin{align}
\label{h3d02quan}
\begin{split}
V(\theta\,|\,\spn_i,\spn_j)&=\ds\EXP^{2\pi\ii \spn_i\spn_j}, \\[0.1cm]
\oV(\theta\,|\,\spn_i,\spn_j)&=\ds \frac{1}{V(\theta\,|\,\spn_i,\spn_j)}\,, \\[0.1cm]
W(\theta\,|\,\spn_i,\spn_j)&=\ds\frac{\Gamma_h(\spn_i-\spn_j+\ii\theta;\bb)}{\Gamma_h(\spn_i-\spn_j-\ii\theta;\bb)}\,,
\end{split}
\end{align}
$\eta$ is defined in \eqref{hypeta}, and
\begin{align}
\begin{split}
R(\theta_3)&=\ds\Gamma_h(\ii(\eta-2\theta_3);\bb)\,,
\end{split}
\end{align}

From \eqref{HGFinflim}, the asymptotics of the integrand of \eqref{qh3d02} are
\begin{align}
\label{h3d02xinf}
O(\EXP^{-2\pi(\eta-\theta_3)\, \left|\spn_0\right|-2\pi\ii(\spn_1-\spn_2)\spn_0})\,,&\qquad \spn_0\to\pm\infty\,,
\end{align}
and particularly the integral in \eqref{qh3d02} is absolutely convergent for the values \eqref{hypvals}. 

In \eqref{h3d02quan}, the Boltzmann weight $W(\theta\,|\,\spn_i,\spn_j)$ is the same as the Boltzmann weight for \eqref{fadvol} ($Q3_{(\delta=0)}$ case).  The Boltzmann weight $V(\theta\,|\,\spn_i,\spn_j)$ clearly has no $\theta$ dependency, thus \eqref{qh3d02} is independent of $\theta_1$. The Boltzmann weight $V(\theta\,|\,\spn_i,\spn_j)$ is also obviously symmetric upon the exchange $\spn_i\leftrightarrow \spn_j$.

The star-triangle relation \eqref{qh3d02} has not appeared before, however it is interestingly equivalent (up to a change of variables) to a ``self-duality'' property for the Boltzmann weight $W(\theta\,|\,\spn_i,\spn_j)$ of the Faddeev-Volkov model \cite{Bazhanov:2007mh,Bazhanov:2007vg}.  It also appears not to have previously been considered in the hypergeometric integral theory, however it can be considered to be a hyperbolic analogue of a special case of Barnes's integral formula for the hypergeometric function $\!\!~_2F_1$ \cite{Barnes1908} (the latter is equivalent to the alternate form of the star-triangle relation corresponding to $H1_{(\varepsilon=1)}$ in \eqref{qh1e0a}).  

Define
\begin{align}
\label{h3d02int}
I_{2}(\theta_1,\theta_3\,|\,\spn_1,\spn_2,\spn_3)=\int_{\mathcal{C}}d\spn\,\rho_2(\theta_1,\theta_3\,|\,\spn_1,\spn_2,\spn_3;\spn)\,,
\end{align}
where
\begin{align}
\label{h3d02rho}
\rho_2(\theta_1,\theta_3\,|\,\spn_1,\spn_2,\spn_3;\spn)=\oV({\theta_1}\,|\,\spn_1,\spn)\,V({\theta_1+\theta_3}\,|\,\spn_2,\spn)\,\oW({\theta_3}\,|\,\spn,\spn_3)\,,
\end{align}
and the contour $\mathcal{C}$ is a deformation of $\mathbb{R}$, which separates the points $\ii\theta_3+\spn_3+\ii\bb\mathbb{Z}_{\geq0}+\ii\bb^{-1}\mathbb{Z}_{\geq0}$, from the points $-\ii\theta_3+\spn_3-\ii\bb\mathbb{Z}_{\geq0}-\ii\bb^{-1}\mathbb{Z}_{\geq0}$.     If
\begin{align}
\label{h3d02rvals}
\re(\theta_3)\pm \im(\spn_3)>0\,,
\end{align}
the contour $\mathcal{C}$ can be chosen to be $\mathbb{R}$.	

The star-triangle relation \eqref{qh3d02}, may be obtained from the following limit of \eqref{qq3d1}:

\begin{prop} \label{h3d02prop}
For the values \eqref{q3d1rvals},
\begin{align}
\label{h3d02lim}
\begin{split}
\lim_{\kappa\to\infty}\left(\EXP^{2\pi(3\eta+\theta_3-\ii(\spn_1-\spn_2))\kappa+\ii\pi(\eta(\eta-2(\theta_1+\ii(\spn_1+\spn_3)))-\theta_3(\theta_3+2\theta_1)-\spn_1^2+\spn_2^2+\ii\theta_1(\spn_1-\spn_2)+\ii\theta_3(\spn_3-\spn_2))}\right.\phantom{\,,} \\
\left.\phantom{\EXP^{2}}\times I_{14}(\theta_1+\ii\kappa,\theta_3\,|\,\spn_1+\kappa,\spn_2+\kappa,\spn_3+\kappa)\right) =I_{2}(\theta_1,\theta_3\,|\,\spn_1,\spn_2,\spn_3)\,,
\end{split}
\end{align}
where $I_{14}(\theta_1,\theta_3\,|\,\spn_1,\spn_2,\spn_3)$ is defined in \eqref{q3d1int}.

\end{prop}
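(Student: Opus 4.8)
The plan is to follow the template of the proof of Proposition \ref{q3d0prop}, adapted to the combined real shift of the spins $\spn_1,\spn_2,\spn_3\to\spn_i+\kappa$ and the imaginary shift of the spectral variable $\theta_1\to\theta_1+\ii\kappa$. First, for the values \eqref{q3d1rvals} both contours may be taken to be $\mathbb{R}$, so that after the change of integration variable $\spn\to\spn+\kappa$ in \eqref{q3d1int} the integral $I_{14}(\theta_1+\ii\kappa,\theta_3\,|\,\spn_1+\kappa,\spn_2+\kappa,\spn_3+\kappa)$ becomes an integral over $\mathbb{R}$ of the shifted integrand $\rho_{14}$, in which every $\Gamma_h$-argument has the form (fixed quantity)$+c\,\kappa$ for some positive integer $c$; note that $\ii(\theta_1+\ii\kappa)=\ii\theta_1-\kappa$, so the imaginary shift of $\theta_1$ also feeds a real $\kappa$-translation into the arguments. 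The factor $S(\spn+\kappa)$ from \eqref{sgenfadvol} likewise contributes $\tfrac12\EXP^{4\pi\eta(\spn+\kappa)}$ up to exponentially small terms.

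Second, I would apply the large-argument asymptotics \eqref{HGFinflim} to each gamma factor. The essential structural fact is that $\overline{B}(z)$ defined in \eqref{obdef} is, by \eqref{b22def}, a purely quadratic polynomial in $z$: explicitly $\overline{B}(z)=-\tfrac{\pi\ii}{2}z^2-\tfrac{\pi\ii}{24}(\bb^2+\bb^{-2})$, the linear term cancelling because of the shift by $\eta$. Hence each $\Gamma_h((a\pm b)+c\kappa)\sim\EXP^{\mp\overline{B}(a\pm b)}$ contributes a Gaussian factor whose cross term $\mp\pi\ii\,ab$ is exactly of the type appearing in the weights $V(\theta\,|\,\spn_i,\spn_j)=\EXP^{2\pi\ii\spn_i\spn_j}$ and $\oV=\EXP^{-2\pi\ii\spn_i\spn_j}$ of \eqref{h3d02quan}. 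Collecting all these contributions, together with the $S$-factor and the explicit prefactor of the proposition, the pure-$\kappa$ divergences should cancel against $\EXP^{2\pi(3\eta+\theta_3-\ii(\spn_1-\spn_2))\kappa}$, the quadratic-in-external-spin terms $-\spn_1^2+\spn_2^2$ (and the remaining linear and constant terms) should reconstruct the $\ii\pi(\cdots)$ piece, and the surviving cross terms should assemble into $\oV(\theta_1\,|\,\spn_1,\spn)\,V(\theta_1+\theta_3\,|\,\spn_2,\spn)$. The single gamma quotient that does not shift to infinity, namely $\oW(\theta_3\,|\,\spn,\spn_3)$, is left untouched and survives as the third factor of $\rho_2$ in \eqref{h3d02rho}.

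Third, to justify interchanging the limit and the integral I would establish a uniform, $\kappa$-independent, integrable bound on the product of the transformed integrand with the exponential prefactor, exactly as in Proposition \ref{q3d0prop}: the asymptotics \eqref{q3d1xinf} of $\rho_{14}$ on $\mathbb{R}$, the target decay \eqref{h3d02xinf} of $\rho_2$, and the error control in \eqref{HGFinflim} together show the combination is dominated on $\mathbb{R}$, so dominated convergence yields the pointwise limit $I_2$. The main obstacle is not any single estimate but the bookkeeping of the second step: one must verify that the quadratic (Gaussian) contributions of the $B_{2,2}$ asymptotics, summed over the four shifted gamma factors and the $\sinh$ factor, cancel the delicate prefactor of the proposition term by term and leave precisely the finite weights $V,\oV$ --- in particular that the coefficient $2\pi\ii$ in $\EXP^{2\pi\ii\spn_i\spn_j}$ emerges as the total of the several $\mp\pi\ii\,ab$ cross terms. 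A secondary point requiring care is that, because $\theta_1$ is shifted in the imaginary direction, one should check that the growing imaginary part does not move any pole of the integrand across $\mathbb{R}$ during the limit, which is guaranteed by the contour conditions stated below \eqref{h3d02rho} together with the convergence region \eqref{q3d1rvals}.
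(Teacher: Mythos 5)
Your proposal is correct and follows essentially the same route as the paper, which proves this proposition simply by declaring the steps analogous to Proposition \ref{q3d0prop}: shift the integration variable by $\kappa$, apply the large-argument asymptotics \eqref{HGFinflim} (noting $\ii(\theta_1+\ii\kappa)=\ii\theta_1-\kappa$ feeds a real translation into the $\Gamma_h$ arguments), verify that the Gaussian factors from $\overline{B}$ together with $S(\spn+\kappa)\sim\tfrac12\EXP^{4\pi\eta(\spn+\kappa)}$ cancel the stated prefactor and assemble into $\oV\,V\,\oW$, and conclude by dominated convergence using \eqref{q3d1xinf}, \eqref{h3d02xinf} and the contour conditions. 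Your explicit evaluation $\overline{B}(z)=-\tfrac{\pi\ii}{2}z^2-\tfrac{\pi\ii}{24}(\bb^2+\bb^{-2})$ is a correct and useful addition that makes the cross-term bookkeeping behind the weights $V=\EXP^{2\pi\ii\spn_i\spn_j}$, $\oV=V^{-1}$ transparent.
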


The steps of the proof of Proposition \ref{h3d02prop}, are analogous to the respective steps for Proposition \ref{q3d0prop}.  The star-triangle relation \eqref{qh3d02} follows from Proposition \ref{h3d02prop}, by using \eqref{HGFinflim} to take the same limit of the right hand side of \eqref{qq3d1}.

\tocless\subsubsection{Classical integrable equations}

Let the classical variables $x_1,x_2,x_3$, $\alpha_1,\alpha_3$, take values in \eqref{hypcvals}.  The Lagrangian functions for this case are defined by
\begin{align}
\label{lagh3d02}
\begin{split}
\Lambda(\alpha\,|\,x_i,x_j)&=\ds-x_ix_j\,, \\[0.1cm] %\quad\overline{\mathcal{L}}(\alpha\,|\,x_i,x_j)\ds=\mathcal{L}(x,y;\pi-\alpha)\,,  \\[0.1cm]
\lag(\alpha\,|\,x_i,x_j)&=\ds \lie(-\EXP^{x_i-x_j+\ii\alpha})+\lie(-\EXP^{x_j-x_i+\ii\alpha})-2\lie(-\EXP^{\ii\alpha})+\frac{(x_i-x_j)^2}{2}\,.
\end{split}
\end{align}
These functions are each symmetric in $x_i,x_j$, satisfying
\begin{align}
\lag(\alpha\,|\,x_i,x_j)=\mathcal{L}(\alpha\,|\,x_j,x_i)\,,\qquad\Lambda(\alpha\,|\,x_i,x_j)=\Lambda(\alpha\,|\,x_j,x_i)\,,
\end{align}
while $\lag(\alpha\,|\,x_i,x_j)$ (equivalent to \eqref{lagq3d0}) also satisfies
\begin{align}
\lag(\alpha\,|\,x_i,x_j)=-\lag(-\alpha\,|\,x_i,x_j)\,.
\end{align}

Using the asymptotics \eqref{HGFqcl}, the leading order $O(\hbar^{-1})$ quasi-classical expansion \eqref{hypqcl} of the integrand \eqref{h3d02rho} is
\begin{align}
\label{h3d02qcl}
\begin{split}
\frac{(\pi-\alpha_3)^2-\frac{\pi^2}{3}-4\lie(\EXP^{-\ii\alpha_3})}{2\ii\hbar}+\Log \rho_2(\tfrac{\alpha_1}{\sqrt{2\pi\hbar}},\tfrac{\alpha_3}{\sqrt{2\pi\hbar}}\,|\,\tfrac{x_1}{\sqrt{2\pi\hbar}},\tfrac{x_2}{\sqrt{2\pi\hbar}}\tfrac{x_3}{\sqrt{2\pi\hbar}};\tfrac{x_0}{\sqrt{2\pi\hbar}})\phantom{\,,} \\[0.1cm]
=(\ii\hbar)^{-1}\astr{x_0}+O(1)\,,
\end{split}
\end{align}
where
\begin{align}
\astr{x_0}=\olam(\alpha_1\,|\,x_1,x_0)+\Lambda(\alpha_1+\alpha_3\,|\,x_2,x_0)+\ol(\alpha_3\,|\,x_0,x_3)\,,
\end{align}
and
\begin{align}
\olam(\alpha\,|\,x_i,x_j)=-\Lambda(\alpha\,|\,x_i,x_j)\,,\qquad\ol(\alpha\,|\,x_i,x_j)&=\ds\lag(\pi-\alpha\,|\,x_i,x_j)\,.
\end{align}

The saddle point three-leg equation \eqref{3legasym} is then given by
\begin{align}
\label{H3d023leg}
\left.\frac{\partial \astr{x}}{\partial x}\right|_{x=x_0}\hspace{-0.3cm}=\ovphib(\alpha_1\,|\,x_1,x_0)+\phi(\alpha_1+\alpha_3\,|\,x_2,x_0)+\vphi(\alpha_3\,|\,x_3,x_0)=0\,,
\end{align}
where $\phi(\alpha\,|\,x_i,x_j)$, and $\vphi(\alpha\,|\,x_i,x_j)$, are defined by
\begin{align}
%\varphi(\alpha\,|\,x_i,x_j)=-x_i\,,\qquad\phi(\alpha\,|\,x_i,x_j)=\Log\frac{\sinh(\frac{x_i-x_j+\ii\alpha}{2})}{\sinh(\frac{x_i-x_j-\ii\alpha}{2})}\,,
\phi(\alpha\,|\,x_i,x_j)=-x_i\,,\quad\vphi(\alpha\,|\,x_i,x_j)=\Log(1-\EXP^{x_i-x_j-\ii\alpha})-\Log(1-\EXP^{x_j-x_i-\ii\alpha})-x_i+x_j\,,
\end{align}
and
\begin{align}
\ovphib(\alpha\,|\,x_i,x_j)=-\phi(\alpha\,|\,x_i,x_j)\,.
\end{align}

Equation \eqref{H3d023leg} is a three-leg form of $H3_{(\delta=0;\,\varepsilon=0)}$, arising as the equation for the saddle point of the star-triangle relation \eqref{qh3d02} in the limit \eqref{hypqcl}.  With the following change of variables
\begin{align}
x=-\EXP^{x_0},\qquad u=\EXP^{x_1},\qquad y=\EXP^{x_2},\qquad v=\EXP^{x_3},\qquad\alpha=\EXP^{-\ii\alpha_1}\,,\qquad\beta=-\EXP^{-\ii(\alpha_1+\alpha_3)},
\end{align}
the exponential of the three-leg equation \eqref{H3d023leg} may be written in the form
\begin{align}
H(x,u,y,v;\alpha,\beta)=0\,,
\end{align}
where 
\begin{align}
\label{H3d02}
H(x,u,y,v;\alpha,\beta)=(uv+xy)\beta-(yv+xu)\alpha\,.
\end{align}
This is identified as $H3_{(\delta=0;\,\varepsilon=0)}$ \cite{ABS2}, an affine-linear quad equation that satisfies the 3D-consistency condition exactly as defined in Section \ref{sec:overview}.

\section{Rational cases}
\label{sec:ratlim}

The rational limit of each solution  of the star-triangle relation of the previous section respectively (except for the $H3_{(\delta=0,\varepsilon=0)}$ case of \eqref{qh3d02}), results in different solutions of the star-triangle relations \eqref{YBEsym}, \eqref{YBEasym}, which have Boltzmann weights given in terms of products of the Euler gamma function $\Gamma(z)$.  Notably the Boltzmann weights of the $Q$-type equations obtained in the rational limit, will no longer satisfy the spin reflection symmetry, or have a physical regime where the Boltzmann weights take positive values.  This is a side-effect of needing to break the symmetry of the integrand in \eqref{YBEsym} in order to have a convergent limit.  The quasi-classical expansion of the star-triangle relations obtained at the rational level in this Section, will specifically result in the $Q2$, $Q1_{(\delta=1)}$, $H2_{(\varepsilon=1)}$, $H2_{(\varepsilon=0)}$, classical integrable quad equations (see Figure \ref{ratlimsfig}).

\subsection{Euler Gamma function}

The central function at the rational level is the classical Euler gamma function $\Gamma(z)$, which may be defined by the integral
\begin{align}
\label{eulergamma}
\Gamma(z)=\int^\infty_0dt\,t^{z-1}\EXP^{-t}\,,\qquad\mbox{Re}(z)>0\,.
\end{align}
The gamma function satisifies the well-known functional equation
\begin{align}
\Gamma(1+z)=z\Gamma(z)\,,
\end{align}
through which \eqref{eulergamma} may be analytically continued to $\mathbb{C}-\mathbb{Z}_{\leq0}$.

The gamma function satisfies the following formula for large $z$
\begin{align}
\label{Stirling}
\Log\Gamma(z)=\gamma(-\ii z)-\frac{\Log(z)}{2}-z+\frac{\Log(2\pi)}{2}+O(z^{-1})\,,\qquad|\,z\,|\rightarrow\infty,\quad|\,\mbox{Arg}(z)\,|<\pi\,,
\end{align}
where the function $\gamma(z)$ is defined by
\begin{align}
\label{gammadef}
\gamma(z)=\ii z\Log(\ii z)\,,\qquad |\,\mbox{Arg}(\ii z)\,|<\pi \,.
\end{align} 
Equation \eqref{Stirling} is known as the Stirling formula.

The quasi-classical expansion will specifically involve scaling the spins and spectral parameters appearing in a star-triangle relation \eqref{YBEsym}, \eqref{YBEasym}, by a parameter $\hbar^{-1}>0$, as follows
\begin{align}
\label{ratqcl}
\spn_i = x_i\hbar^{-1},\;i=0,1,2,3,\qquad\theta_j=\alpha_j\hbar^{-1},\;j=1,3\qquad\hbar\rightarrow0\,.
\end{align}
Then as $\hbar$ approaches zero, the quasi-classical expansion of the Boltzmann weights will be obtained through the use of \eqref{Stirling}.  

As $\bb\to0$, the Euler gamma function \eqref{eulergamma} is related to the hyperbolic gamma function \eqref{HGF}, by \cite{Ruijsenaars1999,Rains2009}
\begin{align}
\label{ratasymp}
\Log\Gamma_h(z\bb\pm\ii (2\bb)^{-1};\bb)=(\ii z\mp\tfrac{1}{2})\Log(2(\pi\bb^2)^{-1})\pm\Log(\bb)\pm\Log\Gamma(\tfrac{1}{2}\pm\ii z)+O(\bb)\,,
\end{align}
where $|\mbox{Arg}(\ii z)\,|<\pi$, and either the $+$ or $-$ sign can be taken.  The asymptotics \eqref{ratasymp} will be used to relate the star-triangle relations of this Section, to the star-triangle relations of the hyperbolic cases appearing in Section \ref{sec:hyplim}.  The specific limit will involve the following change of variables in \eqref{YBEsym}, \eqref{YBEasym}
\begin{align}
\label{ratlim}
\spn_i\to \spn_i\bb\,,\; i=0,1,3\,,\qquad \spn_2\to (\spn_2-\ii\pi)\bb+\ii\eta\,,\qquad \theta_j\to\theta_j\bb\,,\; j=1,3\,,
\end{align}
and taking $\bb\to0$.  The resulting limits between the star-triangle relations at the hyperbolic and rational levels, is summarised in the diagram of Figure \ref{ratlimsfig}.  Note also that there is no analogue of the crossing parameter $\eta$ in \eqref{hypeta}, for the rational cases that are obtained from the limit \eqref{ratlim} (essentially the rational limit \eqref{ratlim} involves taking $\eta\to\infty$).

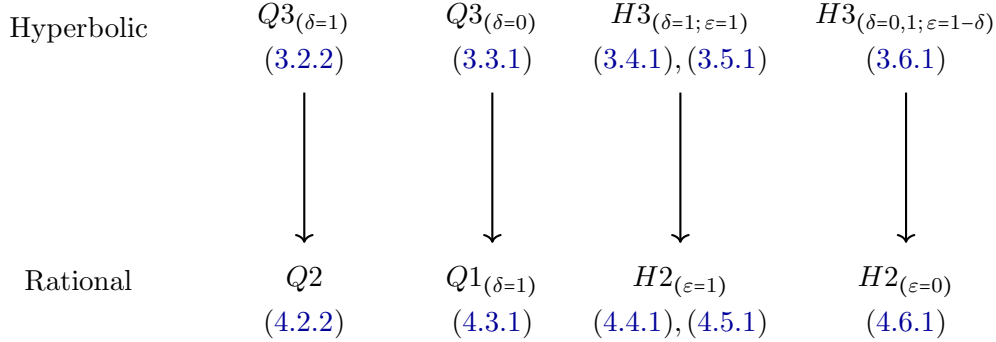
\begin{figure}[tbh]
\centering
\begin{tikzpicture}[scale=1]
\draw[white!] (-7,3.5) circle (0.01pt)
node[above=1pt]{\color{black} Hyperbolic};
\draw[white!] (7,3) circle (0.01pt);
\draw[white!] (-7,0.2) circle (0.01pt)
node[above=1pt]{\color{black} Rational};
\draw[white!] (7,0) circle (0.01pt);

\draw[white!] (-4,3) circle (0.01pt)
node[above=1pt]{\color{black} $\begin{array}{c} Q3_{(\delta=1)} \\[0.1cm] \eqref{qq3d1} \end{array}$};
\draw[white!] (1,3) circle (0.01pt)
node[above=1pt]{\color{black} $\begin{array}{c} H3_{(\delta=1;\,\varepsilon=1)} \\[0.1cm] \eqref{qh3d1},\eqref{qh3d1alt} \end{array}$};
\draw[white!] (4,3) circle (0.01pt)
node[above=1pt]{\color{black} $\begin{array}{c} H3_{(\delta=0,1;\,\varepsilon=1-\delta)} \\[0.1cm] \eqref{qh3d0} \end{array}$};
%\draw[white!] (4.8,3) circle (0.01pt)
%node[above=1pt]{\color{black} $H3_{(\delta=0;\,\varepsilon=0)}$};
\draw[white!] (-1.5,3) circle (0.01pt)
node[above=1pt]{\color{black} $\begin{array}{c} Q3_{(\delta=0)} \\[0.1cm] \eqref{qq3d0} \end{array}$};
\draw[thick,->] (1,3)--(1,1);
\draw[thick,->] (4,3)--(4,1);
\draw[thick,->] (-1.5,3)--(-1.5,1);
\draw[thick,->] (-4,3)--(-4,1);
%\draw[thick,->] (4.8,3)--(4.8,1);
\draw[white!] (-4,-0.5) circle (0.01pt)
node[above=1pt]{\color{black} $\begin{array}{c} Q2 \\[0.1cm] \eqref{qq2} \end{array}$};
\draw[white!] (1,-0.5) circle (0.01pt)
node[above=1pt]{\color{black} $\begin{array}{c} H2_{(\varepsilon=1)} \\[0.1cm] \eqref{qh2d1},\eqref{qh2d1alt} \end{array}$};
\draw[white!] (-1.5,-0.5) circle (0.01pt)
node[above=1pt]{\color{black} $\begin{array}{c} Q1_{(\delta=1)} \\[0.1cm] \eqref{qq1d1} \end{array}$};
\draw[white!] (4,-0.5) circle (0.01pt)
node[above=1pt]{\color{black} $\begin{array}{c} H2_{(\varepsilon=0)} \\[0.1cm] \eqref{qh2d0} \end{array}$};
\end{tikzpicture}
\caption{Star-triangle relations corresponding to respective integrable quad equations $Q2$, $Q1_{(\delta=1)}$, $H2_{(\varepsilon=1)}$, $H2_{(\varepsilon=0)}$, at the rational level, obtained as the limits \eqref{ratlim} of the hyperbolic cases of the previous section.  Note that \eqref{ratlim} is not a convergent limit for the star-triangle relation \eqref{qh3d02} ($H3_{(\delta=0;\,\varepsilon=0)}$ case).}
\label{ratlimsfig}
\end{figure}

\subsection{\texorpdfstring{$Q2$}{Q2} case}

\tocless\subsubsection{Star-triangle relation}

Let the spins $\spn_1,\spn_2,\spn_3$, and spectral parameters $\theta_1,\theta_3$ take values
\begin{align}
\label{ratvals}
\spn_i\in\mathbb{R}\,,\; i=1,2,3\,,\qquad 0<\theta_1,\theta_3,\theta_1+\theta_3<\pi\,,\; j=1,3\,.
\end{align}
The star-triangle relation is given by \cite{Bazhanov:2016ajm}
\begin{align}
\label{qq2}
\begin{split}
\ds\int_{\mathbb{R}}d\spn_0\,\iS(\spn_0)\,\ow(\theta_1\,|\,\spn_1,\spn_0)\,W(\theta_1+\theta_3\,|\,\spn_2,\spn_0)\,\oW(\theta_3\,|\,\spn_0,\spn_3)\phantom{\,,} \\
=R(\theta_1,\theta_3)\,W(\theta_1\,|\,\spn_2,\spn_3)\,\oW(\theta_1+\theta_3\,|\,\spn_1,\spn_3)\,W(\theta_3\,|\,\spn_2,\spn_1)\,.
\end{split}
\end{align}
where the Boltzmann weights are
\begin{align}
\label{q2quan}
\begin{split}
W(\theta\,|\,\spn_i,\spn_j)&=\frac{\Gamma(\pi+\ii(\spn_i+\spn_j)-\theta)\,\Gamma(\pi+\ii(\spn_i-\spn_j)-\theta)}{\Gamma(\pi+\ii(\spn_i+\spn_j)+\theta)\,\Gamma(\pi+\ii(\spn_i-\spn_j)+\theta)}\,, \\[0.1cm]
\ow(\theta\,|\,\spn_i,\spn_j)&=\Gamma(\theta+\ii(\spn_i+\spn_j))\,\Gamma(\theta+\ii(\spn_i-\spn_j))\,\Gamma(\theta-\ii(\spn_i+\spn_j))\,\Gamma(\theta-\ii(\spn_i-\spn_j))\,, 
\end{split}
\end{align}
and
\begin{align}
\label{q2s}
\ds\iS(\spn)=\frac{1}{4\pi\,\Gamma(2\ii \spn)\,\Gamma(-2\ii \spn)}\,,\qquad
R(\theta_1,\theta_3)=\frac{\Gamma(2\theta_1)\,\Gamma(2\theta_3)}{\Gamma(2(\theta_1+\theta_3))}\,.
\end{align}

The Boltzmann weights satisfy the following relations
\begin{align}
W(\theta\,|\,\spn_i,\spn_j)\,W(-\theta\,|\,\spn_i,\spn_j)=1\,,\qquad\oW(\theta\,|\,\spn_i,\spn_j)=\oW(\theta\,|\,\spn_j,\spn_i)\,.
\end{align}
While the Boltzmann weight $\oW(\theta\,|\,\spn_i,\spn_j)$ is real for the values \eqref{ratvals}, the Boltzmann weight $W(\theta\,|\,\spn_i,\spn_j)$ generally has $\im(W(\theta\,|\,\spn_i,\spn_j))\neq0$.

From \eqref{Stirling}, the asymptotics of the integrand of \eqref{qq2} are
\begin{align}
\label{q2xinf}
O(\EXP^{-2\pi\, \left|\spn_0\right|})\,,\qquad \spn_0\to\pm\infty\,,
\end{align}
and particularly the integral in \eqref{qq2} is absolutely convergent for the values \eqref{ratvals}.

Since $\spn_2$ is shifted by $\spn_2+\ii\eta$ in \eqref{ratlim}, the rational limit results in the two different types of Boltzmann weights appearing in \eqref{q2quan}.  The $\pi$ terms that appear in the arguments of $\Gamma(z)$ in $W(\theta\,|\,\spn_i,\spn_j)$, can in fact be replaced by an arbitrary positive number $\zeta>0$, then the star-triangle relation \eqref{qq2} remains satisfied with $0<\theta_1,\theta_3,\theta_1+\theta_3<\zeta$.  This change is equivalent to taking a slightly different form of rational limit \eqref{ratlim}.  Unlike other star-triangle relations \eqref{qq3d1}, \eqref{qq3d0} (and \eqref{qq1d0}) for the $Q$-type equations, the Boltzmann weights $W(\theta\,|\,\spn_i,\spn_j)$ and $\oW(\theta\,|\,\spn_i,\spn_j)$ are not related by the crossing symmetry $\oW(\theta\,|\,\spn_i,\spn_j)=W(\eta-\theta\,|\,\spn_i,\spn_j)$, for a crossing parameter $\eta$.

The star-triangle relation \eqref{qq2} and connection to $Q2$ previously appeared in \cite{Bazhanov:2016ajm}.\footnote{\label{footnotetypo} Note that there is a typo in \cite{Bazhanov:2016ajm} where the $\pi$ terms were omitted from the arguments of $\Gamma(z)$ for the Boltzmann weight $W(\theta\,|\,\spn_i,\spn_j)$ (and similarly for $W(\theta\,|\,\spn_i,\spn_j)$ in \eqref{q1d1quan}).}  Up to a change of variables, the star-triangle relation \eqref{qq2} is equivalent to a gamma function integration formula given by Askey \cite{Askey1989}, which may in turn be interpreted as an integral analogue of the Dougall summation formula.

Define
\begin{align}
\label{q2int}
I_{14,r}(\theta_1,\theta_3\,|\,\spn_1,\spn_2,\spn_3)=\int_{\mathcal{C}}d\spn\rho_{14,r}(\theta_1,\theta_3\,|\,\spn_1,\spn_2,\spn_3;\spn)\,,
\end{align}
where
\begin{align}
\label{q2rho}
\rho_{14,r}(\theta_1,\theta_3\,|\,\spn_1,\spn_2,\spn_3;\spn)\,=S(\spn_0)\,\oW(\theta_1\,|\,\spn_1,\spn_0)\,W(\theta_1+\theta_3\,|\,\spn_2,\spn_0)\,\oW(\theta_3\,|\,\spn_3,\spn_0)\,,
\end{align}
and the contour $\mathcal{C}$ is a deformation of $\mathbb{R}$, that separates the points $\ii\theta_1\pm \spn_1+\ii\mathbb{Z}_{\geq0}$, $\ii\theta_3\pm \spn_3+\ii\mathbb{Z}_{\geq0}$, $\ii(\pi-\theta_1-\theta_3)+\spn_2+\ii\mathbb{Z}_{\geq0}$, from their negatives.  Then the star-triangle relation \eqref{qq2}, may be obtained from the following limit of \eqref{qq3d1} ($Q3_{(\delta=1)}$ case):

\begin{prop} \label{q2prop}
For the values \eqref{q3d1rvals},
\begin{align}
\label{q2lim}
\lim_{\bb\to0^+}&\frac{2(2\pi\bb)^6}{\bb^2}\, I_{14}(\theta_1\bb,\theta_3\bb\,|\,\spn_1\bb,(\spn_2-\ii\pi)\bb+\ii\eta,\spn_3\bb) =I_{14,r}(\theta_1,\theta_3\,|\,\spn_1,\spn_2,\spn_3)\,,
\end{align}
where $I_{14}(\theta_1,\theta_3\,|\,\spn_1,\spn_2,\spn_3)$ is defined in \eqref{q3d1int}.

\end{prop}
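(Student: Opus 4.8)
The plan is to follow the same strategy as the proof of Proposition \ref{q3d0prop}, replacing the large-argument formula \eqref{HGFinflim} by the rational asymptotic formula \eqref{ratasymp}, and replacing the shift $\spn\to\spn+\kappa$ by the scaling change of variables \eqref{ratlim}. For the values \eqref{q3d1rvals} the contour $\mathcal{C}$ of both $I_{14}$ and $I_{14,r}$ may be taken to be $\mathbb{R}$, so I first rescale the hyperbolic integration variable $\spn\to\spn\bb$ (contributing one factor of $\bb$ from the Jacobian) and insert the external substitutions $\spn_1\to\spn_1\bb$, $\spn_2\to(\spn_2-\ii\pi)\bb+\ii\eta$, $\spn_3\to\spn_3\bb$, $\theta_j\to\theta_j\bb$ of \eqref{ratlim}.

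Next I would apply \eqref{ratasymp} to each of the twelve hyperbolic gamma functions in the integrand $\rho_{14}$ of \eqref{q3d1rho}. Because $\ii\eta=\ii(2\bb)^{-1}+\ii\bb/2$, every argument is driven into the form $z\bb\pm\ii(2\bb)^{-1}$ required by \eqref{ratasymp}. For the two weights $\oW(\theta_1\,|\,\spn_1,\spn)$ and $\oW(\theta_3\,|\,\spn_3,\spn)$, whose spins are not shifted, I first use the inversion relation \eqref{HGFidents} to move their denominator gamma functions into the numerator, so that all four arguments approach $+\ii(2\bb)^{-1}$; formula \eqref{ratasymp} then converts each such weight into the four-gamma product weight $\ow$ of \eqref{q2quan}. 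The $\spn_2$-leg weight $W(\theta_1+\theta_3\,|\,\spn_2,\spn)$, whose arguments sit near $\pm\ii(2\bb)^{-1}$ precisely because of the $+\ii\eta$ shift of $\spn_2$, instead retains its ratio form and becomes the rational weight $W(\theta_1+\theta_3\,|\,\spn_2,\spn)$ of \eqref{q2quan}; this is the mechanism by which the single hyperbolic weight type splits into the two distinct rational weight types appearing in \eqref{q2rho}. Finally $S(\spn\bb)=2\sinh(2\pi\spn\bb^2)\sinh(2\pi\spn)\to 4\pi\bb^2\,\spn\sinh(2\pi\spn)$, which is the rational $S(\spn)$ of \eqref{q2s} up to an explicit constant and a power of $\bb$.

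The key bookkeeping is that the integration-variable-dependent terms $\ii z\,\Log(2(\pi\bb^2)^{-1})$ produced by \eqref{ratasymp} cancel identically between numerator and denominator across the full product, leaving only constant powers of $\bb$, factors of $2\pi$, and the Jacobian; these combine with the explicit prefactor $2(2\pi\bb)^6/\bb^2$ to yield the finite, nonzero pointwise limit $\rho_{14,r}$ of \eqref{q2rho}. The interchange of limit and integral is then justified by dominated convergence exactly as in Proposition \ref{q3d0prop}: using the rational decay $O(\EXP^{-2\pi|\spn|})$ of \eqref{q2xinf} together with the hyperbolic decay \eqref{q3d1xinf}, the rescaled integrand is bounded on $\mathbb{R}$ by a fixed integrable function uniformly for all small $\bb$. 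I expect the main obstacle to be precisely this uniformity, since \eqref{ratasymp} is only a pointwise asymptotic with an $O(\bb)$ remainder whose uniformity in $z$ over all of $\mathbb{R}$ is not automatic; one must control the tails of the integrand with \eqref{q3d1xinf} and \eqref{q2xinf} and the bulk with the locally uniform Stirling estimate, so as to produce a genuine $\bb$-independent dominating function.
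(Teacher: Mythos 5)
Your proposal is correct and takes essentially the same approach as the paper's proof: for the values \eqref{q3d1rvals} the contour is taken to be $\mathbb{R}$, the integration variable is rescaled via $\spn\to\spn\bb$, the asymptotic formula \eqref{ratasymp} is applied to the integrand, and the interchange of limit and integral is justified by dominated convergence using the decay estimates \eqref{q3d1xinf} and \eqref{q2xinf}. Your explicit bookkeeping of how the hyperbolic weights split into the two rational weight types of \eqref{q2quan}, and your caveat about the uniformity of the $O(\bb)$ remainder in \eqref{ratasymp}, simply spell out details that the paper's one-line argument leaves implicit.
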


\begin{proof}
For the values \eqref{q3d1rvals}, the contour in $I_{14}$ may be chosen to be $\mathbb{R}$.  Following a change of integration variable $\sigma\to\sigma\bb$ in \eqref{q3d1int}, by the asymptotics \eqref{q3d1xinf}, \eqref{q2xinf}, and the asymptotic formula \eqref{ratasymp}, the combination of the factors and the integrand on the left hand side of \eqref{q2lim} is uniformly bounded on $\mathbb{R}$, and the result follows by dominated convergence (by analytic continuation the domain of values may be extended to the values that are permitted by the contour, as given by the conditions below \eqref{q2rho}).
\end{proof}

The star-triangle relation \eqref{qq2} follows from Proposition \ref{q2prop}, by using \eqref{ratasymp} to take the same limit of the right hand side of \eqref{qq3d1}.

\tocless\subsubsection{Classical integrable equations}

The classical variables $x_1,x_2,x_3$, $\alpha_1,\alpha_3$, take the values
\begin{align}
\label{ratcvals}
x_i\in\mathbb{R}\,,\qquad 0<\alpha_1,\alpha_3\,.
\end{align}
The Lagrangian functions for this case are defined in terms of \eqref{gammadef} as
\begin{align}
\label{lagq2}
\begin{split}
\lag(\alpha\,|\,x_i,x_j)&=\gamma(x_i+x_j+\ii\alpha)+\gamma(x_i-x_j+\ii\alpha)-\gamma(x_i+x_j-\ii\alpha)-\gamma(x_i-x_j-\ii\alpha)\,, \\
\ol(\alpha\,|\,x_i,x_j)&=\gamma(x_i+x_j-\ii\alpha)+\gamma(x_i-x_j-\ii\alpha)+\gamma(-x_i+x_j-\ii\alpha)+\gamma(-x_i-x_j-\ii\alpha)-\gamma(-2\ii\alpha) \,.
\end{split}
\end{align}
These functions satisfy
\begin{align}
\lag(\alpha\,|\,x_i,x_j)=-\lag(-\alpha\,|\,x_i,x_j)\,,\qquad\ol(\alpha\,|\,x_i,x_j)=\ol(\alpha\,|\,x_j,x_i)\,.
\end{align}

Using the asymptotics \eqref{Stirling}, the leading order $O(\hbar^{-1})$ quasi-classical expansion \eqref{ratqcl} of the integrand \eqref{q2rho} is
\begin{align}
\label{q2qcl}
\Log \left(\hbar^{-3}\rho_{14,r}(\alpha_1\hbar^{-1},\alpha_3\hbar^{-1}\,|\,x_1\hbar^{-1},x_2\hbar^{-1},x_3\hbar^{-1};x_0\hbar^{-1})\right)=\hbar^{-1}\astr{x_0}+O(1)\,,
\end{align}
where
\begin{align}
\astr{x_0}=\iC(x_0)+\ol(\alpha_1\,|\,x_1,x_0)+\lag(\alpha_1+\alpha_3\,|\,x_2,x_0)+\ol(\alpha_3\,|\,x_0,x_3)\,,
\end{align}
and
\begin{align}
\ds\iC(x)=2\ii x(\Log(-2\ii x)-\Log(2\ii x))\,.
\end{align}

The saddle point three-leg equation \eqref{3legsym} is then given by
\begin{align}
\label{Q23leg}
\frac{1}{\ii}\hspace{-0.05cm}\left.\frac{\partial \astr{x}}{\partial x}\right|_{x=x_0}\hspace{-0.4cm}=\ovphi(\alpha_1\,|\,x_1,x_0)+\varphi(\alpha_1+\alpha_3\,|\,x_2,x_0)+\ovphi(\alpha_3\,|\,x_3,x_0)=0\,,
\end{align}
where $\varphi(\alpha\,|\,x_i,x_j)$, is defined by
\begin{align}
\begin{split}
\varphi(\alpha\,|\,x_i,x_j)=\Log(\ii(x_i+x_j)-\alpha)+\Log(\ii(x_i-x_j)+\alpha)\phantom{\,,} \\
-\Log(\ii(x_i+x_j)+\alpha)-\Log(\ii(x_i-x_j)-\alpha)\,,
\end{split}
\end{align}
and $\ovphi(\alpha\,|\,x_i,x_j)$, is defined by
\begin{align}
\begin{split}
\ovphi(\alpha\,|\,x_i,x_j)=\Log(\alpha+\ii(x_i+x_j))+\Log(\alpha-\ii(x_i-x_j))-\Log(\alpha-\ii(x_i+x_j))\phantom{\,,} \\
-\Log(\alpha+\ii(x_i-x_j)) -\pi\ii\,\textrm{Sign}(\re(x_j)).
\end{split}
\end{align}

The equation \eqref{Q23leg} is a three-leg form of $Q2$, arising as the equation for the saddle point of the star-triangle relation \eqref{qq2} in the limit \eqref{ratqcl}.  This was also previously obtained in \cite{Bazhanov:2016ajm}.  With the following change of variables
\begin{align}
x=x_0^2,\qquad u=x_1^2,\qquad y=x_2^2,\qquad v=x_3^2,\qquad \alpha=-\ii\alpha_1,\qquad \beta=-\ii(\alpha_1+\alpha_3),
\end{align}
the exponential of the three-leg equation \eqref{Q23leg} may be written in the form
\begin{align}
Q(x,u,y,v;\alpha,\beta)=0\,,
\end{align}
where 
\begin{align}
\label{Q2}
Q(x,u,y,v;\alpha,\beta)=\alpha(x-y)(u-v)-\beta(x-u)(y-v)+\alpha\beta(\alpha-\beta)(x+u+y+v-\alpha^2-\beta^2+\alpha\beta).
\end{align}
This is identified as $Q2$ \cite{ABS}, an affine-linear quad equation that satisfies the 3D-consistency condition exactly as defined in Section \ref{sec:overview}.

\subsection{\texorpdfstring{$Q1_{(\delta=1)}$}{Q1(delta=1)} case}

\tocless\subsubsection{Star-triangle relation}

Let the spins $\spn_1,\spn_2,\spn_3$, and spectral parameters $\theta_1,\theta_3$ take values \eqref{ratvals}.
The star-triangle relation is given in this case by \cite{Bazhanov:2016ajm}
\begin{align}
\label{qq1d1}
\begin{split}
\ds\int_{\mathbb{R}}d\spn_0\,\oW(\theta_1\,|\,\spn_1,\spn_0)\,W(\theta_1+\theta_3\,|\,\spn_2,\spn_0)\,\oW(\theta_3\,|\,\spn_0,\spn_3)\qquad\phantom{\,,} \\
=R(\theta_1,\theta_3)\,W(\theta_1\,|\,\spn_2,\spn_3)\,\oW(\theta_1+\theta_3\,|\,\spn_1,\spn_3)\,W(\theta_3\,|\,\spn_2,\spn_1)\,,
\end{split}
\end{align}
where the Boltzmann weights are
\begin{align}
\label{q1d1quan}
\begin{split}
\ds W(\theta\,|\,\spn_i,\spn_j)&=\frac{\Gamma(\pi+\ii(\spn_i-\spn_j)-\theta)}{\Gamma(\pi+\ii(\spn_i-\spn_j)+\theta)}\,, \\
\ds\oW(\theta\,|\,\spn_i,\spn_j)&=\Gamma(\theta+\ii(\spn_i-\spn_j))\,\Gamma(\theta-\ii(\spn_i-\spn_j))\,,
\end{split}
\end{align}
and the normalisation is given by
\begin{align}
R(\theta_1,\theta_3)=2\pi\,\frac{\Gamma(2\theta_1)\Gamma(2\theta_3)}{\Gamma(2(\theta_1+\theta_3))}\,.
\end{align}

The Boltzmann weights satisfy the following relations
\begin{align}
W(\theta\,|\,\spn_i,\spn_j)\,W(-\theta\,|\,\spn_i,\spn_j)=1\,,\qquad\oW(\theta\,|\,\spn_i,\spn_j)=\oW(\theta\,|\,\spn_j,\spn_i)\,.
\end{align}
While the Boltzmann weight $\oW(\theta\,|\,\spn_i,\spn_j)$ is real for the values \eqref{ratvals}, the Boltzmann weight $W(\theta\,|\,\spn_i,\spn_j)$ in general has $\im(W(\theta\,|\,\spn_i,\spn_j))\neq0$.

From \eqref{Stirling}, the asymptotics of the integrand of \eqref{qq1d1} are
\begin{align}
\label{q1d1xinf}
O(\EXP^{-2\pi\, \left|\spn_0\right|})\,,\qquad \spn_0\to\pm\infty\,,
\end{align}
and particularly the integral in \eqref{qq1d1} is absolutely convergent for the values \eqref{ratvals}.

The star-triangle relation \eqref{qq1d1} and connection to $Q1_{(\delta=1)}$ previously appeared in \cite{Bazhanov:2016ajm}.\footnote{See footnote \ref{footnotetypo}.}  The star-triangle relation \eqref{qq1d1} is also up to a change of variables equivalent to Barnes's second Lemma \cite{Barnes1910}, which may in turn be interpreted \cite{GR} as an integral analogue of the Saalsch\"{u}tz summation formula.

Define
\begin{align}
\label{q1d1int}
I_{6,Q,r}(\theta_1,\theta_3\,|\,\spn_1,\spn_2,\spn_3)=\int_{\mathcal{C}}d\spn\,\rho_{6,Q,r}(\theta_1,\theta_3\,|\,\spn_1,\spn_2,\spn_3;\spn)\,,
\end{align}
where
\begin{align}
\label{q1d1rho}
\rho_{6,Q,r}(\theta_1,\theta_3\,|\,\spn_1,\spn_2,\spn_3;\spn)=\oW(\theta_1\,|\,\spn_1,\spn)\,W(\theta_1+\theta_3\,|\,\spn_2,\spn)\,\oW(\theta_3\,|\,\spn_3,\spn)\,,
\end{align}
and the contour $\mathcal{C}$ is a deformation of $\mathbb{R}$, separating the points $\ii\theta_1+\spn_1+\ii\mathbb{Z}_{\geq0}$, $\ii\theta_3+\spn_3+\ii\mathbb{Z}_{\geq0}$, from the points $-\ii\theta_1+ \spn_1-\ii\mathbb{Z}_{\geq0}$, $-\ii\theta_3+ \spn_3-\ii\mathbb{Z}_{\geq0}$, $-\ii(\pi-\theta_1-\theta_3)+\spn_2-\ii\mathbb{Z}_{\geq0}$.  Then the star-triangle relation \eqref{qq1d1}, may be obtained from the limit \eqref{ratlim} of \eqref{qq3d0} ($Q3_{(\delta=0)}$ case):

\begin{prop} \label{q1d1prop}
For the values \eqref{q3d1rvals},
\begin{align}
\label{q1d1lim}
\lim_{\bb\to0^+}&\frac{(2\pi\bb)^4}{\bb^2}\,I_{6,Q}(\theta_1\bb,\theta_3\bb\,|\,\spn_1\bb,(\spn_2-\ii\pi)\bb+\ii\eta,\spn_3\bb)=I_{6,Q,r}(\theta_1,\theta_3\,|\,\spn_1,\spn_2,\spn_3)\,,
\end{align}
where $I_{6,Q}(\theta_1,\theta_3\,|\,\spn_1,\spn_2,\spn_3)$ is defined in \eqref{q3d0int}.

\end{prop}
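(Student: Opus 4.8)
The plan is to treat this as the $Q3_{(\delta=0)}\to Q1_{(\delta=1)}$ counterpart of Proposition \ref{q2prop}, replacing the integral $I_{14}$ of the $Q3_{(\delta=1)}$ model by the integral $I_{6,Q}$ of \eqref{q3d0int}. First I would observe that, under the conditions \eqref{q3d1rvals}, both contours may be taken to be $\mathbb{R}$, so that \eqref{q1d1lim} reduces to a statement about an ordinary integral over $\mathbb{R}$. I would then perform the change of integration variable $\spn\to\spn\bb$ in \eqref{q3d0int}, which produces the Jacobian factor $\bb$ and brings every argument of the hyperbolic gamma functions in the integrand \eqref{q3d0rho} into the form $z\bb\pm\ii(2\bb)^{-1}$ required by the asymptotic relation \eqref{ratasymp}. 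The shift $\pm\ii(2\bb)^{-1}$ is supplied by the $\ii\eta$ that enters through the crossing parameter $\eta-\theta$ in the two weights $\oW(\theta_1\,|\,\spn_1,\spn)$, $\oW(\theta_3\,|\,\spn_3,\spn)$ of \eqref{fadvol}, and through the shifted spin $\spn_2\to(\spn_2-\ii\pi)\bb+\ii\eta$ in the central weight $W(\theta_1+\theta_3\,|\,\spn_2,\spn)$.

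The core computation is to apply \eqref{ratasymp} to each of the six hyperbolic gamma functions comprising $\oW(\theta_1\,|\,\spn_1,\spn)\,W(\theta_1+\theta_3\,|\,\spn_2,\spn)\,\oW(\theta_3\,|\,\spn_3,\spn)$. The key point I would verify is that within each Boltzmann weight the $\spn$-dependent parts of the coefficients multiplying $\Log(2(\pi\bb^2)^{-1})$ cancel between numerator and denominator, so that only an $\spn$-independent (but $\theta$- and $\bb$-dependent) overall factor survives, while the Euler gamma factors $\Gamma(\tfrac12\pm\ii z)$ reassemble precisely into the rational Boltzmann weights of \eqref{q1d1quan}, i.e.\ into $\rho_{6,Q,r}$ of \eqref{q1d1rho}. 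Collecting the leftover powers of $\bb$ and constants from all six factors together with the Jacobian, these should combine with the explicit prefactor $(2\pi\bb)^4/\bb^2$ so that the net scalar limit is $1$, leaving exactly $I_{6,Q,r}$.

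Finally I would justify the interchange of limit and integral by dominated convergence, exactly as in the proof of Proposition \ref{q2prop}. Using the decay estimate \eqref{q3d0xinf} for the hyperbolic integrand as $\spn\to\pm\infty$ together with the rational decay \eqref{q1d1xinf}, one shows that the prefactor times the transformed integrand is bounded, uniformly in $\bb$ on a right neighbourhood of $0$, by a fixed integrable function on $\mathbb{R}$, with pointwise limit $\rho_{6,Q,r}$. The star-triangle relation \eqref{qq1d1} itself then follows by applying the same limit \eqref{ratlim} and \eqref{ratasymp} to the right-hand side of \eqref{qq3d0}. I expect the main obstacle to be precisely the construction of this uniform integrable majorant: one must control the $O(\bb)$ error terms in \eqref{ratasymp} simultaneously in $\bb$ and in the integration variable $\spn$, especially in the tails, and confirm that the $\spn$-independent prefactors cancel exactly against $(2\pi\bb)^4/\bb^2$ rather than leaving a residual power of $\bb$, which requires the bookkeeping of the coefficients of $\Log(2(\pi\bb^2)^{-1})$ and of the $\pm\Log(\bb)$ terms to be carried out carefully.
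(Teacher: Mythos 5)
Your proposal is correct and takes essentially the same route as the paper: the paper's proof simply declares the steps analogous to Proposition \ref{q2prop}, namely choosing the contour to be $\mathbb{R}$, rescaling the integration variable $\spn\to\spn\bb$, applying the asymptotic formula \eqref{ratasymp} together with the decay estimates \eqref{q3d0xinf}, \eqref{q1d1xinf}, and concluding by dominated convergence, with the star-triangle relation \eqref{qq1d1} then following by taking the same limit of the right-hand side of \eqref{qq3d0}. Your explicit bookkeeping of the $\Log(2(\pi\bb^2)^{-1})$ and $\pm\Log(\bb)$ coefficients and of the uniform integrable majorant merely spells out what the paper leaves implicit.
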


The steps of the proof of Proposition \ref{q1d1prop}, are analogous to the respective steps for Proposition \ref{q2prop}. The star-triangle relation \eqref{qq1d1} follows from Proposition \ref{q1d1prop}, by using \eqref{ratasymp} to take the same limit of the right hand side of \eqref{qq3d0}.

\tocless\subsubsection{Classical integrable equations}

Let the classical variables $x_1,x_2,x_3$, $\alpha_1,\alpha_3$, take the values given in \eqref{ratcvals}.  The Lagrangian functions for this case are defined in terms of \eqref{gammadef} as
\begin{align}
\label{lagq1d1}
\begin{split}
\lag(\alpha\,|\,x_i,x_j)&=\gamma(x_i-x_j+\ii\alpha)-\gamma(x_i-x_j-\ii\alpha)\,, \\
\ol(\alpha\,|\,x_i,x_j)&=\gamma(x_i-x_j-\ii\alpha)+\gamma(x_j-x_i-\ii\alpha)-\gamma(-2\ii\alpha) \,.
\end{split}
\end{align}
These functions satisfy
\begin{align}
\lag(\alpha\,|\,x_i,x_j)=-\lag(-\alpha\,|\,x_i,x_j)\,,\qquad\ol(\alpha\,|\,x_i,x_j)=\ol(\alpha\,|\,x_j,x_i)\,.
\end{align}

Using the asymptotics \eqref{Stirling}, the leading order $O(\hbar^{-1})$ quasi-classical expansion \eqref{ratqcl} of the integrand \eqref{q1d1rho} is
\begin{align}
\label{q1d1qcl}
\Log \left(\hbar^{-2}\rho_{6,r}(\alpha_1\hbar^{-1},\alpha_3\hbar^{-1}\,|\,x_1\hbar^{-1},x_2\hbar^{-1},x_3\hbar^{-1};x_0\hbar^{-1})\right)=\hbar^{-1}\astr{x_0}+O(1)\,,
\end{align}
where
\begin{align}
\astr{x_0}=\ol(\alpha_1\,|\,x_1,x_0)+\lag(\alpha_1+\alpha_3\,|\,x_2,x_0)+\ol(\alpha_3\,|\,x_0,x_3)\,.
\end{align}

The saddle point three-leg equation \eqref{3legsym} is then given by
\begin{align}
\label{Q1d13leg}
\frac{1}{\ii}\hspace{-0.05cm}\left.\frac{\partial \astr{x}}{\partial x}\right|_{x=x_0}\hspace{-0.4cm}=\ovphi(\alpha_1\,|\,x_1,x_0)+\varphi(\alpha_1+\alpha_3\,|\,x_2,x_0)+\ovphi(\alpha_3\,|\,x_0,x_3)=0\,,
\end{align}
where $\varphi(\alpha\,|\,x_i,x_j)$, is defined by
\begin{align}
\varphi(\alpha\,|\,x_i,x_j)=\Log(\ii(x_i-x_j)+\alpha)-\Log(\ii(x_i-x_j)-\alpha)\,,
\end{align}
and $\ovphi(\alpha\,|\,x_i,x_j)$, is defined by
\begin{align}
\ovphi(\alpha\,|\,x_i,x_j)=\Log(\alpha-\ii(x_i-x_j))-\Log(\alpha+\ii(x_i-x_j))\,.
\end{align}

The equation \eqref{Q1d13leg} is a three-leg form of $Q1_{(\delta=1)}$, arising as the equation for the saddle point of the star-triangle relation \eqref{qq1d1} in the limit \eqref{ratqcl}.  This was also previously obtained in \cite{Bazhanov:2016ajm}.  With the following change of variables
\begin{align}
x=x_0,\qquad u=x_1,\qquad y=x_2,\qquad v=x_3,\qquad \alpha=-\ii\alpha_1,\qquad \beta=-\ii(\alpha_1+\alpha_3),
\end{align}
the exponential of the three-leg equation \eqref{Q1d13leg} may be written in the form
\begin{align}
Q(x,u,y,v;\alpha,\beta)=0\,,
\end{align}
where 
\begin{align}
\label{Q1d1}
Q(x,u,y,v;\alpha,\beta)=\alpha(x-y)(u-v)-\beta(x-u)(y-v)+(\alpha-\beta)\alpha\beta\,.
\end{align}
This is identified as $Q1_{(\delta=1)}$ \cite{ABS}, an affine-linear quad equation that satisfies the 3D-consistency condition exactly as defined in Section \ref{sec:overview}.

\subsection{\texorpdfstring{$H2_{(\varepsilon=1)}$}{H2(epsilon=1)} case}

\tocless\subsubsection{Star-triangle relation}

Let the spins $\spn_1,\spn_2,\spn_3$, and spectral parameters $\theta_1,\theta_3$ take values \eqref{ratvals}.  The star-triangle relation is given in this case by
\begin{align}
\label{qh2d1}
\begin{split}
\ds\int_{\mathbb{R}}d\spn_0\;\iS(\spn_0)\,\oV(\theta_1\,|\,\spn_1,\spn_0)\,V({\theta_1+\theta_3}\,|\,\spn_2,\spn_0)\,\oW(\theta_3\,|\,\spn_0,\spn_3)\phantom{\,,}  \\
\ds=R(\theta_3)\,V({\theta_1}\,|\,\spn_2,\spn_3)\,\oV({\theta_1+\theta_3}\,|\,\spn_1,\spn_3)\,W({\theta_3}\,|\,\spn_2,\spn_1)\,,
\end{split}
\end{align}
where the Boltzmann weights are
\begin{align}
\label{h2d1quan}
\begin{split}
\ds V(\theta\,|\,\spn_i,\spn_j)&=\Gamma(\pi+\ii(\spn_i+\spn_j)-\theta)\,\Gamma(\pi+\ii(\spn_i-\spn_j)-\theta)\,,\\
\ds \oV(\theta\,|\,\spn_i,\spn_j)&=V(\pi-\theta\,|\,-\spn_i,\spn_j)\,, \\
\ds\oW(\theta\,|\,\spn_i,\spn_j)&=\Gamma(\theta+\ii(\spn_i+\spn_j))\,\Gamma(\theta+\ii(\spn_i-\spn_j))\,\Gamma(\theta-\ii(\spn_i+\spn_j))\,\Gamma(\theta-\ii(\spn_i-\spn_j))\,, \\
\ds W(\theta\,|\,\spn_i,\spn_j)&=\frac{\Gamma(\pi+\ii(\spn_i-\spn_j)-\theta)}{\Gamma(\pi+\ii(\spn_i-\spn_j)+\theta)}\,, 
\end{split}
\end{align}
and
\begin{align}
\label{h2d1s}
\ds S(\spn)=\frac{1}{4\pi\,\Gamma(2\ii \spn)\,\Gamma(-2\ii \spn)}\,,\qquad \ds R(\theta)=\Gamma(2\theta)\,.
\end{align}
The Boltzmann weights $\oW(\theta\,|\,\spn_i,\spn_j)$ and $W(\theta\,|\,\spn_i,\spn_j)$ appear in \eqref{q2quan}, and \eqref{q1d1quan}, respectively, while the Boltzmann weight $V(\theta\,|\,\spn_i,\spn_j)$ did not appear previously.

From \eqref{Stirling}, the asymptotics of the integrand of \eqref{qh2d1} are
\begin{align}
\label{h2d1xinf}
O(\EXP^{-2\pi\, \left|\spn_0\right|})\,,\qquad \spn_0\to\pm\infty\,,
\end{align}
and particularly the integral in \eqref{qh2d1} is absolutely convergent for the values \eqref{ratvals}.

The star-triangle relation \eqref{qh2d1} did not appear before, but up to a change of variables is equivalent to a gamma function integration formula found independently by de Branges \cite{DeBranges1972}, and Wilson \cite{Wilson1980}.

Define
\begin{align}
\label{h2d1int}
I_{10,r}(\theta_1,\theta_3\,|\,\spn_1,\spn_2,\spn_3)=\int_{\mathcal{C}}d\spn\,\rho_{10,r}(\theta_1,\theta_3\,|\,\spn_1,\spn_2,\spn_3;\spn)\,,
\end{align}
where
\begin{align}
\label{h2d1rho}
\rho_{10,r}(\theta_1,\theta_3\,|\,\spn_1,\spn_2,\spn_3;\spn)=\iS(\spn)\,\oV(\theta_1\,|\,\spn_1,\spn)\,V({\theta_1+\theta_3}\,|\,\spn_2,\spn)\,\oW(\theta_3\,|\,\spn,\spn_3)\,,
\end{align}
and the contour $\mathcal{C}$ is a deformation of $\mathbb{R}$, separating the points $\ii\theta_1+\spn_1+\ii\mathbb{Z}_{\geq0}$, $\ii\theta_3\pm \spn_3+\ii\mathbb{Z}_{\geq0}$, $\ii(\pi-\theta_1-\theta_3)-\spn_2+\ii\mathbb{Z}_{\geq0}$, from their negatives.  Then the star-triangle relation \eqref{qh2d1}, may be obtained from the limit \eqref{ratlim} of \eqref{qh3d1} ($H3_{(\delta=1;\,\varepsilon=1)}$ case):

\begin{prop} \label{h2d1prop}
For the values \eqref{h3d1rvals}
\begin{align}
\label{h2d1lim}
\begin{split}
\lim_{\bb\to0^+}&\frac{(2\pi\bb)^6}{\pi\bb^4 (2\pi\bb^2)^{2(\ii(\spn_2-\spn_1)+\theta_3)}}\, I_{10}(\theta_1\bb,\theta_3\bb\,|\,\spn_1\bb,(\spn_2-\ii\pi)\bb+\ii\eta,\spn_3\bb)=I_{10,r}(\theta_1,\theta_3\,|\,\spn_1,\spn_2,\spn_3)\,,
\end{split}
\end{align}
where $I_{10}(\theta_1,\theta_3\,|\,\spn_1,\spn_2,\spn_3)$ is defined in \eqref{h3d1int}.

\end{prop}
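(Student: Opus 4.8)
The plan is to prove Proposition~\ref{h2d1prop} by the same dominated-convergence argument used for the rational limit in Proposition~\ref{q2prop}, now applied to the asymmetric integrand \eqref{h3d1rho} in place of \eqref{q3d1rho}. The essential analytic input is the degeneration formula \eqref{ratasymp}, which relates the hyperbolic gamma function to the Euler gamma function as $\bb\to0^+$, together with the exponential decay estimates \eqref{h3d1xinf} for $I_{10}$ and \eqref{h2d1xinf} for $I_{10,r}$ that make both integrals absolutely convergent on $\mathbb{R}$.

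First I would fix the contours. For the values \eqref{h3d1rvals} the poles listed below \eqref{h3d1rho} are separated from their negatives by $\mathbb{R}$, so $I_{10}$ and $I_{10,r}$ may both be taken over $\mathbb{R}$; the change of integration variable $\spn\to\spn\bb$ in \eqref{h3d1int} leaves the contour equal to $\mathbb{R}$. Next I would apply \eqref{ratasymp} termwise to each hyperbolic gamma factor of $\rho_{10}$. The shifts \eqref{ratlim} are arranged precisely so that every argument of a $\Gamma_h$ lands either near $0$, giving the $-$ branch and a factor $\Gamma(\tfrac12-\ii z)$, or near $\ii(2\bb)^{-1}$ through the $\ii\eta$ in $\spn_2\to(\spn_2-\ii\pi)\bb+\ii\eta$, giving the $+$ branch and a factor $\Gamma(\tfrac12+\ii z)$. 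Matching these branches against the rational weights \eqref{h2d1quan}, for instance $V(\theta_1\bb+\theta_3\bb\,|\,(\spn_2-\ii\pi)\bb+\ii\eta,\spn_0\bb)\to\Gamma(\pi+\ii(\spn_2+\spn_0)-(\theta_1+\theta_3))\,\Gamma(\pi+\ii(\spn_2-\spn_0)-(\theta_1+\theta_3))$, reproduces $V$, $\oV$, $\oW$ and $\iS$ of \eqref{h2d1rho}, which establishes pointwise convergence of the rescaled integrand to $\rho_{10,r}$.

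The step I expect to be the main obstacle is the bookkeeping of the scalar prefactor, and the related point that the integration-variable-dependent powers produced by \eqref{ratasymp} must cancel for the limit integrand to make sense. Each $\Gamma_h$ in $\rho_{10}$ occurs in a $\pm\spn$ pair inside its Boltzmann weight, and the two members contribute powers $(2(\pi\bb^2)^{-1})^{\ii z_{\pm}\mp\frac12}$ whose $\spn$-dependence (an otherwise wildly oscillating $\EXP^{\pm\ii\spn\,\Log(2(\pi\bb^2)^{-1})}$) cancels in the product; what survives depends only on the external variables $\spn_1,\spn_2,\spn_3,\theta_1,\theta_3$. Collecting this residual power together with the explicit factors $\bb^{\pm1}$ from \eqref{ratasymp}, the Jacobian $\bb$ from $d\spn\to\bb\,d\spn$, and the limiting normalisation of $\iS(\spn\bb)$, one must reproduce exactly the prefactor $(2\pi\bb)^6\big/\big(\pi\bb^4\,(2\pi\bb^2)^{2(\ii(\spn_2-\spn_1)+\theta_3)}\big)$ of \eqref{h2d1lim}. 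Checking that the net power of $\bb$ is finite and nonzero, and that the $z$-dependent exponent collapses precisely to $2(\ii(\spn_2-\spn_1)+\theta_3)$, is the delicate part.

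Finally I would upgrade pointwise convergence to $L^1$ convergence. Using the uniform decay from \eqref{h3d1xinf} and \eqref{h2d1xinf}, the rescaled integrand times the prefactor is dominated, uniformly for $\bb$ near $0$, by a fixed integrable function on $\mathbb{R}$, so dominated convergence gives \eqref{h2d1lim}. Exactly as in Proposition~\ref{q2prop}, analytic continuation then extends the identity from the range \eqref{h3d1rvals} to the full set of parameters admitted by the contour $\mathcal{C}$ described below \eqref{h3d1rho}.
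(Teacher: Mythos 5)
Your proposal is correct and follows essentially the same route as the paper, which proves Proposition \ref{h2d1prop} simply by declaring its steps analogous to Proposition \ref{q2prop}: change of integration variable $\spn\to\spn\bb$, termwise application of the degeneration formula \eqref{ratasymp}, uniform boundedness of the prefactored integrand via \eqref{h3d1xinf} and \eqref{h2d1xinf}, dominated convergence, and analytic continuation to the parameter range permitted by the contour. Your extra bookkeeping — the cancellation of the oscillatory $\spn$-dependent powers within each $\pm\spn$ pair and the collapse of the residual exponent to $2(\ii(\spn_2-\spn_1)+\theta_3)$ — is exactly the detail the paper leaves implicit, and it is handled correctly.
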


The steps of the proof of Proposition \ref{h2d1prop}, are analogous to the respective steps for Proposition \ref{q2prop}.  The star-triangle relation \eqref{qh2d1} follows from Proposition \ref{h2d1prop}, by using \eqref{ratasymp} to take the same limit of the right hand side of \eqref{qh3d1}.

\tocless\subsubsection{Classical integrable equations}

Let the classical variables $x_1,x_2,x_3$, $\alpha_1,\alpha_3$, take the values given in \eqref{ratcvals}.  The Lagrangian functions for this case are defined in terms of \eqref{gammadef} as
\begin{align}
\label{lagh2d1}
\begin{split}
\Lambda(\alpha\,|\,x_i,x_j)=&\,\ds \gamma(x_i+x_j+\ii\alpha)+\gamma(x_i-x_j+\ii\alpha)\,, \\
%\lag(\alpha\,|\,x_i,x_j)=&\,\ds \gamma(x_i-x_j+\ii\alpha)-\gamma(x_i-x_j-\ii\alpha)\,, \\
\ol(\alpha\,|\,x_i,x_j)=&\,\ds \gamma(x_i+x_j-\ii\alpha)+\gamma(x_i-x_j-\ii\alpha)+\gamma(-x_i+x_j-\ii\alpha) \\
&+\gamma(-x_i-x_j-\ii\alpha) -\gamma(-2\ii\alpha)\,.
\end{split}
\end{align}
The latter function satisfies
\begin{align}
%\lag(\alpha\,|\,x_i,x_j)=-\lag(-\alpha\,|\,x_i,x_j)\,,\qquad
\ol(\alpha\,|\,x_i,x_j)=\ol(\alpha\,|\,x_j,x_i)\,.
\end{align}

Using the asymptotics \eqref{Stirling}, the leading order $O(\hbar^{-1})$ quasi-classical expansion \eqref{ratqcl} of the integrand \eqref{h2d1rho} is
\begin{align}
\label{h2d1qcl}
\begin{split}
\Log \left(\hbar^{2\pi-3}\rho_{10,r}(\alpha_1\hbar^{-1},\alpha_3\hbar^{-1}\,|\,x_1\hbar^{-1},x_2\hbar^{-1},x_3\hbar^{-1};x_0\hbar^{-1})\right)\phantom{\,,\qquad\qquad\quad} \\
+2\ii\hbar^{-1}(x_2-x_1-\ii\alpha_3)(1+\Log\hbar)=\hbar^{-1}\astr{x_0}+O(1)\,,
\end{split}
\end{align}
where
\begin{align}
\astr{x_0}=C(x_0)+\olam(\alpha_1\,|\,x_1,x_0)+\Lambda(\alpha_1+\alpha_3\,|\,x_2,x_0)+\ol(\alpha_3\,|\,x_0,x_3)\,,
\end{align}
and
\begin{align}
\begin{gathered}
\olam(\alpha\,|\,x_i,x_j)=\Lambda(-\alpha\,|\,-x_i,x_j)\,, \\
C(x)=2\ii x(\Log(-2\ii x)-\Log(2\ii x))\,.
\end{gathered}
\end{align}

The saddle point three-leg equation \eqref{3legasym} is then given by
\begin{align}
\label{H2d13leg}
\frac{1}{\ii}\hspace{-0.05cm}\left.\frac{\partial \astr{x}}{\partial x}\right|_{x=x_0}\hspace{-0.4cm}=\ovphib(\alpha_1\,|\,x_1,x_0)+\phi(\alpha_1+\alpha_3\,|\,x_2,x_0)+\vphi(\alpha_3\,|\,x_3,x_0)=0\,,
\end{align}
where $\phi(\alpha\,|\,x_i,x_j)$, and $\vphi(\alpha\,|\,x_i,x_j)$, are defined by
\begin{align}
\begin{split}
\phi(\alpha\,|\,x_i,x_j)&=\Log(\ii(x_i+x_j)-\alpha)-\Log(\ii(x_i-x_j)-\alpha)\,, \\
\vphi(\alpha\,|\,x_i,x_j)&=\Log(\alpha+\ii(x_i+x_j))+\Log(\alpha-\ii(x_i-x_j)) \\
&-\Log(\alpha-\ii(x_i+x_j))-\Log(\alpha+\ii(x_i-x_j))\,,
\end{split}
\end{align}
and
\begin{align}
\ovphib(\alpha\,|\,x_i,x_j)=\phi(-\alpha\,|\,-x_i,x_j)\,.
\end{align}

Equation \eqref{H2d13leg} is a three-leg form of $H2_{(\varepsilon=1)}$, arising as the equation for the saddle point of the star-triangle relation \eqref{qh2d1} in the limit \eqref{ratqcl}.  With the following change of variables
\begin{align}
x=x_0^2,\qquad u=x_1,\qquad y=x_2,\qquad v=x_3^2,\qquad \alpha=-\ii\alpha_1,\qquad \beta=-\ii(\alpha_1+\alpha_3),
\end{align}
the exponential of the three-leg equation \eqref{H2d13leg} may be written in the form
\begin{align}
H(x,u,y,v;\alpha,\beta)=0\,,
\end{align}
where 
\begin{align}
\label{H2d1}
H(x,u,y,v;\alpha,\beta)=(x-v)(u-y)+(\alpha-\beta)(x+v+(u+y)(\alpha+\beta)-2uy-\alpha^2-\beta^2)\,.
\end{align}
This is identified as $H2_{(\varepsilon=1)}$ \cite{ABS2}, an affine-linear quad equation that satisfies the 3D-consistency condition exactly as defined in Section \ref{sec:overview}.

\subsection{\texorpdfstring{$H2_{(\varepsilon=1)}$}{H2(epsilon=1)} case (alternate form)}

\tocless\subsubsection{Star-triangle relation}

Let the spins $\spn_1,\spn_2,\spn_3$, and spectral parameters $\theta_1,\theta_3$ take values \eqref{ratvals}.  The star-triangle relation is given in this case by
\begin{align}
\label{qh2d1alt}
\begin{split}
\ds\int_{\mathbb{R}}d\spn_0\,\oV({\theta_1}\,|\,\spn_1,\spn_0)\,V({\theta_1+\theta_3}\,|\,\spn_2,\spn_0)\,\oW({\theta_3}\,|\,\spn_0,\spn_3)\phantom{\quad\,.}  \\
\ds=R(\theta_3)\,V({\theta_1}\,|\,\spn_2,\spn_3)\,\oV({\theta_1+\theta_3}\,|\,\spn_1,\spn_3)\,W({\theta_3}\,|\,\spn_2,\spn_1)\,,
\end{split}
\end{align}
where the Boltzmann weights are
\begin{align}
\label{h2d1quanalt}
\begin{split}
\ds V(\theta\,|\,\spn_i,\spn_j)&=\frac{\Gamma(\pi+\ii(\spn_i+\spn_j)-\theta)}{\Gamma(\pi+\ii(\spn_i-\spn_j)+\theta)}\,, \\
\ds\oV(\theta\,|\,\spn_i,\spn_j)&=\Gamma(\theta+\ii(\spn_i-\spn_j))\,\Gamma(\theta-\ii(\spn_i+\spn_j))\,, \\[0.1cm]
\ds\oW(\theta\,|\,\spn_i,\spn_j)&=\Gamma(\theta+\ii(\spn_i-\spn_j))\,\Gamma(\theta- \ii(\spn_i-\spn_j))\,, \\
\ds W(\theta\,|\,\spn_i,\spn_j)&=\frac{\Gamma(\pi+\ii(\spn_i+\spn_j)-\theta)\,\Gamma(\pi+\ii(\spn_i-\spn_j)-\theta)}{\Gamma(\pi+\ii(\spn_i+\spn_j)+\theta)\,\Gamma(\pi+\ii(\spn_i-\spn_j)+\theta)}\,,
\end{split}
\end{align}
and
\begin{align}
R(\theta_3)=2\pi\,\Gamma(2\theta_3)\,.
\end{align}

From \eqref{Stirling}, the asymptotics of the integrand of \eqref{qh2d1alt} are
\begin{align}
\label{h2d12xinf}
O(\EXP^{-2\pi\, \left|\spn_0\right|})\,,\qquad \spn_0\to\pm\infty\,,
\end{align}
and particularly the integral in \eqref{qh2d1alt} is absolutely convergent for the values \eqref{ratvals}.

Similarly to the preceding case of $H2_{(\varepsilon=1)}$ in \eqref{h2d1quan}, $W(\theta\,|\,\spn_i,\spn_j)$, and $\oW(\theta\,|\,\spn_i,\spn_j)$ appear in \eqref{q2quan}, and \eqref{q1d1quan}, respectively, while the Boltzmann weight $V(\theta\,|\,\spn_i,\spn_j)$ is not symmetric in the exchange of spins, and generally has non-vanishing imaginary component.  In the star-triangle relation \eqref{qh2d1alt}, the Boltzmann weights $W(\theta\,|\,\spn_i,\spn_j)$ and $\oW(\theta\,|\,\spn_i,\spn_j)$ are exchanged compared to \eqref{qh2d1}, which will result in a different three-leg form to \eqref{H2d13leg}, that corresponds to the same quad equation $H2_{(\varepsilon=1)}$ (up to relabelling of the variables).

The star-triangle relation \eqref{qh2d1alt} is up to a change of variables equivalent to Barnes's second lemma \cite{Barnes1910} (the same integral formula corresponding to \eqref{qq1d1} for the $Q1_{(\delta=1)}$ case, but with a slightly different change of variables).

Define
\begin{align}
\label{h2d1intalt}
I_{6,H,r}(\theta_1,\theta_3\,|\,\spn_1,\spn_2,\spn_3)=\int_{\mathcal{C}}d\spn\,\rho_{6,H,r}(\theta_1,\theta_3\,|\,\spn_1,\spn_2,\spn_3;\spn)\,,
\end{align}
where
\begin{align}
\label{h2d1rhoalt}
\rho_{6,H,r}(\theta_1,\theta_3\,|\,\spn_1,\spn_2,\spn_3;\spn)=\oV({\theta_1}\,|\,\spn_1,\spn)\,V({\theta_1+\theta_3}\,|\,\spn_2,\spn)\,\oW({\theta_3}\,|\,\spn,\spn_3)\,,
\end{align}
and the contour $\mathcal{C}$ is a deformation of $\mathbb{R}$, separating the points $\ii\theta_3+ \spn_3+\ii\mathbb{Z}_{\geq0}$, $\ii(\pi-\theta_1-\theta_3)-\spn_2+\ii\mathbb{Z}_{\geq0}$, from the points $-\ii\theta_3+\spn_3-\ii\mathbb{Z}_{\geq0}$, $-\ii\theta_1\pm \spn_1-\ii\mathbb{Z}_{\geq0}$.  Then the star-triangle relation \eqref{qh2d1alt}, may be obtained from the limit \eqref{ratlim} of \eqref{qh3d1alt} ($H3_{(\delta=1;\,\varepsilon=1)}$ case):

\begin{prop} \label{h2d12prop}
For the values \eqref{h3d12rvals},
\begin{align}
\label{h2d12lim}
\lim_{\bb\to0^+}&\frac{(2\pi\bb)^4}{\bb^2}\,I_{6,H}(\theta_1\bb,\theta_3\bb\,|\,\spn_1\bb,(\spn_2-\ii\pi)\bb+\ii\eta,\spn_3\bb)=I_{6,H,r}(\theta_1,\theta_3\,|\,\spn_1,\spn_2,\spn_3)
\end{align}
where $I_{6,H}(\theta_1,\theta_3\,|\,\spn_1,\spn_2,\spn_3)$ is defined in \eqref{h3d12int}.

\end{prop}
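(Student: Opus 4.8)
The plan is to mirror the explicit argument given for Proposition \ref{q2prop}, since the only difference between the present limit and that case lies in the particular Boltzmann weights making up the integrand, not in the analytic mechanism of the rational degeneration. Under the conditions \eqref{h3d12rvals} the contour $\mathcal{C}$ in \eqref{h3d12int} may be taken to be $\mathbb{R}$, so I would begin by writing $I_{6,H}(\theta_1\bb,\theta_3\bb\,|\,\spn_1\bb,(\spn_2-\ii\pi)\bb+\ii\eta,\spn_3\bb)$ as an integral over $\spn_0\in\mathbb{R}$, and then make the change of integration variable $\spn_0\to\spn_0\bb$, which produces a Jacobian factor $\bb$ that I combine with the prefactor $(2\pi\bb)^4/\bb^2$.

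Next I would apply the asymptotic formula \eqref{ratasymp} to each of the six hyperbolic gamma functions appearing in $\rho_{6,H}$ (the two factors of $\oV$, the two of $V$, and the numerator and denominator of $\oW$), using \eqref{h3d1quanalt}. The scaling \eqref{ratlim} is engineered precisely so that, after the shift $\spn_2\to(\spn_2-\ii\pi)\bb+\ii\eta$ converts the $V$- and $\oV$-type arguments $\spn_i+\spn_2$ into the form $z\bb\pm\ii(2\bb)^{-1}$ required by \eqref{ratasymp} (recall $\eta\to(2\bb)^{-1}$ as $\bb\to0$), each $\Gamma_h$ degenerates into an Euler gamma $\Gamma(\tfrac12\pm\ii z)$, up to an explicit prefactor involving $\bb^{\pm1}$ and $\Log(2(\pi\bb^2)^{-1})$. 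The key bookkeeping claim is that, once all six contributions are collected, the accompanying powers of $\bb$ and all the $\Log(2(\pi\bb^2)^{-1})$ terms cancel against the prefactor $(2\pi\bb)^4/\bb^2$ and against one another, leaving exactly the product of Euler gamma functions that defines $\rho_{6,H,r}$ in \eqref{h2d1rhoalt} with the weights \eqref{h2d1quanalt}. This is the same balance of factors already verified for the structurally identical case of Proposition \ref{q1d1prop}, which has the same prefactor and the same number of gamma functions.

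Finally, to justify interchanging the limit and the integral, I would invoke the decay estimates \eqref{h3d12xinf} for the scaled hyperbolic integrand together with \eqref{h2d12xinf} for the rational integrand: these show that the combination of the prefactor and the transformed integrand is uniformly bounded on $\mathbb{R}$ by an integrable function, independently of $\bb$ for small $\bb$, so the result follows by dominated convergence. The identity is first established on the real slice of values permitted by \eqref{h3d12rvals}, and then extended to the general complex values admitted by the contour $\mathcal{C}$ by analytic continuation.

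The main obstacle I anticipate is the prefactor bookkeeping in the middle step: one must check that the divergent pieces produced by \eqref{ratasymp}---the pure powers of $\bb$ and especially the $\Log(2(\pi\bb^2)^{-1})$ terms, whose exponents are linear in the (shifted) arguments---telescope to a constant multiple of $\bb^2$ that is absorbed by $(2\pi\bb)^4/\bb^2$. A secondary technical point is ensuring the domination in the last step is genuinely uniform in $\bb$ near $\bb=0$, since the decay rates in \eqref{h3d12xinf} depend continuously on the parameters and must be bounded below uniformly as $\bb\to0^+$.
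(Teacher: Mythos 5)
Your proposal is correct and is essentially the paper's own argument: the paper proves this proposition simply by declaring the steps analogous to Proposition \ref{q2prop}, whose proof is exactly your sequence --- take $\mathcal{C}=\mathbb{R}$ under \eqref{h3d12rvals}, rescale $\spn_0\to\spn_0\bb$, apply \eqref{ratasymp} with the $\spn_2$ shift supplying the $\pm\ii(2\bb)^{-1}$ structure, conclude by dominated convergence using \eqref{h3d12xinf} and the rational decay estimate, and extend by analytic continuation. Your only slight imprecision is attributing the $\ii\eta$ shift in the $\oV$ arguments to the $\spn_2$ substitution, whereas for $\oV$ and $\oW$ it comes from the built-in $\eta-\theta$ crossing arguments in \eqref{h3d1quanalt}; this does not affect the validity of the argument.
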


The steps of the proof of Proposition \ref{h2d12prop}, are analogous to the respective steps for Proposition \ref{q2prop}.  The star-triangle relation \eqref{qh2d1alt} follows from Proposition \ref{h2d12prop}, by using \eqref{ratasymp} to take the same limit of the right hand side of \eqref{qh3d1alt}.

\tocless\subsubsection{Classical integrable equations}

Let the classical variables $x_1,x_2,x_3$, $\alpha_1,\alpha_3$, take the values given in \eqref{ratcvals}.  The Lagrangian functions for this case are defined in terms of \eqref{gammadef} as
\begin{align}
\label{lagh2d1alt}
\begin{split}
\Lambda(\alpha\,|\,x_i,x_j)&=\ds \gamma(x_i+x_j+\ii\alpha)-\gamma(x_i-x_j-\ii\alpha)\,,  \\
\olam(\alpha\,|\,x_i,x_j)&=\ds \gamma(x_i-x_j-\ii\alpha)+\gamma(-(x_i+x_j+\ii\alpha))\,, \\
%\lag(\alpha\,|\,x_i,x_j)&=\ds \gamma(x_i+x_j+\ii\alpha)+\gamma(x_i-x_j+\ii\alpha)-\gamma(x_i+x_j-\ii\alpha)-\gamma(x_i-x_j-\ii\alpha)\,,  \\
\ol(\alpha\,|\,x_i,x_j)&=\ds \gamma(x_i-x_j-\ii\alpha)+\gamma(x_j-x_i-\ii\alpha)-\gamma(-2\ii\alpha)\,.
\end{split}
\end{align}
The latter function satisfies
\begin{align}
%\lag(\alpha\,|\,x_i,x_j)=-\lag(-\alpha\,|\,x_i,x_j)\,,\qquad
\ol(\alpha\,|\,x_i,x_j)=\ol(\alpha\,|\,x_j,x_i)\,.
\end{align}

Using the asymptotics \eqref{Stirling}, the leading order $O(\hbar^{-1})$ quasi-classical expansion \eqref{ratqcl} of the integrand \eqref{h2d1rhoalt} is
\begin{align}
\label{h2d1qclalt}
\Log \left(\hbar^{-2}\!\rho_{6,H,r}(\alpha_1\hbar^{-1},\alpha_3\hbar^{-1}\,|\,x_1\hbar^{-1},x_2\hbar^{-1},x_3\hbar^{-1};x_0\hbar^{-1})\right)=\hbar^{-1}\!\astr{x_0}+O(1),
\end{align}
where
\begin{align}
\astr{x_0}=\olam(\alpha_1\,|\,x_1,x_0)+\Lambda(\alpha_1+\alpha_3\,|\,x_2,x_0)+\ol(\alpha_3\,|\,x_0,x_3)\,.
\end{align}

The saddle point three-leg equation \eqref{3legasym} is then given by
\begin{align}
\label{H2d13legalt}
\frac{1}{\ii}\hspace{-0.05cm}\left.\frac{\partial \astr{x}}{\partial x}\right|_{x=x_0}\hspace{-0.4cm}=\ovphib(\alpha_1\,|\,x_1,x_0)+\phi(\alpha_1+\alpha_3\,|\,x_2,x_0)+\vphi(\alpha_3\,|\,x_3,x_0)=0\,,
\end{align}
where 
\begin{align}
\begin{split}
\phi(\alpha\,|\,x_i,x_j)&=\Log(\ii(x_i+x_j)-\alpha)+\Log(\ii(x_i-x_j)+\alpha)\,, \\
\ovphib(\alpha\,|\,x_i,x_j)&=-\Log(\alpha-\ii(x_i+x_j))-\Log(\alpha+\ii(x_i-x_j))\,, \\
\vphi(\alpha\,|\,x_i,x_j)&=\Log(\alpha-\ii(x_i-x_j))-\Log(\alpha+\ii(x_i-x_j))\,.
\end{split}
\end{align}

Equation \eqref{H2d13legalt} is a three-leg form of $H2_{(\varepsilon=1)}$, arising as the equation for the saddle point of the star-triangle relation \eqref{qh2d1alt} in the limit \eqref{ratqcl}.  With the following change of variables
\begin{align}
x=x_0,\qquad u=x_1^2,\qquad y=x_2^2,\qquad v=x_3,\qquad \alpha=-\ii\alpha_1,\qquad \beta=-\ii(\alpha_1+\alpha_3),
\end{align}
the exponential of the three-leg equation \eqref{H2d13legalt} may be written in the form
\begin{align}
H(x,u,y,v;\alpha,\beta)=0\,,
\end{align}
where 
\begin{align}
\label{H2d12}
H(x,u,y,v;\alpha,\beta)=(x-v)(u-y)+(\alpha-\beta)(u+y+(x+v)(\alpha+\beta)-2xv-\alpha^2-\beta^2)\,.
\end{align}
This is identified as $H2_{(\varepsilon=1)}$ \cite{ABS2}, an affine-linear quad equation that satisfies the 3D-consistency condition exactly as defined in Section \ref{sec:overview}.

\subsection{\texorpdfstring{$H2_{(\varepsilon=0)}$}{H2(epsilon=0)} case}

\tocless\subsubsection{Star-triangle relation}

Let the spins $\spn_1,\spn_2,\spn_3$, and spectral parameters $\theta_1,\theta_3$ take values \eqref{ratvals}.  The star-triangle relation is given in this case by
\begin{align}
\label{qh2d0}
\begin{split}
\ds\int_{\mathbb{R}} d\spn_0\,\oV({\theta_1}\,|\,\spn_1,\spn_0)\,V({\theta_1+\theta_3}\,|\,\spn_2,\spn_0)\,\oW({\theta_3}\,|\,\spn_0,\spn_3)\phantom{\quad\,,}  \\
\ds=R(\theta_3)\,V({\theta_1}\,|\,\spn_2,\spn_3)\,\oV({\theta_1+\theta_3}\,|\,\spn_1,\spn_3)\,W({\theta_3}\,|\,\spn_2,\spn_1)\,,
\end{split}
\end{align}
where the Boltzmann weights are
\begin{align}
\label{h2d0quan}
\begin{split}
V(\theta\,|\,\spn_i,\spn_j)&\ds=\Gamma(\pi+\ii(\spn_i+\spn_j)-\theta)\,, \\
\oV(\theta\,|\,\spn_i,\spn_j)&\ds= V(\pi-\theta\,|\,-\spn_i,-\spn_j)\,, \\
\oW(\theta\,|\,\spn_i,\spn_j)&\ds=\Gamma(\theta+\ii(\spn_i-\spn_j))\,\Gamma(\theta-\ii(\spn_i-\spn_j))\,, \\
W(\theta\,|\,\spn_i,\spn_j)&\ds=\frac{\Gamma(\pi+\ii(\spn_i-\spn_j)-\theta)}{\Gamma(\pi+\ii(\spn_i-\spn_j)+\theta)}\,,
\end{split}
\end{align}
and
\begin{align}
R(\theta)=2\pi\,\Gamma(2\theta)\,.
\end{align}
The Boltzmann weights satisfy
\begin{align}
V(\theta\,|\,\spn_i,\spn_j)=V(\theta\,|\,\spn_j,\spn_i)\,,\quad \oW(\theta\,|\,\spn_i,\spn_j)=\oW(\theta\,|\,\spn_j,\spn_i)\,,\quad W(\theta\,|\,\spn_i,\spn_j)\,W(-\theta\,|\,\spn_i,\spn_j)=1\,.
\end{align}

From \eqref{Stirling}, the asymptotics of the integrand of \eqref{qh2d0} are
\begin{align}
\label{h2d0xinf}
O(\EXP^{-2\pi |\spn_0|})\,,\qquad \spn_0\to\pm\infty\,,
\end{align}
and particularly the integral in \eqref{qh2d0} is absolutely convergent for the values \eqref{ratvals}.

The star-triangle relation \eqref{qh2d0} did not appear before, but up to a change of variables is equivalent to Barnes's first lemma \cite{Barnes1908}.

Define
\begin{align}
\label{h2d0int}
I_{4,r}(\theta_1,\theta_3\,|\,\spn_1,\spn_2,\spn_3)=\int_{\mathcal{C}}d\spn\,\rho_{4,r}(\theta_1,\theta_3\,|\,\spn_1,\spn_2,\spn_3;\spn)\,,
\end{align}
where
\begin{align}
\label{h2d0rho}
\rho_{4,r}(\theta_1,\theta_3\,|\,\spn_1,\spn_2,\spn_3;\spn)=\oV({\theta_1}\,|\,\spn_1,\spn)\,V({\theta_1+\theta_3}\,|\,\spn_2,\spn)\,\oW({\theta_3}\,|\,\spn,\spn_3)\,,
\end{align}
and the contour $\mathcal{C}$ is a deformation of $\mathbb{R}$, separating the points $\ii\theta_3+\spn_3+\ii\mathbb{Z}_{\geq0}$, $\ii(\pi-\theta_1-\theta_3)-\spn_2+\ii\mathbb{Z}_{\geq0}$, from the points $-\ii\theta_1-\spn_1-\ii\mathbb{Z}_{\geq0}$, $-\ii\theta_3+\spn_3-\ii\mathbb{Z}_{\geq0}$.  Then the star-triangle relation \eqref{qh2d0}, may be obtained from the limit \eqref{ratlim} of \eqref{qh3d0} ($H3_{(\delta=0,1;\,\varepsilon=1-\delta)}$ case):

\begin{prop} \label{h2d0prop}
For the values \eqref{h3d0rvals},
\begin{align}
\label{h2d0lim}
\lim_{\bb\to0^+}\frac{\sqrt{2\pi}(2\pi\bb)^3}{\bb^2(2\pi\bb^2)^{\ii(\spn_2-\spn_1)+\theta_3}}\, I_{4}(\theta_1\bb,\theta_3\bb\,|\,\spn_1\bb,(\spn_2-\ii\pi)\bb+\ii\eta,\spn_3\bb)=I_{4,r}(\theta_1,\theta_3\,|\,\spn_1,\spn_2,\spn_3)
\end{align}
where $I_{4}(\theta_1,\theta_3\,|\,\spn_1,\spn_2,\spn_3)$ is defined in \eqref{h3d0int}.

\end{prop}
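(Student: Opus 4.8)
The plan is to follow the dominated-convergence strategy used for Proposition \ref{q2prop}, adapted to the asymmetric integrand $\rho_4$ of \eqref{h3d0rho}. First I would rescale the integration variable by setting $\spn\to\spn\bb$ in \eqref{h3d0int}; for the values \eqref{h3d0rvals} the contour $\mathcal{C}$ may be taken to be $\mathbb{R}$, so the left-hand side of \eqref{h2d0lim} becomes a single integral over $\mathbb{R}$ whose integrand is the product of the stated prefactor, the Jacobian $\bb$, and $\rho_4$ evaluated at the scaled arguments of \eqref{ratlim}.

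Second, I would establish pointwise convergence of this integrand to $\rho_{4,r}$ of \eqref{h2d0rho}. After the substitution \eqref{ratlim} every hyperbolic gamma function in the weights \eqref{h3d0quan} has argument of the form $z\bb\pm\ii(2\bb)^{-1}$ --- the shift $\spn_2\to(\spn_2-\ii\pi)\bb+\ii\eta$, together with $\ii\eta=\ii\bb/2+\ii(2\bb)^{-1}$, is precisely what produces this form --- so the asymptotic relation \eqref{ratasymp} replaces it by an Euler gamma function times explicit powers of $\bb$ and $2(\pi\bb^2)^{-1}$ and exponential factors. Collecting these contributions from $\oV$, $V$ and $\oW$, together with $S(\spn)=\EXP^{4\pi\eta\spn}$ from \eqref{h3d0quans}, the Bernoulli exponentials $\overline{B}$ of \eqref{obdef} carried by $V$ and $\oV$, and the $\EXP^{-2\pi\theta(\spn_i+\spn_j)}$ factor in $W$, I would verify that all $\bb$-dependent factors collapse to exactly the reciprocal of the stated prefactor, leaving behind the rational weights \eqref{h2d0quan}.

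Third, to interchange the limit with the integral I would exhibit a $\bb$-independent integrable majorant. This follows from the decay estimates \eqref{h3d0xinf}: under $\spn\to\spn\bb$ one has $2\eta\bb=\bb^2+1\to1$, so the leading decay rates of the rescaled integrand approach $-2\pi|\spn|$ as $\spn\to\pm\infty$, matching the rational decay \eqref{h2d0xinf}; hence the rescaled integrand is uniformly bounded on $\mathbb{R}$ by a fixed exponentially decaying function and dominated convergence applies. As in Proposition \ref{q2prop}, the identity first holds on the real parameter ranges and then extends to the full domain permitted by the contour, stated below \eqref{h2d0rho}, by analytic continuation. The star-triangle relation \eqref{qh2d0} itself then follows by taking the same limit of the right-hand side of \eqref{qh3d0} via \eqref{ratasymp}.

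The main obstacle is the second step. Unlike the earlier hyperbolic cases, the weights \eqref{h3d0quan} carry the Bernoulli exponentials $\overline{B}$, which do not cancel out of the star-triangle relation, and one must check that these, together with the explicit $\EXP$ prefactors in $S$, $V$ and $W$ and the sub-leading tails in \eqref{ratasymp}, assemble exactly into the fractional power $(2\pi\bb^2)^{\ii(\spn_2-\spn_1)+\theta_3}$ and the remaining normalization $\sqrt{2\pi}(2\pi\bb)^3\bb^{-2}$. Tracking these competing powers of $\bb$ and the cancellation of the divergent $\Log\bb$ and $\eta$ contributions is the delicate part; once it is settled the convergence argument is routine.
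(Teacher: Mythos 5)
Your proposal is correct and takes essentially the same route as the paper, which proves this proposition simply by declaring its steps analogous to Proposition \ref{q2prop}: rescale the integration variable by $\bb$, apply the asymptotic formula \eqref{ratasymp}, obtain uniform boundedness from the decay estimates \eqref{h3d0xinf}, \eqref{h2d0xinf} so that dominated convergence applies, and extend to the full parameter domain by analytic continuation. Your explicit tracking of the $\overline{B}$ exponentials and the computation $2\eta\bb=\bb^2+1\to1$ merely fills in details the paper leaves implicit.
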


The steps of the proof of Proposition \ref{h2d0prop}, are analogous to the respective steps for Proposition \ref{q2prop}.  The star-triangle relation \eqref{qh2d0} follows from Proposition \ref{h2d0prop}, by using \eqref{ratasymp} to take the same limit of the right hand side of \eqref{qh3d0}.

\tocless\subsubsection{Classical integrable equations}

Let the classical variables $x_1,x_2,x_3$, $\alpha_1,\alpha_3$, take the values given in \eqref{ratcvals}.  The Lagrangian functions for this case are defined in terms of \eqref{gammadef} as
\begin{align}
\label{lagh2d0}
\begin{split}
\ds\Lambda(\alpha\,|\,x_i,x_j)&=\gamma(x_i+x_j+\ii\alpha) \,, \\
%\ds\lag(\alpha\,|\,x_i,x_j)&=\gamma(x_i-x_j+\ii\alpha)-\gamma(x_i-x_j-\ii\alpha)\,, \\
\ds{\ol}(\alpha\,|\,x_i,x_j)&=\gamma(x_i-x_j-\ii\alpha)+\gamma(x_j-x_i-\ii\alpha)-\gamma(-2\ii\alpha)\,.
\end{split}
\end{align}
These functions satisfy \eqref{lagh2d0} satisfy
\begin{align}
\Lambda(\alpha\,|\,x_i,x_j)=\Lambda(\alpha\,|\,x_j,x_i)\,,\quad\ol(\alpha\,|\,x_i,x_j)=\ol(\alpha\,|\,x_j,x_i)\,,\quad{\lag}(\alpha\,|\,x_i,x_j)=-{\lag}(-\alpha\,|\,x_i,x_j)\,.
\end{align}

Using the asymptotics \eqref{Stirling}, the leading order $O(\hbar^{-1})$ quasi-classical expansion \eqref{ratqcl} of the integrand \eqref{h2d0rho} is
\begin{align}
\label{h2d0qcl}
\begin{split}
\ii\hbar^{-1}(x_2-x_1-\ii\alpha_3)(1+\Log\hbar)+\Log \left(\hbar^{\pi-2}\!\rho_{4,r}(\alpha_1\hbar^{-1},\alpha_3\hbar^{-1}\,|\,x_1\hbar^{-1},x_2\hbar^{-1},x_3\hbar^{-1};x_0\hbar^{-1})\right)\phantom{\,,} \\
=\hbar^{-1}\!\astr{x_0}+O(1)\,,
\end{split}
\end{align}
where
\begin{align}
\astr{x_0}=\olam(\alpha_1\,|\,x_1,x_0)+\Lambda(\alpha_1+\alpha_3\,|\,x_2,x_0)+\ol(\alpha_3\,|\,x_0,x_3)\,,
\end{align}
and
\begin{align}
\olam(\alpha\,|\,x_i,x_j)=\Lambda(-\alpha\,|\,-x_i,-x_j)\,.
\end{align}

The saddle point three-leg equation \eqref{3legasym} is then given by
\begin{align}
\label{H2d03leg}
\frac{1}{\ii}\hspace{-0.05cm}\left.\frac{\partial \astr{x}}{\partial x}\right|_{x=x_0}\hspace{-0.4cm}=\ovphib(\alpha_1\,|\,x_1,x_0)+\phi(\alpha_1+\alpha_3\,|\,x_2,x_0)+\vphi(\alpha_3\,|\,x_3,x_0)=0\,,
\end{align}
where $\phi(\alpha\,|\,x_i,x_j)$, and $\vphi(\alpha\,|\,x_i,x_j)$, are defined by
\begin{align}
\begin{split}
\phi(\alpha\,|\,x_i,x_j)&=\Log(\ii(x_i+ x_j)-\alpha)\,, \\
\vphi(\alpha\,|\,x_i,x_j)&=\Log(\alpha-\ii(x_i-x_j))-\Log(\alpha+\ii(x_i-x_j))\,,
\end{split}
\end{align}
and
\begin{align}
\ovphib(\alpha\,|\,x_i,x_j)=-\phi(-\alpha\,|\,-x_i,-x_j)\,.
\end{align}

Equation \eqref{H2d03leg} is a three-leg form of $H2_{(\varepsilon=0)}$, arising as the equation for the saddle point of the star-triangle relation \eqref{qh2d0} in the limit \eqref{ratqcl}.  With the following change of variables
\begin{align}
x=x_0,\qquad u=x_1,\qquad y=x_2,\qquad v=x_3,\qquad \alpha=-\ii\alpha_1,\qquad \beta=-\ii(\alpha_1+\alpha_3),
\end{align}
the exponential of the three-leg equation \eqref{H2d03leg} may be written in the form
\begin{align}
H(x,u,y,v;\alpha,\beta)=0\,,
\end{align}
where 
\begin{align}
\label{H2d0}
H(x,u,y,v;\alpha,\beta)=(x-v)(u-y)+(\alpha-\beta)(x+u+y+v-\alpha-\beta)\,.
\end{align}
This is identified as $H2_{(\varepsilon=0)}$ \cite{ABS2}, an affine-linear quad equation that satisfies the 3D-consistency condition exactly as defined in Section \ref{sec:overview}.

\section{Algebraic cases}
\label{sec:alglim}

The algebraic cases\footnote{The star-triangle relations for the different algebraic cases will be seen to correspond to various ``classical'' hypergeometric integrals (with the exception of \eqref{qh2d1alt} for an alternate form of $H1_{(\varepsilon=1)}$, which corresponds to a special case of Barnes's first lemma).  The term ``algebraic'' is used here rather than ``classical'', since the latter term is already used when referring to the quasi-classical expansion, and the resulting classical integrable equations.} of the star-triangle relation, may be obtained as certain limits of the hyperbolic cases in Section \ref{sec:hyplim}.  The limits are less straightforward than the limits of the preceding sections, as the exact form of a limit needs to be determined case by case.  The different algebraic cases of the star-triangle relation that are obtained here as limits of the hyperbolic cases, are summarised in Figure \ref{alglimsfig}.

\begin{figure}[tbh]
\centering
\begin{tikzpicture}[scale=1]
\draw[white!] (-7,3.5) circle (0.01pt)
node[above=1pt]{\color{black} Hyperbolic};
\draw[white!] (7,3) circle (0.01pt);
\draw[white!] (-7,0) circle (0.01pt)
node[above=1pt]{\color{black} Algebraic};
\draw[white!] (7,0) circle (0.01pt);

\draw[white!] (-4.2,3) circle (0.01pt)
node[above=1pt]{\color{black} $ \begin{array}{cc} Q3_{(\delta=1)} \\[0.1cm] \eqref{qq3d1} \end{array} $};
\draw[white!] (0,3) circle (0.01pt)
node[above=1pt]{\color{black} $ \begin{array}{cc} H3_{(\delta=1;\,\varepsilon=1)} \\[0.1cm] \eqref{qh3d1},\eqref{qh3d1alt} \end{array} $ };
\draw[white!] (2.4,3) circle (0.01pt)
node[above=1pt]{\color{black} $ \begin{array}{cc} H3_{(\delta=0;\,\varepsilon=0)} \\[0.1cm] \eqref{qh3d02} \end{array} $ };
\draw[white!] (5.3,3) circle (0.01pt)
node[above=1pt]{\color{black} $ \begin{array}{cc} H3_{(\delta=0,1;\,\varepsilon=1-\delta)} \\[0.1cm] \eqref{qh3d0} \end{array} $ };
\draw[white!] (-2.2,3) circle (0.01pt)
node[above=1pt]{\color{black} $ \begin{array}{cc} Q3_{(\delta=0)} \\[0.1cm] \eqref{qq3d0} \end{array} $  };
%\draw[thick,->] (4.8,3)--(2.7,1);
\draw[thick,->] (5.4,3)--(5.4,1);
\draw[thick,->] (2.4,3)--(1.4,1);
\draw[thick,->] (-2.2,3)--(-3.0,1);
\draw[thick,->] (-4.2,3)--(-3.4,1);
\draw[thick,->] (0,3)--(1.0,1);
\draw[white!] (5.3,-0.5) circle (0.01pt)
node[above=1pt]{\color{black} $ \begin{array}{cc} H1_{(\varepsilon=1)} \\[0.1cm] \eqref{qh1e0b} \end{array} $ };
\draw[white!] (-3.2,-0.5) circle (0.01pt)
node[above=1pt]{\color{black} $ \begin{array}{cc} Q1_{(\delta=0)} \\[0.1cm] \eqref{qq1d0} \end{array} $};
\draw[white!] (1.2,-0.5) circle (0.01pt)
node[above=1pt]{\color{black} $ \begin{array}{cc} H1_{(\varepsilon=0)},H1_{(\varepsilon=1)} \\[0.1cm] \eqref{qh1e0a} \end{array} $ };
\end{tikzpicture}
\caption{Star-triangle relations corresponding to integrable quad equations $Q1_{(\delta=0)}$, $H1_{(\varepsilon=0)}$, $H1_{(\varepsilon=1)}$, obtained in this section as the algebraic limit of the hyperbolic cases of Section \ref{sec:hyplim}.}
\label{alglimsfig}
\end{figure}
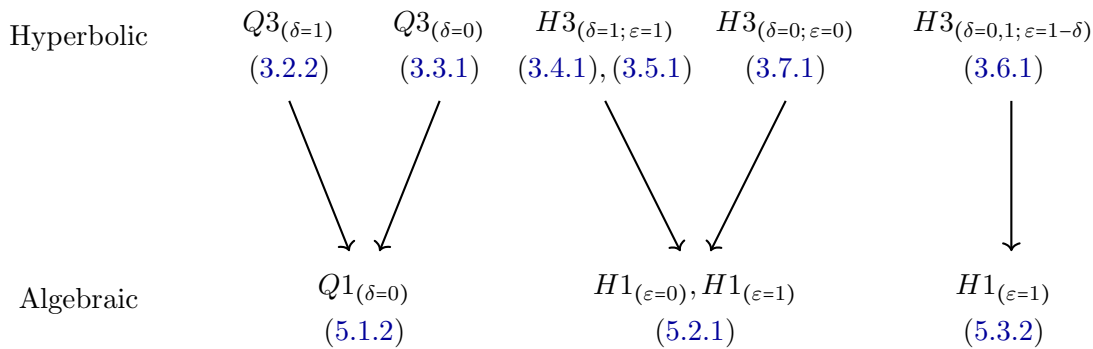

The algebraic limit from the hyperbolic level, may be thought of as a combination of the quasi-classical expansion for the hyperbolic cases in \eqref{HGFqcl}, and the rational limit in \eqref{ratasymp}.  Specifically, for
\begin{align}
\bb\to0^+\,,
\end{align}
the relevant asymptotics of the hyperbolic gamma function in this limit are  given by\cite{Rains2009}
\begin{align}
\label{algasym}
\Log\frac{\Gamma_h\left(\frac{x}{2\pi\bb}+\ii \theta_i\bb;\bb\right)}{\Gamma_h\left(\frac{x}{2\pi\bb}+\ii \theta_j\bb;\bb\right)}=(\theta_j-\theta_i)\Log\left(2\cosh\left(\frac{x}{2}\right)\right)+O(\bb)\,,
\end{align}
for $x\in\mathbb{R}$, $0<\theta_i,\theta_j<\eta$.  Unlike the quasi-classical expansions at the hyperbolic \eqref{hypqcl} and rational \eqref{ratqcl} levels respectively, the exact form of the quasi-classical expansion for the algebraic cases considered in this section will be seen to vary from case to case.

\subsection{\texorpdfstring{$Q1_{(\delta=0)}$}{Q1(delta=0)} case}

\tocless\subsubsection{Star-triangle relation}

Let the spins $\spn_1,\spn_2,\spn_3$, and spectral parameters $\theta_1$, $\theta_3$, take values
\begin{align}
\label{algvals}
\spn_1,\spn_2,\spn_3\in\mathbb{R}\,,\qquad 0<\theta_1,\theta_3,\theta_1+\theta_3<\tfrac{1}{2}\,.
\end{align}
The star-triangle relation is given in this case by \cite{Zamolodchikov:1980mb}
\begin{align}
\label{qq1d0}
\begin{split}
\ds\int_\mathbb{R}d\spn_0\,\oW({\theta_1}\,|\,\spn_1,\spn_0)\,W({\theta_1+\theta_3}\,|\,\spn_2,\spn_0)\,\oW({\theta_3}\,|\,\spn_3,\spn_0)\phantom{\qquad\,.} \\
=R(\theta_1,\theta_3)\,W({\theta_1}\,|\,\spn_2,\spn_3)\,\oW({\theta_1+\theta_3}\,|\,\spn_1,\spn_3)\,W({\theta_3}\,|\,\spn_2,\spn_1)\,,
\end{split}
\end{align}
where the Boltzmann weights are 
\begin{align}
\label{quanq1d0}
\begin{split}
W(\theta\,|\,\spn_i,\spn_j)&=|\spn_i-\spn_j|^{-2\theta}\,, \\[0.1cm]
\oW(\theta\,|\,\spn_i,\spn_j)&=W(\tfrac{1}{2}-\theta\,|\,\spn_i,\spn_j)\,,
\end{split}
\end{align}
and
\begin{align}
\label{qq1d0r}
R(\theta_1,\theta_3)=\sqrt{\pi}\frac{\Gamma(\theta_1)\,\Gamma(\frac{1}{2}-(\theta_1+\theta_3))\,\Gamma(\theta_3)}{\Gamma(\frac{1}{2}-\theta_1)\,\Gamma(\theta_1+\theta_3)\,\Gamma(\frac{1}{2}-\theta_3)}\,.
\end{align}

Equation \eqref{qq1d0} is the star-triangle relation for the $D=1$ Zamolodchikov fishnet model \cite{Zamolodchikov:1980mb}.  The connection to $Q1_{(\delta=0)}$ was previously given in \cite{Bazhanov:2016ajm,Kels:2017fyt}.  

The star-triangle relation \eqref{qq1d0} may be obtained as a limit of \eqref{qq3d1} ($Q3_{(\delta=1)}$ case), or \eqref{qq3d0} ($Q3_{(\delta=0)}$ case).  To see this, first the asymptotics \eqref{algasym} may be used to obtain the following integral from the star-triangle relation \eqref{q3d1int} ($Q3_{(\delta=1)}$ case):  

\begin{prop} \label{Selbergprop1}
Let the variables $\spn_i$, $\theta_1$, $\theta_3$, take values given in \eqref{hypvals}, and without loss of generality, choose $|\spn_3|>|\spn_1|$.  Then
\begin{align}
\label{Selberglim1}
\begin{split}
\lim_{\bb\to0^+}(8\pi\bb)\EXP^{|\spn_3|\bb^{-2}}I_{14}(\theta_1\bb,\theta_3\bb\,|\,\spn_1(2\pi\bb)^{-1},\spn_2(2\pi\bb)^{-1},\spn_3(2\pi\bb)^{-1})\phantom{\,,} \\[0.1cm]
=\int_{[\spn_1,\spn_3]}d\spn_0\,S(\spn_0)\,W_2(\theta_1\,|\,\spn_1,\spn_0)\,W_1(\theta_1+\theta_3\,|\,\spn_2,\spn_0)\,W_2(\theta_3\,|\,\spn_3,\spn_0)\,,
\end{split}
\end{align}
where $I_{14}(\theta_1,\theta_3\,|\,\spn_1,\spn_2,\spn_3)$ is defined in \eqref{q3d1int}, and the Boltzmann weights on the right hand side are defined by
\begin{align}
\label{quan1d01}
\begin{split}
W_1(\theta\,|\,\spn_i,\spn_j)&=\left|\cosh(\spn_i)+\cosh(\spn_j)\right|^{-2\theta}\,, \\[0.1cm]
W_2(\theta\,|\,\spn_i,\spn_j)&=\left|\cosh(\spn_i)-\cosh(\spn_j)\right|^{2\theta-1}\,,
\end{split}
\end{align}
and
\begin{align}
\label{quan1d01s}
S(\spn)=\left|\sinh(\spn)\right|\,.
\end{align}
\end{prop}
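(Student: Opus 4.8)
The plan is to prove \eqref{Selberglim1} by dominated convergence after rescaling the integration variable to the common scale $(2\pi\bb)^{-1}$ of the external spins. For the real values \eqref{hypvals} the conditions \eqref{q3d1rvals} hold, so the contour $\mathcal{C}$ in \eqref{q3d1int} may be taken to be $\mathbb{R}$; substituting $\spn=\spn_0(2\pi\bb)^{-1}$ gives $d\spn=(2\pi\bb)^{-1}d\spn_0$ and turns each Boltzmann weight in \eqref{q3d1rho} into a product of ratios $\Gamma_h(\tfrac{x}{2\pi\bb}+\ii\vartheta;\bb)$ whose ``position'' $x$ runs over the order-one combinations $\spn_i\pm\spn_0$. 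Writing the coefficient of $\bb^{-2}$ in the logarithm of the rescaled integrand as $c(\spn_0)$, the strategy is to show that $\re c(\spn_0)+|\spn_3|\le 0$ with equality precisely on a bounded set, that $\im c(\spn_0)$ carries no $\spn_0$-dependence there (so that no spurious oscillatory phase survives), and that the surviving $O(1)$ factor is the integrand on the right of \eqref{Selberglim1}.

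For the middle weight $W((\theta_1+\theta_3)\bb\,|\,\tfrac{\spn_2}{2\pi\bb},\spn)$ the imaginary shifts are $O(\bb)$ and the two $\Gamma_h$ ratios have matching positions, so the $\bb^{-2}$ contributions cancel and \eqref{algasym} applies directly, giving the real $O(1)$ limit $(2(\cosh\spn_2+\cosh\spn_0))^{-2(\theta_1+\theta_3)}$; via $2\cosh\tfrac{a+b}{2}\cosh\tfrac{a-b}{2}=\cosh a+\cosh b$ this is $W_1$ of \eqref{quan1d01} up to an explicit power of $2$. The factor $S(\spn_0(2\pi\bb)^{-1})=2\sinh(\spn_0)\sinh(\spn_0\bb^{-2})$ splits into the even $O(1)$ part $|\sinh\spn_0|$ of \eqref{quan1d01s} and the growing exponential $\tfrac12\EXP^{|\spn_0|\bb^{-2}}$, so it contributes $|\spn_0|$ to $\re c$.

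The delicate factors are the two $\oW(\theta_i\bb\,|\,\tfrac{\spn_i}{2\pi\bb},\spn)=W(\eta-\theta_i\bb\,|\,\cdots)$, whose $\Gamma_h$ arguments carry the shift $\ii\eta=\tfrac{\ii\pi}{2\pi\bb}+\tfrac{\ii\bb}{2}$ and so approach the boundary line $\im(\cdot)=\pi$ of the quasi-classical strip of \eqref{HGFqcl} from outside. Here I would use the inversion relation \eqref{HGFidents} (and, to cross the line, the shift supplied by the difference equations \eqref{HGFdif}) to reduce each $\oW$ to four gamma functions with a common position $\pm(\spn_i\pm\spn_0)+\ii\pi$; their combined $O(\bb^{-2})$ part is exactly $(2\pi\ii)^{-1}\mathcal{L}(\pi\,|\,\spn_0,\spn_i)$, where $\mathcal{L}$ is the leg Lagrangian \eqref{lagq3d1} evaluated at the degenerate value $\alpha\to0$ of the rescaled spectral parameter. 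The decisive computation is the dilogarithm reflection $\re[\dilog(\EXP^{t}+\ii0)]+\dilog(\EXP^{-t})=\tfrac{\pi^2}{3}-\tfrac{t^2}{2}$ for $t>0$, which yields the two facts $\re\mathcal{L}(\pi\,|\,a,b)=0$ and $\im\mathcal{L}(\pi\,|\,a,b)=-2\pi\max(|a|,|b|)$. The vanishing of the real part removes any $\spn_0$-dependent phase, while the imaginary part supplies $\re c$; the accompanying $O(1)$ terms (from the $O(\bb^2)$ correction in $\alpha$ together with the $\cosh$-factor produced by \eqref{HGFdif}) assemble, through $2\sinh\tfrac{a+b}{2}\sinh\tfrac{a-b}{2}=\cosh a-\cosh b$, into $W_2(\theta_i\,|\,\spn_i,\spn_0)=|\cosh\spn_i-\cosh\spn_0|^{2\theta_i-1}$ of \eqref{quan1d01}, the $\theta$-dependent powers of $2$ from the three weights cancelling.

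Collecting the exponents gives the purely real $\re c(\spn_0)=|\spn_0|-\max(|\spn_0|,|\spn_1|)-\max(|\spn_0|,|\spn_3|)+|\spn_3|$ (cross-checked against the tail estimate \eqref{q3d1xinf}), and a short case analysis using $|\spn_3|>|\spn_1|$ shows this equals $0$ exactly on $|\spn_1|\le|\spn_0|\le|\spn_3|$ and is strictly negative elsewhere. Thus the rescaled integrand is exponentially suppressed off this set and, since the limiting integrand depends on $\spn_0$ only through $\cosh\spn_0$ and $|\sinh\spn_0|$, the two symmetric intervals $[\spn_1,\spn_3]$ and $[-\spn_3,-\spn_1]$ contribute equally, and this, with the residual constants, fixes the normalisation $8\pi\bb$ and the single integral over $[\spn_1,\spn_3]$. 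To pass to the limit I would dominate uniformly in small $\bb$: near $|\spn_0|=|\spn_1|,|\spn_3|$ the limiting weight has the integrable singularity $|\cosh\spn_i-\cosh\spn_0|^{2\theta_i-1}$ with exponent in $(-1,0)$ in the range \eqref{algvals}, and for large $|\spn_0|$ the bound $\re c(\spn_0)\le-|\spn_0|+|\spn_3|$ gives exponential decay. I expect the main obstacle to be precisely the boundary asymptotics of the $\oW$ weights: because their arguments sit just outside the validity strip of \eqref{HGFqcl}, one must track the correct dilogarithm branch to obtain both the right sign of the exponent and, above all, the cancellation $\re\mathcal{L}(\pi\,|\,a,b)=0$ without which the limit would be washed out by oscillation; securing the uniform domination across this boundary is the second, subordinate difficulty.
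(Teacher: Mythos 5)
Your proposal is correct and follows essentially the same route as the paper's proof: rescale $\spn_0\to\spn_0(2\pi\bb)^{-1}$, extract the asymptotics of the Boltzmann weights (your formulas for $W$, $\oW$ and $S$ agree exactly with the ones the paper states, including the $-\max(|\spn_i|,|\spn_0|)\,\bb^{-2}$ decay and the $(2\theta-1)\Log\left(2\left|\cosh\spn_i-\cosh\spn_0\right|\right)$ term), observe exponential localization onto the two symmetric bands $|\spn_1|\le|\spn_0|\le|\spn_3|$, fold them together by evenness of \eqref{q3d1rho}, and conclude by dominated convergence. The only difference is in how a sub-step is justified: where the paper simply cites \eqref{algasym} together with the inversion relation \eqref{HGFidents}, you derive the boundary-of-strip asymptotics of $\oW$ from \eqref{HGFqcl} via the dilogarithm reflection identity, and your two key facts $\re\,\mathcal{L}(\pi\,|\,a,b)=0$ and $\im\,\mathcal{L}(\pi\,|\,a,b)=-2\pi\max(|a|,|b|)$ are correct and reproduce the paper's stated expansion.
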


\begin{proof}

First the change of integration variable $\spn_0\to\spn_0(2\pi\bb)^{-1}$ should be applied to the integral \eqref{q3d1int}. Using \eqref{algasym}, and the functional equations \eqref{HGFidents} for the hyperbolic gamma function, the relevant asymptotics of the Boltzmann weights \eqref{qq3d1} for $\bb\to0$ are
\begin{align}
\begin{split}
\Log W\left(\theta\bb\,\Big|\,\frac{\spn_i}{2\pi\bb},\frac{\spn_j}{2\pi\bb}\right)&=-2\theta\Log\left(2\left(\cosh(\spn_i)+\cosh(\spn_j)\right)\right)+O(\bb)\,, \\[0.2cm]
\Log \oW\left(\theta\bb\,\Big|\,\frac{\spn_i}{2\pi\bb},\frac{\spn_j}{2\pi\bb}\right)&=\frac{-\max(|\spn_i|,|\spn_j|)}{\bb^2}+(2\theta-1)\Log\left(2\left|\cosh(\spn_i)-\cosh(\spn_j)\right|\right)+O(\bb)\,.
\end{split}
\end{align}
For the given case of $|\spn_3|>|\spn_1|$, the left hand side of \eqref{Selberglim1} decays exponentially outside $\spn_0\in[-|\spn_3|,-|\spn_1|]$, and $\spn_0\in[|\spn_1|,|\spn_3|]$, as $\bb\to0^+$, and by evenness of \eqref{q3d1rho}, the contribution to the integral from both of the respective regions are the same.  The result \eqref{Selberglim1} then follows by dominated convergence.

\end{proof}

A similar limit may also be taken for the star-triangle relation \eqref{q3d0int} ($Q3_{(\delta=0)}$ case).

\begin{prop} \label{Selbergprop2}
Let the variables $\spn_i$, $\theta_1$, $\theta_3$, take values given in \eqref{hypvals}, and without loss of generality, choose $\spn_3>\spn_1$.  Then
\begin{align}
\label{Selberglim2}
\begin{split}
\lim_{\bb\to0^+}(4\pi\bb)\EXP^{(\spn_3-\spn_1)\bb^{-2}}I_{6,Q}(\theta_1\bb,\theta_3\bb\,|\,\spn_1(\pi\bb)^{-1},\spn_2(\pi\bb)^{-1},\spn_3(\pi\bb)^{-1})\phantom{\,,} \\[0.1cm]
=\int_{[\spn_1,\spn_3]}d\spn_0\,W_2(\theta_1\,|\,\spn_1,\spn_0)\,W_1(\theta_1+\theta_3\,|\,\spn_2,\spn_0)\,W_2(\theta_3\,|\,\spn_3,\spn_0)\,,
\end{split}
\end{align}
where $I_{6,Q}(\theta_1,\theta_3\,|\,\spn_1,\spn_2,\spn_3)$ is defined in \eqref{q3d0int}, and the Boltzmann weights on the right hand side are defined by
\begin{align}
\label{quan1d02}
\begin{split}
W_1(\theta\,|\,\spn_i,\spn_j)&=\left|\cosh\left(\spn_i-\spn_j\right)\right|^{-2\theta}\,, \\[0.2cm]
W_2(\theta\,|\,\spn_i,\spn_j)&=\left|\sinh\left(\spn_i-\spn_j\right)\right|^{2\theta-1}\,.
\end{split}
\end{align}
\end{prop}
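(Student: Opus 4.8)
The plan is to mirror the proof of Proposition~\ref{Selbergprop1} for the $Q3_{(\delta=1)}$ case, the only structural differences being that the $Q3_{(\delta=0)}$ Boltzmann weights \eqref{fadvol} depend solely on the difference $\spn_i-\spn_j$, and that the integrand $\rho_{6,Q}$ of \eqref{q3d0rho} carries no $S(\spn_0)$ factor. First I would apply the change of integration variable $\spn_0\to\spn_0(\pi\bb)^{-1}$ in \eqref{q3d0int}, which produces the Jacobian $(\pi\bb)^{-1}$ and places every Boltzmann weight argument in the scaled form $(\spn_i-\spn_0)(\pi\bb)^{-1}$ suited to the algebraic asymptotics \eqref{algasym}.

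Next I would extract the $\bb\to0^+$ asymptotics factor by factor. For $W(\theta\bb\,|\,\spn_i(\pi\bb)^{-1},\spn_j(\pi\bb)^{-1})$, a single application of \eqref{algasym} (with $x=2(\spn_i-\spn_j)$ and imaginary shifts $\pm\theta$) gives $-2\theta\,\Log(2\cosh(\spn_i-\spn_j))+O(\bb)$, reproducing $W_1$ of \eqref{quan1d02} up to an explicit power of $2$. For $\oW(\theta\,|\,\spn_i,\spn_j)=W(\eta-\theta\,|\,\spn_i,\spn_j)$ the crossing shift $\ii\eta\sim\ii\bb^{-1}/2$ is large, so \eqref{algasym} does not apply directly; instead I would keep $\oW$ in its ratio form $\Gamma_h(\spn_i-\spn_j+\ii(\eta-\theta);\bb)/\Gamma_h(\spn_i-\spn_j-\ii(\eta-\theta);\bb)$ and feed the large, crossing-shifted arguments into \eqref{HGFinflim}, whose Bernoulli polynomial \eqref{b22def} is quadratic (the inversion relation \eqref{HGFidents} and difference equations \eqref{HGFdif} serve to organise the combined-period part of the shift). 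Collecting powers of $\bb$ then yields a leading exponential $\EXP^{-|\spn_i-\spn_j|\bb^{-2}}$ together with the subleading algebraic factor $(2\theta-1)\Log(2|\sinh(\spn_i-\spn_j)|)$, reproducing $W_2$ of \eqref{quan1d02} up to a power of $2$.

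Assembling the integrand, the two $\oW$ factors contribute the combined exponential $\EXP^{-(|\spn_1-\spn_0|+|\spn_3-\spn_0|)\bb^{-2}}$. The decisive elementary observation is that on the interval $\spn_0\in[\spn_1,\spn_3]$ (recall $\spn_3>\spn_1$) this exponent collapses to the constant $-(\spn_3-\spn_1)\bb^{-2}$, cancelled exactly by the prefactor $\EXP^{(\spn_3-\spn_1)\bb^{-2}}$, whereas for $\spn_0$ outside $[\spn_1,\spn_3]$ one has $|\spn_1-\spn_0|+|\spn_3-\spn_0|>\spn_3-\spn_1$, so those contributions are exponentially suppressed as $\bb\to0^+$. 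In contrast to \eqref{Selberglim1}, the integrand is not even in $\spn_0$, so only this single interval survives, which accounts for the prefactor $4\pi\bb$ here rather than $8\pi\bb$ there. Tallying the explicit factors of $2$, namely $2^{-2(\theta_1+\theta_3)+(2\theta_1-1)+(2\theta_3-1)}=2^{-2}$, against the Jacobian $(\pi\bb)^{-1}$ and the prefactor $4\pi\bb$, leaves precisely $W_2(\theta_1\,|\,\spn_1,\spn_0)\,W_1(\theta_1+\theta_3\,|\,\spn_2,\spn_0)\,W_2(\theta_3\,|\,\spn_3,\spn_0)$, the integrand on the right-hand side of \eqref{Selberglim2}, and I would conclude by dominated convergence.

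The main obstacle I anticipate is justifying the passage to the limit, that is, producing a single $\bb$-independent integrable majorant for $(4\pi\bb)\EXP^{(\spn_3-\spn_1)\bb^{-2}}\rho_{6,Q}$ valid on all of $\mathbb{R}$. This demands uniform, not merely pointwise, control of the error terms in \eqref{algasym} and \eqref{HGFinflim}, and in particular a uniform estimate confirming that outside $[\spn_1,\spn_3]$ the genuine decay beats the prefactor for every sufficiently small $\bb$. A subtler analytic point hidden inside the $\oW$ asymptotics is that its crossing-shifted hyperbolic gammas live in the regime where both the real part and the imaginary shift of the argument grow like $\bb^{-1}$; \eqref{HGFinflim} alone captures only the leading exponential and the linear growth of the $\sinh$ term, so recovering the full $\Log|\sinh|$ requires the combined quasi-classical/rational analysis underlying \eqref{ratasymp} applied to the residual part. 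Finally, one must check that the endpoint singularities of $W_2$, whose exponents $2\theta_i-1$ may be negative, remain integrable, which holds because $\theta_1,\theta_3>0$ under \eqref{hypvals}.
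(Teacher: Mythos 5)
Your proposal is correct and follows essentially the same route as the paper, whose proof is simply the statement that the argument of Proposition~\ref{Selbergprop1} carries over: change of integration variable, asymptotics of the Boltzmann weights via \eqref{algasym} and the functional equations of $\Gamma_h$, exponential localisation of the integrand, and dominated convergence. You have in fact made the analogy more explicit than the paper does, correctly identifying the two genuine differences from the $Q3_{(\delta=1)}$ case --- the absence of the evenness argument (hence a single surviving interval $[\spn_1,\spn_3]$ and the prefactor $4\pi\bb$ in place of $8\pi\bb$) and the difference-only dependence of the weights \eqref{fadvol} --- and your bookkeeping of the powers of $2$ against the Jacobian and prefactor checks out.
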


The proof of Proposition \ref{Selbergprop2}, follows analogously to Proposition \ref{Selbergprop1}.

The following Selberg-type integral \cite{Selberg} is obtained following a simple change of variables after taking either the same limit \eqref{Selberglim1}, on both sides of \eqref{qq3d1} ($Q3_{(\delta=1)}$ case), or the same limit \eqref{Selberglim2}, on both sides of \eqref{qq3d0} ($Q3_{(\delta=0)}$ case).

\begin{cor}

For the values $\spn_2<\spn_1<\spn_3$,
\begin{align}
\label{qybeq1d0}
\begin{split}
\int_{[\spn_1,\spn_3]}d\spn_0\,\oW({\theta_1}\,|\,\spn_1,\spn_0)\,W({\theta_1+\theta_3}\,|\,\spn_2,\spn_0)\,\oW({\theta_3}\,|\,\spn_3,\spn_0)\phantom{\,.} \\
=R(\theta_1,\theta_3)\,W({\theta_1}\,|\,\spn_2,\spn_3)\,\oW({\theta_1+\theta_3}\,|\,\spn_1,\spn_3)\,W({\theta_3}\,|\,\spn_2,\spn_1)\,,
\end{split}
\end{align}
where the Boltzmann weights are defined in \eqref{quanq1d0}, and
\begin{align}
R(\theta_1,\theta_3)&=\frac{\Gamma(2\theta_1)\Gamma(2\theta_3)}{\Gamma(2(\theta_1+\theta_3))}\,.
\end{align}

\end{cor}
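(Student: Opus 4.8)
The plan is to take the algebraic limit $\bb\to0^+$ of the star-triangle relation \eqref{qq3d1}, which is an exact identity for every $\bb>0$, and to evaluate the two sides separately. Multiplying \eqref{qq3d1} through by the common factor $(8\pi\bb)\EXP^{|\spn_3|\bb^{-2}}$ appearing in \eqref{Selberglim1} and rescaling $\spn_i\to\spn_i(2\pi\bb)^{-1}$, $\theta_j\to\theta_j\bb$, the left-hand side converges to the $\cosh$-integral on the right of \eqref{Selberglim1} by Proposition \ref{Selbergprop1}. The right-hand side must be treated with the same asymptotic tools: the weight $W$, being a ratio of hyperbolic gamma functions, contributes only $O(1)$ via \eqref{algasym}, whereas $\oW$ carries the single divergent factor $\EXP^{-\max(|\spn_1|,|\spn_3|)\bb^{-2}}=\EXP^{-|\spn_3|\bb^{-2}}$ recorded in the proof of Proposition \ref{Selbergprop1}. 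The key observation is that this factor comes solely from $\oW(\theta_1+\theta_3\,|\,\spn_1,\spn_3)$ and cancels exactly against the prefactor $\EXP^{|\spn_3|\bb^{-2}}$, rendering the limit of the right-hand side finite.

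Equating the two limits gives the identity
\begin{equation*}
\begin{split}
\int_{[\spn_1,\spn_3]}d\spn_0\,S(\spn_0)\,&W_2(\theta_1\,|\,\spn_1,\spn_0)\,W_1(\theta_1+\theta_3\,|\,\spn_2,\spn_0)\,W_2(\theta_3\,|\,\spn_3,\spn_0) \\
&=R(\theta_1,\theta_3)\,W_1(\theta_1\,|\,\spn_2,\spn_3)\,W_2(\theta_1+\theta_3\,|\,\spn_1,\spn_3)\,W_1(\theta_3\,|\,\spn_2,\spn_1)\,,
\end{split}
\end{equation*}
with $W_1,W_2,S$ as in \eqref{quan1d01}; here the algebraic normalisation $R(\theta_1,\theta_3)$ is computed by feeding the three hyperbolic gamma functions of the hyperbolic $R$ into the rational asymptotics \eqref{ratasymp}, which collapse them to the Euler ratio $\Gamma(2\theta_1)\Gamma(2\theta_3)/\Gamma(2(\theta_1+\theta_3))$.

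The last step is the announced change of variables. Setting $x_i=\cosh(\spn_i)$ for $i=0,1,3$ and $x_2=-\cosh(\spn_2)$ turns $W_2(\theta\,|\,\spn_i,\spn_j)=|\cosh\spn_i-\cosh\spn_j|^{2\theta-1}$ into $|x_i-x_j|^{2\theta-1}$ and $W_1(\theta\,|\,\spn_i,\spn_j)=|\cosh\spn_i+\cosh\spn_j|^{-2\theta}$ into $|x_i-x_j|^{-2\theta}$; the sign on $x_2$ converts the sum of cosines into a difference in each $W_1$ factor, every one of which involves $\spn_2$, while each $W_2$ factor avoids $\spn_2$ and keeps its difference. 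The measure is absorbed through $S(\spn_0)\,d\spn_0=|\sinh\spn_0|\,d\spn_0=dx_0$, the range $[\spn_1,\spn_3]$ becomes $[x_1,x_3]$, and the sign choices give $x_2<x_1<x_3$. Relabelling $x\mapsto\spn$ reproduces \eqref{qybeq1d0} with the power-law weights \eqref{quanq1d0}, initially for the configurations reached by these substitutions ($x_1,x_3\ge1$, $x_2\le-1$); I would then extend to all $\spn_2<\spn_1<\spn_3$ using that both sides scale by $a^{-1}$ under the affine group $\spn\mapsto a\spn+b$, $a>0$ (equivalently by analytic continuation from an open set). The route through Proposition \ref{Selbergprop2} and \eqref{qq3d0} is entirely parallel, with the exponential substitution $x_i=\EXP^{2\spn_i}$ in place of $\cosh$. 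The main obstacle is the bookkeeping of the leading exponential order on the right-hand side: one must verify that $\EXP^{-|\spn_3|\bb^{-2}}$ is the only surviving divergence, so that its cancellation against the prefactor is clean, after which everything reduces to routine application of \eqref{algasym}, \eqref{ratasymp} and the substitution.
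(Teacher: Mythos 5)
Your proposal is correct and takes essentially the same route as the paper, which obtains the Corollary by applying the limit \eqref{Selberglim1} (or, in parallel, \eqref{Selberglim2}) to both sides of \eqref{qq3d1} (resp.\ \eqref{qq3d0}) and then performing the change of variables that the paper leaves implicit and you spell out as $x_i=\cosh(\spn_i)$, $x_2=-\cosh(\spn_2)$, with the measure absorbed through $S(\spn_0)\,d\spn_0=dx_0$. Your identification of $\EXP^{-|\spn_3|\bb^{-2}}$ from $\oW(\theta_1+\theta_3\,|\,\spn_1,\spn_3)$ as the sole divergence cancelling the prefactor, and the affine-covariance (or analytic-continuation) step extending the identity from the reached configurations to all $\spn_2<\spn_1<\spn_3$, are sound details that the paper glosses over.
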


The equation \eqref{qybeq1d0} does not quite have the desired form of the star-triangle relation \eqref{YBEsym}, since the integration is taken over $[\spn_1,\spn_3]$, rather than $\mathbb{R}$ (or some subset independent of $\spn_1,\spn_2,\spn_3$).  For the purposes here of deriving 3D-consistent quad equations from a quasi-classical expansion, this is not a problem, and the $Q1_{(\delta=0)}$ equation may be derived directly from \eqref{qybeq1d0}.  However at the quantum level this is problematic, since the star-triangle relation represents a partition function involving integration over the values of the central spin variable $\spn_0$, which cannot depend on the values of other spin variables. 

To obtain the desired integration over $\mathbb{R}$ a sum of three copies of \eqref{qybeq1d0} can be taken, each with different choices of the respective variables, to arrive at the star-triangle relation \eqref{qq1d0}.  First, the equation \eqref{qybeq1d0} directly gives the contribution to \eqref{qq1d0} over the integration region $\spn_0\in [\spn_1,\spn_3]$.  Second, the equation \eqref{qybeq1d0} with the substitution
\begin{align}
\spn_1\to \spn_2\,,\qquad \spn_2\to \spn_3\,,\qquad \spn_3\to \spn_1\,,\qquad \theta_1\to\theta_3\,,\qquad\theta_3\to \tfrac{1}{2}-\theta_1-\theta_3\,,
\end{align}
gives a contribution to \eqref{qq1d0} over the integration region $\spn_0\in [\spn_2,\spn_1]$.  The final equation uses a substitution to obtain the contribution to \eqref{qq1d0} over the integration region $\spn_0\in (-\infty,\spn_2]\cup [\spn_3,\infty)$.  The exact form of the substitution depends on whether the values $\spn_1,\spn_2,\spn_3$, are respectively greater or less than zero.  For example, for $\spn_2<\spn_1<0<\spn_3$, a substitution for the region $\spn_0\in (-\infty,\spn_2]\cup [\spn_3,\infty)$ is
\begin{align}
\spn\to \spn^{-1}\,,\quad \spn_1\to (\spn_2)^{-1}\,,\quad \spn_2\to (\spn_3)^{-1}\,,\quad \spn_3\to {\spn_1}^{-1}\,,\quad \theta_1\to \tfrac{1}{2}-\theta_1-\theta_3\,,\quad \theta_3\to\theta_1\,.
\end{align}
For other values, a uniform shift of the variables $\spn,\spn_1,\spn_2,\spn_3$, can be applied to \eqref{qybeq1d0} in order to make use of the above substitution (the equation \eqref{qybeq1d0} is invariant under such a shift).  While the above considerations are for the case of $\spn_2<\spn_1<\spn_3$, other cases follow from this case by simple changes of variables.  For example setting $\spn_2\leftrightarrow \spn_1$, $\theta_1\to\frac{1}{2}-\theta_1-\theta_3$, results in the same equation \eqref{qq1d0} for $\spn_1<\spn_2<\spn_3$, while setting $\spn_1\to \spn_3, \spn_2\to \spn_1, \spn_3\to \spn_2$, $\theta_1\to\theta_3, \theta_3\to \frac{1}{2}-\theta_1-\theta_3$, results in the same equation \eqref{qq1d0} for $\spn_1<\spn_3<\spn_2$, and other cases are similar.

\tocless\subsubsection{Classical integrable equations}

The quasi-classical expansion for this case involves setting
\begin{align}
\label{algqclq1}
\im(\theta_1)=\alpha_1\hbar^{-1}\,,\qquad\im(\theta_3)=\alpha_3\hbar^{-1}\,,
\end{align}
as well as relabelling $(\spn_0,\spn_1,\spn_2,\spn_3)\to(x_0,x_1,x_2,x_3)$, and then considering the quasi-classical expansion for $\hbar\rightarrow0^+$.  In this limit the classical variables are 
\begin{align}
x_1,x_2,x_3\in\mathbb{R}\,,\qquad\alpha_1,\alpha_3\in\mathbb{R}\,.
\end{align}

The Lagrangian function for this case is defined by
\begin{align}
\label{lagq1d0}
\mathcal{L}(\alpha\,|\,x_i,x_j)=\alpha\Log\left|x_i-x_j\right|-\frac{\alpha}{2}\Log\left|\alpha\right|\,.%,\quad\overline{\mathcal{L}}(\alpha\,|\,x_i,x_j)=\mathcal{L}(x_i,x_j;-\alpha)\,,
\end{align}
and satisfies
\begin{align}
\lag(\alpha\,|\,x_i,x_j)=\lag(\alpha\,|\,x_j,x_i)\,,\qquad\lag(\alpha\,|\,x_i,x_j)=-\lag(-\alpha\,|\,x_i,x_j)\,.
\end{align}

The leading order $O(\hbar^{-1})$ quasi-classical expansion \eqref{algqclq1} of the integrand of \eqref{qq1d0} is
\begin{align}
\label{q1d0qcl}
\begin{split}
\Log \left(\oW(\alpha_1\hbar^{-1}\,|\,x_1,x_0)\,W((\alpha_1+\alpha_3)\hbar^{-1}\,|\,x_2,x_0)\,\oW(\alpha_3\hbar^{-1}\,|\,x_3,x_0)\right)+\phantom{\,,} \\[0.1cm]
\frac{\alpha_1\Log|\alpha_1|+\alpha_3\Log|\alpha_3|-(\alpha_1+\alpha_3)\Log|\alpha_1+\alpha_3|}{2\hbar}=\hbar^{-1}\astr{x_0}+O(1)\,,
\end{split}
\end{align}
where
\begin{align}
\astr{x_0}=\ol(\alpha_1\,|\,x_1,x_0)+\lag(\alpha_1+\alpha_3\,|\,x_2,x_0)+\ol(\alpha_3\,|\,x_3,x_0)\,,
\end{align}
and
\begin{align}
\ol(\alpha\,|\,x_i,x_j)=\lag(-\alpha\,|\,x_i,x_j)\,.
\end{align}

The saddle point three-leg equation \eqref{3legsym} is then given by
\begin{align}
\label{Q1d03leg}
\left.\frac{\partial \astr{x}}{\partial x}\right|_{x=x_0}\hspace{-0.4cm}=\ovphi(\alpha\,|\,x_1,x_0)+\varphi(\alpha_1+\alpha_3\,|\,x_2,x_0)+\ovphi(\alpha_3\,|\,x_3,x_0)=0\,,
\end{align}
where $\varphi(\alpha\,|\,x_i,x_j)$, is defined by
\begin{align}
\varphi(\alpha\,|\,x_i,x_j)=\frac{\alpha}{x_i-x_j}\,,
\end{align}
and
\begin{align}
\ovphi(\alpha\,|\,x_i,x_j)=\vphi(-\alpha\,|\,x_i,x_j)\,.
\end{align}

The equation \eqref{Q1d03leg} is a three-leg form of $Q1_{(\delta=0)}$, arising as the equation for the saddle point of the star-triangle relation \eqref{qq1d0} in the limit \eqref{algqclq1}.  This was also previously obtained in \cite{Bazhanov:2016ajm,Kels:2017fyt}.  With the following change of variables
\begin{align}
x=x_0,\qquad u=x_1,\qquad y=x_2,\qquad v=x_3,\qquad \alpha=-\alpha_1,\qquad\beta=-(\alpha_1+\alpha_3),
\end{align}
the three-leg equation \eqref{Q1d03leg} may be written in the form
\begin{align}
Q(x,u,y,v;\alpha,\beta)=0\,,
\end{align}
where 
\begin{align}
\label{Q1d0}
Q(x,u,y,v;\alpha,\beta)=\alpha(x-y)(u-v)-\beta(x-u)(y-v)\,.
\end{align}
This is identified as $Q1_{(\delta=0)}$ \cite{ABS}, an affine-linear quad equation that satisfies the 3D-consistency condition exactly as defined in Section \ref{sec:overview}.

\subsection{\texorpdfstring{$H1_{(\varepsilon=1)}$}{H1(epsilon=1)} case}

\tocless\subsubsection{Star-triangle relation}

Let the spins $\spn_1,\spn_2,\spn_3$, and spectral parameters $\theta_1$, $\theta_3$, take values given in \eqref{algvals}.  The star-triangle relation is given in this case by
\begin{align}
\label{qh1e0a}
\begin{split}
\ds\int_\mathbb{R} d\spn_0\,V({\theta_1}\,|\,\spn_1,\spn_0)\,V({\theta_1+\theta_3}\,|\,\spn_2,\spn_0)\,\oW({\theta_3}\,|\,\spn_0,\spn_3)\phantom{\quad\,,}  \\
\ds=R(\theta_3)\,V({\theta_1}\,|\,\spn_2,\spn_3)\,V({\theta_1+\theta_3	}\,|\,\spn_1,\spn_3)\,W({\theta_3}\,|\,\spn_2,\spn_1)\,,
\end{split}
\end{align}
where the Boltzmann weights are
\begin{align}
\label{quanh1e0a}
\begin{split}
V(\theta\,|\,\spn_i,\spn_j)&\ds=\EXP^{\ii \spn_i\spn_j}\,, \\
\oW(\theta\,|\,\spn_i,\spn_j)&\ds=\left|2\sinh\frac{\spn_i-\spn_j}{2}\right|^{2\theta-1}, \\
W(\theta\,|\,\spn_i,\spn_j)&\ds=\frac{\Gamma(\frac{1}{2}+\ii(\spn_i+\spn_j)-\theta)\,\Gamma(\frac{1}{2}-\ii(\spn_i+\spn_j)-\theta)}{\Gamma(\frac{1}{2}+\ii(\spn_i+\spn_j))\,\Gamma(\frac{1}{2}-\ii(\spn_i+\spn_j))\,\Gamma(\frac{1}{2}+\theta)\,\Gamma(\frac{1}{2}-\theta)}\,, 
\end{split}
\end{align}
and
\begin{align}
\label{quanh1e0ar}
R(\theta)=2\pi\,\Gamma(2\theta)\,.
\end{align}
Each of the Boltzmann weights in \eqref{quanh1e0a} are symmetric upon the exchange of spins $\spn_i\leftrightarrow \spn_j$.

The integrand is
\begin{align}
O(\EXP^{\,|\spn_0-\spn_3|(\theta_3-\frac{1}{2})}\EXP^{\ii \spn_0(\spn_1+\spn_2))})\,,\qquad \spn_0\to\pm\infty\,,
\end{align}
and particularly \eqref{qh1e0a} is absolutely convergent for the values \eqref{algvals}.

The star-triangle relation \eqref{qh1e0a}, is related to the following limit of \eqref{qh3d02} ($H3_{(\delta=0;\varepsilon=0)}$ case)

\begin{prop} \label{h1e0aprop}

Let the variables $\spn_1,\spn_2,\spn_3$, $\theta_1$, $\theta_3$, take values given in \eqref{hypvals}.  Then
\begin{align}
\label{h1e0alim}
\begin{split}
\lim_{\bb\to0^+}(2\pi\bb)\EXP^{\spn_3(2\bb^2)^{-1}}I_{2}(\theta_1\bb,\theta_3\bb\,|\,\spn_1\bb,\spn_2\bb+\ii(2\bb)^{-1},\spn_3(2\pi\bb)^{-1})\phantom{\,,} \\[0.1cm]
=\int_{(-\infty,\spn_3]} d\spn_0\,V({\theta_1}\,|\,-\spn_1,\spn_0)\,V({\theta_1+\theta_3}\,|\,\spn_2,\spn_0)\,\oW({\theta_3}\,|\,\spn_0,\spn_3)\,,
\end{split}
\end{align}
where $I_{2}(\theta_1,\theta_3\,|\,\spn_1,\spn_2,\spn_3)$ is defined in \eqref{h3d02int} ($H3_{(\delta=0;\varepsilon=0)}$ case), and the Boltzmann weights on the right hand side are defined in \eqref{quanh1e0a}.
\end{prop}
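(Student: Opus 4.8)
The plan is to follow the same scheme as the algebraic limits in Propositions~\ref{Selbergprop1} and~\ref{Selbergprop2}: rescale the integration variable, split off the two elementary exponential weights, reduce the single hyperbolic-gamma weight to the algebraic regime, and pass to the limit by dominated convergence. First I would substitute $\spn\to\spn_0(2\pi\bb)^{-1}$ in the integral $I_2$ of \eqref{h3d02int}, so that the measure $d\spn=(2\pi\bb)^{-1}d\spn_0$ is cancelled by the factor $(2\pi\bb)$ in \eqref{h1e0alim}. Under this scaling the two weights $\oV$, $V$ are elementary: with $V(\theta\,|\,\spn_i,\spn_j)=\EXP^{2\pi\ii\spn_i\spn_j}$ one finds $\oV(\theta_1\bb\,|\,\spn_1\bb,\spn_0(2\pi\bb)^{-1})=\EXP^{-\ii\spn_1\spn_0}$, which is exactly $V(\theta_1\,|\,-\spn_1,\spn_0)$ of \eqref{quanh1e0a}, together with
\begin{equation}
V((\theta_1+\theta_3)\bb\,|\,\spn_2\bb+\ii(2\bb)^{-1},\spn_0(2\pi\bb)^{-1})=\EXP^{\ii\spn_2\spn_0}\,\EXP^{-\spn_0(2\bb^2)^{-1}}.
\end{equation}
The crucial observation is that the imaginary shift $\ii(2\bb)^{-1}$ of $\spn_2$ is precisely what generates the real factor $\EXP^{-\spn_0(2\bb^2)^{-1}}$; combined with the prefactor $\EXP^{\spn_3(2\bb^2)^{-1}}$ it gives $\EXP^{-(\spn_0-\spn_3)(2\bb^2)^{-1}}$.

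Next I would determine the $\bb\to0^+$ asymptotics of the remaining Faddeev--Volkov weight $\oW(\theta_3\bb\,|\,\spn_0(2\pi\bb)^{-1},\spn_3(2\pi\bb)^{-1})$, which is the same weight appearing in \eqref{fadvol}, by the method used in the proofs of Propositions~\ref{Selbergprop1} and~\ref{Selbergprop2}. Writing it as a ratio of hyperbolic gamma functions in the single variable $(\spn_0-\spn_3)(2\pi\bb)^{-1}$ and using the inversion and difference relations \eqref{HGFidents}, \eqref{HGFdif} to strip off the large imaginary shift $\ii\eta\sim\ii(2\bb)^{-1}$, the argument is brought into the regime where \eqref{algasym} (equivalently the quasi-classical expansion \eqref{HGFqcl}) applies. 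The $\cosh$ factors accumulated from \eqref{HGFdif} produce the algebraic weight $\bigl|2\sinh\tfrac{\spn_0-\spn_3}{2}\bigr|^{2\theta_3-1}=\oW(\theta_3\,|\,\spn_0,\spn_3)$ of \eqref{quanh1e0a}, together with an exponential factor $\EXP^{(\spn_0-\spn_3)(2\bb^2)^{-1}}$ on the region $\spn_0<\spn_3$.

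Multiplying by the surviving factor $\EXP^{-(\spn_0-\spn_3)(2\bb^2)^{-1}}$ from the first step, the two exponentials cancel for $\spn_0<\spn_3$, leaving the finite algebraic integrand, whereas for $\spn_0>\spn_3$ they reinforce into an exponentially suppressed $\EXP^{-(\spn_0-\spn_3)\bb^{-2}}$. This competition of exponentials is exactly the mechanism that collapses the integral onto the half-line $(-\infty,\spn_3]$ and supplies the restricted domain of integration on the right-hand side of \eqref{h1e0alim}. Having thus identified the pointwise limit of the full integrand with $\EXP^{-\ii\spn_1\spn_0}\EXP^{\ii\spn_2\spn_0}\bigl|2\sinh\tfrac{\spn_0-\spn_3}{2}\bigr|^{2\theta_3-1}$ on $\spn_0\le\spn_3$, I would conclude by dominated convergence exactly as in Proposition~\ref{q2prop}, the asymptotics above together with \eqref{h3d02xinf} providing a $\bb$-uniform integrable majorant on $\mathbb{R}$; the star-triangle relation \eqref{qh1e0a} then follows by applying the same limit to the right-hand side of \eqref{qh3d02}.

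I expect the main obstacle to be precisely the asymptotic analysis of $\oW$ in the third paragraph. The relevant argument lies in the crossover regime between the hyperbolic quasi-classical expansion \eqref{HGFqcl} and the rational limit \eqref{ratasymp}, with both its real and imaginary parts of order $\bb^{-1}$, so that none of the standard estimates applies verbatim; the point approaches the branch locus of $\dilog$, and the $\cosh$/exponential bookkeeping together with the branch choices must be tracked carefully to recover the clean $\sinh$ power and the correct sign of the exponent. A secondary difficulty is dominating the integrand uniformly in the boundary layer near $\spn_0=\spn_3$, where the two competing exponentials are of the same order and the naive majorant degenerates.
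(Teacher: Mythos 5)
Your proposal follows essentially the same route as the paper's proof: the substitution $\spn\to\spn_0(2\pi\bb)^{-1}$, the asymptotic formula \eqref{algasym} showing that the integrand decays exponentially fast outside $\spn_0\in(-\infty,\spn_3]$, and a final appeal to dominated convergence. Your explicit exponential bookkeeping --- the factor $\EXP^{-(\spn_0-\spn_3)(2\bb^2)^{-1}}$ produced by the imaginary shift of $\spn_2$ together with the prefactor, cancelling against the $O(\bb^{-2})$ term in the $\oW$ asymptotics for $\spn_0<\spn_3$ and reinforcing it into decay for $\spn_0>\spn_3$ --- correctly fills in the mechanism that the paper's terse proof leaves implicit.
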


\begin{proof}
Following a change of variables $\spn_0\to\spn_0(2\pi\bb)^{-1}$, with the use of the asymptotic formula \eqref{algasym} it is seen that the left hand side of \eqref{h1e0alim} decays exponentially fast outside $\spn_0\in(-\infty,\spn_3]$.  The result \eqref{h1e0alim} then follows by dominated convergence.
\end{proof}

The following equation is obtained by taking the same limit \eqref{h1e0alim}, on both sides of \eqref{qh3d02}.

\begin{cor}

\begin{align}
\label{betafunca}
\begin{split}
\ds\int_{(-\infty,\spn_3]} d\spn_0\,V({\theta_1}\,|\,-\spn_1,\spn_0)\,V({\theta_1+\theta_3}\,|\,\spn_2,\spn_0)\,\oW({\theta_3}\,|\,\spn_0,\spn_3)\phantom{\,,}  \\[0.1cm]
\ds=R(\theta_3)\,V({\theta_1}\,|\,\spn_2,\spn_3)\,V({\theta_1+\theta_3	}\,|\,-\spn_1,\spn_3)\,W({\theta_3}\,|\,\spn_2,\spn_1)\,,
\end{split}
\end{align}
where $\oW(\theta\,|\,\spn_i,\spn_j)$, and $V(\theta\,|\,\spn_i,\spn_j)$, are defined in \eqref{quanh1e0a}, and
\begin{align}
W(\theta\,|\,\spn_i,\spn_j)&\ds=\frac{\Gamma(\frac{1}{2}+\ii(\spn_i-\spn_j)-\theta)}{\Gamma(\frac{1}{2}+\ii(\spn_i-\spn_j)+\theta)}\,,
\end{align}
\begin{align}
R(\theta)=\Gamma(2\theta)\,.
\end{align}

\end{cor}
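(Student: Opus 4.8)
The plan is to apply the limit of Proposition \ref{h1e0aprop} to \emph{both} sides of the star-triangle relation \eqref{qh3d02}. Since \eqref{qh3d02} is an identity valid throughout the relevant parameter range, and the limit of its left-hand side (the integral $I_2$) is precisely the integral on the left of \eqref{betafunca} by Proposition \ref{h1e0aprop}, it remains only to evaluate the same limit of the right-hand side of \eqref{qh3d02}, i.e.\ of
\begin{align*}
(2\pi\bb)\,\EXP^{\spn_3(2\bb^2)^{-1}}\,R(\theta_3\bb)\,V(\theta_1\bb\,|\,\spn_2\bb+\ii(2\bb)^{-1},\spn_3(2\pi\bb)^{-1})\,\oV((\theta_1+\theta_3)\bb\,|\,\spn_1\bb,\spn_3(2\pi\bb)^{-1})\,W(\theta_3\bb\,|\,\spn_2\bb+\ii(2\bb)^{-1},\spn_1\bb),
\end{align*}
with the Boltzmann weights of the $H3_{(\delta=0;\,\varepsilon=0)}$ case in \eqref{h3d02quan}. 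Unlike the left-hand side, this is an explicit product with no integration, so the limit is computed pointwise and no dominated-convergence argument is required.

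First I would treat the Gaussian weights $V$ and $\oV$ directly. Inserting the scalings into $V(\theta\,|\,\spn_i,\spn_j)=\EXP^{2\pi\ii\spn_i\spn_j}$ gives
\begin{align*}
V(\theta_1\bb\,|\,\spn_2\bb+\ii(2\bb)^{-1},\spn_3(2\pi\bb)^{-1})=\EXP^{\ii\spn_2\spn_3}\,\EXP^{-\spn_3(2\bb^2)^{-1}},\qquad \oV((\theta_1+\theta_3)\bb\,|\,\spn_1\bb,\spn_3(2\pi\bb)^{-1})=\EXP^{-\ii\spn_1\spn_3}.
\end{align*}
Thus the divergent factor $\EXP^{-\spn_3(2\bb^2)^{-1}}$ exactly cancels the prefactor $\EXP^{\spn_3(2\bb^2)^{-1}}$, and the surviving exponentials reproduce the algebraic weights $V(\theta_1\,|\,\spn_2,\spn_3)$ and $V(\theta_1+\theta_3\,|\,-\spn_1,\spn_3)$ of \eqref{quanh1e0a}, the inverse Gaussian $\oV$ accounting for the reflection $\spn_1\mapsto-\spn_1$. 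This is the mechanism behind the $-\spn_1$ arguments on both sides of \eqref{betafunca}.

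Next I would expand the remaining factors, the normalization $R(\theta_3\bb)=\Gamma_h(\ii(\eta-2\theta_3\bb);\bb)$ and the weight $W(\theta_3\bb\,|\,\spn_2\bb+\ii(2\bb)^{-1},\spn_1\bb)$, using the hyperbolic-to-rational asymptotic formula \eqref{ratasymp}. Each $\Gamma_h$ here has an argument of the form $z\bb+\ii(2\bb)^{-1}$ (the shift $\ii(2\bb)^{-1}$ coming from $\ii\eta$ in $R$ and from the shifted spin in $W$), so \eqref{ratasymp} applies with the $+$ sign and produces Euler gamma functions $\Gamma(\tfrac12+\ii z)$ together with divergent prefactors that are powers of $2(\pi\bb^2)^{-1}$ and of $\bb$. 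The finite parts assemble into $\Gamma(2\theta_3)$ from $R$ and into the ratio $\Gamma(\tfrac12+\ii(\spn_2-\spn_1)-\theta_3)/\Gamma(\tfrac12+\ii(\spn_2-\spn_1)+\theta_3)=W(\theta_3\,|\,\spn_2,\spn_1)$ of \eqref{quanh1e0a}.

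The main obstacle, and essentially the only real content, is the bookkeeping of these divergent factors: I would verify that the powers of $2(\pi\bb^2)^{-1}$ and $\bb$ arising from \eqref{ratasymp} in $R$ and in $W$ combine with the remaining prefactor $2\pi\bb$ into a finite nonzero constant, and that this constant is consistent with the stated normalization $R(\theta)=\Gamma(2\theta)$. Combining the cancelled divergent exponential, the matched Gaussian and gamma-function parts, and this finite constant then yields exactly the right-hand side of \eqref{betafunca}, completing the derivation.
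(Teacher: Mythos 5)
Your proposal is correct and takes essentially the same approach as the paper, whose entire proof is the one-line statement that the corollary follows ``by taking the same limit \eqref{h1e0alim} on both sides of \eqref{qh3d02}'': your argument simply makes explicit what that line leaves implicit, namely the pointwise Gaussian computation giving the cancellation of $\EXP^{\mp\spn_3(2\bb^2)^{-1}}$ and the $-\spn_1$ reflection, and the application of \eqref{ratasymp} with the $+$ sign to $R(\theta_3\bb)$ and $W(\theta_3\bb\,|\,\spn_2\bb+\ii(2\bb)^{-1},\spn_1\bb)$, whose finite parts correctly assemble into $\Gamma(2\theta_3)$ and $W(\theta_3\,|\,\spn_2,\spn_1)$. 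Your remark that the right-hand side involves no integration, so only a pointwise limit (and no dominated-convergence argument) is required, is also exactly the right observation.
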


The star-triangle relation \eqref{qh1e0a} is equivalent to a sum of two copies of \eqref{betafunca}, with different choices of variables for the two respective cases.  First, the equation \eqref{betafunca} directly gives the contribution to \eqref{qh1e0a} over the integration region $\spn_0\in (-\infty,\spn_3]$.  The remaining contribution to $\spn_0\in [\spn_3,\infty)$ comes from \eqref{betafunca} with the change of variables
\begin{align}
\spn_0\to-\spn_0\,,\qquad \spn_1\to-\spn_1\,,\qquad \spn_2\to-\spn_2\,,\qquad \spn_3\to-\spn_3\,.
\end{align}

Similarly, the star-triangle relation \eqref{qh1e0a}, is related to the following limit of \eqref{qh3d1} ($H3_{(\delta=1;\varepsilon=1)}$ case)

\begin{prop} \label{h1e0aprop2}
For the values \eqref{hypvals},
\begin{align}
\label{h1e0alim2}
\begin{split}
\lim_{\bb\to0^+}(2\pi\bb)\EXP^{|\spn_3|\bb^{-2}} I_{10}(\theta_1\bb,\theta_3\bb\,|\,\spn_1\bb,\spn_2\bb+\ii(2\bb)^{-1},\spn_3(2\pi\bb)^{-1})\phantom{\,,} \\
=\int_{[0,\,|\spn_3|]} d\spn_0\,S(\spn_0)\,\oV({\theta_1}\,|\,\spn_1,\spn_0)\,V({\theta_1+\theta_3}\,|\,\spn_2,\spn_0)\,\oW({\theta_3}\,|\,\spn_0,\spn_3)\,,
\end{split}
\end{align}
where $I_{10}(\theta_1,\theta_3\,|\,\spn_1,\spn_2,\spn_3)$ is defined in \eqref{h3d1int} ($H3_{(\delta=1;\varepsilon=1)}$ case), and the Boltzmann weights on the right hand side are 
\begin{align}
\begin{split}
V(\theta\,|\,\spn_i,\spn_j)&\ds=\left|2\sinh(\tfrac{\spn_j}{2})\right|^{2(\ii \spn_i-\theta)}\,, \\
\oV(\theta\,|\,\spn_i,\spn_j)&\ds=\left|2\sinh(\tfrac{\spn_j}{2})\right|^{2(\theta-\ii \spn_i)-1}\,, \\
\oW(\theta\,|\,\spn_i,\spn_j)&\ds=\left|2(\cosh(\spn_i)-\cosh(\spn_j))\right|^{2\theta-1}\,, 
\end{split}
\end{align}
and
\begin{align}
S(\spn)=|\sinh(\spn)|\,.
\end{align}

\end{prop}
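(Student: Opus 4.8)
The plan is to follow the scheme of Proposition \ref{Selbergprop1}: rescale the integration variable in \eqref{h3d1int}, extract the $\bb\to0^+$ asymptotics of each Boltzmann weight \eqref{h3d1quan}, show that the factors exponential in $\bb^{-2}$ confine the rescaled integrand to $\spn_0\in[0,|\spn_3|]$, and conclude by dominated convergence. First I would substitute $\spn\to\spn_0(2\pi\bb)^{-1}$, so that the integration variable is scaled like $\spn_3$, while $\spn_1,\spn_2$ remain of order $\bb$ (with $\spn_2$ carrying the shift $\ii(2\bb)^{-1}$). Since the weights depend on $\spn$ only through $\spn_1\pm\spn$, $\spn_2\pm\spn$, $\spn_3\pm\spn$ and $S$ in \eqref{sh3d1quan} is even, the integrand $\rho_{10}$ is even in $\spn$, so the integral over $\mathbb{R}$ is twice that over $[0,\infty)$; the factor $2$ and the Jacobian $(2\pi\bb)^{-1}$ are absorbed into the prefactor on the left of \eqref{h1e0alim2}.

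Next I would record three kinds of asymptotics. For the ratio weight $\oW(\theta_3\,|\,\spn_3,\spn)$ both arguments scale as $(2\pi\bb)^{-1}$ and the computation is identical to the one in the proof of Proposition \ref{Selbergprop1}, giving
\[\Log\oW\!\left(\theta_3\bb\,\Big|\,\tfrac{\spn_3}{2\pi\bb},\tfrac{\spn_0}{2\pi\bb}\right)=-\frac{\max(|\spn_3|,|\spn_0|)}{\bb^2}+(2\theta_3-1)\Log\!\big(2|\cosh\spn_3-\cosh\spn_0|\big)+O(\bb),\]
whose finite part is the stated $\oW(\theta_3\,|\,\spn_0,\spn_3)$. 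The factor $S(\spn_0(2\pi\bb)^{-1})=2\sinh(\spn_0)\sinh(\spn_0\bb^{-2})$ contributes the envelope $\EXP^{|\spn_0|\bb^{-2}}$ and finite part $|\sinh\spn_0|$.

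The essential new point concerns the product weights $\oV$ and $V$. Because $\spn_2$ is shifted by $\ii(2\bb)^{-1}$ and $\ii\eta=\ii(2\bb)^{-1}+\ii\bb/2$, the arguments of their hyperbolic gamma functions take the form $\tfrac{\spn_0\pm\ii\pi}{2\pi\bb}+\bb\,c$ with $c$ complex and $O(1)$. Applying the $\bb\to0^+$ asymptotics \eqref{algasym} extended to such complex-shifted arguments (together with the inversion relation \eqref{HGFidents}), and using $2\cosh\!\big(\tfrac{\spn_0+\ii\pi}{2}\big)=2\ii\sinh(\spn_0/2)$, produces the finite factors $|2\sinh(\spn_0/2)|$ raised to the complex powers $2(\theta_1-\ii\spn_1)-1$ for $\oV$ and $2(\ii\spn_2-(\theta_1+\theta_3))$ for $V$, i.e.\ exactly the weights in the statement, together with a real envelope contributing $-|\spn_0|\bb^{-2}$.

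Finally I would carry out the exponential bookkeeping. On $\spn_0>0$ the total coefficient of $\bb^{-2}$ is $+\spn_0$ (from $S$), $-\spn_0$ (from $\oV$ and $V$ together), $-\max(|\spn_3|,\spn_0)$ (from $\oW$), and $+|\spn_3|$ (from the prefactor $\EXP^{|\spn_3|\bb^{-2}}$); this vanishes for $\spn_0\in[0,|\spn_3|]$ and is strictly negative for $\spn_0>|\spn_3|$. Hence the rescaled integrand is uniformly bounded and decays exponentially in $\bb^{-2}$ off $[0,|\spn_3|]$, and \eqref{h1e0alim2} follows by dominated convergence. The hard part is precisely this envelope accounting for $\oV$ and $V$: unlike the ratio weights, whose $\bb^{-2}$ envelopes cancel automatically, the product weights retain a nontrivial $O(\bb^{-2})$ real part, and one must verify---through careful branch analysis of \eqref{HGFinflim} and \eqref{algasym} at the arguments $\tfrac{\spn_0\pm\ii\pi}{2\pi\bb}$---that their combined contribution is exactly $-|\spn_0|\bb^{-2}$, which is what balances $S$ and localises the integral to $[0,|\spn_3|]$.
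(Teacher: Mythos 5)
Your proof follows the paper's route exactly: the paper disposes of this proposition in a single line (``analogous to Proposition \ref{h1e0aprop}''), i.e.\ rescale the integration variable by $(2\pi\bb)^{-1}$, extract the boundary asymptotics of the hyperbolic gamma function for each weight, observe that the $\bb^{-2}$ envelopes localise the integral to $[0,|\spn_3|]$, and conclude by dominated convergence --- which is precisely your argument, with the $\oW$ and $S$ asymptotics correctly imported from the proof of Proposition \ref{Selbergprop1} and the genuinely delicate point (the $O(\bb^{-2})$ real envelope of the product weights $\oV\,V$, whose arguments sit within $O(\bb)$ of the analyticity-strip boundary $\mathrm{Im}=\eta$, where naive continuation of \eqref{HGFqcl} or \eqref{HGFinflim} fails) rightly identified and flagged. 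One constant-level slip: since the prefactor $(2\pi\bb)$ exactly cancels the Jacobian of $\spn\to\spn_0(2\pi\bb)^{-1}$, the factor $2$ from evenness is not ``absorbed into the prefactor'' but must instead be cancelled by powers of $2$ and phase constants in the finite parts of $\oV\,V$ --- just as in Proposition \ref{Selbergprop1}, where the net factor $4$ left over from the prefactor, Jacobian and symmetric regions is cancelled by the $2^{-2}$ hidden in the weights' finite parts $\bigl(2(\cosh\spn_i\pm\cosh\spn_j)\bigr)^{\text{power}}$ versus the stated $|\cosh\spn_i\pm\cosh\spn_j|^{\text{power}}$ --- a bookkeeping point at the same level of detail the paper itself glosses over.
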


The steps of the proof of Proposition \ref{h1e0aprop2}, are analogous to the respective steps for Proposition \ref{h1e0aprop}. Then taking the same limit \eqref{h1e0alim2}, on both sides of \eqref{qh3d1}, results in \eqref{betafunca} (after a straightforward change of variables).

Finally, note that the equation \eqref{betafunca} is equivalent to the standard Euler beta function.  Indeed, after a change of variables $\EXP^{\spn_0/2}= x$, $\EXP^{\spn_3/2}= x_3$, $\spn_1=x_1$, $\spn_2=x_2$, \eqref{betafunca} may be written as the following integral identity\footnote{Note that \eqref{betafunc0} is also obtained from the $\spn_1=0$, $\spn_2\to\infty$ limit of the star-triangle relation \eqref{qq1d0} (for the $Q1_{(\delta=0)}$ case), with $\theta_1\to\frac{\ii(\spn_1+\spn_2)-\theta_3+1}{2}$.}
\begin{align}
\label{betafunc0}
\int_{[0,x_3]}dx\,\rho_a(\theta_3\,|\,x_1,x_2,x_3;x)=x_3^{2\ii(x_1+x_2)}\,\Gamma(2\theta_3)\frac{\Gamma(\frac{1}{2}+\ii(x_1+x_2)-\theta_3)}{\Gamma(\frac{1}{2}+\ii(x_1+x_2)+\theta_3)}\,,
\end{align}
where
\begin{align}
\label{h1e0arho}
\rho_a(\theta_3\,|\,x_1,x_2,x_3;x)=x^{2\ii(x_1+x_2)}(xx_3)^{1-2\theta_3}\,|x^2-x^2_3|^{2\theta_3-1}\,.
\end{align}
The variable $x_3$ is eliminated with the substitution $x\to x_3\sqrt{x}$, and \eqref{betafunc0} gets the form of the standard Euler beta function
\begin{align}
\label{EBF}
\int_{[0,1]}dx\, x^{a-1}(1-x)^{b-1}=\frac{\Gamma(a)\,\Gamma(b)}{\Gamma(a+b)}\,,\qquad \re(a)\,,\re(b)>0\,,
\end{align}
where $a=\ii(x_1+x_2)-\theta_3+\frac{1}{2}$, and $b=2\theta_3$.  Consequently the star-triangle relation \eqref{qh1e0a} is equivalent to a sum of two copies of the beta function \eqref{betafunc0}, with one copy having spin variables of the opposite sign, with respect to the other copy.

\tocless\subsubsection{Classical integrable equations}

In this limit the classical variables $x_1,x_2,x_3$, $\alpha_1,\alpha_3$, are
\begin{align}
x_1,x_2,x_3\in\mathbb{R}\,,\qquad \alpha_1,\alpha_3\in\mathbb{R}\,.
\end{align}

The Lagrangian functions for this case are defined by
\begin{align}
\label{lagh1e0a}
\begin{split}
\ds\Lambda(\alpha\,|\,x_i,x_j)&=\ii x_ix_j\,, \\
%\lag(\alpha\,|\,x_i,x_j)&=\ii\left(\gamma(x_i+x_j-\alpha)+\gamma(-x_i-x_j-\alpha)-\gamma(x_i+x_j)-\gamma(-x_i-x_j)-\gamma(\alpha)-\gamma(-\alpha)\right), \\
\ds{\ol}(\alpha\,|\,x_i,x_j)&=2\ii\alpha\Log\left|\sinh\left(\frac{x_i-x_j}{2}\right)\right|.%+\alpha\Log(x_ix_j)-\ii\gamma(2\alpha) \,.
\end{split}
\end{align}
%where $\gamma(z)$ is defined in \eqref{gammadef}.

The Lagrangians are each symmetric upon the exchange of spins $x_i\leftrightarrow x_j$, and also satisfy
\begin{align}
%\lag(\alpha\,|\,x_i,x_j)=-\lag(-\alpha\,|\,x_i,x_j)\,,\qquad\
{\ol}(\alpha\,|\,x_i,x_j)=-{\ol}(-\alpha\,|\,x_i,x_j)\,.
\end{align}

The quasi-classical expansion of \eqref{qh1e0a}, involves setting
\begin{align}
\label{algqclh1}
\begin{gathered}
\spn_0= x,\qquad \spn_3= x_3,\qquad \spn_1= x_1\hbar^{-1},\qquad \spn_2= x_2\hbar^{-1},\qquad \im(\theta_3)=\alpha_3\hbar^{-1},
\end{gathered}
\end{align}
for $\hbar\to0$.  Then the leading order $O(\hbar^{-1})$ quasi-classical expansion \eqref{algqclh1} of the integrand of \eqref{qh1e0a} becomes
\begin{align}
\label{h1e0aqcl}
\Log \rho_a(\alpha_3\hbar^{-1}\,|\,x_1\hbar^{-1},x_2\hbar^{-1},\EXP^{\frac{x_3}{2}};\EXP^{\frac{x}{2}})=\hbar^{-1}\astr{x_0}+O(1)\,,
\end{align}
where $\rho_a(\alpha_3\,|\,x_1,x_2,x_3;x)$ is given in \eqref{h1e0arho}, and
\begin{align}
\astr{x_0}=\Lambda(\alpha_1\,|\,x_1,x_0)+\Lambda(\alpha_1+\alpha_3\,|\,x_2,x_0)+\ol(\alpha_3\,|\,x_0,x_3)\,.
\end{align}

The saddle point three-leg equation \eqref{3legasym} is then given by
\begin{align}
\label{h1e0a3leg}
\frac{1}{\ii}\hspace{-0.05cm}\left.\frac{\partial \astr{x}}{\partial x}\right|_{x=x_0}\hspace{-0.4cm}=\phi(\alpha_1\,|\,x_1,x_0)+\phi(\alpha_1+\alpha_3\,|\,x_2,x_0)+\vphi(\alpha_3\,|\,x_3,x_0)=0\,,
\end{align}
where $\phi(\alpha\,|\,x_i,x_j)$, and $\vphi(\alpha\,|\,x_i,x_j)$, are defined by
\begin{align}
\phi(\alpha\,|\,x_i,x_j)=x_i\,,\qquad\vphi(\alpha\,|\,x_i,x_j)=-\alpha\coth\left(\frac{x_i-x_j}{2}\right)\,.
\end{align}

The equation \eqref{h1e0a3leg} is a three-leg form of $H1_{(\varepsilon=1)}$, arising as the equation for the saddle point of the star-triangle relation \eqref{qh1e0a} in the limit \eqref{algqclh1}. With the following change of variables
\begin{align}
x=\EXP^{x_0},\qquad u=-x_1,\qquad y=x_2,\qquad v=\EXP^{x_3},\qquad \alpha=-\alpha_1,\qquad \beta=-(\alpha_1+\alpha_3),
\end{align}
the three-leg equation \eqref{h1e0a3leg} may be written in the form
\begin{align}
H(x,u,y,v;\alpha,\beta)=0\,,
\end{align}
where 
\begin{align}
\label{H1d1}
H(x,u,y,v;\alpha,\beta)=(u-y)(x-v)+(\alpha-\beta)(x+v)\,.
\end{align}
This is identified as $H1_{(\varepsilon=1)}$ \cite{ABS2}, an affine-linear quad equation that satisfies the 3D-consistency condition exactly as defined in Section \ref{sec:overview}.

\subsection{\texorpdfstring{$H1_{(\varepsilon=1)}$}{H1(epsilon=1)} case (alternate form)}

\tocless\subsubsection{Star-triangle relation}

Let the spins $\spn_1,\spn_2,\spn_3$ and spectral parameters $\theta_1$, $\theta_3$, take values
\begin{align}
\label{algvalsalt}
\spn_1,\spn_2,\spn_3\in\mathbb{R}\,,\qquad 0<\theta_1,\theta_3\,.
\end{align} 
The star-triangle relation is given in this case by
\begin{align}
\label{qh1e0b}
\begin{split}
\ds\int_{\mathbb{R}} d\spn_0\,\oV({\theta_1}\,|\,\spn_1,\spn_0)\,V({\theta_1+\theta_3}\,|\,\spn_2,\spn_0)\,\oW({\theta_3}\,|\,\spn_0,\spn_3)\phantom{\quad\,,}  \\
\ds=R(\theta_3)\,V({\theta_1}\,|\,\spn_2,\spn_3)\,\oV({\theta_1+\theta_3	}\,|\,\spn_1,\spn_3)\,W({\theta_3}\,|\,\spn_2,\spn_1)\,,
\end{split}
\end{align}
where the Boltzmann weights are
\begin{align}
\label{quanh1e0b}
\begin{split}
V(\theta\,|\,\spn_i,\spn_j)&\ds=|\spn_i|^{\ii \spn_j-\theta}\,, \\
\oV(\theta\,|\,\spn_i,\spn_j)&\ds=|\spn_i|^{\theta-\ii \spn_j}\,, \\
\oW(\theta\,|\,\spn_i,\spn_j)&\ds=\Gamma(\theta+\ii(\spn_i-\spn_j))\,\Gamma(\theta-\ii(\spn_i-\spn_j))\,, \\
W(\theta\,|\,\spn_i,\spn_j)&\ds=\left(|\spn_i|+|\spn_j|\right)^{-2\theta}\,,
\end{split}
\end{align}
and
\begin{align}
\label{quanh1e0br}
R(\theta)=2\pi\,\Gamma(2\theta)\,.
\end{align}
The Boltzmann weights satisfy
\begin{align}
W(\theta\,|\,\spn_i,\spn_j)=W(\theta\,|\,\spn_j,\spn_i)\,,\quad \oW(\theta\,|\,\spn_i,\spn_j)=\oW(\theta\,|\,\spn_j,\spn_i)\,,\quad W(\theta\,|\,\spn_i,\spn_j)W(-\theta\,|\,\spn_i,\spn_j)=1\,.
\end{align}

The asymptotics of the integrand of \eqref{qh1e0b} are
\begin{align}
O(\EXP^{-\pi|\spn_0|}\,\EXP^{\ii \spn_0(\Log |\spn_2|-\Log |\spn_1|)}))\,,\qquad \spn\to\pm\infty\,,
\end{align}
and particularly, \eqref{qh1e0b} is absolutely convergent for the values \eqref{algvalsalt}.

The above star-triangle relation \eqref{qh1e0b}, may be obtained in the following limit of \eqref{qh3d1alt} ($H3_{(\delta=1;\varepsilon=1)}$ case):

\begin{prop} \label{h1e0bprop}
For the values \eqref{hypvals},% and $\spn_1\neq0$,
\begin{align}
\label{h1e0blim}
\begin{split}
\lim_{\bb\to0^+}\frac{4\pi^2\bb}{(2\pi\bb^2)^{2\theta_3}}\EXP^{\frac{\spn_1}{2\bb^2}}I_{6,H}(\theta_1\bb,\theta_3\bb\,|\,\spn_1(2\pi\bb)^{-1},\spn_2(2\pi\bb)^{-1},\spn_3\bb)\phantom{\,,} \\[0.1cm]
=\int_\mathbb{R} d\spn_0\,\oV({\theta_1}\,|\,\spn_1,\spn_0)\,V({\theta_1+\theta_3}\,|\,\spn_2,\spn_0)\,\oW({\theta_3}\,|\,\spn_0,\spn_3)\,,
\end{split}
\end{align}
where $I_{6,H}(\theta_1,\theta_3\,|\,\spn_1,\spn_2,\spn_3)$ is defined in \eqref{h3d12int} ($H3_{(\delta=1;\varepsilon=1)}$ case), and the Boltzmann weights on the right hand side are
\begin{align}
\begin{split}
V(\theta\,|\,\spn_i,\spn_j)&\ds=(2\cosh(\tfrac{\spn_i}{2}))^{2(\ii \spn_j-\theta)}\,, \\
\oV(\theta\,|\,\spn_i,\spn_j)&\ds=\left|2\sinh(\tfrac{\spn_i}{2})\right|^{2(\theta-\ii \spn_j)-1}\,, \\[0.1cm]
\oW(\theta\,|\,\spn_i,\spn_j)&\ds=\Gamma(\theta+\ii(\spn_i-\spn_j))\,\Gamma(\theta-\ii(\spn_i-\spn_j))\,.
\end{split}
\end{align}

\end{prop}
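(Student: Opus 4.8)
The plan is to follow the same dominated-convergence strategy used for Proposition~\ref{Selbergprop1} and Proposition~\ref{h1e0aprop}, adapted to the product-type Boltzmann weights \eqref{h3d1quanalt}. First I would rescale the integration variable in $I_{6,H}$ by setting $\spn=\spn_0\bb$, so that the contour $\mathcal{C}$ (which may be taken to be $\mathbb{R}$ under \eqref{hypvals}) becomes an integration over $\spn_0\in\mathbb{R}$, producing a Jacobian factor $\bb$. Under the scalings $\theta_1\to\theta_1\bb$, $\theta_3\to\theta_3\bb$, $\spn_1\to\spn_1(2\pi\bb)^{-1}$, $\spn_2\to\spn_2(2\pi\bb)^{-1}$, $\spn_3\to\spn_3\bb$, the three factors of the integrand $\rho_{6,H}$ of $I_{6,H}$ then fall into three distinct asymptotic regimes as $\bb\to0^+$, which I would treat separately.

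The factor $\oW(\theta_3\bb\,|\,\spn_0\bb,\spn_3\bb)$ has both spin arguments of order $\bb$, placing it in the rational regime \eqref{ratasymp}; applying \eqref{ratasymp} with both signs to its numerator and denominator converts it to the rational weight $\Gamma(\theta_3+\ii(\spn_0-\spn_3))\,\Gamma(\theta_3-\ii(\spn_0-\spn_3))$, up to an explicit prefactor built from powers of $\bb$ and $2(\pi\bb^2)^{-1}$. The factor $V(\theta_1\bb+\theta_3\bb\,|\,\spn_2(2\pi\bb)^{-1},\spn_0\bb)$ is a reflected pair of hyperbolic gamma functions with large real parts $\pm\spn_2(2\pi\bb)^{-1}$ and $O(\bb)$ imaginary shifts; here I would use the inversion relation \eqref{HGFidents} to recombine the two factors into a single ratio with common large real part, and then apply \eqref{algasym} (analytically continued to the complex shift $\ii((\theta_1+\theta_3)-\ii\spn_0)\bb$) to obtain exactly $(2\cosh(\spn_2/2))^{2(\ii\spn_0-(\theta_1+\theta_3))}$.

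The remaining factor $\oV(\theta_1\bb\,|\,\spn_1(2\pi\bb)^{-1},\spn_0\bb)=V(\eta-\theta_1\bb\,|\,\spn_1(2\pi\bb)^{-1},-\spn_0\bb)$ is the genuinely delicate one, and I expect it to be the main obstacle. Because of the crossing $\theta\to\eta-\theta$, after the same inversion step its two hyperbolic gamma functions form a ratio whose two arguments share the large real part $\spn_1(2\pi\bb)^{-1}$ but differ by an imaginary shift of order $\ii\eta\sim\ii(2\bb)^{-1}$ rather than $O(\bb)$, so neither \eqref{algasym} nor \eqref{ratasymp} applies directly. I would instead keep this factor in its reflected form $\Gamma_h(\,\cdot\,+\ii\eta+O(\bb))\,\Gamma_h(-\,\cdot\,+\ii\eta+O(\bb))$ and extract its divergent Bernoulli contribution via \eqref{HGFinflim} — this is the source of the $\EXP^{-\spn_1/(2\bb^2)}$ growth that the prefactor $\EXP^{\spn_1/(2\bb^2)}$ in \eqref{h1e0blim} cancels — reading off the finite remainder as the crossed algebraic weight $|2\sinh(\spn_1/2)|^{2(\theta_1-\ii\spn_0)-1}$. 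The shift by $\ii\eta$ is precisely what converts the $\cosh$ of the uncrossed case into $\sinh$ and shifts its exponent by $-1$, exactly as the crossed weight $\oW$ behaved in the proof of Proposition~\ref{Selbergprop1}; a definite sign of $\spn_1$ may be assumed without loss of generality as there, with the remaining case following from the symmetry of the weights.

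Finally I would collect all the $\bb$-power and exponential prefactors generated in the three regimes together with the Jacobian $\bb$, and verify that they combine with the explicit prefactor $4\pi^2\bb\,(2\pi\bb^2)^{-2\theta_3}\EXP^{\spn_1/(2\bb^2)}$ to a finite nonzero constant, leaving pointwise exactly the claimed integrand $\oV(\theta_1\,|\,\spn_1,\spn_0)\,V(\theta_1+\theta_3\,|\,\spn_2,\spn_0)\,\oW(\theta_3\,|\,\spn_0,\spn_3)$. The interchange of limit and integral is then justified by dominated convergence: the rescaled integrand times prefactor is uniformly bounded on $\mathbb{R}$ by an integrable majorant, using the $O(\bb)$ (respectively exponentially small) error terms of \eqref{algasym}, \eqref{ratasymp} and \eqref{HGFinflim} on compact sets, and the decay of the rational $\oW$ factor as $|\spn_0|\to\infty$ to control the tails, analogously to the bounds in the proof of Proposition~\ref{q2prop}.
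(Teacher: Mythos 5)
Your proposal takes essentially the same route as the paper, whose proof of this proposition is declared to be analogous to Proposition~\ref{q2prop}: rescale the integration variable by $\bb$, apply the pointwise asymptotic formulas factor by factor, and conclude by dominated convergence from a uniform bound on the rescaled integrand times the prefactor. Your factor-by-factor breakdown — the rational regime \eqref{ratasymp} for $\oW$, inversion \eqref{HGFidents} plus the analytically continued \eqref{algasym} for $V$, and the crossed $\ii\eta$-shifted regime for $\oV$ (handled, as in the paper's treatment of the crossed weight in Proposition~\ref{Selbergprop1}, via \eqref{HGFidents} and the difference equations \eqref{HGFdif} rather than a naive double application of \eqref{HGFinflim}, which alone would miss the finite $\sinh$ part) yielding the exponent shift by $-1$ and the $\EXP^{-\spn_1/(2\bb^2)}$ divergence cancelled by the stated prefactor — is exactly the mechanism the paper's one-line proof relies on.
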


The steps of the proof are analogous to the steps for the rational limits taken in Section \ref{sec:ratlim}.  The star-triangle relation \eqref{qh1e0b} is obtained by taking the same limit \eqref{h1e0blim}, on both sides of \eqref{qh3d1alt}, and making a straightforward change of variables to get the form of the Boltzmann weights \eqref{quanh1e0b}.

Similarly, the star-triangle relation \eqref{qh1e0b}, may be obtained in a similar limit from \eqref{qh3d0} ($H3_{(\delta=0,1;\varepsilon=1-\delta)}$ case).  For this case, to make use of the asymptotic formula \eqref{algasym}, a ratio of Boltzmann weights coming from the right hand side of \eqref{qh3d0} is required as follows:

\begin{prop} \label{h1e0bprop2}
For  the values \eqref{hypvals}, and $\spn_1<0$,
\begin{align}
\label{h1e0blim2}
\begin{split}
\lim_{\bb\to0^+}\left(\frac{4\pi^2\bb}{(2\pi\bb^2)^{2\theta_3}}\frac{V(\theta_1\,|\,\spn_2,\spn_3)\,\oV((\theta_1+\theta_3)\,|\,\spn_1,\spn_3)}{W(\theta_1\bb\,|\,\spn_2(2\pi\bb)^{-1},\spn_3\bb)\,\oW((\theta_1+\theta_3)\bb\,|\,\spn_1(2\pi\bb)^{-1},\spn_3\bb)}\right.\phantom{\,,}  \\[0.2cm]
\times \left. I_{4}(\theta_1\bb,\theta_3\bb\,|\,\spn_1(2\pi\bb)^{-1},\spn_2(2\pi\bb)^{-1},\spn_3\bb)\right)\phantom{\,,} \\[0.2cm]
=\int_\mathbb{R} d\spn_0\,\oV({\theta_1}\,|\,\spn_1,\spn_0)\,V({\theta_1+\theta_3}\,|\,\spn_2,\spn_0)\,W_1({\theta_3}\,|\,\spn_0,\spn_3)\,,
\end{split}
\end{align}
where $I_{4}(\theta_1,\theta_3\,|\,\spn_1,\spn_2,\spn_3)$ is defined in \eqref{h3d0int} ($H3_{(\delta=0,1;\varepsilon=1-\delta)}$ case), and the Boltzmann weights in the denominator of the first line of \eqref{h1e0blim2} are defined in \eqref{h3d0quans}. The remaining Boltzmann weights are
\begin{align}
\begin{split}
V(\theta\,|\,\spn_i,\spn_j)&\ds=(\EXP^{-\spn_i}+1)^{\ii \spn_j-\theta}\,, \\
\oV(\theta\,|\,\spn_i,\spn_j)&\ds=(\EXP^{-\spn_i}-1)^{\theta-\ii \spn_j}\,, \\
W_1(\theta\,|\,\spn_i,\spn_j)&\ds=\Gamma(\theta+\ii(\spn_i-\spn_j))\,\Gamma(\theta-\ii(\spn_i-\spn_j))\,.
\end{split}
\end{align}

\end{prop}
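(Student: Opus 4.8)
The plan is to follow the template of the algebraic-limit arguments of Propositions~\ref{Selbergprop1} and~\ref{h1e0bprop}: rescale the integration variable of $I_4$, identify the pointwise $\bb\to0^+$ limit of the prefactored integrand through the asymptotic formulas \eqref{ratasymp} and \eqref{algasym}, and then justify interchanging the limit with the integral by dominated convergence, as in the proof of Proposition~\ref{q2prop}. First I would apply the change of integration variable $\spn\to\spn_0\bb$ in the definition \eqref{h3d0int} of $I_4$, taking $\mathcal{C}=\mathbb{R}$ for the values \eqref{hypvals}; under this substitution the new variable $\spn_0$ stays of order one, $\mathrm{d}\spn=\bb\,\mathrm{d}\spn_0$ supplies the $\bb$ appearing in the prefactor, and the integration still runs over all of $\mathbb{R}$, matching the right hand side of \eqref{h1e0blim2}.

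With the scalings $\theta_1\to\theta_1\bb$, $\theta_3\to\theta_3\bb$, $\spn_3\to\spn_3\bb$ and $\spn\to\spn_0\bb$, the factor $\oW(\theta_3\bb\,|\,\spn_0\bb,\spn_3\bb)$ of the integrand $\rho_4$ (see \eqref{h3d0rho}, \eqref{h3d0quan}) has $\Gamma_h$ arguments $(\spn_0-\spn_3)\bb\pm\ii(\eta-\theta_3\bb)$, whose imaginary part is $\pm(\eta-\theta_3\bb)\approx\pm(2\bb)^{-1}$. This is precisely the regime of \eqref{ratasymp}, which converts the hyperbolic gamma ratio into the Euler gamma product $W_1(\theta_3\,|\,\spn_0,\spn_3)=\Gamma(\theta_3+\ii(\spn_0-\spn_3))\,\Gamma(\theta_3-\ii(\spn_0-\spn_3))$, up to powers of $\bb$ and $2\pi\bb^2$ collected by the explicit constant $4\pi^2\bb\,(2\pi\bb^2)^{-2\theta_3}$ in front. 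The factor $S(\spn_0\bb)=\EXP^{4\pi\eta\spn_0\bb}\to\EXP^{2\pi\spn_0}$ stays finite, and the elementary prefactors of the remaining weights combine with the large-argument contributions described below.

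The factors $\oV(\theta_1\bb\,|\,\spn_1(2\pi\bb)^{-1},\spn_0\bb)$ and $V((\theta_1+\theta_3)\bb\,|\,\spn_2(2\pi\bb)^{-1},\spn_0\bb)$ are the crux, and are the reason the ratio of right hand side weights is inserted. Each contains a single $\Gamma_h$ whose real part is large, of order $\spn_i(2\pi\bb)^{-1}$, so that it diverges individually; the $\overline{B}$ prefactors in \eqref{h3d0quan} cancel the leading $B_{2,2}$ asymptotics \eqref{HGFinflim}, but a divergent remainder persists and \eqref{algasym} cannot be applied to a bare gamma. The key point is that the denominator weights $W(\theta_1\bb\,|\,\spn_2(2\pi\bb)^{-1},\spn_3\bb)$ and $\oW((\theta_1+\theta_3)\bb\,|\,\spn_1(2\pi\bb)^{-1},\spn_3\bb)$ contain $\Gamma_h$'s with the \emph{same} large real parts $\spn_2(2\pi\bb)^{-1}$ and $\spn_1(2\pi\bb)^{-1}$; dividing by them produces ratios of hyperbolic gamma functions sharing a common large real argument, to which \eqref{algasym} applies directly and returns finite powers of $2\cosh(\spn_i/2)$ and $|2\sinh(\spn_i/2)|$, while the perturbation $\spn_0\bb$ of the large argument supplies the $\ii\spn_0$ in the exponents of the limiting $V,\oV$. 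Up to elementary factors $\EXP^{\mp\spn_i/2}$ these bases are rewritten as $\EXP^{-\spn_i}+1$ and $\EXP^{-\spn_i}-1$, reproducing the stated $V,\oV$; the hypothesis $\spn_1<0$ guarantees $\EXP^{-\spn_1}-1>0$ so that $\oV(\theta_1\,|\,\spn_1,\spn_0)$ is well defined, and the (fixed) numerator factors $V(\theta_1\,|\,\spn_2,\spn_3)$ and $\oV(\theta_1+\theta_3\,|\,\spn_1,\spn_3)$ supply the correct $\spn_3$-dependence.

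It remains to pass the limit inside the integral. Using \eqref{h3d0xinf}, the formulas \eqref{ratasymp} and \eqref{algasym}, and the cancellation above, the prefactored integrand is uniformly bounded on $\mathbb{R}$ by a fixed integrable function of Barnes type, decaying like $\EXP^{-\pi|\spn_0|}$ from the two Euler gammas in $W_1$, so the conclusion follows by dominated convergence exactly as in Proposition~\ref{q2prop}. I expect the main obstacle to be the bookkeeping of the third step: verifying that the divergent remainders of the two large-real-part gamma functions, together with the $\overline{B}$ prefactors and the $S$-factor, are cancelled \emph{exactly} by the chosen denominator ratio and constant, leaving precisely the finite algebraic weights and no stray exponential in $\bb$. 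Once this cancellation is confirmed, the remaining steps are routine.
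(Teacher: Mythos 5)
Your proposal is correct and matches the paper's (very terse) proof, which simply notes that $\spn_1<0$ is needed for convergence and that the remaining steps — rescaling the integration variable by $\bb$, applying \eqref{ratasymp} to the $\oW$ factor and \eqref{algasym} to the ratios formed by inserting the right-hand-side weights of \eqref{qh3d0}, then concluding by dominated convergence — are analogous to Proposition \ref{q2prop}. You have correctly identified the one non-routine ingredient, namely that the single large-argument hyperbolic gamma functions in $V$, $\oV$ only become tractable via \eqref{algasym} after division by the denominator weights sharing the same large arguments, which is precisely the ``ratio of Boltzmann weights'' device the paper invokes; the exact cancellation bookkeeping you defer is left equally implicit in the paper.
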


The condition $\spn_1<0$, is neccessary for the convergence of \eqref{h1e0blim2}, and the rest of the proof is analogous to the limits taken for the rational cases of the Section \ref{sec:ratlim}. Then taking the same limit \eqref{h1e0blim2}, on the right hand side of \eqref{qh3d0}, results in the star-triangle relation \eqref{qh1e0b} (up to a straightforward change of variables).

Besides the exponential factors, the star-triangle relation \eqref{qh1e0b} resembles the form of the star-triangle relations given for the rational cases in Section \ref{sec:ratlim}, and indeed it may be seen to be a particular case of Barnes's integral formula \cite{Barnes1908} for the hypergeometric function $\!\!~_2F_1$ \cite{WW} (the same integral formula is equivalent to the star-triangle relation for the $H2_{(\varepsilon=0)}$ case \eqref{qh2d0}).  

To see this, first note that the star-triangle relation \eqref{qh1e0b}, is explicitly written as
\begin{align}
\label{qh1e0c}
\int_\mathbb{R}d\spn_0\,\rho_{b}(\theta_1,\theta_3\,|\,\spn_1,\spn_2,\spn_3;\spn_0)=2\pi\,\Gamma(2\theta_3)\,|\spn_1|^{\theta_1+\theta_3-\ii \spn_3} |\spn_2|^{\ii \spn_3-\theta_1} (|\spn_1|+|\spn_2|)^{-2\theta_3}\,,
\end{align}
where
\begin{align}
\label{h1e0brho}
\rho_{b}(\theta_1,\theta_3\,|\,\spn_1,\spn_2,\spn_3;\spn_0)=\Gamma(\theta_3+\ii(\spn_3-\spn_0))\,\Gamma(\theta_3-\ii(\spn_3-\spn_0))\,|\spn_1|^{\theta_1-\ii \spn_0} |\spn_2|^{\ii \spn_0-\theta_1-\theta_3}\,.
\end{align}
This equation may be written in a form that is independent of $\theta_1$, since the $\theta_1$ dependency trivially cancels from both sides of \eqref{qh1e0c}.  Similarly, the variable $\spn_3$ may be eliminated with a change of variable of $\spn_0\to \spn_0-\spn_3$, while also the expression \eqref{qh1e0c} only depends on the ratio $\mu=\frac{\spn_1}{\spn_2}$.  Consequently, after the corresponding change of variables, the star-triangle relation \eqref{qh1e0b} can be written explicitly as the following integral identity for a product of gamma functions:
\begin{align}
\label{EBFI}
\int_{\mathbb{R}}dz\,\Gamma(\theta_3+\ii z)\,\Gamma(\theta_3-\ii z)\,|\mu|^{-(\theta_3+\ii z)}=2\pi\,\Gamma(2\theta_3)(1+|\mu|)^{-2\theta_3}\,.
\end{align}
In this form, it is seen that this is a particular integral evaluation formula involving the standard Beta function \eqref{EBF}, with the choice $a=\theta_1+\ii z$, $b=\theta_3-\ii z$.  As mentioned above, the integral \eqref{EBFI} is a special case of Barnes's integral formula \cite{Barnes1908} for the hypergeometric function $\!\!~_2F_1$ \cite{WW}.  The integral \eqref{EBFI} may be directly evaluated as a sum over residues in the upper half plane, or also by using the fact that it is the inverse Mellin transform of the beta function \eqref{EBF}.

\tocless\subsubsection{Classical integrable equations}

In this limit the classical variables $x_1,x_2,x_3$, $\alpha_1,\alpha_3$, are 
\begin{align}
x_1,x_2,x_3\in\mathbb{R}\,,\qquad 0<\alpha_1,\alpha_3\,.
\end{align}

The Lagrangian functions for this case are defined by
\begin{align}
\label{lagh1e0b}
\begin{split}
\ds\Lambda(\alpha\,|\,x_i,x_j)&=(\ii x_j-\alpha)\Log|x_i| \,, \\
%\ds{\lag}(\alpha\,|\,x_i,x_j)&=-2\alpha\Log(|x_i|+|x_j|)\,, \\
\ds{\ol}(\alpha\,|\,x_i,x_j)&=\gamma(x_i-x_j-\ii\alpha)+\gamma(x_j-x_i-\ii\alpha)-\gamma(-2\ii\alpha) \,,
\end{split}
\end{align}
where $\gamma(z)$ is defined in \eqref{gammadef}.

The latter function satisfies
\begin{align}
%\lag(\alpha\,|\,x_i,x_j)=\lag(\alpha\,|\,x_j,x_i)\,,\quad
{\ol}(\alpha\,|\,x_i,x_j)={\ol}(\alpha\,|\,x_j,x_i)\,.%\,,\quad{\lag}(\alpha\,|\,x_i,x_j)=-{\lag}(-\alpha\,|\,x_i,x_j)\,.
\end{align}

The quasi-classical expansion of \eqref{qh1e0b}, involves setting
\begin{align}
\label{algqclh12}
\begin{gathered}
\spn_0=x,\qquad\spn_3=x_3,\qquad\spn_1=x_1\hbar^{-1},\qquad \spn_2= x_2\hbar^{-1},\qquad \theta_3=\alpha_3\hbar^{-1},
\end{gathered}
\end{align}
and then taking $\hbar\to0$.  Then the leading order $O(\hbar^{-1})$ quasi-classical expansion \eqref{algqclh12} of the integrand of \eqref{qh1e0b} is
\begin{align}
\label{h1e0aqcl2}
\begin{split}
(1+\Log\hbar)2\alpha_3\hbar^{-1}-\Log\hbar+\Log \rho_{b}(\alpha_3\hbar^{-1}\,|\,x_1\hbar^{-1},x_2\hbar^{-1},x_3;x)\phantom{\,,} \\[0.1cm]
=\hbar^{-1}\astr{x_0}+O(1)\,,
\end{split}
\end{align}
where $\rho_{b}(\alpha_3\,|\,x_1,x_2,x_3;x)$ is given in \eqref{h1e0brho},
\begin{align}
\astr{x_0}=\olam(\alpha_1\,|\,x_1,x_0)+\Lambda(\alpha_1+\alpha_3\,|\,x_2,x_0)+\ol(\alpha_3\,|\,x_0,x_3)\,,
\end{align}
and
\begin{align}
\olam(\alpha\,|\,x_i,x_j)=-\Lambda(\alpha\,|\,x_i,x_j)\,.
\end{align}

The saddle point three-leg equation \eqref{3legasym} is then given by
\begin{align}
\label{h1e0b3leg}
\frac{1}{\ii}\hspace{-0.05cm}\left.\frac{\partial \astr{x}}{\partial x}\right|_{x=x_0}\hspace{-0.4cm}=\ovphib(\alpha_1\,|\,x_1,x_0)+\phi(\alpha_1+\alpha_3\,|\,x_2,x_0)+\vphi(\alpha_3\,|\,x_3,x_0)=0\,,
\end{align}
where $\phi(\alpha\,|\,x_i,x_j)$, and $\vphi(\alpha\,|\,x_i,x_j)$, are defined by
\begin{align}
\begin{split}
\phi(\alpha\,|\,x_i,x_j)&=\Log|x_i|\,, \\
\vphi(\alpha\,|\,x_i,x_j)&=\Log(\alpha-\ii(x_i-x_j))-\Log(\alpha+\ii(x_i-x_j))\,,
\end{split}
\end{align}
and
\begin{align}
\ovphib(\alpha\,|\,x_i,x_j)=-\phi(\alpha\,|\,x_i,x_j)\,.
\end{align}

The equation \eqref{h1e0b3leg} is a three-leg form of $H1_{(\varepsilon=1)}$, arising as the equation for the saddle point of the star-triangle relation \eqref{qh1e0b} in the limit \eqref{algqclh12}. With the following change of variables
\begin{align}
x=x_0,\qquad u=-|x_1|,\qquad y=|x_2|,\qquad v=x_3,\qquad \alpha=-\ii\alpha_1,\qquad \beta=-\ii(\alpha_1+\alpha_3),
\end{align}
the exponential of the three-leg equation \eqref{h1e0b3leg} may be written in the form
\begin{align}
H(x,u,y,v;\alpha,\beta)=0\,,
\end{align}
where 
\begin{align}
\label{H1d12}
H(x,u,y,v;\alpha,\beta)=(u-y)(x-v)+(\alpha-\beta)(u+y)\,.
\end{align}
This is identified as $H1_{(\varepsilon=1)}$ \cite{ABS2}, an affine-linear quad equation that satisfies the 3D-consistency condition exactly as defined in Section \ref{sec:overview}.

\subsection{\texorpdfstring{$H1_{(\varepsilon=0)}$}{H1(epsilon=0)} case}

\tocless\subsubsection{Star-triangle relation}

The star-triangle relation \eqref{qh1e0a}, in fact also provides a quantum counterpart of $H1_{(\varepsilon=0)}$, by using a slightly different choice of quasi-classical expansion.

\tocless\subsubsection{Classical integrable equations}

For this case the classical variables $x_1,x_2,x_3$, $\alpha_1,\alpha_3$, are 
\begin{align}
x_1,x_2,x_3\in\mathbb{R}\,,\qquad \alpha_1,\alpha_3\in\mathbb{R}\,.
\end{align}

The Lagrangian functions for this case are defined by
\begin{align}
\label{lagh1e0b3}
\begin{split}
\ds\Lambda(\alpha\,|\,x_i,x_j)=\ii x_ix_j\,,\qquad
%\ds{\lag}(\alpha\,|\,x_i,x_j)&=\ii(\alpha(\Log(-\ii\alpha)+\Log(\ii\alpha)-\Log(-\ii(x_i+x_j))-\Log(\ii(x_i+x_j)))\,, \\
\ds\ol(\alpha\,|\,x_i,x_j)=2\ii\alpha\Log|x_i-x_j|\,.
\end{split}
\end{align}
The Lagrangians are each symmetric upon the exchange of spins $x_i\leftrightarrow x_j$, and ${\ol}(\alpha\,|\,x_i,x_j)$, also satisfies
\begin{align}
%\lag(\alpha\,|\,x_i,x_j)=-\lag(-\alpha\,|\,x_i,x_j)\,,\qquad
{\ol}(\alpha\,|\,x_i,x_j)=-{\ol}(-\alpha\,|\,x_i,x_j)\,.
\end{align}

The quasi-classical expansion of \eqref{qh1e0a} (different to \eqref{algqclh1}), involves setting
\begin{align}
\label{algqclh13}
\spn_0= x_0\hbar,\qquad \spn_1= x_1\hbar^{-2},\qquad \spn_2= x_2\hbar^{-2},\qquad \spn_3= x_3\hbar, \qquad\im(\theta_3)\to \alpha_3\hbar^{-1},
\end{align}
for $\hbar\to0$.  Then the leading order $O(\hbar^{-1})$ quasi-classical expansion \eqref{algqclh13} of the integrand of \eqref{qh1e0a} becomes
\begin{align}
\label{h1e0b3qcl}
\begin{split}
-2(\hbar^{-1}\ii\alpha_3+\re(\theta_3)-\tfrac{1}{2})\Log\hbar+\Log \rho_a(\alpha_3\hbar^{-1}\,|\,x_1\hbar^{-2},x_2\hbar^{-2},x_3\hbar;x\hbar)\phantom{\,,}
\\[0.1cm]
=\hbar^{-1}\astr{x_0}+O(1)\,,
\end{split}
\end{align}
where $\rho_a(\alpha_3\,|\,x_1,x_2,x_3;x)$ is given in \eqref{h1e0arho}, and
\begin{align}
\astr{x_0}=\Lambda(\alpha_1\,|\,x_1,x_0)+\Lambda(\alpha_1+\alpha_3\,|\,x_2,x_0)+\ol(\alpha_3\,|\,x_0,x_3)\,.
\end{align}

The saddle point three-leg equation \eqref{3legasym} is then given by
\begin{align}
\label{h1e0b3leg3}
\frac{1}{\ii}\hspace{-0.05cm}\left.\frac{\partial \astr{x}}{\partial x}\right|_{x=x_0}\hspace{-0.4cm}=\phi(\alpha_1\,|\,x_1,x_0)+\phi(\alpha_1+\alpha_3\,|\,x_2,x_0)+\vphi(\alpha_3\,|\,x_3,x_0)=0\,,
\end{align}
where $\phi(\alpha\,|\,x_i,x_j)$, and $\vphi(\alpha\,|\,x_i,x_j)$, are defined by
\begin{align}
\phi(\alpha\,|\,x_i,x_j)=x_i\,,\qquad\vphi(\alpha\,|\,x_i,x_j)=\frac{2\alpha}{x_j-x_i}\,.
\end{align}

The equation \eqref{h1e0b3leg3} is a three-leg form of $H1_{(\varepsilon=0)}$, arising as the equation for the saddle point of the star-triangle relation \eqref{qh1e0a} in the limit \eqref{algqclh13}. With the following change of variables
\begin{align}
x=x_0,\qquad u=-x_1,\qquad y=x_2,\qquad v=x_3,\qquad\alpha=-\alpha_1,\qquad \beta=-(\alpha_1+\alpha_3),
\end{align}
the three-leg equation \eqref{h1e0b3leg3} may be written in the form
\begin{align}
H(x,u,y,v;\alpha,\beta)=0\,,
\end{align}
where 
\begin{align}
\label{h1e0b}
H(x,u,y,v;\alpha,\beta)=(u-y)(x-v)+2(\beta-\alpha)\,.
\end{align}
This is identified as $H1_{(\varepsilon=0)}$ \cite{ABS2}, an affine-linear quad equation that satisfies the 3D-consistency condition exactly as defined in Section \ref{sec:overview}.

\section{Conclusion}

This paper shows how all 3D-consistent $H$-type quad equations in the ABS classification, may be derived from the quasi-classical expansion of respective counterpart solutions of an asymmetric form of the star-triangle relation \eqref{YBEasym}.  The results here provide an almost systematic way to derive 3D-consistent quad equations from the star-triangle relation; once a solution of the latter is known, the major computations only involve determining the right choice of the quasi-classical expansion which gives a suitable asymptotic series in $\hbar$ of the form \eqref{qclybe}. In all cases an affine-linear 3D-consistent quad equation is then obtained via a change of variables from the equation of the saddle point.  The star-triangle relations \eqref{YBEsym}, \eqref{YBEasym}, thus provide natural path integral quantizations of symmetric $Q$-type, and asymmetric $H$-type ABS equations respectively.  The results of this paper extend the Yang-Baxter/3D-consistency correspondence that was previously given for the $Q$-type ABS equations \cite{Bazhanov:2016ajm}, to the entire ABS list.  

Each solution of the asymmetric form of the star-triangle relation \eqref{YBEasym} appearing in this paper is new (apart from the case \eqref{qh3d0} for $H3_{(\delta=0,1;\,\varepsilon=1-\delta)}$ \cite{BAZHANOV2018509}).  Each of the star-triangle relations were also shown to have an equivalent interpretation as a transformation formula for a univariate hypergeometric integral, as summarised in Appendix \ref{app:hyper}.  The different solutions of both forms of the star-triangle relations \eqref{YBEsym}, \eqref{YBEasym}, were derived in this paper from certain limits of the top level hyperbolic star-triangle relation \eqref{qq3d1} (for the $Q3_{(\delta=1)}$ case), resulting in many new limits of hyperbolic hypergeometric integrals.  Table \ref{hyptable} highlights some interesting new interpretations in terms of integrability for some relatively old classical hypergeometric integrals, such as Barnes's integral formulas (equivalently quantum $H2_{(\varepsilon=1)}$, $H2_{(\varepsilon=0)}$, $H1_{(\varepsilon=1)}$) which first appeared in the 1900's, and the Euler beta function (quantum $H1_{(\varepsilon=0)}$, $H1_{(\varepsilon=1)}$) which goes back much further.  Such hypergeometric integral formulas are the basis of the link between the two types of integrable systems based on the YBE, and 3D-consistency respectively.

There still remains much more that can be done in developing further aspects of the Yang-Baxter/3D-consistency correspondence presented in this paper.  Some examples include determining statistical properties of quantum integrable systems that are associated to the continuous spin solutions of the star-triangle relation  \eqref{YBEsym}, \eqref{YBEasym}, determining the relevance of higher orders of a quasi-classical expansion to classical integrability based on 3D-consistency, interpreting other forms of hypergeometric integrals in terms of integrability, and understanding the appearance of classical integrable equations for supersymmetric gauge theories \cite{Yamazaki:2012cp,Kels:2017vbc}.  These are some of the important topics that are planned for future research.

\section*{Acknowledgements}

The author is an overseas researcher under Postdoctoral Fellowship of Japan Society for the Promotion of Science (JSPS).  The author thanks Vladimir Bazhanov, and Sergey Sergeev, for helpful discussions related to this work.

\begin{appendices}
%\numberwithin{equation}{section}

\section{Summary of connection to hypergeometric integrals}\label{app:hyper}

Each of the star-triangle relations in this paper, after a change of variables can be seen to be equivalent to transformation formulas for univariate hypergeometric integrals.  The connection with the hypergeometric integral theory is stated throughout this paper for each case, and is also summarised in Table \ref{hyptable}, along with citations for where the formulas originally appeared for hypergeometric integrals (to the best of the authors knowledge).  It appears that the star-triangle relations for the ``hyperbolic Barnes's $\!\!~_2F_1$  integral''  \eqref{qh3d02}, and ``hyperbolic Barnes's first lemma'' \eqref{qh3d0}, have not previously been related to hypergeometric integrals, however they did previously appear in the context of integrable lattice models in \cite{Bazhanov:2007mh} (as a self-duality relation), and \cite{BAZHANOV2018509} respectively.

{\renewcommand{\arraystretch}{1.1}
\begin{table}[htbp]
\begin{center}
\medskip
\begin{tabular}{c|c|c}
  Hypergeometric Integral & STR & Quad Equation\\
 \hline
 \hline
 Elliptic beta integral \cite{SpiridonovEBF} &   \cite{Bazhanov:2010kz}  & $Q4$ \cite{Bazhanov:2010kz} \\
   \hline
   \hline
 Hyperbolic beta integral \cite{STOKMAN2005119} & \eqref{qq3d1}   & $Q3_{(\delta=1)}$ \eqref{Q3d1} \\ 
   \hline
 Hyperbolic Saalsch\"{u}tz integral \cite{BultThesis} & \eqref{qq3d0}   & $Q3_{(\delta=0)}$ \eqref{Q3d0} \\ 
   \hline
 Hyperbolic Askey-Wilson integral \cite{STOKMAN2005119,Ruijsenaars2003} & \eqref{qh3d1}   & $H3_{(\delta=1;\,\varepsilon=1)}$ \eqref{H3d1} \\ 
   \hline
 Hyperbolic Saalsch\"{u}tz integral \cite{BultThesis} & \eqref{qh3d1alt}  & $H3_{(\delta=1;\,\varepsilon=1)}$ (alt.) \eqref{H3d12} \\ 
   \hline
 Hyperbolic Barnes's first lemma & \eqref{qh3d0}  & $H3_{(\delta=0,1;\,\varepsilon=1-\delta)}$ \eqref{H3d01} \\ 
 \hline
 Hyperbolic Barnes's $\!\!~_2F_1$  integral & \eqref{qh3d02}   & $H3_{(\delta=0;\,\varepsilon=0)}$ \eqref{H3d02} \\ 
   \hline
   \hline
 Askey integral \cite{Askey1989} & \eqref{qq2}   & $Q2$ \eqref{Q2} \\
  \hline
 Barnes's second lemma \cite{Barnes1910} & \eqref{qq1d1}   & $Q1_{(\delta=1)}$ \eqref{Q1d1} \\
  \hline
 de Branges-Wilson integral \cite{DeBranges1972,Wilson1980} & \eqref{qh2d1}   & $H2_{(\varepsilon=1)}$ \eqref{H2d1} \\
  \hline
 Barnes's second lemma \cite{Barnes1910} & \eqref{qh2d1alt}   & $H2_{(\varepsilon=1)}$ (alt.) \eqref{H2d12} \\ 
   \hline
 Barnes's first lemma \cite{Barnes1908} & \eqref{qh2d0}  & $H2_{(\varepsilon=0)}$ \eqref{H2d0} \\ 
   \hline
   \hline
 Selberg integral \cite{Selberg} &  \eqref{qq1d0}  & $Q1_{(\delta=0)}$ \eqref{Q1d0} \\ 
 \hline
 Euler beta integral &  \eqref{qh1e0a}  & $H1_{(\varepsilon=1)}$  \eqref{H1d1} \\ 
 \hline
 Barnes's $\!\!~_2F_1$ integral \cite{Barnes1908,WW} & \eqref{qh1e0b}   & $H1_{(\varepsilon=1)}$ (alt.) \eqref{H1d12} \\ 
 \hline
 Euler beta integral & \eqref{qh1e0a}   & $H1_{(\varepsilon=0)}$ \eqref{h1e0b}
\end{tabular}
\caption{Relation between various solutions of the star-triangle relation (STR) \eqref{YBEsym}, \eqref{YBEasym}, hypergeometric integrals, and 3D-consistent quad equations of \cite{ABS,ABS2} in a quasi-classical expansion.}
\label{hyptable}
\end{center}
\end{table}
}

\section{Examples of non-integrable cases}
\label{app:notYBE}

In Section \ref{sec:hyplim}, five limits were given of the top level hyperbolic star-triangle relation \eqref{qq3d1}, which resulted in different hyperbolic solutions of the star-triangle relation.  There are many other limits of \eqref{q3d1int} that result in new identities, which may not necessarily take the form of the star-triangle relation \eqref{YBEsym}, or \eqref{YBEasym}.  These cases still result in affine-linear classical equations in a quasi-classical expansion, however these classical equations do not satisfy the 3D-consistency integrability condition.  Thus these examples are considered to be non-integrable, and this suggests that the 3D-consistent equations will only arise from the specific form of the star-triangle relations given in \eqref{YBEsym}, \eqref{YBEasym}.  Four non-integrable hyperbolic cases are considered in this Appendix.

\subsection{Examples of non-integrable cases 1: triangle identities}

Let the variables $\spn_1,\spn_2,\spn_3$, and $\theta_1,\theta_3$, take values \eqref{hypvals}.  The triangle identity is given by
\begin{align}
\label{qtri}
\int_{\mathbb{R}}d\spn\,S(\spn_0)\,\oW(\theta_1\,|\,\spn_1,\spn_0)\,\oW(\theta_3\,|\,\spn_3,\spn_0)=R(\theta_1,\theta_3)\,\oW(\theta_1+\theta_3\,|\,\spn_1,\spn_3)\,,
\end{align}
where the Boltzmann weight is the same as in \eqref{genfadvol}
\begin{align}
\label{qtriBW1}
\begin{split}
\iW(\theta\,|\,\spn_i,\spn_j)&=\ds\frac{\Gamma_h(\spn_i+\spn_j+\ii\theta;\bb)}{\Gamma_h(\spn_i+\spn_j-\ii\theta;\bb)}\frac{\Gamma_h(\spn_i-\spn_j+\ii\theta;\bb)}{\Gamma_h(\spn_i-\spn_j-\ii\theta;\bb)}\,, \\
\oW(\theta\,|\,\spn_i,\spn_j)&=W(\eta-\theta\,|\,\spn_i,\spn_j)\,,
\end{split}
\end{align}
$S(\spn)$ is
\begin{align}
\label{qtris1}
\begin{split}
S(\spn)&=\ds\frac{1}{2}\Gamma_h(-2\spn-\ii\eta;\bb)\,\Gamma_h(2\spn-\ii\eta;\bb) \\
&=\ds2\sinh(2\pi \spn\bb)\sinh(2\pi \spn/\bb)\,,
\end{split}
\end{align}
and
\begin{align}
R(\theta_1,\theta_3)=\frac{\Gamma_h(\ii(\eta-2\theta_1);\bb)\,\Gamma_h(\ii(\eta-2\theta_3);\bb)}{\Gamma_h(\ii(\eta-2(\theta_1+\theta_3));\bb)}\,.
\end{align}
From \eqref{HGFinflim}, the asymptotics of the integrand of \eqref{qtri}, with \eqref{qtriBW1} are
\begin{align}
\label{qtri1xinf}
O(\EXP^{-4\pi(\eta-(\theta_1+\theta_3))|\spn_0|})\,,\qquad \spn_0\to\pm\infty\,,
\end{align}
and particularly the integral in \eqref{qtri}, with \eqref{qtriBW1}, is absolutely convergent for the values \eqref{hypvals}.

\begin{prop} \label{tri1prop}
For the values \eqref{q3d1rvals},
\begin{align}
\label{tri1lim}
\begin{split}
\lim_{\kappa\to\infty}\EXP^{4\pi(\theta_1+\theta_3)\kappa+4\pi(\theta_1+\theta_3)\spn_2}I_{14}(\theta_1,\theta_3\,|\,\spn_1,\spn_2+\kappa,\spn_3)\phantom{\,,} \\
=\int_{\mathbb{R}}d\spn_0\,S(\spn_0)\,\oW(\theta_1\,|\,\spn_1,\spn_0)\,\oW(\theta_3\,|\,\spn_3,\spn_0)\,,
\end{split}
\end{align}
where $I_{14}(\theta_1,\theta_3\,|\,\spn_1,\spn_2,\spn_3)$ is defined in \eqref{q3d1int}, and the Boltzmann weights on the right hand side are defined in \eqref{qtriBW1}, and \eqref{qtris1}.

\end{prop}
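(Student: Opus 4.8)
The plan is to mirror the proof of Proposition \ref{q3d0prop}, exploiting the key simplification that here only the single outer spin $\spn_2$ is sent to infinity, while the integration variable $\spn$ and the remaining spins $\spn_1,\spn_3$ are held fixed. Consequently the only $\kappa$-dependent factor in the integrand $\rho_{14}$ of \eqref{q3d1rho} is the Boltzmann weight $W(\theta_1+\theta_3\,|\,\spn_2+\kappa,\spn)$, and no change of integration variable is required. First I would note that for the values \eqref{q3d1rvals} the contour $\mathcal{C}$ in $I_{14}$ may be taken to be $\mathbb{R}$, with the general case following afterwards by analytic continuation to the domain permitted by the contour.

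The central computation is the $\kappa\to\infty$ asymptotics of $W(\theta_1+\theta_3\,|\,\spn_2+\kappa,\spn)$. Writing this weight as a product of four hyperbolic gamma functions via \eqref{genfadvol}, all four arguments $(\spn_2+\kappa)\pm\spn\pm\ii(\theta_1+\theta_3)$ tend to $+\infty$, so the large-argument formula \eqref{HGFinflim} applies to each. Using the explicit quadratic \eqref{b22def} for $B_{2,2}$ (with $\omega_1\omega_2=1$ and $\omega_1+\omega_2=2\eta$), the four Bernoulli contributions telescope pairwise, and the $\spn$-dependence cancels between the ``sum'' and ``difference'' gamma functions, leaving
\begin{align}
\Log W(\theta_1+\theta_3\,|\,\spn_2+\kappa,\spn)=-4\pi(\theta_1+\theta_3)(\spn_2+\kappa)+O(\EXP^{-c\kappa})
\end{align}
for some $c>0$, independent of $\spn$ to leading order. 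The exponential prefactor on the left of \eqref{tri1lim} is designed precisely to cancel this leading term, so that $\EXP^{4\pi(\theta_1+\theta_3)(\kappa+\spn_2)}W(\theta_1+\theta_3\,|\,\spn_2+\kappa,\spn)\to1$ pointwise in $\spn$. Since the remaining factors $S(\spn)\,\oW(\theta_1\,|\,\spn_1,\spn)\,\oW(\theta_3\,|\,\spn_3,\spn)$ are $\kappa$-independent and coincide with the integrand of the triangle integral, the pointwise limit of the full integrand is exactly the integrand on the right-hand side of \eqref{tri1lim}.

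To upgrade this pointwise limit to convergence of the integrals I would invoke dominated convergence. The bracketed factor is bounded uniformly in $\kappa$, while the triangle integrand is absolutely integrable on $\mathbb{R}$ by the decay \eqref{qtri1xinf}, which supplies a $\kappa$-independent dominating function. The star-triangle identity \eqref{qtri} itself then follows, as in the other cases, by applying the same limit \eqref{HGFinflim} to the right-hand side of \eqref{qq3d1}.

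The main obstacle I anticipate is the uniform domination in the last step: one must control $W(\theta_1+\theta_3\,|\,\spn_2+\kappa,\spn)$ for large $|\spn|$, where for $|\spn|>\spn_2+\kappa$ some gamma-function arguments pass into the opposite (left) asymptotic regime of \eqref{HGFinflim}, so the naive estimate valid on $\spn$-compact sets fails. The resolution is that the triangle integrand carries the strong decay $\EXP^{-4\pi(\eta-(\theta_1+\theta_3))|\spn|}$ coming from $S(\spn)$ together with the two $\oW$ factors, which for the values \eqref{hypvals} (where $\eta>\theta_1+\theta_3$) dominates any residual exponential growth of the bracketed factor uniformly in $\kappa$, so that dominated convergence indeed applies.
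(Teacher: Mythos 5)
Your proposal is correct and follows essentially the same route as the paper, which proves Proposition \ref{tri1prop} by declaring it analogous to Proposition \ref{h3d1prop} (itself the argument of Proposition \ref{q3d0prop} without the shift of the integration variable): take the contour to be $\mathbb{R}$ for the values \eqref{q3d1rvals}, apply the asymptotic formula \eqref{HGFinflim} to the single $\kappa$-dependent weight $W(\theta_1+\theta_3\,|\,\spn_2+\kappa,\spn_0)$, verify uniform boundedness of the prefactor times the integrand, conclude by dominated convergence, and extend by analytic continuation. Your explicit telescoping of the $B_{2,2}$ terms (yielding $\Log W\approx-4\pi(\theta_1+\theta_3)(\spn_2+\kappa)$ with the $\spn_0$-dependence cancelling) and your attention to the regime $|\spn_0|>\spn_2+\kappa$ simply make precise the details the paper leaves implicit.
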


The steps of the proof of Proposition \ref{tri1prop}, are analogous to the respective steps for Proposition \ref{h3d1prop}.  The triangle identity \eqref{qtri}, with \eqref{qtriBW1}, follows from Proposition \ref{tri1prop}, by using \eqref{HGFinflim} to take the same limit of the right hand side of \eqref{qq3d1}.

The triangle identity \eqref{qtri} is also satisfied by the Boltzmann weight (related to \eqref{fadvol})
\begin{align}
\label{qtriBW2}
\begin{split}
\iW(\theta\,|\,\spn_i,\spn_j)&=\ds\EXP^{-2\pi(\eta-\theta)(\spn_i+\spn_j)}\,\frac{\Gamma_h(\spn_i-\spn_j+\ii\theta;\bb)}{\Gamma_h(\spn_i-\spn_j-\ii\theta;\bb)}\,, \\
\oW(\theta\,|\,\spn_i,\spn_j)&=W(\eta-\theta\,|\,\spn_i,\spn_j)\,,
\end{split}
\end{align}
where now
\begin{align}
\label{qtris2}
S(\spn)=1\,.
\end{align}
From \eqref{HGFinflim}, the asymptotics of the integrand of \eqref{qtri}, with \eqref{qtriBW2} are
\begin{align}
\label{qtri2xinf}
\begin{split}
O(\EXP^{4\pi(\eta-(\theta_1+\theta_3))\spn_0})\,,&\qquad \spn_0\to-\infty\,, \\
O(\EXP^{-4\pi\eta \spn_0})\,,&\qquad \spn_0\to+\infty\,,
\end{split}
\end{align}
and particularly the integral in \eqref{qtri}, with \eqref{qtriBW2}, is absolutely convergent for the values \eqref{hypeta}, \eqref{hypvals}.

\begin{prop} \label{tri2prop}
For the values \eqref{q3d1rvals},
\begin{align}
\label{tri2lim}
\begin{split}
\lim_{\kappa\to\infty}\EXP^{4\pi\eta\kappa+2\pi(\eta-2\theta_1)\spn_1+2\pi(\eta-2\theta_3)\spn_3}I_{14}(\theta_1,\theta_3\,|\,\spn_1+\kappa,\spn_2,\spn_3+\kappa)\phantom{\,,} \\
=\int_{\mathbb{R}}d\spn_0\,S(\spn_0)\,\oW(\theta_1\,|\,\spn_1,\spn_0)\,\oW(\theta_3\,|\,\spn_3,\spn_0)\,,
\end{split}
\end{align}
where $I_{14}(\theta_1,\theta_3\,|\,\spn_1,\spn_2,\spn_3)$ is defined in \eqref{q3d1int}, and the Boltzmann weights on the right hand side are defined in \eqref{qtriBW2}, and \eqref{qtris2}.

\end{prop}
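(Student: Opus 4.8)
The plan is to follow exactly the same structure used for Proposition \ref{tri1prop} and, earlier, for Proposition \ref{q3d0prop}, since this limit \eqref{tri2lim} is a shift of the form \eqref{Kshift} with the two outer spins $\spn_1,\spn_3$ sent to infinity. First I would perform the change of integration variable $\spn\to\spn$ in the defining integral \eqref{q3d1int} (no rescaling is needed here, only the argument shifts $\spn_1\to\spn_1+\kappa$, $\spn_3\to\spn_3+\kappa$), and substitute the explicit Boltzmann weights \eqref{genfadvol} of the $Q3_{(\delta=1)}$ star-triangle relation into the integrand \eqref{q3d1rho}. The key computational step is to apply the large-argument asymptotic formula \eqref{HGFinflim} to each hyperbolic gamma function whose argument contains $\kappa$, using the explicit Bernoulli polynomial \eqref{b22def}. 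The quadratic-in-$\kappa$ contributions from $B_{2,2}$ must cancel against each other across the four gamma factors of $W$ and $\oW$, leaving a finite linear-in-$\kappa$ exponent that is precisely compensated by the prefactor $\EXP^{4\pi\eta\kappa+2\pi(\eta-2\theta_1)\spn_1+2\pi(\eta-2\theta_3)\spn_3}$ on the left of \eqref{tri2lim}. After that cancellation the middle weight $W(\theta_1+\theta_3\,|\,\spn_2,\spn)$ disappears entirely (it contributes only an overall exponential that the normalisation absorbs), and the surviving integrand degenerates to the product $S(\spn_0)\,\oW(\theta_1\,|\,\spn_1,\spn_0)\,\oW(\theta_3\,|\,\spn_3,\spn_0)$ with the new Boltzmann weight \eqref{qtriBW2} carrying the additional factor $\EXP^{-2\pi(\eta-\theta)(\spn_i+\spn_j)}$ and $S(\spn)=1$.

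Next I would establish the analytic estimate that justifies interchanging limit and integral. Using the pointwise asymptotics just derived together with the tail bounds \eqref{q3d1xinf} for the original integrand and \eqref{qtri2xinf} for the limiting integrand, I would show that the product of the transformed integrand and the exponential prefactor is uniformly bounded on $\mathbb{R}$ by a fixed integrable function, independently of $\kappa$ for $\kappa$ large. For the values \eqref{q3d1rvals} the contour $\mathcal{C}$ in \eqref{q3d1int} may be taken to be $\mathbb{R}$, so the contour itself does not move under the limit. The conclusion \eqref{tri2lim} then follows by dominated convergence, exactly as in the proof of Proposition \ref{q3d0prop}.

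The main obstacle, as in the other shift limits, is the bookkeeping of the $B_{2,2}$ polynomials: one must verify that after inserting \eqref{HGFinflim} into all four $\kappa$-dependent gamma factors, the $\kappa^2$ and the $\spn_0\kappa$ terms cancel identically, so that only the expected linear-in-$\kappa$ term remains to be balanced by the prefactor. A secondary subtlety is that the asymptotic sign in \eqref{HGFinflim} (the $\mp$ prefactor) depends on whether one shifts by $+\kappa$ or $-\kappa$, so the signs must be tracked carefully when $\spn\to-\infty$ versus $\spn\to+\infty$; this is precisely what produces the asymmetric tail behaviour recorded in \eqref{qtri2xinf}. Since the paper states that the steps are ``analogous to the respective steps for Proposition \ref{h3d1prop}'' (and hence to Proposition \ref{q3d0prop}), I would simply note the structural identity of the argument and refer to that earlier proof rather than reproduce the dominated-convergence estimate in full.

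Finally, having established Proposition \ref{tri2prop} for the left-hand (integral) side, the triangle identity \eqref{qtri} with the weights \eqref{qtriBW2} and \eqref{qtris2} is obtained by applying the very same limit \eqref{tri2lim} to the right-hand side $R(\theta_1,\theta_3)\,W(\theta_1\,|\,\spn_2,\spn_3)\,\oW(\theta_1+\theta_3\,|\,\spn_1,\spn_3)\,W(\theta_3\,|\,\spn_1,\spn_2)$ of \eqref{qq3d1}, once more using \eqref{HGFinflim} on the factors involving the shifted spins; the $\spn_2$-dependent weights drop out and the normalisation $R(\theta_1,\theta_3)$ survives unchanged, reproducing the claimed single-weight right-hand side of \eqref{qtri}.
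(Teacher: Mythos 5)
Your overall strategy---apply \eqref{HGFinflim} to the $\kappa$-dependent hyperbolic gamma factors, check that the linear-in-$\kappa$ exponent is cancelled by the prefactor, conclude by dominated convergence, and then obtain \eqref{qtri} by taking the same limit of the right-hand side of \eqref{qq3d1}---is indeed the paper's strategy. But there is one concrete error that would make the computation fail as written: you explicitly assert that no change of integration variable is needed (``only the argument shifts $\spn_1\to\spn_1+\kappa$, $\spn_3\to\spn_3+\kappa$''). In this proposition \emph{both outer spins} are shifted, so for fixed $\spn_0$ every gamma function in $\oW(\theta_1\,|\,\spn_1+\kappa,\spn_0)$ and $\oW(\theta_3\,|\,\spn_3+\kappa,\spn_0)$---the sum factors $\Gamma_h(\spn_i+\kappa+\spn_0\pm\cdots;\bb)$ \emph{and} the difference factors $\Gamma_h(\spn_i+\kappa-\spn_0\pm\cdots;\bb)$ alike---has divergent argument and degenerates to a pure exponential via \eqref{HGFinflim}. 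The pointwise limit of your unrecentred integrand therefore contains no hyperbolic gamma functions at all and cannot reproduce the weights \eqref{qtriBW2}; equivalently, the mass of the integral travels with the shifted spins to the region $\spn_0\approx\kappa$, so dominated convergence applied to the unrecentred integrand yields the wrong limit. The missing step---present verbatim in the proof of Proposition \ref{q3d0prop}, which is exactly the proof the paper declares analogous here---is the recentring $\spn_0\to\spn_0+\kappa$. After recentring, the difference arguments $\spn_1-\spn_0$ and $\spn_3-\spn_0$ remain finite and supply the surviving gamma-function content of \eqref{qtriBW2}, while only the sum factors, the middle weight $W(\theta_1+\theta_3\,|\,\spn_2,\spn_0+\kappa)$ (both of whose arguments now diverge, since $\spn_2$ is \emph{not} shifted), and $S(\spn_0+\kappa)$ degenerate to exponentials, which combine to give $S(\spn)=1$ as in \eqref{qtris2}, the factors $\EXP^{-2\pi\theta(\spn_i+\spn_j)}$ of \eqref{qtriBW2}, and the $\kappa$-linear exponent balanced by the prefactor in \eqref{tri2lim}.

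The source of your error appears to be a misquotation of which earlier proposition the paper invokes: it is Proposition \ref{tri1prop} (where only $\spn_2$ is shifted, so legs $1$ and $3$ stay finite without recentring) that is proved analogously to Proposition \ref{h3d1prop} ``without needing the change of variables of the integrand''; for Proposition \ref{tri2prop} the paper points instead to Proposition \ref{q3d0prop} precisely because the recentring is required. The remainder of your argument---uniform boundedness on $\mathbb{R}$ via the tail estimates \eqref{q3d1xinf} and \eqref{qtri2xinf}, taking the contour to be $\mathbb{R}$ under the conditions \eqref{q3d1rvals}, and deriving \eqref{qtri} by applying the same limit to the right-hand side of \eqref{qq3d1}, where $\oW(\theta_1+\theta_3\,|\,\spn_1+\kappa,\spn_3+\kappa)$ survives as the \eqref{qtriBW2}-type weight because $\spn_1-\spn_3$ is fixed---is sound and matches the paper.
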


The steps of the proof of Proposition \ref{tri2prop}, are analogous to the respective steps for Proposition \ref{q3d0prop}.  The triangle identity \eqref{qtri}, with \eqref{qtriBW2}, follows from Proposition \ref{tri2prop}, by using \eqref{HGFinflim} to take the same limit of the right hand side of \eqref{qq3d1}.

The triangle identity \eqref{qtri} did not appear before.  The case of \eqref{qtri} with \eqref{qtriBW1}, is equivalent to the hyperbolic analogue of the Askey-Wilson integral \cite{STOKMAN2005119,Ruijsenaars2003} (the same as for \eqref{qh3d1} for the $H3_{(\delta=1,\varepsilon=1)}$ case), while the case of \eqref{qtri} with \eqref{qtriBW2}, is equivalent to the hyperbolic analogue of the Barnes integral formula (the same as for \eqref{qh3d0} for the $H3_{(\delta=0,1;\,\varepsilon=1-\delta)}$ case).  The triangle identity \eqref{qtri} may be considered to be a special case of the star-triangle relation \eqref{YBEsym}, where $W(\theta\,|\,\spn_i,\spn_j)=1$, and has the graphical representation given in Figure \ref{trianglefig}.

\begin{figure}[h]
\centering

\begin{tikzpicture}[scale=1.9]

%\draw[white] (-1.9,0.5) circle (0.1pt)
%node[right=2pt]{\color{black}\small $\oW:\theta_1$};
%\draw[white] (-2.65,-0.2) circle (0.1pt)
%node[above=2pt]{\color{black}\small $\oW:\theta_3$};
%\draw[white] (-1.15,-0.3) circle (0.1pt)
%node[above=2pt]{\color{black}\small $W:\theta_1+\theta_3$};
\draw[-,thick] (-2,0)--(-2,1);
\draw[-,thick] (-2,0)--(-2.87,-0.5);
%\draw[-,thick] (-2,0)--(-1.13,-0.5);
\draw[<-,thick,dotted] (-2.9,-0.25)--(-1.8,-0.25);
\fill[white!] (-1.8,-0.25) circle (0.5pt)
node[right=1.5pt]{\color{black}\small $p$};
\draw[<-,thick,dotted] (-2.7,-0.71)--(-1.7,1.02);
\fill[white!] (-1.7,1.02) circle (0.5pt)
node[right=1.5pt]{\color{black}\small $q$};
\draw[<-,thick,dotted] (-1.75,0.06)--(-2.3,1.02);
\fill[white!] (-2.3,1.02) circle (0.5pt)
node[left=1.5pt]{\color{black}\small $r$};
\fill (-2,0) circle (1.5pt)
node[below=1.5pt]{\color{black} $\spn_0$};
\filldraw[fill=black,draw=black] (-2,1) circle (1.5pt)
node[above=1.5pt] {\color{black} $\spn_1$};
\filldraw[fill=black,draw=black] (-2.87,-0.5) circle (1.5pt)
node[below=1.5pt] {\color{black} $\spn_3$};
%\filldraw[fill=black,draw=black] (-1.13,-0.5) circle (1.5pt)
%node[below=1.5pt] {\color{black} $\spn_2$};

\fill[white!] (0.05,0.3) circle (0.01pt)
node[left=0.05pt] {\color{black}$=$};

%\draw[white] (2.0,-0.5) circle (0.1pt)
%node[below=2pt]{\color{black}\small $W:\theta_1$};
%\draw[white] (2.75,0.15) circle (0.1pt)
%node[above=2pt]{\color{black}\small $W:\theta_3$};
%\draw[white] (2.4,0.35) circle (0.1pt)
%node[above=2pt]{\color{black}\small $\oW:\theta_1+\theta_3$};
\draw[-,thick] (2,1)--(1.13,-0.5);
%\draw[-,thick] (1.13,-0.5)--(2.87,-0.5);
%\draw[-,thick] (2.87,-0.5)--(2,1);
\draw[<-,thick,,dotted] (1.1,0.25)--(2.05,0.25);
\fill[white!] (2.05,0.25) circle (0.5pt)
node[right=1.5pt]{\color{black}\small $p$};
%\draw[->,thick,dotted] (1.75,-0.93)--(2.68,0.67);
%\fill[white!] (1.75,-0.93) circle (0.5pt)
%node[below=1.5pt]{\color{black}\small $q$};
\draw[<-,thick,dotted] (1.83,-0.23)--(1.32,0.67);
\fill[white!] (1.32,0.67) circle (0.5pt)
node[above=1.5pt]{\color{black}\small $r$};
\filldraw[fill=black,draw=black] (2,1) circle (1.5pt)
node[above=1.5pt]{\color{black} $\spn_1$};
\filldraw[fill=black,draw=black] (1.13,-0.5) circle (1.5pt)
node[left=1.5pt]{\color{black} $\spn_3$};
%\filldraw[fill=black,draw=black] (2.87,-0.5) circle (1.5pt)
%node[right=1.5pt]{\color{black} $\spn_2$};
\end{tikzpicture}
\caption{Graphical representation of the triangle identity \eqref{qtri}, where edges correspond to Figure \ref{2boltzmannweights} with $\theta_1=q-r$, and $\theta_3=p-q$.}
\label{trianglefig}
\end{figure}
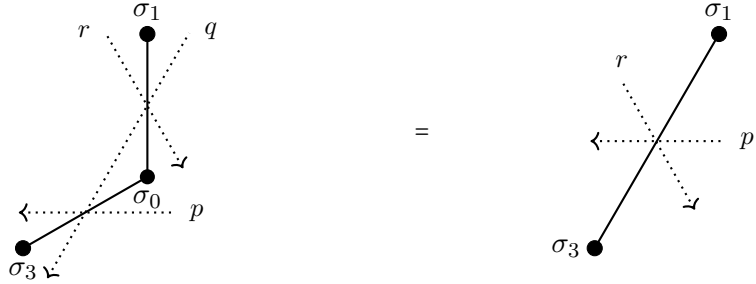

\subsubsection{Classical equations}

The triangle identity \eqref{qtri} also has a quasi-classical expansion for \eqref{hypqcl}, which will lead to some ``two-leg'' equations of motion, and associated affine-linear equations.  The latter will obviously not be a 3D-consistent quad equation, since \eqref{qtri} does not contain a factor that involves interaction between the two variables $\spn_2$, and $\spn_0$.

The Lagrangian function for the case \eqref{qtriBW1} is
\begin{align}
\label{lagtri1}
\begin{split}
\ds\mathcal{L}(\alpha\,|\,x_i,x_j)&=\ds\lie(\EXP^{x_i+x_j-\ii\alpha})+\lie(\EXP^{x_i-x_j-\ii\alpha})+\lie(\EXP^{-x_i+x_j-\ii\alpha})+\lie(\EXP^{-x_i-x_j-\ii\alpha}) \\
&\quad -\lie(\EXP^{-2\ii\alpha})+x_i^2+x_j^2+\pi\ii(x_i+x_j)-\frac{\pi^2}{2}\,,% \\
%\ds C(x)&=\ds 2\pi\ii x\,,
\end{split}
\end{align}
while the Lagrangian function for the case \eqref{qtriBW2} is
\begin{align}
\label{lagtri2}
\begin{split}
\ds\mathcal{L}(\alpha\,|\,x_i,x_j)&=\ds+\lie(\EXP^{x_i-x_j-\ii\alpha})+\lie(\EXP^{-x_i+x_j-\ii\alpha})-\lie(\EXP^{-2\ii\alpha}) \\[0.1cm]
&\quad +\frac{(x_i-x_j)^2+(\pi-\alpha)^2}{2}-\ii\alpha(x_i+x_j)-\frac{2\pi^2}{3}\,.% \\
%\ds C(x)&=\ds 0\,.
\end{split}
\end{align}

Using the asymptotics \eqref{HGFqcl}, the leading order $O(\hbar^{-1})$ quasi-classical expansion \eqref{hypqcl} of the integrand of \eqref{qtri}, for both of the cases \eqref{lagtri1}, or \eqref{lagtri2}, takes the form
\begin{align}
\begin{split}
\frac{(\pi-\alpha_1-\alpha_3)^2-2(\alpha_1\alpha_3+\frac{\pi^2}{3})-\lie(\EXP^{-2\ii\alpha_1})-\lie(\EXP^{-2\ii\alpha_3})}{\ii\hbar}+\phantom{\,,} \\
\Log \rho(\tfrac{\alpha_1}{\sqrt{2\pi\hbar}},\tfrac{\alpha_3}{\sqrt{2\pi\hbar}}\,|\,\tfrac{x_1}{\sqrt{2\pi\hbar}},\tfrac{x_2}{\sqrt{2\pi\hbar}}\tfrac{x_3}{\sqrt{2\pi\hbar}};\tfrac{x_0}{\sqrt{2\pi\hbar}})
=(\ii\hbar)^{-1}\astr{x_0}+O(1)\,,
\end{split}
\end{align}
where 
\begin{align}
\rho(\theta_1,\theta_3\,|\,\spn_1,\spn_2,\spn_3;\spn_0)=S(\spn_0)\,\oW(\theta_1\,|\,\spn_1,\spn_0)\,\oW(\theta_3\,|\,\spn_3,\spn_0)\,,
\end{align}
and
\begin{align}
%\begin{split}
\astr{x_0}&=C(x_0)+\lag(\alpha_1\,|\,x_0,x_1)+\lag(\alpha_3\,|\,x_0,x_3)\,, %\\
%\atri&=\lag(\alpha_1+\alpha_3\,|\,x_1,x_3)\,,
%\end{split}
\end{align}
where respectively for \eqref{lagtri1}:
\begin{align}
\label{ctri1}
\ds\iC(x)=2\pi\ii x\,,
\end{align}
and for \eqref{lagtri2}:
\begin{align}
\label{ctri2}
\ds\iC(x)=0\,.
\end{align}

The saddle point ``two-leg'' equation is then given by
\begin{align}
\label{ctri2leg}
\left.\frac{\partial \astr{x}}{\partial x}\right|_{x=x_0}=\varphi(\alpha_1\,|\,x_1,x_0)+\varphi(\alpha_3\,|\,x_3,x_0)=0\,,
\end{align}
where for the case of \eqref{qtriBW1} with \eqref{lagtri1}, $\varphi(\alpha\,|\,x_i,x_j)$ is defined by
\begin{align}
\label{ctri2leg1}
\begin{split}
\varphi(\alpha\,|\,x_i,x_j)=&\Log(1-\EXP^{x_i-x_j-\ii\alpha})-\Log(1-\EXP^{x_i+x_j-\ii\alpha})+ \\
&\Log(1-\EXP^{-x_i-x_j-\ii\alpha})-\Log(1-\EXP^{-x_i+x_j-\ii\alpha}) +\pi\ii+2x_j\,,
\end{split}
\end{align}
while for the case of \eqref{qtriBW2} with \eqref{lagtri2}, $\varphi(\alpha\,|\,x_i,x_j)$ is defined by
\begin{align}
\label{ctri2leg2}
\varphi(\alpha\,|\,x_i,x_j)=\Log(1-\EXP^{x_i-x_j-\ii\alpha})-\Log(1-\EXP^{x_j-x_i-\ii\alpha})-x_i+x_j-\ii\alpha\,.
\end{align}

With the following change of variables
\begin{align}
x=-\cosh(x_0),\qquad u=\cosh(x_1),\qquad v=\cosh(x_3),\qquad\alpha=\EXP^{-\ii\alpha_1},\qquad\beta=\EXP^{-\ii(\alpha_1+\alpha_3)},
\end{align}
the exponential of the two-leg equation \eqref{ctri2leg} with \eqref{ctri2leg1}, is seen to correspond to the equation
\begin{align}
\label{quadtri1}
x(\beta^2-1)\alpha+u(\beta^2-\alpha^2)+v(\alpha^2-1)\beta=0\,,
\end{align}
while for the change of variables
\begin{align}
x=-\EXP^{x_0},\qquad u=\EXP^{x_1},\qquad v=\EXP^{x_3},\qquad\alpha=\EXP^{-\ii\alpha_1},\qquad\beta=\EXP^{-\ii(\alpha_1+\alpha_3)},
\end{align}
the exponential of the two-leg equation \eqref{ctri2leg} with \eqref{ctri2leg2}, corresponds to the equation
\begin{align}
\label{quadtri2}
uv(\beta^2-1)\alpha+xv(\beta^2-\alpha^2)+xu(\alpha^2-1)\beta=0\,.
\end{align}
The equations \eqref{quadtri1}, \eqref{quadtri2}, are independent of $y$, and cannot satisfy the 3D-consistency condition defined in Section \ref{sec:overview}.  The equation \eqref{quadtri1} may be thought of as $Q3_{(\delta=1)}$ in \eqref{Q3d1} with $y\to\infty$, while the equation \eqref{quadtri2} may be thought of as $Q3_{(\delta=0)}$ in \eqref{Q3d0} with $y=0$.

\subsection{Examples of non-integrable cases 2}

Another example is the equation
\begin{align}
\label{notYBE1}
\begin{split}
\ds\int_{\mathbb{R}}d\spn_0\,S(\spn_0)\,\oW_1({\eta-\theta_1}\,|\,\spn_1,\spn_0)\,\oW_2({\theta_1+\theta_3}\,|\,\spn_2,\spn_0)\,W_1({\eta-\theta_3}\,|\,\spn_3,\spn_0)\phantom{\,.} \ds \\
\ds =R(\theta_1,\theta_3)\,W_1({\theta_1}\,|\,\spn_3,\spn_2)\,W_2(\eta-(\theta_1+\theta_3)\,|\,\spn_3,\spn_1)\,\oW_3({\theta_3}\,|\,\spn_1,\spn_2)\,,
\end{split}
\end{align}
where the variables are as defined in \eqref{hypvals}.  The equation \eqref{notYBE1} does not have the form of the star-triangle relation given in either \eqref{YBEsym}, or \eqref{YBEasym}.  Here
\begin{align}
\label{notYBE1quan}
\begin{array}{rclrcl}
\ds\oW_1(\theta\,|\,\spn_i,\spn_j)&\hspace{-0.2cm}=\hspace{-0.2cm}&\ds\EXP^{-B_+(\spn_j,-\spn_i,\theta)}\,\oW(\theta\,|-\spn_i,\spn_j)\,,\; &\ds W_1(\theta\,|\,\spn_i,\spn_j)&\hspace{-1cm}=\hspace{-1cm}&\ds\EXP^{B_+(\spn_i,\spn_j,\theta)}\,W(\theta\,|\,\spn_i,\spn_j), \\[0.2cm]
\ds\oW_2(\theta\,|\,\spn_i,\spn_j)&\hspace{-0.2cm}=\hspace{-0.2cm}&\ds\EXP^{B_+(\spn_j,\spn_i,\theta)}\,\oW(\theta\,|\,\spn_i,\spn_j)\,,\; &&&  \\[0.2cm]
\ds\oW_3(\theta\,|\,\spn_i,\spn_j)&\hspace{-0.2cm}=\hspace{-0.2cm}&\ds\EXP^{B_-(\spn_j,\spn_i,\theta)}\,\oW(\theta\,|\,\spn_i,\spn_j)\,,\; & \multicolumn{3}{l}{\smash{\raisebox{1\normalbaselineskip}{$\ds W_2(\theta\,|\,\spn_i,\spn_j)=\frac{\EXP^{-B_+(\spn_i,-\spn_j,\theta)}}{W(-\theta\,|\,\spn_i,\spn_j)}\,,$}}}
\end{array}
\end{align}
\begin{align}
\begin{split}
\oW(\theta\,|\,\spn_i,\spn_j)&=\ds\EXP^{-\overline{B}(\spn_i+\spn_j+\ii\theta)}\Gamma_h(\spn_i-\spn_j+\ii\theta;\bb)\,, \\[0.1cm]
W(\theta\,|\,\spn_i,\spn_j)&=\ds\frac{\Gamma_h(\spn_i+\spn_j+\ii\theta;\bb)}{\Gamma_h(\spn_i-\spn_j-\ii\theta;\bb)}\,,
\end{split}
\end{align}
and
\begin{align}
\label{notYBE1quans}
S(\spn)=\EXP^{4\pi\eta \spn},\qquad R(\theta_1,\theta_3)=\EXP^{\frac{\pi\ii}{2}\left(B_{2,2}(2\theta_3;\bb,\bb^{-1})+B_{2,2}(2(\theta_1+\theta_3);\bb,\bb^{-1})-B_{2,2}(2\theta_1;\bb,\bb^{-1})\right)}\,,
\end{align}
where $\eta$ is defined in \eqref{hypeta}, $B_{2,2}(z;\bb,\bb^{-1})$ is defined in \eqref{b22def}, $\overline{B}(z)$ is defined in \eqref{obdef}, and
\begin{align}
\label{bpmdef}
B_{\pm}(\theta;\spn_i,\spn_j)=\overline{B}(-\spn_i-\spn_j+\ii\theta)\pm\overline{B}(\spn_i-\spn_j+\ii\theta)\,.
\end{align}

From \eqref{HGFinflim}, the asymptotics of the integrand of \eqref{notYBE1} are
\begin{align}
\label{notYBE1xinf}
\begin{split}
O(\EXP^{4\pi\eta \spn_0})\,,\qquad \spn_0\to-\infty\,, \\
O(\EXP^{-2\pi(\eta-\theta_3+\ii(\spn_1-\spn_2))\spn_0})\,,\qquad \spn_0\to+\infty\,,
\end{split}
\end{align}
and particularly the integral in \eqref{notYBE1}, is absolutely convergent for the values \eqref{hypvals}. 

The equation \eqref{notYBE1} arises in the following limit of the star-triangle relation \eqref{qq3d1}:

\begin{prop} \label{notYBE1prop}
For the values \eqref{q3d1rvals},
\begin{align}
\label{notYBE1lim}
\begin{split}
\lim_{\kappa\to\infty}&\EXP^{4\pi(2\eta+\theta_3-\ii(\spn_1-\spn_2))\kappa} I_{14}(\theta_1+2\ii\kappa,\theta_3-\ii\kappa\,|\,\spn_1+\kappa,\spn_2+2\kappa,\spn_3) \\
&=\int_{\mathbb{R}}d\spn_0\,S(\spn_0)\,\oW_1({\eta-\theta_1}\,|\,\spn_1,\spn_0)\,\oW_2({\theta_1+\theta_3}\,|\,\spn_2,\spn_0)\,W_1({\eta-\theta_3}\,|\,\spn_3,\spn_0)\,,
\end{split}
\end{align}
where $I_{14}(\theta_1,\theta_3\,|\,\spn_1,\spn_2,\spn_3)$ is defined in \eqref{q3d1int}, and the Boltzmann weights on the right hand side are defined in \eqref{notYBE1quan}, \eqref{notYBE1quans}.

\end{prop}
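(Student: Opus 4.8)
The plan is to follow the strategy that was established for Proposition \ref{q3d0prop} (and reused for Propositions \ref{h3d1prop}, \ref{h3d12prop}, \ref{h3d0prop}, \ref{h3d02prop}), namely to realise the claimed identity as a dominated-convergence limit of the integral $I_{14}$ defined in \eqref{q3d1int}. First I would start from the integrand $\rho_{14}$ of \eqref{q3d1rho} with the shifted arguments $\theta_1+2\ii\kappa$, $\theta_3-\ii\kappa$, $\spn_1+\kappa$, $\spn_2+2\kappa$, $\spn_3$, and apply the large-argument asymptotic formula \eqref{HGFinflim} for the hyperbolic gamma function to each of the six $\Gamma_h$ factors whose arguments are driven to infinity by $\kappa\to\infty$. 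The multiple Bernoulli polynomial $B_{2,2}$ appearing in \eqref{HGFinflim} is quadratic, so each such factor contributes an exponential whose exponent is (at most) quadratic in $\kappa$; collecting these exponents and checking that the prefactor $\EXP^{4\pi(2\eta+\theta_3-\ii(\spn_1-\spn_2))\kappa}$ in \eqref{notYBE1lim} exactly cancels the divergent part is the bookkeeping heart of the argument. The surviving finite pieces are the factors $\EXP^{-\overline{B}(\cdots)}$, $\EXP^{B_\pm(\cdots)}$ that are built into the definitions \eqref{notYBE1quan}, \eqref{bpmdef}, together with the unshifted $\Gamma_h$ ratios that assemble into $\oW_1$, $\oW_2$, $W_1$ and the factor $S(\spn_0)=\EXP^{4\pi\eta\spn_0}$.

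Next I would establish the uniform domination needed to pass the limit inside the integral. Using the tail asymptotics \eqref{q3d1xinf} of the original integrand together with \eqref{notYBE1xinf} for the limiting integrand, the combination of the exponential prefactor and the transformed integrand is bounded by a fixed integrable function on $\mathbb{R}$, independent of $\kappa$ for large $\kappa$; here the condition \eqref{q3d1rvals} guarantees both that the contour $\mathcal{C}$ in \eqref{q3d1int} may be taken to be $\mathbb{R}$, and that the tails on both ends decay. Pointwise convergence of the integrand follows from \eqref{HGFinflim} applied termwise, and dominated convergence then yields \eqref{notYBE1lim}. Since the steps are structurally identical to those of Proposition \ref{q3d0prop}, the proof can be stated briefly by reference to that argument, noting only the two differences: the spectral variables $\theta_1$, $\theta_3$ are themselves shifted to infinity (as in Proposition \ref{h3d0prop}), and the three spins are shifted by the asymmetric amounts $\kappa$, $2\kappa$, $0$, which is what produces the asymmetric roles of $\oW_1$, $\oW_2$, $W_1$ on the right-hand side.

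The main obstacle I anticipate is not convergence itself but the precise organisation of the $B_{2,2}$-exponents so that they reassemble into exactly the factors $B_\pm$, $\overline{B}$ that define the Boltzmann weights in \eqref{notYBE1quan}. Because $\theta_1$ is shifted by $2\ii\kappa$ while $\theta_3$ is shifted by $-\ii\kappa$ and $\spn_2$ by $2\kappa$, several of the $\Gamma_h$ arguments receive $\kappa$-contributions of differing coefficients, and the cross terms in the quadratic $B_{2,2}$ must be tracked carefully to verify that the net $\kappa^2$ and $\kappa$ dependence is precisely that removed by the prefactor in \eqref{notYBE1lim}; any sign or factor-of-two slip here would leave a residual divergence or a spurious finite factor. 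This is a routine but delicate computation of the same type already carried out implicitly for the integrable cases, and once it is checked the identity follows. Finally, the corresponding identity obtained by applying the same limit to the right-hand side of \eqref{qq3d1} would yield the full relation \eqref{notYBE1}, exactly as each star-triangle relation in Section \ref{sec:hyplim} was deduced from its associated proposition.
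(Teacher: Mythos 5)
Your proposal is correct and takes essentially the same route as the paper, which disposes of this proposition in one line by declaring the steps ``analogous to the respective steps for Proposition \ref{q3d0prop}'': that is, exactly your plan of applying the asymptotic formula \eqref{HGFinflim} to the $\Gamma_h$ factors with $\kappa$-shifted arguments, verifying that the exponential prefactor in \eqref{notYBE1lim} cancels the divergent $B_{2,2}$ contributions so that the surviving $\overline{B}$, $B_\pm$ factors assemble into the weights \eqref{notYBE1quan}, and concluding by dominated convergence using the tail bounds \eqref{q3d1xinf}, \eqref{notYBE1xinf} under the conditions \eqref{q3d1rvals}. Your explicit tracking of the $\kappa$ and $\kappa^2$ bookkeeping is precisely the computation the paper leaves implicit, so nothing is missing.
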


The steps of the proof of Proposition \ref{notYBE1prop}, are analogous to the respective steps for Proposition \ref{q3d0prop}.  Equation \eqref{notYBE1}, follows from Proposition \ref{notYBE1prop}, by using \eqref{ratasymp} to take the same limit of the right hand side of \eqref{qq3d1}.

\subsubsection{Classical equations}

The Lagrangian functions for this case are defined by
\begin{align}
\label{lagnotYBE1}
\begin{split}
\Lambda(\alpha\,|\,x_i,x_j)&=\ds \lie(-\EXP^{x_i-x_j+\ii\alpha})+\frac{(x_i-x_j)^2}{2}\,,   \\
%\olam(\alpha\,|\,x_i,x_j)&=\ds\Lambda(\alpha\,|\,x_i,x_j)+\frac{(x_i-x_j+\ii\alpha)^2}{2}+2x_ix_j+\frac{\pi^2}{3}\,,  \\[0.1cm]
\lag(\alpha\,|\,x_i,x_j)&=\ds\lie(-\EXP^{x_i+x_j+\ii\alpha})+\lie(-\EXP^{-x_i+x_j+\ii\alpha})+\frac{(x_j+\ii\alpha)^2}{2}\,.
\end{split}
\end{align}

Using the asymptotics \eqref{HGFqcl}, the leading order $O(\hbar^{-1})$ quasi-classical expansion \eqref{hypqcl} of the integrand of \eqref{notYBE1} is
\begin{align}
\begin{split}
\frac{x_1^2-x_3^2+\ii(\pi-\alpha_1)x_1+(\alpha_1+\alpha_3)(\ii x_2+\alpha_3)+\pi(\alpha_1-\alpha_3)-\frac{\pi^2}{3}}{\ii\hbar}+\phantom{\,,} \\
\Log \rho(\tfrac{\alpha_1}{\sqrt{2\pi\hbar}},\tfrac{\alpha_3}{\sqrt{2\pi\hbar}}\,|\,\tfrac{x_1}{\sqrt{2\pi\hbar}},\tfrac{x_2}{\sqrt{2\pi\hbar}}\tfrac{x_3}{\sqrt{2\pi\hbar}};\tfrac{x_0}{\sqrt{2\pi\hbar}})
=(\ii\hbar)^{-1}\astr{x_0}+O(1)\,,
\end{split}
\end{align}
where 
\begin{align}
\rho(\theta_1,\theta_3\,|\,\spn_1,\spn_2,\spn_3;\spn_0)=S(\spn_0)\,\oW_1({\eta-\theta_1}\,|\,\spn_1,\spn_0)\,\oW_2({\theta_1+\theta_3}\,|\,\spn_2,\spn_0)\,W_1({\eta-\theta_3}\,|\,\spn_3,\spn_0)\,,
\end{align}
and
\begin{align}
%\begin{split}
\astr{x_0}&=\Lambda(\pi-\alpha_1\,|\,-x_1,x_0)+\Lambda(\alpha_1+\alpha_3\,|\,x_2,x_0)+\ol(\pi-\alpha_3\,|\,x_3,x_0)\,.% \\[0.1cm]
%\atri&=\lag(\alpha_1\,|\,x_3,x_2)+\lag(\pi-(\alpha_1+\alpha_3)\,|\,x_3,-x_1)+\olam(\alpha_3\,|\,x_1,x_2)\,.
%\end{split}
\end{align}

The saddle point three-leg equation is then
\begin{align}
\label{notYBE13leg}
\left.\frac{\partial \astr{x}}{\partial x}\right|_{x=x_0}\hspace{-0.45cm}=\ovphib(\alpha_1\,|\,x_1,x_0)+\phi(\alpha_1+\alpha_3\,|\,x_2,x_0)+\vphi(\alpha_3\,|\,x_3,x_0)=0\,,
\end{align}
where $\phi(\alpha\,|\,x_i,x_j)$, and $\vphi(\alpha\,|\,x_i,x_j)$, are defined by
\begin{align}
\begin{split}
\phi(\alpha\,|\,x_i,x_j)&=\Log(1+\EXP^{x_i-x_j+\ii\alpha})-x_i+x_j\,, \\
\vphi(\alpha\,|\,x_i,x_j)&=-\Log(1-\EXP^{x_i+x_j-\ii\alpha})-\Log(1-\EXP^{x_j-x_i-\ii\alpha})+\ii(\pi-\alpha)\,,
\end{split}
\end{align}
and
\begin{align}
\ovphib(\alpha\,|\,x_i,x_j)=\phi(\pi-\alpha\,|\,-x_i,x_j)\,.
\end{align}

With the following change of variables
\begin{align}
x=\EXP^{x_0},\qquad u=\EXP^{x_1},\qquad y=\EXP^{x_2},\qquad v=\cosh(x_3),\qquad\alpha=\EXP^{\ii\alpha_1},\quad\beta=\EXP^{\ii\alpha_3}\,,
\end{align}
the three-leg equation \eqref{notYBE13leg} is seen to correspond to the quad equation
\begin{align}
\label{not3D1}
x(u\alpha+y\beta)\beta-y(u-2v\beta)\alpha+\beta\alpha^2=0\,.
\end{align}
The quad equation \eqref{not3D1} does not satisfy the 3D-consistency condition defined in Section \ref{sec:overview}.

\subsection{Examples of non-integrable cases 3}

The final example is the equation
\begin{align}
\label{notYBE2}
\begin{split}
\ds\int_{\mathbb{R}}d\spn_0\,S(\spn_0)\,\oW_1(\theta\,|\,\spn_1,\spn_0)\,W_2({\theta_1+\theta_3}\,|\,\spn_0,\spn_2)\,\oW_3({\theta_3}\,|\,\spn_3,\spn_0)\phantom{\,,} \\
\ds =R(\theta_1,\theta_3)\, W_2({\theta_1}\,|\,\spn_2,\spn_3)\,\oW_4(\theta_1+\theta_3\,|\,\spn_1,\spn_3)\,W_5(\theta_3\,|\,\spn_1,\spn_2)\,,
\end{split}
\end{align}
where the variables are as defined in \eqref{hypvals}.  The equation \eqref{notYBE2} does not have the form of the star-triangle relation given in either \eqref{YBEsym}, or \eqref{YBEasym}. Here
\begin{align}
\label{notYBE2quan}
\begin{split}
\ds W_1(\theta\,|\,\spn_i,\spn_j)&=\EXP^{\overline{B}(-\spn_i-\spn_j+\ii\theta)-\overline{B}(\spn_i+\spn_j+\ii\theta)-\overline{B}(\spn_i-\spn_j+\ii\theta)-\overline{B}(-\spn_i+\spn_j+\ii\theta)}\,, \\[0.1cm]
\ds W_2(\theta\,|\,\spn_i,\spn_j)&=\EXP^{\overline{B}(-\spn_i-\spn_j+\ii\theta)+\overline{B}(\spn_i-\spn_j+\ii\theta)+\overline{B}(-\spn_i+\spn_j+\ii\theta)}\,\Gamma_h(\spn_i+\spn_j+\ii\theta;\bb)\,, \\[0.1cm]
\ds W_3(\theta\,|\,\spn_i,\spn_j)&=\EXP^{\overline{B}(-\spn_i-\spn_j+\ii\theta)+\overline{B}(-\spn_i+\spn_j+\ii\theta)-\overline{B}(\spn_i+\spn_j+\ii\theta)}\,\Gamma_h(\spn_i-\spn_j+\ii\theta;\bb)\,, \\[0.1cm]
\ds W_4(\theta\,|\,\spn_i,\spn_j)&=\frac{1}{W_2(-\theta\,|\,\spn_i,\spn_j)}\,, \\[0.1cm]
\ds W_5(\theta\,|\,\spn_i,\spn_j)&=\frac{1}{W_3(-\theta\,|\,\spn_j,\spn_i)}\,,
\end{split}
\end{align}
\begin{align}
\oW_i(\theta\,|\,\spn_i,\spn_j)=W_i(\eta-\theta\,|\,\spn_i,\spn_j)\,,\qquad i=1,2,3,4,5\,,
\end{align}
where $\eta$ is defined in \eqref{hypeta}, $\overline{B}(z)$ is defined in \eqref{obdef}, and
\begin{align}
\label{notYBE2quans}
S(\spn)=\EXP^{4\pi\eta \spn},\qquad R(\theta_1,\theta_3)=\EXP^{\frac{\pi\ii}{2}\left(B_{2,2}(2(\theta_1+\theta_3);\bb,\bb^{-1})-B_{2,2}(2\theta_1;\bb,\bb^{-1})-B_{2,2}(2\theta_3;\bb,\bb^{-1})\right)}\,.
\end{align}

From \eqref{HGFinflim}, the asymptotics of the integrand of \eqref{notYBE2} are
\begin{align}
\label{notYBE2xinf}
\begin{split}
O(\EXP^{2\pi(\theta_1+\theta_3-\ii(\spn_1+\spn_3))\spn_0})\,,\qquad \spn_0\to-\infty\,, \\
O(\EXP^{-2\pi(\eta-\theta_3+\ii(\spn_1-\spn_2))\spn_0})\,,\qquad \spn_0\to+\infty\,,
\end{split}
\end{align}
and particularly the integral in \eqref{notYBE2}, is absolutely convergent for the values \eqref{hypvals}.

\begin{prop} \label{notYBE2prop}
For the values \eqref{q3d1rvals},
\begin{align}
\label{notYBE2lim}
\begin{split}
\lim_{\kappa\to\infty}\left(\EXP^{4\pi(3\eta+\theta_3-\ii(\spn_1-\spn_2))\kappa} I_{14}(\theta_1+2\ii\kappa,\theta_3+\ii\kappa\,|\,\spn_1+2\kappa,\spn_2+\kappa,\spn_3+\kappa)\right)\phantom{\,,} \\
=\int_{\mathbb{R}}d\spn_0\,S(\spn_0)\,\oW_1(\theta\,|\,\spn_1,\spn_0)\,W_2({\theta_1+\theta_3}\,|\,\spn_0,\spn_2)\,\oW_3({\theta_3}\,|\,\spn_3,\spn_0)\,,
\end{split}
\end{align}
where $I_{14}(\theta_1,\theta_3\,|\,\spn_1,\spn_2,\spn_3)$ is defined in \eqref{q3d1int}, and the Boltzmann weights on the right hand side are defined in \eqref{notYBE2quan}.

\end{prop}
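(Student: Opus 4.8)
The plan is to follow the same dominated-convergence argument used for Proposition \ref{q3d0prop}, now carefully tracking the multiple-Bernoulli factors which, unlike in the simpler cases, no longer cancel and instead produce the exponential prefactors appearing in \eqref{notYBE2quan}. First I would note that for the values \eqref{q3d1rvals} the contour $\mathcal{C}$ in the definition \eqref{q3d1int} of $I_{14}$ may be taken to be $\mathbb{R}$, and that the contour of the target integral on the right-hand side of \eqref{notYBE2lim} may likewise be chosen to be $\mathbb{R}$, as guaranteed by the pole-separation encoded in the two-sided decay \eqref{notYBE2xinf}. I would then make the change of integration variable $\spn\to\spn+\kappa$ in \eqref{q3d1int}, so that after inserting the shifted arguments $\theta_1+2\ii\kappa$, $\theta_3+\ii\kappa$, $\spn_1+2\kappa$, $\spn_2+\kappa$, $\spn_3+\kappa$ the integrand becomes a function of the fixed variables and of $\kappa$ in which the argument of every $\Gamma_h$ grows linearly in $\kappa$.

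Next, to each hyperbolic gamma function appearing in the three Boltzmann weights of \eqref{q3d1rho} --- each of $\oW$, $W$, $\oW$ being the four-fold ratios of $\Gamma_h$ from \eqref{genfadvol} --- I would apply the large-argument asymptotics \eqref{HGFinflim}, which replaces $\Log\Gamma_h(z\pm\kappa;\bb)$ by $\mp\tfrac{\pi\ii}{2}B_{2,2}(\ii z+\eta;\bb,\bb^{-1})$ up to exponentially small corrections, that is, by $\mp\overline{B}(z)$ in the notation of \eqref{obdef}. Collecting these contributions together with the factor $S(\spn)$ and the overall prefactor $\EXP^{4\pi(3\eta+\theta_3-\ii(\spn_1-\spn_2))\kappa}$, I expect the $\kappa$-linear pieces to cancel and the surviving terms to reorganise precisely into the integrand on the right-hand side of \eqref{notYBE2lim}, the remaining $\overline{B}$-exponentials being exactly those displayed in the Boltzmann weights \eqref{notYBE2quan} and the normalisation \eqref{notYBE2quans}. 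Verifying this reorganisation is where the arithmetic content of the proposition lies.

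The hard part will be establishing a uniform, integrable dominating bound so that dominated convergence applies. Because the three spin shifts are unequal ($2\kappa$, $\kappa$, $\kappa$) and the two spectral shifts are purely imaginary, the $\overline{B}$ contributions of the different gamma factors do not cancel as they did in the earlier cases, so the bound must be obtained by combining the explicit leading $\overline{B}$-exponentials with the overall prefactor. Using \eqref{HGFinflim} I would produce, for all sufficiently large $\kappa$ and all $\spn\in\mathbb{R}$, an exponential majorant matching the decay \eqref{notYBE2xinf} of the limiting integrand (decay $\EXP^{2\pi(\theta_1+\theta_3-\ii(\spn_1+\spn_3))\spn_0}$ as $\spn_0\to-\infty$ and $\EXP^{-2\pi(\eta-\theta_3+\ii(\spn_1-\spn_2))\spn_0}$ as $\spn_0\to+\infty$), while controlling the exponentially small remainders in \eqref{HGFinflim} uniformly in $\kappa$. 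Once the prefactor-times-integrand is dominated by a fixed integrable function on $\mathbb{R}$, the pointwise limit computed above yields \eqref{notYBE2lim} by dominated convergence; analytic continuation in $\spn_1,\spn_2,\spn_3,\theta_1,\theta_3$ then extends the identity to the full range permitted by the contour, exactly as in Proposition \ref{q3d0prop}.
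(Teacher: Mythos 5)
Your overall strategy --- recentre the integration variable, apply the large-argument asymptotics \eqref{HGFinflim} pointwise so that diverging hyperbolic gamma functions become $\overline{B}$-exponentials, verify that the surviving terms reorganise into \eqref{notYBE2quan}, and justify the interchange by dominated convergence with analytic continuation at the end --- is exactly the paper's route (its proof is stated as analogous to Proposition \ref{q3d0prop} after a recentring). However, there is one concrete error that breaks the argument as written: you recentre by $\spn\to\spn+\kappa$, whereas the correct substitution, and the one the paper specifies, is $\spn_0\to\spn_0+2\kappa$. The limiting integrand must retain two genuine hyperbolic gamma functions: $\Gamma_h(\spn_0+\spn_2+\ii(\theta_1+\theta_3);\bb)$ inside $W_2(\theta_1+\theta_3\,|\,\spn_0,\spn_2)$ and $\Gamma_h(\spn_3-\spn_0+\ii(\eta-\theta_3);\bb)$ inside $\oW_3(\theta_3\,|\,\spn_3,\spn_0)$. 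Tracking the shifted arguments in \eqref{q3d1rho}: since $\theta_1+\theta_3\to\theta_1+\theta_3+3\ii\kappa$ gives $\ii\theta\to\ii(\theta_1+\theta_3)-3\kappa$ while $\spn_2\to\spn_2+\kappa$, the factor $\Gamma_h(\spn_2+\spn_0'+\ii(\theta_1+\theta_3);\bb)$ acquires the argument $\spn_2+\spn_0'+\ii(\theta_1+\theta_3)-2\kappa$, which stays finite only for $\spn_0'=\spn_0+2\kappa$; likewise in $\oW(\theta_3+\ii\kappa\,|\,\spn_3+\kappa,\spn_0')$ the relevant argument is $\spn_3-\spn_0'+\ii(\eta-\theta_3)+2\kappa$, again finite precisely when $\spn_0'=\spn_0+2\kappa$. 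The mass of the integral therefore concentrates near $\spn_0'\approx 2\kappa$, not $\kappa$.

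With your shift by $\kappa$, every single $\Gamma_h$ argument in the recentred integrand diverges linearly in $\kappa$, so the pointwise limit computed from \eqref{HGFinflim} is a pure exponential in $\spn_0$ and cannot reproduce the right-hand side of \eqref{notYBE2lim}; worse, since you are sampling the integrand a distance $\kappa$ from its concentration point, the prefactor-times-integrand tends to zero there rather than to the integrand built from \eqref{notYBE2quan}, so the ``reorganisation'' step you defer to would simply fail. Once the substitution is corrected to $\spn_0\to\spn_0+2\kappa$, the rest of your proposal --- the $\overline{B}$ bookkeeping yielding \eqref{notYBE2quan} and \eqref{notYBE2quans}, a uniform exponential majorant consistent with the two-sided decay \eqref{notYBE2xinf}, dominated convergence, and analytic continuation to the domain permitted by the contour --- goes through exactly as in Proposition \ref{q3d0prop} and coincides with the paper's proof.
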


After a change of integration variable $\spn_0\to\spn_0+2\kappa$, the steps of the proof of Proposition \ref{notYBE2prop}, are analogous to the respective steps for Proposition \ref{q3d0prop}.  Equation \eqref{notYBE2}, follows from Proposition \ref{notYBE2prop}, by using \eqref{ratasymp} to take the same limit of the right hand side of \eqref{qq3d1}.

\subsubsection{Classical equations}

The Lagrangian functions for this case are defined by
\begin{align}
\label{lagnotYBE2}
\begin{split}
\lag_1(\alpha\,|\,x_i,x_j)&=\ds 2x_ix_j-\frac{(x_i+x_j+\ii\alpha)^2}{2}-\frac{\pi^2}{3}\,, \\
\lag_2(\alpha\,|\,x_i,x_j)&=\ds \lie(-\EXP^{x_i+x_j+\ii\alpha})\,.
\end{split}
\end{align}

Using the asymptotics \eqref{HGFqcl}, the leading order $O(\hbar^{-1})$ quasi-classical expansion \eqref{hypqcl} of the integrand of \eqref{notYBE2} is
\begin{align}
\begin{split}
\frac{-x_2^2-x_3^2+\alpha_1(\alpha_1+2\alpha_3)-\frac{\pi^2}{2}}{\ii\hbar}+
\Log \rho(\tfrac{\alpha_1}{\sqrt{2\pi\hbar}},\tfrac{\alpha_3}{\sqrt{2\pi\hbar}}\,|\,\tfrac{x_1}{\sqrt{2\pi\hbar}},\tfrac{x_2}{\sqrt{2\pi\hbar}}\tfrac{x_3}{\sqrt{2\pi\hbar}};\tfrac{x_0}{\sqrt{2\pi\hbar}})\phantom{\,,} \\
=(\ii\hbar)^{-1}\astr{x_0}+O(1)\,,
\end{split}
\end{align}
where 
\begin{align}
\rho(\theta_1,\theta_3\,|\,\spn_1,\spn_2,\spn_3;\spn_0)=S(\spn_0)\,\oW_1(\theta\,|\,\spn_1,\spn_0)\,W_2({\theta_1+\theta_3}\,|\,\spn_0,\spn_2)\,\oW_3({\theta_3}\,|\,\spn_3,\spn_0)\,,
\end{align}
and
\begin{align}
%\begin{split}
\astr{x_0}&=\ol_1(\alpha_1\,|\,x_0,x_1)+\lag_2(\alpha_1+\alpha_3\,|\,x_0,x_2)+\lag_3(\pi-\alpha_3\,|\,x_3,x_0)\,,% \\[0.1cm]
%\atri&=\lag_2(\alpha_1\,|\,x_2,x_3)+\ol_4(\alpha_1+\alpha_3\,|\,x_1,x_3)+\lag_5(\alpha_3\,|\,x_1,x_2)\,.
%\end{split}
\end{align}
where
\begin{align}
\lag_3(\alpha\,|\,x_i,x_j)&=\ds \lag_2(\alpha\,|\,x_i,-x_j)-x_i^2+x_j^2+2\ii x_j(\pi-\alpha)-(\pi-\alpha)^2\,.
\end{align}

The saddle point equation three-leg equation is then
\begin{align}
\label{notYBE23leg}
\left.\frac{\partial \astr{x}}{\partial x}\right|_{x=x_0}\hspace{-0.45cm}=\vphi(\alpha_1\,|\,x_0,x_1)+\phi(\alpha_1+\alpha_3\,|\,x_2,x_0)+\ovphib(\alpha_3\,|\,x_3,x_0)=0\,,
\end{align}
where $\phi(\alpha\,|\,x_i,x_j)$, and $\vphi(\alpha\,|\,x_i,x_j)$, are defined by
\begin{align}
\phi(\alpha\,|\,x_i,x_j)=-\Log(1+\EXP^{x_i+x_j+\ii\alpha})+\ii\alpha\,,\qquad\vphi(\alpha\,|\,x_i,x_j)=x_i+x_j\,,
\end{align}
and
\begin{align}
\ovphib(\alpha\,|\,x_i,x_j)=-\phi(\pi-\alpha\,|\,x_i,-x_j)\,.
\end{align}

With the following change of variables
\begin{align}
x=-\EXP^{x_0},\qquad u=\EXP^{x_1},\qquad y=\EXP^{x_2},\qquad v=\EXP^{x_3},\qquad\alpha=\EXP^{-\ii\alpha_1},\qquad\beta=\EXP^{-\ii(\alpha_3+\alpha_3)},
\end{align}
the three-leg equation \eqref{notYBE23leg} is seen to correspond to the quad equation
\begin{align}
\label{not3D2}
xv\alpha +(1+xuyv)\beta-uv\beta^2=0\,.
\end{align}
The quad equation \eqref{not3D2} does not satisfy the 3D-consistency condition defined in Section \ref{sec:overview}.

\end{appendices}

\bibliography{QABS}

%bibliography generated by nb.bst v1.06 (C) 2003-2011 Niklas Beisert
\begin{thebibliography}{10}
\providecommand{\href}[2]{#2}
\providecommand{\arxivref}[2]{\href{http://arxiv.org/abs/#1}{#2}}
\providecommand{\doiref}[2]{\href{http://dx.doi.org/#1}{#2}}
\providecommand{\nbbstauthor}[1]{#1}
\providecommand{\nbbstjournal}[1]{\textsf{#1}}
\providecommand{\nbbsttitle}[1]{\textit{#1}}
\providecommand{\nbbsturl}[1]{\texttt{#1}}
\providecommand{\nbbsteprint}[1]{\texttt{#1}}
\providecommand{\nbbststyle}{\raggedright\small\parskip0pt}
\nbbststyle

\bibitem{Baxter:1972hz}
\nbbstauthor{R.~J.~Baxter},
\nbbsttitle{``{Partition function of the eight vertex lattice model}''},
\nbbstjournal{\doiref{10.1016/0003-4916(72)90335-1}{Annals~Phys.~70,~193~(1972)}},
[Annals Phys.281,187(2000)].
%%CITATION = APNYA,70,193;%%

\bibitem{Baxter:1982zz}
\nbbstauthor{R.~J.~Baxter},
\nbbsttitle{``Exactly {S}olved {M}odels in {S}tatistical {M}echanics''},
Academic (1982),
London.

\bibitem{nijhoffwalker}
\nbbstauthor{F.~W.~Nijhoff and A.~J.~Walker},
\nbbsttitle{``The discrete and continuous {P}ainlev\'e {VI} hierarchy and the
  {G}arnier systems''},
\nbbstjournal{\doiref{10.1017/S0017089501000106}{Glasgow~Math.~J.~43,~109–~(2001)}}.

\bibitem{BobSurQuadGraphs}
\nbbstauthor{A.~I.~Bobenko and Y.~B.~Suris},
\nbbsttitle{``{I}ntegrable systems on quad-graphs''},
\nbbstjournal{\doiref{10.1155/S1073792802110075}{Int.~Math.~Res.~Not.~2002,~573~(2002)}}.

\bibitem{ABS}
\nbbstauthor{V.~Adler, A.~Bobenko and Y.~Suris},
\nbbsttitle{``Classification of Integrable Equations on Quad-Graphs. The
  Consistency Approach''},
\nbbstjournal{\doiref{10.1007/s00220-002-0762-8}{Commun.~Math.~Phys.~233,~513~(2003)}}.

\bibitem{Bazhanov:2007mh}
\nbbstauthor{V.~V.~Bazhanov, V.~V.~Mangazeev and S.~M.~Sergeev},
\nbbsttitle{``{Faddeev-Volkov solution of the Yang-Baxter equation and discrete
  conformal symmetry}''},
\nbbstjournal{\doiref{10.1016/j.nuclphysb.2007.05.013}{Nucl.~Phys.~B784,~234~(2007)}},
\nbbsteprint{\arxivref{hep-th/0703041}{hep-th/0703041}}.
%%CITATION = HEP-TH/0703041;%%

\bibitem{Bazhanov:2010kz}
\nbbstauthor{V.~V.~Bazhanov and S.~M.~Sergeev},
\nbbsttitle{``{A Master solution of the quantum Yang-Baxter equation and
  classical discrete integrable equations}''},
\nbbstjournal{\doiref{10.4310/ATMP.2012.v16.n1.a3}{Adv.~Theor.~Math.~Phys.~16,~65~(2012)}},
\nbbsteprint{\arxivref{1006.0651}{arxiv:1006.0651}}.
%%CITATION = ARXIV:1006.0651;%%

\bibitem{Bazhanov:2016ajm}
\nbbstauthor{V.~V.~Bazhanov, A.~P.~Kels and S.~M.~Sergeev},
\nbbsttitle{``{Quasi-classical expansion of the star-triangle relation and
  integrable systems on quad-graphs}''},
\nbbstjournal{\doiref{10.1088/1751-8113/49/46/464001}{J.~Phys.~A49,~464001~(2016)}},
\nbbsteprint{\arxivref{1602.07076}{arxiv:1602.07076}}.
%%CITATION = ARXIV:1602.07076;%%

\bibitem{Kels:2017fyt}
\nbbstauthor{A.~P.~Kels},
\nbbsttitle{``{Exactly solved models on planar graphs with vertices in
  $\mathbb{Z}^3$}''},
\nbbstjournal{\doiref{10.1088/1751-8121/aa8f68}{J.~Phys.~A50,~495202~(2017)}},
\nbbsteprint{\arxivref{1705.06528}{arxiv:1705.06528}}.
%%CITATION = ARXIV:1705.06528;%%

\bibitem{ABS2}
\nbbstauthor{V.~E.~Adler, A.~I.~Bobenko and Y.~B.~Suris},
\nbbsttitle{``Discrete nonlinear hyperbolic equations. Classification of
  integrable cases''},
\nbbstjournal{\doiref{10.1007/s10688-009-0002-5}{Funct.~Anal.~Appl.~43,~3~(2009)}}.

\bibitem{Barnes1908}
\nbbstauthor{E.~W.~Barnes},
\nbbsttitle{``A New Development of the Theory of the Hypergeometric
  Functions''},
\nbbstjournal{\doiref{10.1112/plms/s2-6.1.141}{Proc.~London~Math.~Soc.~s2-6,~141~(1908)}}.

\bibitem{Barnes1910}
\nbbstauthor{E.~W.~Barnes},
\nbbsttitle{``A transformation of generalised hypergeometric series''},
\nbbstjournal{Quart.~J.~41,~136~(1910)}.

\bibitem{Bax02rip}
\nbbstauthor{R.~J.~Baxter},
\nbbsttitle{``A rapidity-independent parameter in the star-triangle
  relation''},
in: \nbbsttitle{``Math{P}hys {O}dyssey, 2001''},
pp.~49--63,
Birkh\"auser Boston (2002),
Boston, MA.

\bibitem{PerkYBEs}
\nbbstauthor{J.~Perk and H.~Au-Yang},
\nbbsttitle{``Yang–Baxter Equations''},
in: \nbbsttitle{``Encyclopedia of Mathematical Physics''},
pp.~465 - 473,
ed.: J.-P.~Françoise, G.~L.~Naber and T.~S.~Tsun,
Academic Press (2006),
Oxford.

\bibitem{Kels:2017vbc}
\nbbstauthor{A.~P.~Kels and M.~Yamazaki},
\nbbsttitle{``{Lens elliptic gamma function solution of the Yang–Baxter
  equation at roots of unity}''},
\nbbstjournal{\doiref{10.1088/1742-5468/aaa8fd}{J.~Stat.~Mech.~1802,~023108~(2018)}},
\nbbsteprint{\arxivref{1709.07148}{arxiv:1709.07148}}.
%%CITATION = ARXIV:1709.07148;%%

\bibitem{Kels:2015bda}
\nbbstauthor{A.~P.~Kels},
\nbbsttitle{``{New solutions of the star--triangle relation with discrete and
  continuous spin variables}''},
\nbbstjournal{\doiref{10.1088/1751-8113/48/43/435201}{J.~Phys.~A48,~435201~(2015)}},
\nbbsteprint{\arxivref{1504.07074}{arxiv:1504.07074}}.
%%CITATION = ARXIV:1504.07074;%%

\bibitem{Kashiwara:1986tu}
\nbbstauthor{M.~Kashiwara and T.~Miwa},
\nbbsttitle{``{A Class of Elliptic Solutions to the Star Triangle Relation}''},
\nbbstjournal{\doiref{10.1016/0550-3213(86)90591-2}{Nucl.~Phys.~B275,~121~(1986)}}.
%%CITATION = NUPHA,B275,121;%%

\bibitem{BAZHANOV2018509}
\nbbstauthor{V.~V.~Bazhanov and S.~M.~Sergeev},
\nbbsttitle{``Yang–Baxter maps, discrete integrable equations and quantum
  groups''},
\nbbstjournal{\doiref{https://doi.org/10.1016/j.nuclphysb.2017.11.017}{Nucl.~Phys.~B~926,~509
  ~(2018)}}.

\bibitem{BollSuris11}
\nbbstauthor{R.~Boll and Y.~B.~Suris},
\nbbsttitle{``{On the Lagrangian structure of 3D consistent systems of
  asymmetric quad-equations}''},
\nbbstjournal{\doiref{10.1088/1751-8113/45/11/115201}{J.~Phys.~A45,~115201~(2012)}},
\nbbsteprint{\arxivref{1108.0016}{arxiv:1108.0016}}.

\bibitem{Ruijsenaars:1997:FOA}
\nbbstauthor{S.~N.~M.~Ruijsenaars},
\nbbsttitle{``First order analytic difference equations and integrable quantum
  systems''},
\nbbstjournal{\doiref{10.1063/1.531809}{J.~Math.~Phys.~38,~1069~(1997)}}.

\bibitem{BultThesis}
\nbbstauthor{F.~J.~van~de~Bult},
\nbbsttitle{``Hyperbolic hypergeometric functions''},
\nbbstjournal{Ph.~D.~thesis~38,~F.~J.~van~de~Bult~(2007)}.

\bibitem{Rains2009}
\nbbstauthor{E.~M.~Rains},
\nbbsttitle{``Limits of elliptic hypergeometric integrals''},
\nbbstjournal{\doiref{10.1007/s11139-007-9055-3}{The~Ramanujan~Journal~18,~257~(2009)}}.

\bibitem{Spiridonov:2010em}
\nbbstauthor{V.~P.~Spiridonov},
\nbbsttitle{``{Elliptic beta integrals and solvable models of statistical
  mechanics}''},
\nbbstjournal{Contemp.~Math.~563,~181~(2012)},
\nbbsteprint{\arxivref{1011.3798}{arxiv:1011.3798}}.
%%CITATION = ARXIV:1011.3798;%%

\bibitem{FaddeevKashaev}
\nbbstauthor{L.~Faddeev and R.~Kashaev},
\nbbsttitle{``Quantum Dilogarithm''},
\nbbstjournal{\doiref{10.1142/S0217732394000447}{Mod.~Phys.~Lett.~A~09,~427~(1994)}}.

\bibitem{Ruijsenaars1999}
\nbbstauthor{S.~N.~M.~Ruijsenaars},
\nbbsttitle{``A Generalized Hypergeometric Function Satisfying Four Analytic
  Difference Equations of Askey--Wilson Type''},
\nbbstjournal{\doiref{10.1007/PL00005522}{Commun.~Math.~Phys.~206,~639~(1999)}}.

\bibitem{Narukawa2004247}
\nbbstauthor{A.~Narukawa},
\nbbsttitle{``The modular properties and the integral representations of the
  multiple elliptic gamma functions''},
\nbbstjournal{\doiref{http://dx.doi.org/10.1016/j.aim.2003.11.009}{Adv.~Math.~189,~247
  ~(2004)}}.

\bibitem{Baxter:1978xr}
\nbbstauthor{R.~J.~Baxter},
\nbbsttitle{``{Solvable eight vertex model on an arbitrary planar lattice}''},
\nbbstjournal{\doiref{10.1098/rsta.1978.0062}{Phil.~Trans.~Roy.~Soc.~Lond.~289,~315~(1978)}}.
%%CITATION = PTRSA,289,315;%%

\bibitem{STOKMAN2005119}
\nbbstauthor{J.~V.~Stokman},
\nbbsttitle{``Hyperbolic beta integrals''},
\nbbstjournal{\doiref{http://dx.doi.org/10.1016/j.aim.2003.12.003}{Advances~in~Mathematics~190,~119
  ~(2005)}}.

\bibitem{GahramanovKels}
\nbbstauthor{I.~Gahramanov and A.~P.~Kels},
\nbbsttitle{``{The star-triangle relation, lens partition function, and
  hypergeometric sum/integrals}''},
\nbbstjournal{\doiref{10.1007/JHEP02(2017)040}{JHEP~1702,~040~(2017)}},
\nbbsteprint{\arxivref{1610.09229}{arxiv:1610.09229}}.
%%CITATION = ARXIV:1610.09229;%%

\bibitem{Bazhanov:2007vg}
\nbbstauthor{V.~V.~Bazhanov, V.~V.~Mangazeev and S.~M.~Sergeev},
\nbbsttitle{``{Exact solution of the Faddeev-Volkov model}''},
\nbbstjournal{\doiref{10.1016/j.physleta.2007.10.053}{Phys.~Lett.~A372,~1547~(2008)}},
\nbbsteprint{\arxivref{0706.3077}{arxiv:0706.3077}}.
%%CITATION = ARXIV:0706.3077;%%

\bibitem{Ruijsenaars2003}
\nbbstauthor{S.~Ruijsenaars},
\nbbsttitle{``A Generalized Hypergeometric Function III. Associated Hilbert
  Space Transform''},
\nbbstjournal{\doiref{10.1007/s00220-003-0970-x}{Commun.~Math.~Phys.~243,~413~(2003)}}.

\bibitem{Askey1989}
\nbbstauthor{R.~Askey},
\nbbsttitle{``Beta integrals and the associated orthogonal polynomials''},
in: \nbbsttitle{``Number Theory, Madras 1987''},
pp.~84--121,
ed.: K.~Alladi,
Springer Berlin Heidelberg (1989),
Berlin, Heidelberg.

\bibitem{GR}
\nbbstauthor{G.~Gasper and M.~Rahman},
\nbbsttitle{``Basic hypergeometric series''},
second edition,
Cambridge University Press, Cambridge (2004),
With a foreword by Richard Askey.

\bibitem{DeBranges1972}
\nbbstauthor{L.~de~Branges},
\nbbsttitle{``Tensor product spaces''},
\nbbstjournal{\doiref{https://doi.org/10.1016/0022-247X(72)90122-9}{J.~Math.~Anal.~Appl.~38,~109
  ~(1972)}}.

\bibitem{Wilson1980}
\nbbstauthor{J.~A.~Wilson},
\nbbsttitle{``Some Hypergeometric Orthogonal Polynomials''},
\nbbstjournal{\doiref{10.1137/0511064}{SIAM~J.~Math.~Anal.~11,~690~(1980)}}.

\bibitem{Zamolodchikov:1980mb}
\nbbstauthor{A.~B.~Zamolodchikov},
\nbbsttitle{``{``Fishnet'' diagrams as a completely integrable system}''},
\nbbstjournal{\doiref{10.1016/0370-2693(80)90547-X}{Phys.~Lett.~B97,~63~(1980)}}.
%%CITATION = PHLTA,B97,63;%%

\bibitem{Selberg}
\nbbstauthor{A.~Selberg},
\nbbsttitle{``Remarks on a multiple integral''},
\nbbstjournal{Norsk~Mat.~Tidsskr.~26,~71~(1944)}.

\bibitem{WW}
\nbbstauthor{E.~T.~Whittaker and G.~N.~Watson},
\nbbsttitle{``A course of modern analysis''},
Cambridge University Press, Cambridge (1996),
Reprint of the fourth (1927) edition.

\bibitem{Yamazaki:2012cp}
\nbbstauthor{M.~Yamazaki},
\nbbsttitle{``{Quivers, YBE and 3-manifolds}''},
\nbbstjournal{\doiref{10.1007/JHEP05(2012)147}{JHEP~1205,~147~(2012)}},
\nbbsteprint{\arxivref{1203.5784}{arxiv:1203.5784}}.
%%CITATION = ARXIV:1203.5784;%%

\bibitem{SpiridonovEBF}
\nbbstauthor{V.~P.~Spiridonov},
\nbbsttitle{``On the elliptic beta function''},
\nbbstjournal{Russian~Mathematical~Surveys~56,~185~(2001)}.

\end{thebibliography}
\bibliographystyle{nb}

\end{document}